\documentclass[11pt]{article}

\usepackage[margin=1in]{geometry}
\usepackage{amsmath,amssymb,amsthm}
\usepackage{graphicx}
\usepackage{booktabs}
\usepackage{array}
\usepackage{multirow}
\usepackage{xcolor}
\usepackage{enumitem}
\usepackage{microtype}
\usepackage{url}
\usepackage[colorlinks=true,allcolors=blue]{hyperref}
\usepackage{natbib}
\usepackage{seqsplit}
\usepackage{ragged2e}
\usepackage[section]{placeins}
\input{styles/paper-figures.sty}

\graphicspath{{figs/}}

\newtheorem*{definition}{Definition}
\newtheorem*{remark}{Remark}
\newtheorem{theorem}{Theorem}[section]
\newtheorem{proposition}[theorem]{Proposition}
\newtheorem{lemma}[theorem]{Lemma}
\newtheorem{corollary}[theorem]{Corollary}
\newtheorem{fact}[theorem]{Fact}
\numberwithin{equation}{section}

\newcommand{\dhat}{\widehat{d}}
\newcommand{\E}{\mathbb{E}}
\DeclareMathOperator{\Var}{Var}
\DeclareMathOperator{\Cov}{Cov}
\DeclareMathOperator{\tr}{tr}
\newcommand{\one}{\mathbf{1}}

\newcommand{\given}{\,|\,}

\title{IBLTs Measure Before They Decode: Self-Sizing Set Reconciliation for Database Consistency Verification}
\author{
  Min Wu\textsuperscript{1,*} \quad
  Ji Qi\textsuperscript{2,*} \quad
  Zhengsheng Ye\textsuperscript{3} \quad
  Chengdui Luo\textsuperscript{2}  \\
  Shudong Lu\textsuperscript{2} \quad
  Zhengyang Wei\textsuperscript{3} \\[4pt]
  \small \textsuperscript{1}Hangzhou Dianzi University \\
  \small \textsuperscript{2}China Mobile (Suzhou) Software Technology Co., Ltd.  \\
  \small \textsuperscript{3}Nine Chapters (Zhejiang) Technology Co., Ltd.  \\ [2pt]
  {\small\texttt{wumin@hdu.edu.cn}} \quad
  {\small\texttt{qiji@cmss.chinamobile.com}}  \quad
   {\small\texttt{yzs@ninedata.cloud}} \\
  {\small\texttt{luochengdui@cmss.chinamobile.com}} \quad
  {\small\texttt{lushudong@cmss.chinamobile.com}} \\
  {\small\texttt{wzy@9z.cloud}}
  \\[2pt]
  \small \textsuperscript{*}Equal contribution
}
\date{\today}

\begin{document}

\maketitle

\begin{abstract}
Cross-system data replication pipelines cannot confirm end-to-end
consistency from the local guarantees of each hop, so the two endpoints
must be compared directly on a periodic basis. Once the rows of a fixed
snapshot are normalized into fingerprints, the task reduces to finding
the symmetric difference of the two sets. Set reconciliation encodes both
sets with an Invertible Bloom Lookup Table (IBLT), whose communication
grows only with the difference cardinality~$d$ and is independent of table
size, which suits large tables. The capacity of an IBLT, however, must be
fixed while $d$ is still unknown. Across 41{,}603 production
reconciliations over 90 days, nonzero $d$ spans about seven orders of
magnitude, and no reliable empirical constant exists.

We show that the count array of an IBLT has already measured $d$ before
decoding. The measurement is in-band: it is carried by the recovery sketch
itself and adds no bytes dedicated to estimation. A mapping-aware theorem
carries the same construction to Irregular, Rateless, and MET IBLTs. The
protocol reads the
estimate only after a decoding failure; we prove that the
failure-conditioned lower quantile bounds the risk of underestimation,
which gives the second-round capacity a configurable success-probability
guarantee. On this basis we build a self-sizing protocol: the first round
attempts recovery with a small sketch and stops there on success; on
failure it reads $d$ and sets the second-round capacity, completing
reconciliation in at most two rounds. Against a controlled oracle,
communication is 1.29--1.47 times that of a scheme given $d$ in advance;
production workload characterization, relational-database replay, and a
cross-city KV deployment confirm the end-to-end mechanism. Measurement and
recovery share one IBLT and need no separate estimation structure.
\end{abstract}

\section{Introduction}
\label{sec:intro}

Set reconciliation is the problem in which two parties hold sets $A$ and $B$
and wish to recover the symmetric difference $A\triangle B$ using
communication far smaller than $|A|+|B|$.  The Invertible Bloom Lookup Table
(IBLT)~\cite{goodrich2011invertible} is the standard tool for this problem.

This paper targets database reconciliation---verifying and repairing
consistency between database instances.  Even when both sides run the
same engine, the synchronization channel itself (replication logs or
change-data capture) is precisely what is being verified; it cannot
serve as its own oracle.  Snapshot-level full-table scans are
therefore the verification path that always applies, whether the engines are
homogeneous or heterogeneous.

Each scan reduces every record to a short fingerprint derived from its
primary key and a normalized row value; identical records cancel when
the two sketches are subtracted, while inserts, deletes, and updates
leave fingerprints in $A\triangle B$.

The scan itself dominates cost: computing and transmitting fingerprints
over hundreds of millions of rows far exceeds the time the
reconciliation algorithm spends.  The hardest parameter to fix in
advance is the
difference cardinality~$d$, which ranges from zero to tens of millions
across the same table at different times.  Guessing too small forces a
rescan of the entire table; guessing too large wastes network bandwidth
and memory after the scan cost has already been incurred.
\S\ref{sec:system} reports a production workload from NineData Cloud:
the measured $d$ spans seven orders of magnitude, and the workload is
nothing like a ``mostly equal'' regime---$d=0$ in only 56.4\% of the
41{,}603 runs, and among the tables that dominate reconciliation cost the
fraction with differences reaches 98.7\%.  The failure-then-measure path
is therefore the cost-dominant regime.  A cross-city
Redis/Pika deployment on China Mobile checks cross-data-model behavior under
similarly wide-ranging~$d$.

An IBLT is a hash-table sketch that supports linear subtraction.  Both sides
hash every element into $k$ out of $M$ cells; each cell maintains a count,
an invertible key aggregate (\emph{keySum}), and a verification digest
(\emph{checkSum}).  After the two sketches are subtracted cell by cell,
common elements cancel, and the resulting difference table encodes exactly
$A\triangle B$.  A cell's count records the net number of difference
elements mapped to it: $+1$ for elements in $A\setminus B$, $-1$ for those
in $B\setminus A$.

When a cell's count is $\pm 1$ and its aggregate fields verify, the decoder
identifies the lone difference element mapped to it and removes that element
from all $k$ cells it touches, often exposing new singletons in a cascade
called \emph{peeling}.  Peeling succeeds only when the capacity $M$ is
roughly $1.2$--$1.4\times$ the difference cardinality~$d$; below this
threshold the success probability drops sharply at the 2-core phase
transition of the underlying random hypergraph.  A failed IBLT's cells
still hold their counts and aggregates.

Three existing lines of work address unknown $d$, each paying the cost in a
different dimension:

\begin{enumerate}[itemsep=0pt,topsep=2pt]
  \item \textbf{Pre-estimation structures} (Strata / ToW / min-wise): exchange
  a dedicated summary to estimate~$d$, then configure the main IBLT from the
  estimate.  The summary is an independent structure whose bytes are dedicated solely to estimation.
  \item \textbf{Blind doubling} (exponential backoff): try a conservative
  capacity and double on failure until decoding succeeds.  The approach is
  simple and needs no extra structure, but the cost concentrates in rounds:
  in databases each extra round often means another full-table scan, so
  round cost far exceeds byte cost.
  \item \textbf{Rateless IBLT}~\cite{yang2024rateless}: the sender keeps
  producing coded symbols and the receiver appends them until decoding
  succeeds, sidestepping capacity presetting by construction.  Its design
  point assumes the sender can incrementally produce symbols on demand;
  the incremental encoder maintains $O(N)$ per-row mapping state, whereas
  a Plain IBLT encoder is stateless and its memory scales with the table
  dimension $M$, not the row count~$N$.  In the scan-dominated,
  request/response regime of database reconciliation this cost is
  prohibitive: on a 609M-row production snapshot the incremental endpoint
  runs ${\sim}48\times$ slower and peaks at ${\sim}93$\,GiB memory, against
  a few tens of MiB for self-sizing's endpoint
  (Appendix~\ref{app:rateless-comparison}).
\end{enumerate}

The industrial alternative is \textbf{Merkle-style localization}:
partition the key space into ranges, exchange checksums level by level, and
drill down to dirty blocks.  Its cost depends on how many ranges contain
differences: clustered differences are cheap; scattered ones can degrade to
shipping the whole table.  This is the end-to-end comparison baseline of
\S\ref{sec:system}; its engineering practice and cost model are discussed in
\S\ref{sec:related}.

The core observation of this paper is that \textbf{an IBLT has already
completed a measurement before it attempts to decode}.  Conventionally, when
decoding fails the whole sketch is discarded and retried; the transmitted
counts go to waste.  In fact, the pre-peeling count array $C$ is a
mapping-aware second-moment observation: centering each count about the
count-array average and summing the squares aggregates the fluctuation
energy that
difference elements leave in the sketch.  Every difference element, whether
from side $A$ or side $B$, contributes roughly $k$ to this energy; dividing
by the mapping-determined normalization constant yields an
\textbf{exactly unbiased} estimate $\dhat$ of $d$.  Its variance has the
closed form $2d(d-1)/(M-1)$ and is independent of the number of hashes $k$;
for fixed $M$ there is also a complete asymptotic distribution (a
chi-square limit) that yields confidence intervals.  The protocol can
therefore observe itself and size itself.  If first-round decoding
fails, it reads $\dhat$ and a confidence upper bound from the same
sketch, computes a sufficient second-round capacity directly, and makes
one fresh decoding attempt.  The main path thus requires at most two
round trips.  Bounding the number of rounds makes the interaction
component of end-to-end latency predictable, while scan and transfer
costs remain workload-dependent,
especially over
cross-datacenter links where each round trip costs tens to hundreds of
milliseconds (\S\ref{sec:protocol}).  This path adds no estimation-only
fields or bytes: the measurement comes entirely from the count channel that
the Plain IBLT has already transmitted.  The estimator belongs to the $F_2$
second-moment estimation tradition;
\S\ref{sec:related} gives the point-by-point comparison with
AMS~\cite{alon1996spacecomplexity,alon1999spacecomplexity} and
Guo--Li~\cite{guo2013counting}.

Intuitively, \emph{failed IBLTs are not wasted}; technically, \emph{IBLTs
measure before they decode}; the system consequence is that a reconciliation
protocol can \emph{size itself} from the same sketch.

The paper follows one contribution chain.
\S\ref{sec:count-measurement} develops the exact measurement theory;
\S\ref{sec:mapping-aware} generalizes it to a mapping-aware family;
\S\ref{sec:protocol} builds the self-sizing protocol;
\S\ref{sec:system} validates the chain with production data and
deployments; \S\ref{sec:related} and \S\ref{sec:scope} position the work
and state its boundaries.

\paragraph{Contributions.}
\begin{enumerate}[itemsep=0pt,topsep=2pt]
  \item \textbf{Measurement theory for the Plain count channel}
  (\S\ref{sec:count-measurement}).  We instantiate the second-moment method
  exactly for the signed $k$-subset mapping of an IBLT: the estimator is
  exactly unbiased under arbitrary sign patterns; its variance has the closed
  form $\Var(\dhat)=2d(d-1)/(M-1)$ independent of $k$ and sign balance; and,
  for a fixed table dimension, it has a chi-square limit.  Because the
  protocol reads the estimate only after a first-round failure, we further
  establish failure-conditioned guarantees---an exact fixed-sign projection
  with exact cancellation under i.i.d. random signs, inheritance of the
  chi-square law when failure becomes typical, and distribution-free
  lower-tail bounds---providing the
  statistical interface~$q_\delta$ for self-sizing.
  \item \textbf{Family-level mapping-aware construction}
  (\S\ref{sec:mapping-aware}).  The mapping-aware theorem
  $\E[C^\top Q C]=\sum_j d_j\,\tr(Q\Sigma_j)$ \eqref{eq:master} gives the
  construction criterion for an unbiased estimator on the count channel of an
  IBLT variant---necessary and sufficient among positive-semidefinite
  quadratic forms that hold for all sign compositions
  (Appendix~\ref{app:thm6}); the
  design rule is to center within the smallest mapping-homogeneous block and
  then normalize by the true mapping covariance.  From the mapping mean and
  covariance of each variant we construct adapters for Irregular, Rateless,
  and MET, and unify the estimate interface on the failure path
  (Corollary~\ref{cor:variant-failure}).
  \item \textbf{Self-sizing protocol} (\S\ref{sec:protocol}).
  First-round IBLT $\rightarrow$ $\dhat$ $\rightarrow$ second-round capacity,
  at most two round trips, with no doubling probes.  Proposition~\ref{prop:capacity}
  composes the statistical interface $q_\delta$ with the decoder operating
  point $\beta$, giving a general capacity formula that accounts for
  recovered elements and a second-round success-probability guarantee.
  Proposition~\ref{prop:joint} guarantees that jointly peeling the first-round
  residual core never decreases the success probability.  Compared with blind
  doubling and Strata-first, self-sizing keeps the round count at 2 (or 1 when
  the first round already decodes) at roughly $1.3$--$1.7\times$ the byte cost
  of an oracle across the whole range of $d$.
  \item \textbf{System evaluation and production difference profiles}
  (\S\ref{sec:system}).  We implement a cross-engine database sidecar---a lightweight
  agent process co-located with each database instance---and
  verify the first-round-success and
  fail$\rightarrow$measure$\rightarrow$resize paths against ground truth.
  A 90-day profile of 41{,}603 table-level reconciliation runs at NineData
  shows that difference cardinality spans seven orders of magnitude, that
  cost concentrates in large tables, and that differences cluster into
  key-rank segments with significant within-segment scatter.  Production-shape
  replay gives the cost crossover between IBLT and Merkle-style
  localization, and a cross-city Redis/Pika deployment on China Mobile
  checks cross-data-model behavior and the applicable regime.
\end{enumerate}

\section{Plain IBLT Count Measurement}
\label{sec:count-measurement}

This section develops the measurement theory of the Plain IBLT count
channel in three steps: an exactly unbiased estimator
(Theorem~\ref{thm:unbiased}), its exact variance
(Theorem~\ref{thm:variance}), and a fixed-dimension chi-square limit
(Theorem~\ref{thm:chi2}).  Because the protocol reads the estimate only
after a decoding failure, \S\ref{sec:failure-cond} adds
failure-conditioned mean and distributional guarantees
(Theorem~\ref{thm:cond-mean} and Proposition~\ref{prop:cond-chi2}), and
\S\ref{sec:quantile-interface} consolidates them into the single quantity the
protocol needs: the lower quantile~$q_\delta$.  All results depend only on
the pre-peeling count vector.

\subsection{Model and estimator}
\label{sec:model}

We use $M\ge 2$ cells and $1\le k<M$; difference elements are indexed
$x\in\{1,\dots,d\}$.  If element $x$ appears only in $A$ (i.e., $x\in
A\setminus B$), we write $s_x=+1$; if only in $B$, $s_x=-1$.  Thus
$d_+=|A\setminus B|$, $d_-=|B\setminus A|$, and
$d=d_++d_-=|A\triangle B|$ under the \emph{unit-weight multiset}
interpretation: a modified row contributes its old and new fingerprints once
each.  For a fixed sign composition with $d>0$, let
$\theta_-=d_-/d$ and define the net directional fraction
$\theta_\Delta=1-2\theta_-=(d_+-d_-)/d$.  The condition $k<M$ excludes the
degenerate case $\gamma=0$, where the count vector has only the constant
direction and $d$ is not identifiable.

Each element independently maps to $k$ \emph{distinct} cells: $S_x\subseteq
[M]$ with $|S_x|=k$, and indicator vector $a_x\in\{0,1\}^M$.  All
expectations and variances below are taken over the mapping randomness,
modeled as ideal random hashing.

The difference count vector before peeling is
$C=\sum_x s_x a_x$.  Let $I_M$ be the $M\times M$ identity and
$\one_M=[1,\dots,1]^\top$ the all-ones column vector, and define the
centering matrix
\[
  Q = I_M - \frac{1}{M}\one_M\one_M^\top.
\]
The \emph{centered energy} $T$ aggregates the per-cell fluctuation into a
single scalar.  $Q$ is the orthogonal projector that removes the constant
component: $Q\one_M=0$ and $Q^2=Q$.  With $\bar C = M^{-1}\sum_i C_i$, we
have
$QC = C - \bar C\one_M$, and therefore
\begin{equation}
\label{eq:centered-energy}
  T = C^\top Q C
  = \sum_i (C_i - \bar C)^2
  = \sum_i C_i^2 - \frac{\bigl(\sum_i C_i\bigr)^2}{M}.
\end{equation}
Thus $T=\lVert QC\rVert^2$ is the sum of squared deviations of the per-cell
counts from the count-array mean.  As shown in Theorem~\ref{thm:unbiased}, cross
terms cancel in expectation, so $\E[T]=d\gamma$.

The normalization constant is
\begin{equation}
\label{eq:gamma}
  \gamma=k(1-k/M),
\end{equation}
and the estimator is
\begin{equation}
\label{eq:estimator}
  \dhat = \frac{T}{\gamma}.
\end{equation}
It is computed before peeling and does not depend on the residual core or the
decoding outcome.  Computationally, $T$ requires one linear pass over the
counts (accumulating $\sum_i C_i^2$ and $\sum_i C_i$): $O(M)$ time and
$O(1)$ extra space, with no additional structure.

The full symbol table used across the paper, including the family adapters,
confidence intervals, and experiment-specific notation, appears in
Appendix~\ref{app:notation}.

\subsection{Exact measurement theory}
\label{sec:exact-theory}

\begin{theorem}[Exact unbiasedness, arbitrary sign patterns]
\label{thm:unbiased}
Let $M\ge 2$ and $1\le k<M$.  Conditioned on an arbitrary fixed sign vector
$s=(s_1,\dots,s_d)\in\{\pm1\}^d$ (the fully one-sided case $d_-=0$
included), if $S_1,\dots,S_d$ are mutually independent and each is uniform
over all $k$-subsets of $[M]$, then over the mapping randomness
\[
  \E[T \given s] = d\gamma,
  \qquad
  \E[\dhat \given s] = d.
\]
The conclusion holds exactly for finite $M,d,k$; it does not rely on any
asymptotic approximation.
\end{theorem}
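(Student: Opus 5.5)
The plan is to expand the quadratic form $T=C^\top QC$ with $C=\sum_x s_x a_x$, which gives
\[
  T=\sum_{x} s_x^2\, a_x^\top Q a_x+\sum_{x\neq y} s_x s_y\, a_x^\top Q a_y .
\]
Since $s_x^2=1$ and the sign vector is fixed, the expectation of $T$ splits into a diagonal part and a cross part, and I will handle the two parts separately.

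For the diagonal part, each $a_x$ is a $0/1$ vector with exactly $k$ ones, so $a_x^\top a_x=k$ and $\one_M^\top a_x=k$ hold deterministically. It follows that
\[
  a_x^\top Q a_x=k-\frac{k^2}{M}=\gamma
\]
identically, with no expectation needed. Summing over the $d$ elements gives exactly $d\gamma$, and this holds for any realization of the mapping.

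For the cross terms, I will use the independence of $S_x$ and $S_y$ for $x\neq y$. This gives $\E[a_x^\top Q a_y]=\E[a_x]^\top Q\,\E[a_y]$. Uniformity over $k$-subsets gives $\E[a_x]=(k/M)\one_M$, and then $Q\one_M=0$ kills every cross term. The key point is that this cancellation does not depend on the signs $s_xs_y$, because each cross term already has zero mean individually. This is why the result holds for arbitrary sign patterns, including the one-sided case. Combining the two parts gives $\E[T\given s]=d\gamma$.

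Dividing by $\gamma=k(1-k/M)$ completes the proof, and this step is valid because $\gamma>0$ exactly when $1\le k<M$. The main obstacle is small: one must make sure the cross-term factorization uses only pairwise independence together with the first moment $\E[a_x]$. In particular, no distinctness constraint across elements should enter, and none is needed because distinctness is imposed only within each $S_x$. Every step is an exact finite identity, so no asymptotics appear anywhere.
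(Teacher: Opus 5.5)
Your proposal is correct and follows essentially the same route as the paper: expand $T=\sum_{x,y}s_xs_y\,a_x^\top Qa_y$, observe that each diagonal term equals $\gamma$ deterministically, and show that each cross term has zero mean whatever the signs are. The only cosmetic difference is in the cross terms. The paper removes them via the hypergeometric conditional mean $\E[\lvert S_x\cap S_y\rvert\mid S_y]=k^2/M$, a form it reuses in the variance proof. You instead factor $\E[a_x^\top Qa_y]=\E[a_x]^\top Q\,\E[a_y]=0$, which is exactly the argument the paper uses for the general mapping-aware Theorem~\ref{thm:master}.
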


\noindent\textit{Proof.}  See Appendix~\ref{app:thm1} (exact unbiasedness
proof).

Intuitively, $Q$ projects $C$ onto the subspace with the count-array mean removed:
if the signs are imbalanced ($d_+\ne d_-$), every component of $C$ carries a
common drift of roughly $k\theta_\Delta d/M$; taking $\sum_i C_i^2$ directly would
square this drift and fold it into the sum, whereas the construction of $Q$ removes it
exactly.

Sign balance does not affect the expectation: the estimator is centered
at~$d$ regardless of how many differences come from each side, so the
protocol can read the same count channel for capacity sizing without a
separate cardinality sketch.

\begin{theorem}[Exact variance]
\label{thm:variance}
Under the independent, uniform, without-replacement $k$-subset mapping model,
\begin{equation}
\label{eq:variance}
  \Var(\dhat) = \frac{2d(d-1)}{M-1},
  \qquad
  \operatorname{Std}\!\left(\frac{\dhat}{d}\right) \approx \sqrt{\frac{2}{M-1}}.
\end{equation}
The variance of the normalized estimator over the random mappings is
independent of $k$ and the sign balance for any fixed sign pattern.
\end{theorem}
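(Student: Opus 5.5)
Write $u_x = Q a_x$, so that $QC=\sum_x s_x u_x$ and $T=\lVert\sum_x s_x u_x\rVert^2$. Expanding the square gives
\[
T=\sum_x \lVert u_x\rVert^2+\sum_{x\ne y}s_xs_y\langle u_x,u_y\rangle .
\]
The first sum is deterministic, because $\lVert u_x\rVert^2=a_x^\top Q a_x=k-k^2/M=\gamma$ for every $k$-subset. Hence
\[
T-d\gamma=\sum_{x\ne y}s_xs_y\langle u_x,u_y\rangle ,
\]
and $\Var(\dhat)=\gamma^{-2}\E\bigl[(T-d\gamma)^2\bigr]$, since Theorem~\ref{thm:unbiased} gives $\E[T]=d\gamma$. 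The plan is to compute this second moment exactly.

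\textbf{Step 1: one cross term.} For fixed $x\ne y$, set $Z_{xy}=\langle u_x,u_y\rangle=a_x^\top Qa_y=|S_x\cap S_y|-k^2/M$. The intersection size is hypergeometric with mean $k^2/M$, so $\E Z_{xy}=0$. Its variance is the hypergeometric variance
\[
V:=\frac{k^2(M-k)^2}{M^2(M-1)}=\frac{\gamma^2}{M-1}.
\]
This identity is the source of the clean answer.

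\textbf{Step 2: the covariance structure is diagonal.} Expanding $(\sum_{x\ne y}s_xs_yZ_{xy})^2$ produces products $Z_{xy}Z_{x'y'}$. I would show that only pairs with $\{x',y'\}=\{x,y\}$ contribute.

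If the two pairs are disjoint, the terms are independent and each has mean zero.

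If they share exactly one index, say $Z_{xy}Z_{xz}$ with $y\ne z$, condition on $S_x$. Given $S_x$, the terms $Z_{xy}$ and $Z_{xz}$ are independent. Moreover $\E[Z_{xy}\mid S_x]=a_x^\top Q\,\E[a_y]=a_x^\top Q\,(k/M)\one_M=0$. So the product has zero mean. This is the step where centering by $Q$ matters; without it, the shared-index terms would produce a drift depending on the signs and on $k$.

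The surviving terms are the ordered pairs $(x,y)$ matched with $(x,y)$ or $(y,x)$. Each of the $d(d-1)$ ordered pairs therefore contributes $2V$, and the sign factor is $s_x^2s_y^2=1$. This gives
\[
\E\bigl[(T-d\gamma)^2\bigr]=2d(d-1)V .
\]
The sign factors disappear, so the result holds for any fixed sign vector, including unbalanced ones.

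\textbf{Step 3: assemble.} Dividing by $\gamma^2$,
\[
\Var(\dhat)=\frac{2d(d-1)}{M-1},
\]
which involves neither $k$ nor the signs. The relative standard deviation is
\[
\sqrt{\frac{2(d-1)}{d(M-1)}}\approx\sqrt{\frac{2}{M-1}}
\]
for large $d$.

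The main obstacle is Step 2, the shared-index cross terms. I would handle it first by the conditioning argument above, and verify that it needs only $\E[a_y]=(k/M)\one_M$ together with independence of $S_y$ and $S_z$ given $S_x$. The remaining check is the hypergeometric variance identity $V=\gamma^2/(M-1)$, which is routine.
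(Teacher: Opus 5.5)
Your proposal is correct and follows the paper's proof essentially step for step. Both arguments use the deterministic diagonal $a_x^\top Qa_x=\gamma$, the hypergeometric variance $\gamma^2/(M-1)$ for a single cross term, zero covariance for disjoint and one-shared-index pairs via conditioning, and the resulting count $2d(d-1)V$. The differences are cosmetic: you get $\E[Z_{xy}\mid S_x]=0$ from $Q\,\E[a_y]=0$ rather than from the hypergeometric conditional mean, and you count ordered rather than unordered pairs.
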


\noindent\textit{Proof.}  See Appendix~\ref{app:thm2} (exact variance proof).

Increasing $M$ shrinks the relative error as $M^{-1/2}$; $k$ still
affects the unnormalized $T$, the peeling threshold, and higher-order
finite-sample behavior, but not the normalized estimator's first two
moments.

The relative standard deviation is defined for $d\ge 1$; for $d=0$ we have
$C=0$ and $\dhat=0$ deterministically.

\begin{corollary}[Relative standard deviation]
\label{cor:rsd}
For $d\ge 1$,
\begin{equation}
\label{eq:rsd}
  \operatorname{RSD}(\dhat)
  = \frac{\operatorname{Std}(\dhat)}{\E[\dhat]}
  = \sqrt{\frac{2(d-1)}{d(M-1)}}
  \approx \sqrt{\frac{2}{M-1}}.
\end{equation}
Equivalently,
\[
  \E\!\left[\frac{\dhat}{d}-1\right]=0,
  \qquad
  \operatorname{Std}\!\left(\frac{\dhat}{d}-1\right)
  = \sqrt{\frac{2(d-1)}{d(M-1)}}
  = O\!\left((M-1)^{-1/2}\right).
\]
\end{corollary}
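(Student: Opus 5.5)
The corollary follows by direct substitution of Theorems~\ref{thm:unbiased} and~\ref{thm:variance}. Fix $d\ge 1$ and any sign pattern. By Theorem~\ref{thm:unbiased}, $\E[\dhat]=d>0$, so the ratio defining the relative standard deviation is well posed. By Theorem~\ref{thm:variance}, $\operatorname{Std}(\dhat)=\sqrt{2d(d-1)/(M-1)}$. Dividing by $d$ gives
\[
  \operatorname{RSD}(\dhat)=\frac{1}{d}\sqrt{\frac{2d(d-1)}{M-1}}=\sqrt{\frac{2(d-1)}{d(M-1)}} .
\]
This holds exactly. The approximation $\approx\sqrt{2/(M-1)}$ comes from the factor $(d-1)/d=1-1/d\in[0,1)$, which tends to $1$ as $d$ grows.

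For the equivalent form, use linearity. First, $\E[\dhat/d-1]=\E[\dhat]/d-1=0$, again by Theorem~\ref{thm:unbiased}. Second, subtracting the constant $1$ does not change the standard deviation, and scaling by $1/d$ multiplies it by $1/d$. Hence $\operatorname{Std}(\dhat/d-1)=\operatorname{Std}(\dhat)/d$, which is the same expression as above.

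For the $O((M-1)^{-1/2})$ claim, note that $(d-1)/d\le 1$. This gives the uniform bound
\[
  \operatorname{Std}\!\left(\frac{\dhat}{d}-1\right)\le\sqrt{2}\,(M-1)^{-1/2},
\]
with the constant independent of $d$, $k$, and the sign balance.

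There is no real obstacle here. The only points to handle carefully are (i) excluding $d=0$, where the RSD is undefined because $\E[\dhat]=0$; this is why the corollary assumes $d\ge 1$, and the text before it already notes that $\dhat=0$ deterministically in that case. (ii) When $d=1$ the RSD is exactly $0$. This is consistent with the formula: a single element gives $C=\pm a_x$, so $T=k-k^2/M=\gamma$ deterministically.
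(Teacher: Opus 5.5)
Your proposal is correct and follows the paper's own derivation (the remark after the proof of Theorem~\ref{thm:variance} in Appendix~\ref{app:thm2}): divide $\operatorname{Std}(\dhat)=\sqrt{2d(d-1)/(M-1)}$ by $\E[\dhat]=d$, then use linearity for the centered form. Your extra checks are consistent with the paper: excluding $d=0$, noting that $d=1$ gives $\dhat=1$ deterministically, and giving the explicit constant $\sqrt{2}$ for the $O((M-1)^{-1/2})$ bound.
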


This gives a direct capacity--precision trade-off: for example, RSD is about
$8.9\%$ at $M=256$ and about $4.4\%$ at $M=1024$.  Thus even a first-round
sketch too small to peel still acts as a fixed-precision measurement summary
of $M$ count cells; roughly quadrupling $M$ halves the typical relative
error.

\subsection{Distributional characterization}
\label{sec:distribution}

Protocol configuration needs to control the probability of
underestimation, which requires the distribution of~$\dhat/d$, not just
its first two moments.  This subsection gives the unconditional
asymptotic distribution; \S\ref{sec:failure-cond} handles failure
conditioning.

The mean and variance are exact, but two moments do not determine tail
probabilities or confidence intervals.  To obtain the full asymptotic
distribution, fix the IBLT dimension $M$ and write the centered count vector
$QC$ as a sum of $d$ independent bounded mapping vectors: the multivariate
central limit theorem first gives a Gaussian limit for $QC/\sqrt{d}$, and the
squared norm then converts it to a chi-square limit.

The chi-square form matters for protocol configuration.  For small~$M$
the distribution of $\dhat/d$ is markedly right-skewed; a symmetric
normal approximation systematically under-predicts the lower quantile,
so the planned capacity errs on the conservative side.  The chi-square
limit captures this asymmetry; the quantitative gap is verified in
\S\ref{sec:exp-validation}.

From a systems perspective: with $M$ fixed by configuration, $\dhat/d$
converges to a stable distribution as differences grow, converting
``about $8.9\%$ typical error'' into a configurable tail probability.

\begin{theorem}[Fixed-$M$ chi-square limit, asymptotic confidence interval]
\label{thm:chi2}
Fix $M,k$.  As $d\to\infty$,
\begin{equation}
\label{eq:chi2}
  \frac{\dhat}{d} \Longrightarrow \frac{\chi^2_{M-1}}{M-1}.
\end{equation}
\end{theorem}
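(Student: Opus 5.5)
The plan is to prove a multivariate central limit theorem for the centered count vector $QC$ and then apply the continuous mapping theorem to its squared norm. Write
\[
  QC=\sum_{x=1}^d s_x\,Qa_x .
\]
For a uniform $k$-subset, $\E[a_x]=(k/M)\one_M$, and $Q\one_M=0$, so each summand $s_xQa_x$ has mean exactly zero whatever the sign $s_x$. This is the same cancellation used in Theorem~\ref{thm:unbiased}. The drift caused by sign imbalance is therefore removed before any limit is taken, and no random recentering is needed.

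\textbf{Step 1 (covariance).} I will compute $\Sigma=\Cov(a_x)$ exactly for the without-replacement mapping. The diagonal entries are $\frac{k}{M}\bigl(1-\frac{k}{M}\bigr)$. The off-diagonal entries are
\[
  \frac{k(k-1)}{M(M-1)}-\frac{k^2}{M^2}=-\frac{k(M-k)}{M^2(M-1)}.
\]
Hence $\Sigma=cQ$ with
\[
  c=\frac{k(M-k)}{M(M-1)}=\frac{\gamma}{M-1}.
\]
Because $s_x^2=1$ and $Q$ is a projector, $\Cov(s_xQa_x)=Q\Sigma Q=cQ$ for either sign. So all $d$ summands are independent, mean zero, uniformly bounded (each entry is at most $1$ in absolute value), and share the covariance $cQ$.

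\textbf{Step 2 (CLT).} The sign sequence is fixed but arbitrary, so the summands are not identically distributed: they come in two types, $\pm Qa$. I will therefore use the Lindeberg--Feller multivariate CLT, via Cram\'er--Wold, rather than the i.i.d.\ version. The average covariance is exactly $cQ$ for every $d$. Boundedness gives the Lindeberg condition immediately, since $\|s_xQa_x\|\le\sqrt M$ while the normalization grows like $\sqrt d$. This yields
\[
  QC/\sqrt d \Longrightarrow Z\sim N(0,cQ).
\]
I expect this step to be the only real obstacle: the arbitrary sign sequence. The resolution is that the covariance is sign-invariant and the summands are bounded. Alternatively, one can split the sum into the $d_+$ and $d_-$ groups and use the independence of the two groups. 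However, that route requires handling the case where one group stays bounded, so Lindeberg is cleaner.

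\textbf{Step 3 (chi-square).} By the continuous mapping theorem,
\[
  T/d=\|QC/\sqrt d\|^2 \Longrightarrow \|Z\|^2 .
\]
Write $Z=\sqrt c\,QG$ with $G\sim N(0,I_M)$. Since $Q$ is an orthogonal projector of rank $M-1$, $\|QG\|^2\sim\chi^2_{M-1}$, and so $\|Z\|^2\sim c\,\chi^2_{M-1}$. Dividing by $\gamma$ and using $c/\gamma=1/(M-1)$ gives
\[
  \frac{\dhat}{d}=\frac{T}{d\gamma}\Longrightarrow\frac{\chi^2_{M-1}}{M-1},
\]
which is \eqref{eq:chi2}. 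As a consistency check, the limit has mean $1$ and variance $2/(M-1)$. This matches Theorems~\ref{thm:unbiased} and~\ref{thm:variance}, because $2d(d-1)/\bigl((M-1)d^2\bigr)\to 2/(M-1)$.

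The asymptotic confidence interval then follows by inverting chi-square quantiles. For the lower $\delta$-quantile $\chi^2_{M-1,\delta}$, the event $\{d\le \dhat\,(M-1)/\chi^2_{M-1,\delta}\}$ has probability tending to $1-\delta$, which gives the upper confidence bound used for capacity sizing.
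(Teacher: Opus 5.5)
Your proposal is correct and follows essentially the same route as the paper's proof in Appendix~\ref{app:thm3}. Both compute the exact covariance $\Cov(a_x)=\sigma^2 Q$ with $\sigma^2=k(M-k)/[M(M-1)]$, use the Lindeberg--Feller triangular-array CLT to handle arbitrary fixed signs, apply the continuous mapping theorem to $\lVert QC/\sqrt d\rVert^2$ to obtain $\sigma^2\chi^2_{M-1}$, and normalize via $\sigma^2/\gamma=1/(M-1)$; your closing quantile-inversion step for the confidence bound is also valid, because the limiting law is continuous.
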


\noindent\textit{Proof.}  See Appendix~\ref{app:thm3} (chi-square limit
proof).

Let $\chi^2_{\nu,p}$ denote the $p$-quantile of the chi-squared distribution
with $\nu$ degrees of freedom.  Theorem~\ref{thm:chi2} gives the
unconditional asymptotic $1-\delta$ confidence interval
\begin{equation}
\label{eq:chi2-ci}
  \left[
    \frac{\nu\,\dhat}{\chi^2_{\nu,1-\delta/2}},
    \frac{\nu\,\dhat}{\chi^2_{\nu,\delta/2}}
  \right],
\end{equation}
with $\nu=M-1$.  For large $M$, $\chi^2_\nu/\nu$ is approximately
$\mathcal N(1,2/\nu)$; for example, its $\delta$-level lower quantile is
about $1-z_{1-\delta}\sqrt{2/(M-1)}$, where $z_{1-\delta}$ is the
$(1-\delta)$-quantile of the standard normal.

The receiver needs only the sketch dimension $M$ to convert the same pre-peeling
counts into an asymptotic confidence interval for $d$ using $\chi^2_{M-1}$
quantiles: no distribution fitting, no calibrated model, and no extra
estimation summary.  The interval is asymptotic in $d\to\infty$ for
fixed~$M$; \S\ref{sec:exp-validation} checks its finite-sample accuracy.

A finite-sample strengthening (Proposition~\ref{prop:kolmogorov}, supplementary
material) shows that, for fixed $M$, the distributional error of the chi-square
approximation shrinks as $O_M(d^{-1/2})$ via a convex-set Berry--Esseen
bound; explicit constants and the full proof are in the supplementary
material.

\subsection{Validity under failure conditioning}
\label{sec:failure-cond}

The results above characterize the estimator under the unconditional mapping
distribution, but the protocol reads the estimate only after peeling fails.
This section verifies that the selection bias does not invalidate the
measurement and provides the failure-conditioned guarantees the protocol
requires.

There is a subtle selection-bias issue: the samples the protocol observes are
restricted to the subset of IBLTs that happen not to decode.  Failed IBLT
sketches may have more crowded hash structures than typical sketches, so unconditional
unbiasedness does not automatically imply unbiasedness after failure.  We
first identify the mapping information selected by failure and then show that
an arbitrary fixed sign composition enters the conditional mean through one
scalar projection.  I.i.d. random signs cancel that projection exactly after
sign averaging.  The failure-conditioned asymptotic law and the
distribution-free finite-sample lower-tail bound likewise allow every sign
composition fixed independently of the mapping randomness, including fully
one-sided differences.

\begin{fact}[Failure measurability]
\label{fact:failure-measurable}
View the mappings of the difference elements as a hypergraph---cells are
vertices, and each difference element forms a hyperedge connecting the $k$
cells to which it maps.  In the idealized decoder, a cell can trigger peeling if and
only if it currently touches exactly one undecoded difference element;
graph-theoretically, this is a vertex of degree 1.  Whether the element comes
from $A\setminus B$ or $B\setminus A$ only affects whether its count reads
$+1$ or $-1$; it does not change how many elements the cell touches.

Repeated peeling is therefore equivalent to repeatedly deleting degree-1
vertices and their incident hyperedges.  If a subgraph remains in which every
vertex has degree at least 2 (a 2-core), the decoder cannot continue; hence
the failure event $F$ depends only on the hash mappings
$\{S_x\}_{x=1}^d$, not on which side the differences come from.  We assume
the checksum never collides: checksum collisions are excluded from
the idealized model, and their probability is controlled separately by the
checksum width.
\end{fact}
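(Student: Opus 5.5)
The plan is to prove Fact~\ref{fact:failure-measurable} by describing the idealized peeling decoder as a deterministic procedure on the mapping hypergraph $H=([M],\{S_x\}_{x=1}^d)$. First I will show that the state of the decoder, meaning which elements are recovered and which remain, is a function of $\{S_x\}$ alone; the signs $s_x$ only decorate the output. Throughout I assume checksums never collide. Under that assumption a cell passes the singleton test if and only if its residual multiset of undecoded difference elements has size exactly one. The count alone cannot decide this, because count $\pm1$ can also arise as, for example, two $+1$ elements and one $-1$ element. So the test depends on the number of undecoded hyperedges incident to the cell, which is the cell's degree in the residual hypergraph, and not on the signed sum.

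Next I will argue by induction on peeling steps. Let $H_t$ denote the residual hypergraph of undecoded elements after $t$ steps. The inductive hypothesis is that the residual count vector equals $\sum_{x\in H_t}s_x a_x$ and that $H_t$ depends only on the mappings. A cell is pure in $H_t$ exactly when its degree in $H_t$ is $1$. If the pure cell holds element $x$, peeling reads off the key and sign of $x$ and subtracts $s_x a_x$, which removes hyperedge $x$ to give $H_{t+1}$. The removal depends only on $S_x$, and subtracting $s_x a_x$ preserves the count invariant regardless of the sign.

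The induction shows that the decoder halts precisely when $H_t$ has no degree-1 vertex among the vertices incident to remaining edges. I will then check that the final residual is the 2-core of $H$ and is independent of the order in which cells are peeled. Peeling only deletes edges, so any vertex that reaches degree 1 keeps degree at most 1 and stays eligible until its edge is removed. The standard confluence argument then applies: the maximal sub-hypergraph in which every vertex has degree $0$ or at least $2$ is reached under every schedule. It follows that $F=\{\text{2-core of }H\neq\emptyset\}$ is a measurable function of $(S_1,\dots,S_d)$, and conditioning on $F$ leaves the sign vector untouched.

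The main obstacle is the first step, justifying that the decoder's purity test coincides with degree one. A real decoder tests count $\pm1$ together with checksum agreement. The no-collision assumption must exclude exactly the cases where a multi-element cell with net count $\pm1$ falsely verifies, and it must also cover the reverse case, where a genuine singleton with count $\pm1$ always verifies. I will state this idealization explicitly as the hypothesis that makes the decoder degree-driven. The rest is routine.
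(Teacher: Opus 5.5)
Your proposal is correct and follows the same route as the paper: under the no-collision idealization the purity test coincides with residual degree one, peeling is iterated deletion of the hyperedge at a degree-1 vertex, and the terminal residual is the 2-core, so $F=\{\text{2-core of }H\neq\emptyset\}$ is a function of $\{S_x\}$ alone. The paper simply asserts, here and in Appendix~\ref{app:thm4}, that peeling clears the sketch exactly when the 2-core is empty; your count invariant and your confluence argument, which shows the residual does not depend on the peeling schedule, fill in details the paper leaves implicit.
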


\begin{theorem}[Failure-conditioned sign projection]
\label{thm:cond-mean}
Let $d\ge2$ and $p_F=\Pr[F]>0$.  For distinct elements $x\ne y$, write
$W_{xy}=a_x^\top Qa_y$.  Conditional exchangeability of the mappings given
$F$ makes
\[
  c_F=\gamma^{-1}\E[W_{xy}\given F]
\]
the same for every pair $x\ne y$ and independent of the fixed sign pattern.
Then the following statements hold.

\emph{Fixed-sign projection.}  For any fixed sign vector with
$\theta_\Delta=1-2\theta_-=(d_+-d_-)/d$,
\begin{equation}
\label{eq:sign-projection}
  \frac{\E[\dhat\given F]}{d}-1
  =c_F\bigl(d\theta_\Delta^2-1\bigr).
\end{equation}

\emph{Random-sign cancellation.}  If the signs $\{s_x\}$ are i.i.d.\
uniform on $\{\pm1\}$ and independent of the mappings, then averaging over
both signs and mappings gives the exact identity
\[
  \E[\dhat\given F]=d.
\]

\emph{Composition-aware envelope.}  For every fixed sign vector,
\begin{equation}
\label{eq:composition-aware-bound}
  \left|\frac{\E[\dhat\given F]}{d}-1\right|
  \le
  \left|\frac{d\theta_\Delta^2-1}{d-1}\right|
  \sqrt{\frac{2(d-1)}{d(M-1)}}
  \sqrt{\frac{1-p_F}{p_F}}.
\end{equation}
\end{theorem}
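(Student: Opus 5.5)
The plan is to expand the centered energy into diagonal and cross terms, handle the diagonal exactly, and push all failure-induced bias into one exchangeable pair statistic, whose size is then controlled by Cauchy–Schwarz.

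\emph{Step 1: decomposition.} With $C=\sum_x s_x a_x$ and $s_x^2=1$,
\[
T=C^\top QC=\sum_x a_x^\top Q a_x+\sum_{x\ne y}s_xs_yW_{xy}.
\]
Since $a_x$ has exactly $k$ ones, $a_x^\top Qa_x=k-k^2/M=\gamma$ deterministically. Hence $T=d\gamma+\sum_{x\ne y}s_xs_yW_{xy}$ holds pointwise, not only in expectation.

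\emph{Step 2: exchangeability.} The mappings $S_1,\dots,S_d$ are i.i.d. By Fact~\ref{fact:failure-measurable}, $F$ is a function of $\{S_x\}$ alone, and it is invariant under relabeling the elements (the 2-core of the hypergraph does not depend on edge labels). So the conditional law of $(S_1,\dots,S_d)$ given $F$ is exchangeable. Therefore $\E[W_{xy}\given F]$ is the same for every ordered pair $x\ne y$, and it does not involve $s$. This justifies the constant $c_F$.

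\emph{Step 3: fixed-sign projection.} For a fixed sign vector,
\[
\E[T\given F]=d\gamma+\gamma c_F\sum_{x\ne y}s_xs_y,
\qquad
\sum_{x\ne y}s_xs_y=\Bigl(\sum_x s_x\Bigr)^2-d=d^2\theta_\Delta^2-d.
\]
Dividing by $d\gamma$ gives \eqref{eq:sign-projection}.

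\emph{Step 4: random-sign cancellation.} If the signs are i.i.d.\ uniform and independent of the mappings, then they are also independent of $F$, because $F$ is mapping-measurable. So $\E[s_xs_yW_{xy}\given F]=\E[s_xs_y]\,\E[W_{xy}\given F]=0$ for $x\ne y$, and $\E[\dhat\given F]=d$ exactly.

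\emph{Step 5: envelope (main obstacle).} The difficulty is bounding $c_F$ without any knowledge of the 2-core event. The tool is a covariance argument, applied to a sign pattern chosen so that the constant comes out sharp.
\begin{itemize}
\item For any fixed sign vector, Theorem~\ref{thm:unbiased} gives $\E[\dhat]=d$. Then
\[
\E[\dhat\given F]-d=\frac{\Cov(\dhat,\one_F)}{p_F},
\]
and Cauchy–Schwarz with Theorem~\ref{thm:variance} yields
\[
\bigl|\E[\dhat\given F]-d\bigr|\le \operatorname{Std}(\dhat)\sqrt{\frac{1-p_F}{p_F}}=d\sqrt{\frac{2(d-1)}{d(M-1)}}\sqrt{\frac{1-p_F}{p_F}}.
\]
\item Bounding the given sign vector directly in this way would lose the factor $|d\theta_\Delta^2-1|/(d-1)$. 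Instead, exploit that $c_F$ is sign-independent and apply the bound to the all-plus pattern. There $d\theta_\Delta^2-1=d-1$, so Step 3 gives
\[
|c_F|(d-1)\le\sqrt{\frac{2(d-1)}{d(M-1)}}\sqrt{\frac{1-p_F}{p_F}}.
\]
\item Substituting this bound on $|c_F|$ back into \eqref{eq:sign-projection} for an arbitrary sign vector gives exactly \eqref{eq:composition-aware-bound}.
\end{itemize}

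The remaining care point is Step 2: the exchangeability claim must be checked against the precise definition of $F$ as 2-core nonemptiness, which is permutation-invariant in the edge labels.
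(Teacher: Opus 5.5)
Your proposal is correct and follows the paper's proof essentially step for step: the same pointwise expansion $T=d\gamma+\sum_{x\ne y}s_xs_yW_{xy}$ with deterministic self-energy, the same label-permutation invariance of the 2-core to make $c_F$ well defined, and the same device of bounding $|c_F|$ via $\Cov(\dhat,\mathbf1_F)/p_F$ and Cauchy--Schwarz on the all-plus pattern before substituting back into the projection identity. The only cosmetic difference is in the random-sign step: you factor $\E[s_xs_yW_{xy}\given F]=\E[s_xs_y]\,\E[W_{xy}\given F]$ term by term, whereas the paper averages the fixed-sign identity using $\E_s[(\sum_x s_x)^2]=d$, and the two are equivalent.
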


\noindent\textit{Proof.}  See Appendix~\ref{app:thm4} (failure-conditioned
validity proof).

For fixed signs, Eq.~\eqref{eq:sign-projection} isolates the entire
conditional-mean effect in the product of a mapping-dependent coefficient
$c_F$ and the composition factor $d\theta_\Delta^2-1$.  I.i.d. random signs
make that factor zero on average, yielding exact conditional unbiasedness.
An exactly balanced fixed composition has shift $-c_F$, only
$1/(d-1)$ of the fully one-sided shift in magnitude; under i.i.d. random
signs, $\E_s[\theta_\Delta^2]=1/d$, so the projection cancels exactly after
sign averaging.
For deterministic compositions, Eq.~\eqref{eq:composition-aware-bound}
controls the remaining projection and vanishes with
$\sqrt{(1-p_F)/p_F}$ as failure becomes typical.  Since
$|d\theta_\Delta^2-1|\le d-1$, the previous sign-agnostic
Cauchy--Schwarz envelope remains an immediate corollary.

For $d=1$, the estimator is deterministically $\dhat=1$.  In the idealized
decoder the sole element is always peelable, so the failure event has
probability zero and no failure-conditioned distribution arises.

Control of the conditional mean alone does not determine a conditional
quantile: two distributions can have the same mean and very different lower
tails.  The next proposition supplies the distribution-level bridge.  It
uses the fact that failure becomes typical in deep overload to transfer the
entire chi-square limit of Theorem~\ref{thm:chi2} to the failed instances.

\begin{proposition}[Chi-square inheritance when failure is typical]
\label{prop:cond-chi2}
Fix $M,k$, let $d\ge1$, and fix any sign vector $s\in\{\pm1\}^d$.
Let $F_d$ be the peeling failure event with $d$ difference elements, and define
\begin{equation}
\label{eq:cond-eps}
  \varepsilon_{M,d}
  =d\,k\left(1-\frac{k}{M}\right)^{d-1}.
\end{equation}
Whenever $\Pr[F_d]>0$,
\begin{equation}
\label{eq:cond-transfer}
  \sup_t\left|
    \Pr\!\left[\dhat/d\le t\given F_d\right]
    -\Pr\!\left[\dhat/d\le t\right]
  \right|
  \le 1-\Pr[F_d]
  \le \min\{1,\varepsilon_{M,d}\}.
\end{equation}
Consequently, as $d\to\infty$ with $M,k$ fixed,
\begin{equation}
\label{eq:cond-chi2}
  \left.\frac{\dhat}{d}\;\right|F_d
  \Longrightarrow \frac{\chi^2_{M-1}}{M-1}.
\end{equation}
\end{proposition}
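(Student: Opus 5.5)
The proof splits into three parts: a general conditioning inequality, a union bound on the probability of success, and the transfer of the chi-square limit.

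\textbf{Conditioning inequality.} Let $E$ be any event and $F=F_d$ with $p=\Pr[F]>0$. Write
\[
\Pr[E]=p\Pr[E\given F]+(1-p)\Pr[E\given F^c],
\]
with the second term read as zero when $p=1$. Then
\[
\Pr[E\given F]-\Pr[E]=(1-p)\bigl(\Pr[E\given F]-\Pr[E\given F^c]\bigr).
\]
Both conditional probabilities lie in $[0,1]$, so their difference has absolute value at most $1$. Hence $|\Pr[E\given F]-\Pr[E]|\le 1-p$. Taking $E=\{\dhat/d\le t\}$ and the supremum over $t$ gives the first inequality in \eqref{eq:cond-transfer}. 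This step uses nothing about the sign vector. By Fact~\ref{fact:failure-measurable}, $F_d$ is mapping-measurable, and the sign vector $s$ is fixed, so the probabilities are over the mappings only.

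\textbf{Bounding $1-\Pr[F_d]=\Pr[F_d^c]$.} Peeling succeeds only if the hypergraph has an empty 2-core. In particular, at the very first step some cell must have degree exactly $1$, since otherwise no peeling can start. (For $d\ge1$ a nonempty edge set with all degrees $\ge2$ is itself a 2-core.) So success implies that there is a pair $(x,i)$ with $i\in S_x$ and $i\notin S_y$ for all $y\ne x$. For a fixed $x$ and a fixed $i\in S_x$, each other element independently avoids $i$ with probability $\binom{M-1}{k}/\binom{M}{k}=1-k/M$. A union bound over the $d$ choices of $x$ and the $k$ cells of $S_x$, conditioning on $S_x$ and using the independence of the other $S_y$, gives
\[
\Pr[F_d^c]\le dk(1-k/M)^{d-1}=\varepsilon_{M,d}.
\]
The trivial bound $1$ gives the $\min$. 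Some care is needed to phrase "success requires an initial degree-1 vertex" correctly in the idealized decoder. For $d=1$ the bound reads $k\ge1$, which is vacuous but consistent.

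\textbf{Transferring the limit.} Fix $M$ and $k<M$, so that $1-k/M<1$. Then $\varepsilon_{M,d}\to0$ geometrically as $d\to\infty$. In particular $\Pr[F_d]\to1$, so $\Pr[F_d]>0$ for all large $d$. Let $G$ be the distribution function of $\chi^2_{M-1}/(M-1)$, which is continuous. Theorem~\ref{thm:chi2} gives $\Pr[\dhat/d\le t]\to G(t)$ for every $t$, and Theorem~\ref{thm:chi2} holds for any fixed sign sequence because its proof is a multivariate CLT over the mappings. By the triangle inequality,
\[
|\Pr[\dhat/d\le t\given F_d]-G(t)|\le \varepsilon_{M,d}+|\Pr[\dhat/d\le t]-G(t)|\to0,
\]
which yields \eqref{eq:cond-chi2}.

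The main obstacle is the second part: justifying rigorously that success forces an initial degree-one cell. That is where the combinatorial content lies. Everything else is elementary conditioning.
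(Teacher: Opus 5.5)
Your proposal is correct and follows essentially the same route as the paper: a union bound on the existence of an initial degree-1 cell, a conditioning bound of $1-\Pr[F_d]$ on the CDF gap, and the triangle inequality with Theorem~\ref{thm:chi2}. The differences are cosmetic. The paper bounds the CDF gap through the total-variation mixture identity (disjoint supports of $P_F$ and $P_{F^c}$), where you use a per-event identity. The paper also union-bounds over the $M$ cells using $\Pr[D_i=1]=d\frac{k}{M}(1-\frac{k}{M})^{d-1}$, where you sum over (element, cell) pairs; both compute the same expected number of initial degree-1 cells.
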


\noindent\textit{Proof.}  See Appendix~\ref{app:thm4}.  A successful
nonempty peeling process must start from at least one degree-1 cell, which
gives $1-\Pr[F_d]\le\varepsilon_{M,d}$.  Conditioning on an event moves the
underlying probability measure by exactly the probability of its complement;
mapping the outcome to $\dhat/d$ can only decrease that distance, giving
Eq.~\eqref{eq:cond-transfer}.  The conclusion then follows by the triangle
inequality and Theorem~\ref{thm:chi2}.

The chi-square shape is therefore operational rather than an artifact of
unconditional sampling: in deep overload, conditioning on failure removes an
asymptotically negligible fraction of mappings, so the same fixed-degree
chi-square law describes the instances on which the protocol reads the
estimate.  Its skewness $\sqrt{8/(M-1)}$ does not vanish with $d$, preserving
the asymmetric lower tail after conditioning.

Notably, Proposition~\ref{prop:cond-chi2} requires no random-sign model.  The
unconditional chi-square law already holds for every fixed sign composition,
and the total-variation transfer does not average over signs.  In
Theorem~\ref{thm:cond-mean}, random signs enter only when the exact
fixed-sign projection is averaged to obtain $\E[\dhat\given F]=d$.

Equation~\eqref{eq:cond-eps} also makes the certified regime explicit.  For
$k=3$ and the four standard sketch dimensions, $\varepsilon_{M,d}$ falls below
$10^{-3}$ once $d/M$ reaches approximately $4.5$--$6.1$.  At the $d/M=8$
grid tier it ranges from $3.4\times10^{-8}$ to $3.7\times10^{-6}$, and at the
production-shaped load $d/M_1\approx182$ it is below $10^{-200}$.  On the
$d/M\le1.6$ grid tiers the bound is vacuous: near the peeling transition,
conditioning genuinely selects mappings and the proposition makes no claim
about the full conditional shape.  The distribution-free bound below and the
failed-only empirical quantiles of \S\ref{sec:exp-validation} cover that
range.  The supplementary material also records a complementary
proportional-growth direction in which failure need not become typical.

\begin{proposition}[Finite-sample failure-conditioned lower-tail bound]
\label{prop:lower-tail}
For any fixed sign pattern, any $p_F>0$, and any $q\in(0,1)$,
\[
  \Pr\!\left[\dhat \le q\,d \given F\right]
  \le
  \frac{2(d-1)}{d(M-1)\,p_F\,(1-q)^2}.
\]
\end{proposition}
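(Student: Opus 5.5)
The plan is to use Chebyshev's inequality unconditionally and then pay the factor $1/p_F$ to move to the failure-conditioned probability. The bound's right-hand side is exactly $\Var(\dhat)/(p_F\,((1-q)d)^2)$ with $\Var(\dhat)=2d(d-1)/(M-1)$ from Theorem~\ref{thm:variance}, so no finer information about the failure event is needed.

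\textbf{Step 1: drop the conditioning.} Since $\Pr[A\given F]=\Pr[A\cap F]/p_F\le \Pr[A]/p_F$ for any event $A$, I would take $A=\{\dhat\le q\,d\}$. This gives
\[
  \Pr\!\left[\dhat\le q\,d\given F\right]\le \frac{1}{p_F}\Pr\!\left[\dhat\le q\,d\right].
\]
This holds for any fixed sign vector. By Fact~\ref{fact:failure-measurable}, $F$ depends only on the mappings, but that fact is not even required here. The sign vector only needs to be fixed, so that the unconditional moments of Theorems~\ref{thm:unbiased} and~\ref{thm:variance} apply.

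\textbf{Step 2: one-sided Chebyshev on the unconditional law.} By Theorem~\ref{thm:unbiased}, $\E[\dhat]=d$, so the event $\{\dhat\le qd\}$ is contained in $\{|\dhat-d|\ge (1-q)d\}$, because $q<1$. Chebyshev together with Theorem~\ref{thm:variance} then gives
\[
  \Pr[\dhat\le qd]\le \frac{2d(d-1)/(M-1)}{(1-q)^2d^2}=\frac{2(d-1)}{d(M-1)(1-q)^2}.
\]
Combining this with Step~1 yields the claim.

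\textbf{Edge cases and the main obstacle.} The case $d=0$ is vacuous or degenerate, and for $d=1$ we have $p_F=0$, so I would note that $d\ge 2$ is implicit. There is no serious obstacle. The only point needing care is that the moments used are the unconditional ones for a fixed sign vector, which is exactly what Theorems~\ref{thm:unbiased}--\ref{thm:variance} provide. If a sharper constant were desired, one could use Cantelli's inequality, which gives $\sigma^2/(\sigma^2+t^2)$, but the stated bound follows from plain Chebyshev.
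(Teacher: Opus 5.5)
Your proof is correct and is essentially the paper's argument. The paper first bounds the conditional second moment, $\E[(\dhat/d-1)^2\given F]\le \Var(\dhat/d)/p_F$, and then applies Markov's inequality in the conditional measure; you instead apply Chebyshev unconditionally and pay the factor $1/p_F$ on the probability itself, which is the same two ingredients in the opposite order and gives the identical bound.

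The paper's ordering has one practical benefit: the conditional variance bound can be reused for its sharper Cantelli variant under i.i.d.\ random signs (where $\E[\dhat\given F]=d$). That variant gives $v/(v+p_F(1-q)^2)$ with $v=2(d-1)/(d(M-1))$. By contrast, Cantelli applied unconditionally and then divided by $p_F$, as your closing remark suggests, gives only the weaker $v/(p_F(v+(1-q)^2))$, although that version holds for every fixed sign pattern.
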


The bound uses only the exact mean and variance of
Theorems~\ref{thm:unbiased}--\ref{thm:variance} and does not rely on any
asymptotic distribution; under random signs it can be combined with
Theorem~\ref{thm:cond-mean} to give a slightly tighter Cantelli bound.  The
proof, the explicit $q_\delta$, and the range of $(M,p_F,\delta)$ over which
the inversion returns a positive quantile at all are in the supplementary
material.

\subsection{Quantile interface for self-sizing}
\label{sec:quantile-interface}

The previous subsections characterize the distribution of $\dhat$ and its
failure-conditioned behavior.  This subsection consolidates them into the single
quantity the protocol needs: the failure-conditioned lower quantile
$q_\delta$.  It is the only interface the statistical layer hands to
\S\ref{sec:protocol}---the protocol needs this one number to configure the
second-round capacity, not the full distribution.

\begin{definition}[Failure-conditioned lower quantile]
\label{def:qdelta}
For the failure-conditioned distribution encountered by the protocol, a
lower quantile $q_\delta>0$ satisfies
\begin{equation}
\label{eq:qdelta}
  \Pr[\dhat \ge q_\delta\,d \given F] \ge 1-\delta.
\end{equation}
\end{definition}

Operationally: after a first-round failure, with at most $\delta$ risk of
underestimation, the estimate is at least a $q_\delta$ fraction of the
true value.  For example, $q_{0.01}=0.8$ means that with at least $99\%$
conditional probability, $\dhat\ge 0.8d$, so planning capacity as
$\dhat/0.8$ compensates for the lower-tail underestimation.  In practice $q_\delta<1$:
on the standard $M$ tiers calibrated in \S\ref{sec:exp-validation},
$q_{0.01}\in[0.63,0.95]$; smaller
sketches and heavier lower tails produce smaller $q_\delta$ and therefore
larger capacity multipliers.

We fix $\delta=0.01$ throughout the paper, corresponding to a $99\%$
conditional lower-tail target at runtime.

\emph{Obtaining $q_\delta$.}  Production configuration uses the failed-only empirical
quantile; \S\ref{sec:exp-validation} quantifies its sampling uncertainty from
the observed failed-sample counts.  In deep overload,
Proposition~\ref{prop:cond-chi2} explains why the unconditional chi-square
shape remains predictive after failure selection.  Its transfer term is
finite-sample and already negligible on the $d/M=8$ grid tier, where it is at
most $3.7\times10^{-6}$.  The limitation in obtaining a fully explicit
finite-$d$ chi-square calibration comes from the unconditional approximation:
the worst-case Berry--Esseen constant in Proposition~\ref{prop:kolmogorov} is
too conservative for engineering parameters.
When only a lower bound on
the failure probability $p_F\ge p_{F,\min}$ is known, or a distribution-free
guarantee is required, Proposition~\ref{prop:lower-tail} can be inverted to
obtain an analytic $q_\delta$.  That analytic bound is conservative at deep
lower tails (small $M$, $\delta=0.01$), where the inversion can return a
non-positive value and then certifies no usable quantile; where it is
positive it serves as a distribution-free theoretical floor.  The explicit
formula and its validity range are in the supplementary material.

\emph{Interface to the protocol.}  The statistical layer outputs only
$q_\delta$, converting the point estimate into the planning upper bound
$\dhat/q_\delta$.  Capacity sizing is one-sided: overestimation costs bytes,
underestimation risks a second round that is too small.
\S\ref{sec:protocol}'s Proposition~\ref{prop:capacity} composes this planning
upper bound with the decoder's capacity--success operating point and gives
the second-round capacity formula and the end-to-end success-probability
guarantee.

\subsection{Experimental validation}
\label{sec:exp-validation}

The claims of \S\ref{sec:count-measurement} are exact or asymptotic
statements about random mappings; the simulations check whether they hold in
the parameter range the protocol actually uses.  We organize the validation
around three questions.  First, do the unconditional mean, variance, and
quantiles of $\dhat$ match Theorems~\ref{thm:unbiased}--\ref{thm:chi2}?
Second, the protocol reads only after decoding failure---does this selection
step break the conditional unbiasedness of Theorem~\ref{thm:cond-mean}, and
does the failed-sample tail change shape?  Third, when the two sides are
extremely unbalanced, do the conditional-bias bound of
Theorem~\ref{thm:cond-mean} and the lower-tail bound of
Proposition~\ref{prop:lower-tail} cover the empirical values, leaving a safe
margin for the interface of \S\ref{sec:quantile-interface}?

\emph{Experimental grid.}  Simulations use
\[
  M\in\{64,256,1024,4096\},\quad
  k\in\{3,4\},\quad
  d/M\in\{0.4,0.8,1.6,8\},
\]
four fixed sign compositions $\theta_-=d_-/d\in\{0,0.1,0.5,0.9\}$ plus i.i.d.\
random signs: 160 configurations total, $10^6$ independent trials each.  The
four axes target different questions: $M$ controls the relative error in
Theorem~\ref{thm:variance}; $k$ tests whether the $k$-cancellation in the
normalized variance really occurs; $d/M$ extends from near-decodable to deep
overload; and $\theta_-$ covers sign compositions from fully one-sided to
balanced.  Separating $M$ from $d/M$ distinguishes two effects: a small
sketch has greater estimation variance, while the load controls how strongly
conditioning selects the failed subsample.  With $10^6$ trials per configuration, the
empirical quantiles are stable enough to report directly; for failed-only
quantiles, the uncertainty depends on the realized $n_F$ and is quantified
below.

The failure event $F$ is that the mapping hypergraph has a non-empty 2-core,
which depends only on the mapping, consistent with
Fact~\ref{fact:failure-measurable}.  We report tails at $\delta=0.01$; when
distinguishing the all-trial and failed-only samples, we write $q_\delta^U$
and $q_\delta^F$ respectively, where the superscripts denote the
all-trial (unconditional) and failed-only (failure-conditioned) semantics;
the interface $q_\delta$ of \S\ref{sec:quantile-interface} refers to the
latter.

\subsubsection*{All-trial (unconditional) accuracy
(\texorpdfstring{Theorems~\ref{thm:unbiased}--\ref{thm:chi2}}{Theorems 3.1--3.3})}

\begin{table}[!hbtp]
\centering
\caption{Plain estimator: unconditional accuracy (each configuration $10^6$ trials).}
\label{tab:t1a}
\small
\begin{tabular}{rrrrr}
\toprule
$M$ & $d$ & mean($\dhat/d$) & RSD: measured / theory & $q_{0.01}^{U}$: measured / $\chi^2$ / normal \\
\midrule
64    & 102   & 0.9996 & 0.1774 / 0.1773 & 0.6309 / 0.6326 / 0.5875 \\
256   & 410   & 1.0000 & 0.0885 / 0.0885 & 0.8062 / 0.8056 / 0.7942 \\
1024  & 1638  & 0.9999 & 0.0442 / 0.0442 & 0.9001 / 0.9000 / 0.8972 \\
4096  & 6554  & 1.0000 & 0.0221 / 0.0221 & 0.9493 / 0.9493 / 0.9486 \\
\bottomrule
\end{tabular}
\end{table}

Table~\ref{tab:t1a} takes the representative slice $k=3$, $\theta_-=0.5$,
$d=1.6M$ and compares the empirical results with the three theoretical
predictions along the $M$
axis.  The empirical mean of $\dhat/d$ stays within $5\times 10^{-4}$ of~1
at every tier, consistent with exact unbiasedness; the measured RSD matches the
Theorem~\ref{thm:variance} closed form to four significant figures; and the
empirical $1\%$ quantile agrees with the chi-square prediction to thousandths
while the symmetric normal approximation falls systematically low.

\begin{figure}[!hbtp]
\centering
\includegraphics[width=\textwidth]{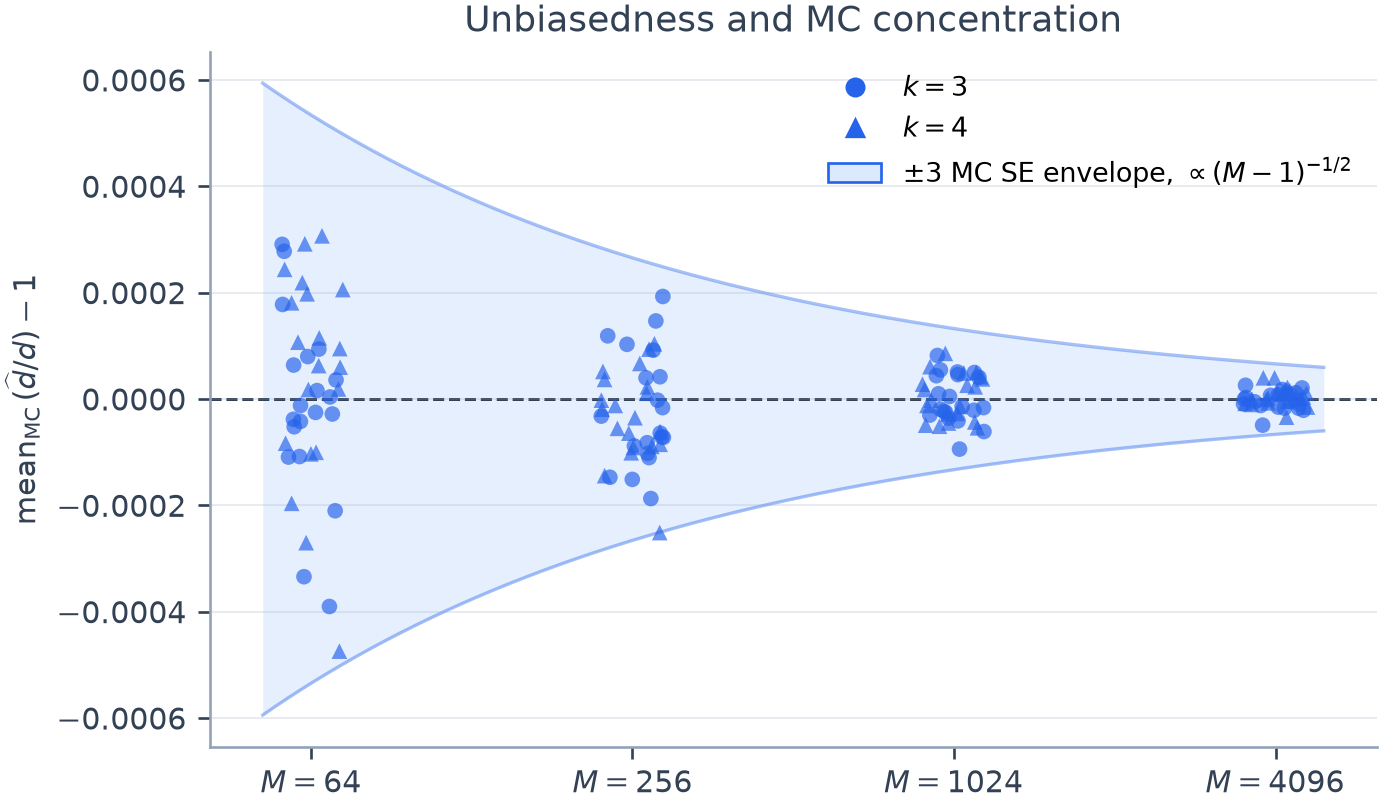}
\caption{Theorem~\ref{thm:unbiased} and Corollary~\ref{cor:rsd}: exact
unbiasedness and the Monte Carlo envelope shrinking with $M$.  Each point is
one of the 160 configurations; the dashed line at 0 is the exact unbiased
value, and the light-blue envelope is $\pm 3\sqrt{2/((M-1)\times10^6)}$,
three Monte Carlo standard errors.}
\label{fig:f3a}
\end{figure}

The mean and variance conclusions hold across the whole grid, not just this
slice: the maximum deviation from unbiasedness is $4.7\times10^{-4}$, and the
measured RSD to closed-form ratio lies in $[0.9981,1.0018]$, with no
systematic drift in $k$, load, or sign composition
(Figures~\ref{fig:f3a}, \ref{fig:f3b} and Table~\ref{tab:d1}).  The
same-configuration RSDs for $k=3$ and $k=4$ coincide tier by tier,
consistent with the
$k$-cancellation of Theorem~\ref{thm:variance}; choosing $k$ for peeling
does not require recalibrating the estimator's mean and RSD.

The last column assesses the accuracy of the analytic quantiles.  The chi-square prediction tracks the empirical quantile to
thousandths at all four tiers; the normal approximation undershoots by 4.3
percentage points at $M=64$, fading to tenths of a point at $M=4096$.
The error is on the conservative side---it inflates capacity rather than
risking under-capacity---so even the cruder approximation is safe, just
wasteful.  The formal protocol still uses the failure-conditioned quantile
of \S\ref{sec:quantile-interface}; this comparison evaluates the analytic
quantile's accuracy.

At $M=64$ a single observation has $18\%$ RSD and a $1\%$ quantile of
about $0.63d$; at $M=1024$ these improve to $4.4\%$ and $0.90d$.  A
smaller first-round sketch preserves unbiasedness while producing a wider
confidence interval.

Table~\ref{tab:t1a} compares a single quantile; Figure~\ref{fig:f1}
compares the whole density along two orthogonal directions, separating the
convergence to the chi-square limit as $d$ grows from the precision loss as
$M$ shrinks.  The top row fixes $M=64$ and increases $d/M$, moving along the
$d\to\infty$ direction of Theorem~\ref{thm:chi2}: the KS distance to the
parameter-free $\chi^2_{M-1}/(M-1)$ density drops from $0.017$ to $0.001$
while the measured RSD stays at $0.175$--$0.178$.  The bottom row fixes
$d=4096$ and shrinks $M$ from 10240 to 512: the KS distance stays below
$0.003$ while the RSD grows from $0.014$ to $0.063$ exactly as
Theorem~\ref{thm:variance} predicts.  Deep overload therefore introduces no
additional estimation bias; the observed precision changes are explained by
the cell budget $M$ alone.

\begin{figure}[!hbtp]
\centering
\includegraphics[width=\textwidth]{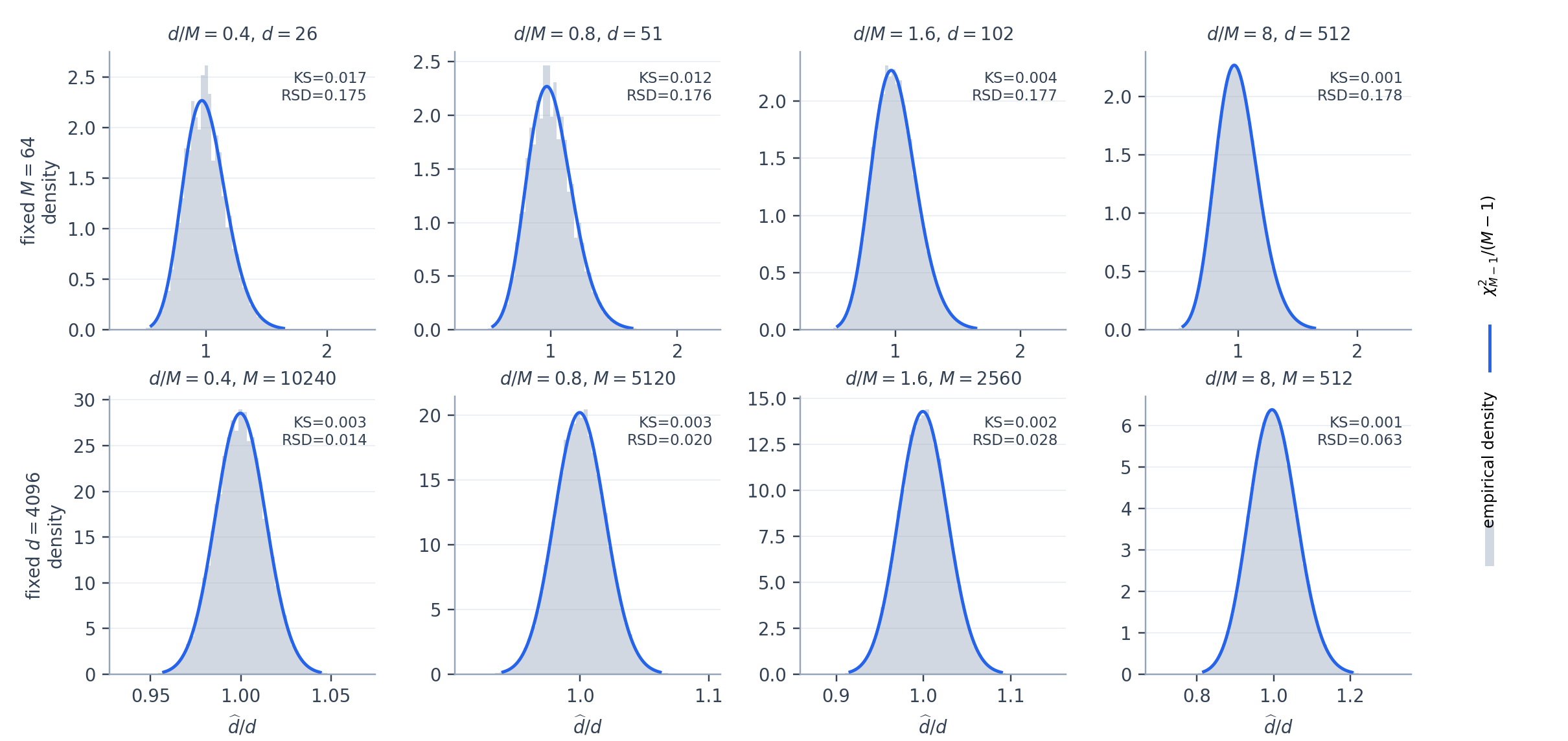}
\caption{Two orthogonal distribution slices.  Top row: fix $M=64$ and
increase $d/M$ (direction of Theorem~\ref{thm:chi2}); KS distance drops from
$0.017$ to $0.001$ while RSD stays at $0.175$--$0.178$.  Bottom row: fix
$d=4096$ and increase $d/M$ by shrinking $M$ from 10240 to 512; KS stays
below $0.003$ while RSD grows from $0.014$ to $0.063$.  Gray histograms are
empirical densities of $\dhat/d$; blue lines are the parameter-free
$\chi^2_{M-1}/(M-1)$ densities.}
\label{fig:f1}
\end{figure}

\subsubsection*{Failure-conditioned measurement
(Theorem~\ref{thm:cond-mean} and Proposition~\ref{prop:cond-chi2})}

\begin{table}[!hbtp]
\centering
\caption{Estimator accuracy on decoding-failed samples (i.i.d.\ random
signs, each configuration $10^6$ trials).}
\label{tab:t1b}
\small
\begin{tabular}{rrrrrl}
\toprule
$M$ & $d$ & $\Pr[F]$ & $n_F$ & $\E[\dhat/d \given F]$ & \shortstack{$q_{0.01}^{F}$ / unconditional\\$\chi^2$ prediction} \\
\midrule
64    & 51    & 0.683 & 683{,}202 & 0.9998 & 0.6308 / 0.6326 \\
256   & 205   & 0.543 & 543{,}451 & 1.0001 & 0.8050 / 0.8056 \\
1024  & 819   & 0.257 & 257{,}079 & 1.0001 & 0.8998 / 0.9000 \\
4096  & 3277  & 0.028 & 28{,}336 & 0.9999 & 0.9490 / 0.9493 \\
\bottomrule
\end{tabular}
\end{table}

The second group restricts the sample to first-round failures and checks
whether the part the protocol actually reads is still trustworthy.
Table~\ref{tab:t1b} uses $k=3$, $d=0.8M$, and i.i.d.\ random signs---a
load near the peeling phase transition where both successes and failures
occur.  This is deliberately outside the deep-overload regime certified by
Proposition~\ref{prop:cond-chi2}, and is therefore a stronger empirical
stress test of failure selection.  The conditional means lie in
$[0.9998,1.0001]$, indistinguishable
from~1 and insensitive to failure rates ranging from $3\%$ to $68\%$,
consistent with Theorem~\ref{thm:cond-mean}: the failure event depends only
on the mappings and is independent of the random
signs.  The failure-conditioned $1\%$ quantile $q_{0.01}^F$ differs from
the unconditional chi-square prediction by at most $0.002$, so the two
lower-tail shapes remain close.  Evaluating the
unconditional chi-square $1\%$ cutoff directly on the failed samples gives
conditional tail probabilities $1.042\%$, $1.014\%$, $1.016\%$, and
$1.059\%$ across the four $M$ tiers; their binomial standard errors are
$0.012$, $0.014$, $0.020$, and $0.061$ percentage points, respectively.
At $M=64$, the $0.042$-percentage-point excess is detectable at about $3.4$
Monte Carlo standard errors and places the analytic cutoff slightly on the
optimistic side.  It is $4.2\%$ of the one-percentage-point tail budget.
Production configuration uses the failed-only empirical quantile, so this
finite-$d$ discrepancy does not enter capacity planning.  At the other three
tiers, the deviations are within approximately one Monte Carlo standard
error.  The detrended Q--Q
comparison in Figure~\ref{fig:fcond-qq} extends the same phenomenon to the
whole $1\%$--$99\%$ quantile range.  Collapsing the four ordinary Q--Q
panels onto the common difference axis
$q_{\mathrm{failed}}(p)-q_{\mathrm{uncond}}(p)$ makes the small departures
visible: the three larger $M$ tiers remain tightly concentrated around zero,
while $M=64$ shows the finite-$d$ tail deviations quantified above.

\begin{figure}[!hbtp]
\centering
\includegraphics[width=\textwidth]{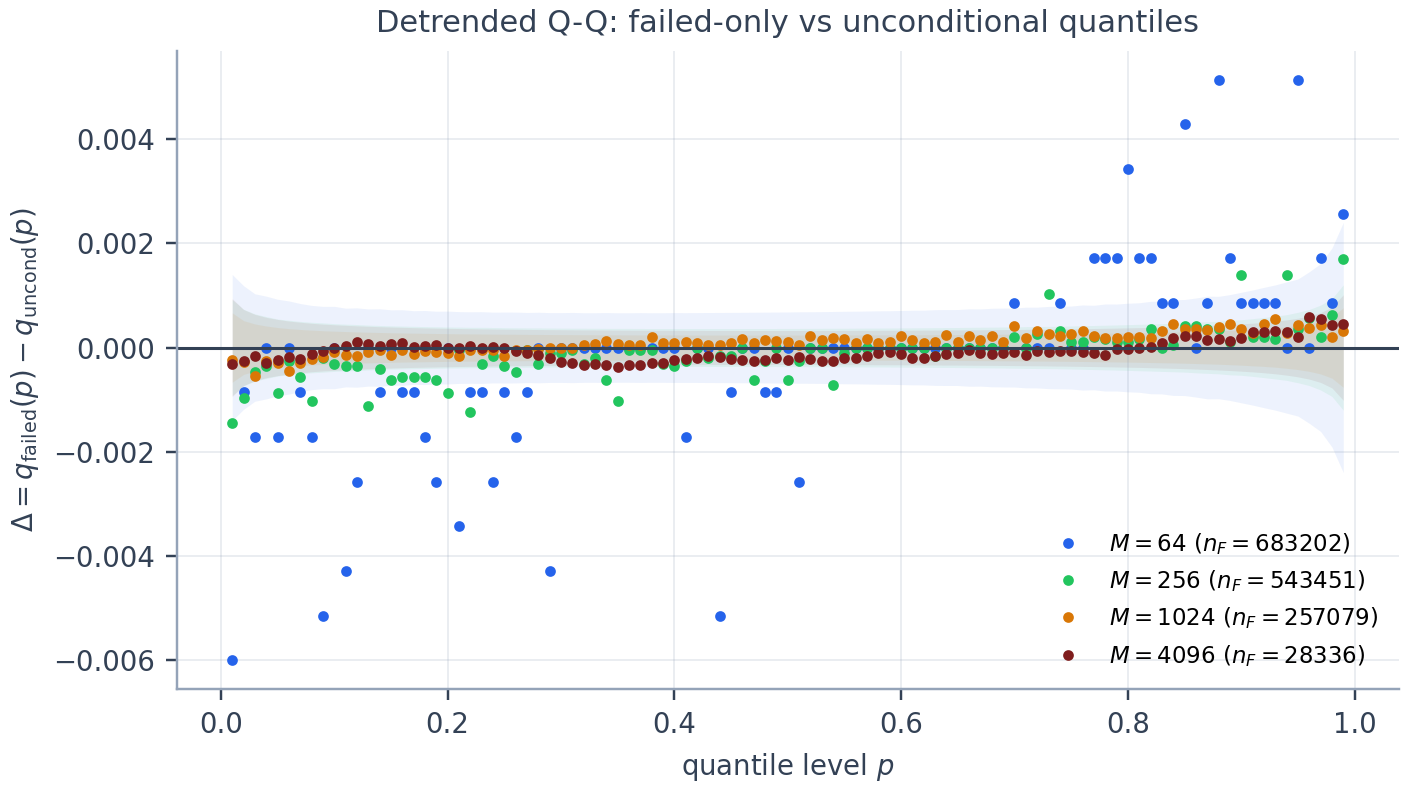}
\caption{Detrended failure-conditioned versus unconditional quantiles for the
transition-regime configurations of Table~\ref{tab:t1b}.  Each series plots
$q_{\mathrm{failed}}(p)-q_{\mathrm{uncond}}(p)$ over the $1\%$--$99\%$
range; zero means that conditioning leaves the quantile unchanged.  Shaded
bands show the pointwise $\pm1.96$ standard-error range under equal
distributions.}
\label{fig:fcond-qq}
\end{figure}

Over the full random-sign grid (not just this slice), for the
non-degenerate configurations ($0<\widehat p_F<1$ and at least $10^3$
failed samples), the maximum absolute deviation of the conditional mean is
$0.19\%$, consistent with Monte Carlo standard error.  The transition-regime
tail comparison is empirical evidence beyond Proposition~\ref{prop:cond-chi2}'s
certified range.  The failed-sample counts in Table~\ref{tab:t1b} range from
$2.8\times10^4$ to $6.8\times10^5$; the one-sided $95\%$ order-statistic
lower confidence bound for the empirical quantile differs by at most $0.002$
(Table~\ref{tab:d3}).  The certified tier provides the complementary check:
at $d/M=8$, all $10^6$ trials failed in each of the four random-sign $k=3$
configurations.  This agrees with the analytic success-probability bound
$\varepsilon_{M,d}\le3.7\times10^{-6}$.

\subsubsection*{Fixed-sign projection and robustness
(Theorem~\ref{thm:cond-mean} and Proposition~\ref{prop:lower-tail})}

Unbalanced difference compositions are common in practice (e.g., inserts on
one side only).  Then the failure event and the sign composition act
through the exact affine relation in Eq.~\eqref{eq:sign-projection}.  We test
this relation on an H100 sweep with $k=3$, $M\in\{256,1024,4096\}$, five
transition-region loads per $M$, and 11 fixed values of $\theta_-$: 165
configurations and 126.5 million trials in total.  For each of the 15
$(M,d)$ curves, a constrained one-parameter fit of conditional-mean bias to
$d\theta_\Delta^2-1$ estimates its common $c_F$.  The minimum and mean
$R^2$ are $0.9948$ and $0.9990$, respectively.

\begin{figure}[!hbtp]
\centering
\includegraphics[width=\textwidth]{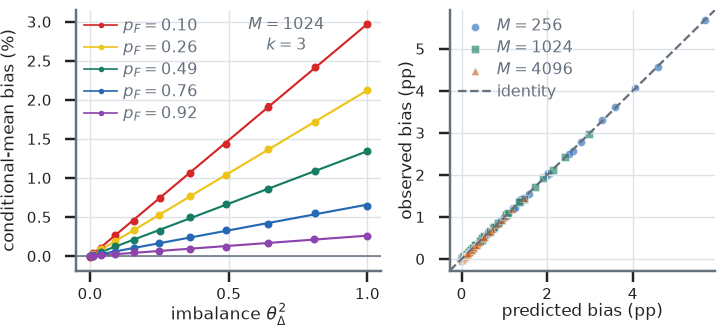}
\caption{Exact failure-conditioned sign projection.  Left: for $M=1024$,
each transition-region load traces an affine conditional-mean bias as a
function of squared sign imbalance $\theta_\Delta^2$.  Right: leave-one-out
predictions versus observed bias across all 165 H100 configurations; each
point is predicted from the other ten sign compositions at the same
$(M,k,d)$.}
\label{fig:sign-projection}
\end{figure}

The leave-one-out audit in Figure~\ref{fig:sign-projection} has RMSE
$1.08\times10^{-4}$ and maximum absolute residual $3.36\times10^{-4}$.
All fitted $c_F$ values are positive in this tested grid; the exact identity
and its proof place no sign restriction on $c_F$.  The experiment concerns
the conditional mean only and does not extend the squared-imbalance relation
to the full failure-conditioned distribution.

The largest observed bias is $+5.69\%$, at the fully one-sided endpoint with
$M=256$ and $p_F=0.139$; it remains below that configuration's $8.8\%$ RSD.
At the same $M$, increasing the load until $p_F=0.932$ reduces the one-sided
bias to $+0.45\%$, consistent with the vanishing envelope as failure becomes
typical.  The balanced fixed-sign endpoints remain near zero at the much
smaller $-c_F$ scale predicted by Eq.~\eqref{eq:sign-projection}.

The earlier 52-configuration audit remains a check of the sign-agnostic
Cauchy--Schwarz envelope implied by Theorem~\ref{thm:cond-mean}; none of its
non-degenerate fixed-sign configurations violates that envelope
(Table~\ref{tab:d4}), and the largest empirical deviation is 31\% of the
corresponding bound.  The appendices additionally verify
Proposition~\ref{prop:lower-tail}'s lower-tail bound on 120 configurations
(Table~\ref{tab:d5}), again with no empirical violation and a minimum coverage
margin of 0.17.  The absolute-deviation and lower-tail bounds provide the
formal guarantees the protocol uses.

Together, the three groups show that the unconditional moments and quantiles,
the failure-conditioned sign projection and lower tail, and the fixed-sign
bounds all agree with the theory developed in this section, and
no systematic underestimation appears that would break second-round capacity
sizing.  The interface of \S\ref{sec:quantile-interface} can therefore use
empirical $q_\delta$, and the capacity multipliers and protocol-layer success
rates derived from it are reported in Appendix~\ref{app:protocol-details}.

\section{Mapping-Aware Constructions: The Unifying Theorem and Four
Variant Adapters}
\label{sec:mapping-aware}

The results of \S\ref{sec:count-measurement} rest on the Plain $k$-regular
uniform mapping.  Other members of the IBLT family change the mapping
statistics: Irregular introduces random degrees, the coded-symbol
sequence of Rateless IBLTs makes hit probabilities decay with position, and MET
partitions cells into types with heterogeneous hit rates.  The natural
question is whether the measurement capability of the count channel is a
coincidence of $k$-regular mappings or a general property of count-bearing
linear sketches.

An \emph{adapter} specifies how to center and normalize the count array
of a particular IBLT variant so that the estimator remains unbiased.
It does not change the IBLT storage format; different variants need
different adapters because their hashing rules produce different mapping
statistics.

The section proceeds in three steps: a mapping-aware construction theorem
giving sufficient conditions for the estimator to exist
(Theorem~\ref{thm:master}); concrete adapters for Irregular, Rateless,
and MET (\S\ref{sec:irregular}--\S\ref{sec:met}), which together with
Plain's adapter from \S\ref{sec:count-measurement} make up the four
adapters of the section title; and a unified failure-conditioning analysis
(Corollary~\ref{cor:variant-failure}, \S\ref{sec:variant-failure-cond}).

\begin{theorem}[Mapping-aware construction / adapter theorem]
\label{thm:master}
Elements map independently; the hit vector $a_x$ of a type-$j$ element has
mean $\mu_j$ and covariance $\Sigma_j$.  Take a symmetric matrix $Q$ with
$Q\mu_j=0$ for every type.  Then
\begin{equation}
\label{eq:master}
  \E[C^\top Q C] = \sum_j d_j\,\tr(Q\Sigma_j).
\end{equation}
If there is a single type with $\gamma=\tr(Q\Sigma)\ne 0$, this reduces to
$\E[C^\top Q C]=d\gamma$, and the estimator
\[
  \dhat = \frac{C^\top Q C}{\tr(Q\Sigma)}
\]
is unbiased.
\end{theorem}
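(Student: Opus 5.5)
The plan is to expand the quadratic form directly and use independence across elements, exactly as in the Plain case of Theorem~\ref{thm:unbiased}, but without using any property of the mapping beyond its first two moments. Write $C=\sum_x s_x a_x$, where element $x$ has type $j(x)$ and its sign $s_x\in\{\pm1\}$ is fixed. Then
\[
  C^\top Q C=\sum_x s_x^2\, a_x^\top Q a_x+\sum_{x\ne y} s_x s_y\, a_x^\top Q a_y ,
\]
and since $s_x^2=1$ the diagonal part is sign-free. I would take expectations term by term over the mapping randomness; all quantities are finite because the $a_x$ are bounded.

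\emph{Diagonal terms.} For a single element of type $j$, I use the trace identity $a^\top Q a=\tr(Q\,aa^\top)$, so $\E[a_x^\top Q a_x]=\tr(Q\,\E[a_xa_x^\top])=\tr\bigl(Q(\Sigma_j+\mu_j\mu_j^\top)\bigr)$. The extra term is $\tr(Q\mu_j\mu_j^\top)=\mu_j^\top Q\mu_j=0$ because $Q\mu_j=0$. Each element therefore contributes exactly $\tr(Q\Sigma_j)$, and summing gives $\sum_j d_j\tr(Q\Sigma_j)$.

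\emph{Cross terms.} For $x\ne y$, independence gives $\E[a_x^\top Q a_y]=\mu_{j(x)}^\top Q\,\mu_{j(y)}$. Since $Q$ is symmetric and $Q\mu_{j(y)}=0$, this vanishes for every sign product $s_xs_y$. The symmetry of $Q$ is used here only to move $Q$ onto either factor; the hypothesis $Q\mu_j=0$ for \emph{every} type is what kills the mixed-type pairs. Combining the two parts yields \eqref{eq:master}, and it holds for arbitrary fixed signs.

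\emph{Single-type specialization.} With one type, $d_j=d$ and the right-hand side becomes $d\gamma$ with $\gamma=\tr(Q\Sigma)$. If $\gamma\ne0$, dividing gives $\E[\dhat]=d$ by linearity. As a consistency check, for Plain we have $\mu=(k/M)\one_M$, so the centering $Q=I_M-\frac1M\one_M\one_M^\top$ annihilates $\mu$, and $\tr(Q\Sigma)=k(1-k/M)$ recovers \eqref{eq:gamma}.

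I expect no real obstacle; the argument is a two-line moment computation. The only point needing care is that the cross terms must vanish regardless of sign, which is exactly why the condition is $Q\mu_j=0$ for each type rather than for an average mean. I would also state explicitly that only pairwise independence and finite second moments are used, so the identity is exact for finite $d$.
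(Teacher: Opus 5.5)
Your proposal is correct and follows the paper's proof in Appendix~\ref{app:thm6} step for step: expand $C^\top QC$ bilinearly, eliminate the cross terms using independence and $Q\mu_{j}=0$, evaluate the diagonal terms via $\E[a_xa_x^\top]=\Sigma_j+\mu_j\mu_j^\top$ and the trace identity, and finish with the same Plain consistency check. The paper also argues that $Q\mu_j=0$ is necessary within positive-semidefinite quadratic forms, which goes beyond what this statement requires.
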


Here $\mu_j$ describes where a type-$j$ element writes its count on average,
and $\Sigma_j$ describes how these writes fluctuate around the average
pattern.  $Q$ deletes the stable average write pattern, and
$\tr(Q\Sigma_j)$ converts the remaining centered energy back into ``number of
elements.''  The core of Theorem~\ref{thm:master} is that, as long as these
two steps agree with the true mapping statistics, count energy continues to
measure the difference cardinality.

\noindent\textit{Proof.}  See Appendix~\ref{app:thm6}.

Measurement capability depends only on the mapping mean $\mu_j$ and
covariance $\Sigma_j$, so adapters can be constructed systematically from
these two quantities.  When specialized to Plain's $k$-regular mapping,
$\tr(Q\Sigma)$ reduces to $\gamma=k(1-k/M)$, recovering
\S\ref{sec:count-measurement}.  The full expectation decomposition showing
why $Q$ must remove the true mean is in Appendix~\ref{app:thm6}.

\emph{Design rule.}  Constructing a correct adapter reduces to two steps:
\textbf{first center within the smallest mapping-homogeneous block}
(removing the drift, $Q\mu_j=0$), \textbf{then normalize by the true mapping
covariance} ($\gamma=\tr(Q\Sigma)$).  For Plain the block is the full cell array;
Irregular keeps the global centering but uses the first and second
moments of the degree; Rateless defines the baseline from the per-symbol hit
probabilities $q_i$; MET centers per cell type.

\subsection{Adapter overview and the Irregular adapter}
\label{sec:irregular}

Following the design rule, the four variants' adapters are derived from their
own mapping statistics (Table~\ref{tab:adapter-overview}):

\begin{table}[!hbtp]
\centering
\caption{Adapter overview: what baseline each variant centers out and how it
normalizes.}
\label{tab:adapter-overview}
\footnotesize
\setlength{\tabcolsep}{3pt}
\begin{tabular}{@{}p{1.8cm}p{3.8cm}p{3.0cm}p{5.0cm}p{1.6cm}@{}}
\toprule
variant & mapping statistics & $Q$ (centering) & $\gamma=\tr(Q\Sigma)$ & derived in \\
\midrule
Plain $k$-regular & uniform $k$-subset, $\mu=(k/M)\one_M$ & global centering & $k(1-k/M)$, exact & \S\ref{sec:count-measurement} \\
Irregular & random degree $D$, uniform positions given degree & global centering & $\E[D]-\E[D^2]/M$ & \S\ref{sec:irregular} \\
Rateless & per-symbol hit probability $q_i$ decreasing & per-symbol centering $Y_i=C_i-q_iC_0$ & per-symbol $q_i(1-q_i)$, $\tr(Q\Sigma)=1$ & \S\ref{sec:rateless} \\
MET & hit rates heterogeneous across cell types & center \emph{per cell type} & $\gamma_{tj}=a_{tj}(m_t-a_{tj})(r_t-1)/[m_t(m_t-1)]$; $r_t=m_t$: $a_{tj}-a_{tj}^2/m_t$ & \S\ref{sec:met} \\
\bottomrule
\end{tabular}
\end{table}

Irregular is the first instance of the design rule: the centering block is
identical to Plain, and only the normalization constant changes.  It is also
the shortest construction among the three new variants; we give the unbiased
adapter first, then characterize the exact variance introduced by the random
degree.

\begin{proposition}[Irregular degree second-moment adapter]
\label{prop:irregular}
In an Irregular IBLT, each element independently draws a random degree
$D_x\sim\Lambda$, where $\Lambda$ is given by its generating polynomial
$\Lambda(x)=\sum_i\lambda_i x^i$ with $\lambda_i=\Pr[D_x=i]$, and $D_x\le M$
almost surely.  Given $D_x$, the element chooses $D_x$ distinct positions
uniformly without replacement among the $M$ cells.  With the Plain global
centering matrix $Q=I_M-\one_M\one_M^\top/M$, let
\begin{equation}
\label{eq:gamma-irr}
  \gamma_{\mathrm{irr}} = \E[D] - \frac{\E[D^2]}{M}.
\end{equation}
If $\gamma_{\mathrm{irr}}\ne 0$, then
$\dhat_{\mathrm{irr}}=C^\top Q C/\gamma_{\mathrm{irr}}$ is unbiased:
$\E[\dhat_{\mathrm{irr}}]=d$.
\end{proposition}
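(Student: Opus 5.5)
The plan is to obtain Proposition~\ref{prop:irregular} as a single-type instance of Theorem~\ref{thm:master}. Under the Irregular model all elements share one type, since each draws its degree from the same $\Lambda$ and then its positions by the same rule. So it suffices to compute the mean $\mu$ and covariance $\Sigma$ of the hit vector $a_x$, check $Q\mu=0$ for the global centering matrix, and evaluate $\tr(Q\Sigma)$. Theorem~\ref{thm:master} then gives $\E[C^\top QC]=d\,\tr(Q\Sigma)$ for any fixed signs, because the signs enter only through $s_x^2=1$ on the diagonal terms, and cross terms vanish by independence once $Q\mu=0$.

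\emph{Mean.} Condition on $D_x=i$. The positions are a uniform $i$-subset, so $\Pr[a_{x,c}=1\given D_x=i]=i/M$ for each cell $c$. Averaging over $\Lambda$ gives $\mu=(\E[D]/M)\one_M$, which is a multiple of $\one_M$, so $Q\mu=0$ because $Q\one_M=0$.

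\emph{Trace.} Rather than computing all entries of $\Sigma$, use
\[
\tr(Q\Sigma)=\E\bigl[(a_x-\mu)^\top Q(a_x-\mu)\bigr]=\E[a_x^\top Qa_x],
\]
where the second equality holds because $Q\mu=0$ and $Q$ is symmetric. Now $a_x^\top Q a_x=\|a_x\|^2-(\one_M^\top a_x)^2/M$, and both $\|a_x\|^2$ and $\one_M^\top a_x$ equal $D_x$, since $a_x$ is a $0/1$ vector with exactly $D_x$ ones (the positions are distinct). Hence $a_x^\top Qa_x=D_x-D_x^2/M$ holds pathwise, and taking expectations gives $\tr(Q\Sigma)=\E[D]-\E[D^2]/M=\gamma_{\mathrm{irr}}$. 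For a point mass at $k$ this recovers $\gamma=k(1-k/M)$, a useful consistency check. The requirement $D_x\le M$ is exactly what makes the uniform $D_x$-subset well defined.

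\emph{Conclusion and main obstacle.} With $\gamma_{\mathrm{irr}}\ne0$, dividing gives $\E[\dhat_{\mathrm{irr}}]=d$. The only step needing care is confirming that Theorem~\ref{thm:master} applies. That requires the degree draw and the position draw to be independent across elements, which the model stipulates, and the finiteness of $\E[D^2]$, which is automatic from $D\le M$. Beyond that, the identity $\tr(Q\Sigma)=\E[a^\top Qa]$ avoids computing the off-diagonal covariance, which would mix conditional pair probabilities $i(i-1)/(M(M-1))$ with the variance of $D$. I do not expect a genuine obstacle.
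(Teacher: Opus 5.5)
Your proposal is correct and follows essentially the same route as the paper's proof in Appendix~\ref{app:irregular}. Both arguments verify $Q\mu=0$ from $\mu=(\E[D]/M)\one_M$, use the pathwise self-energy identity $a_x^\top Qa_x=D_x-D_x^2/M$, and take expectations to obtain $\gamma_{\mathrm{irr}}$ as the normalization in Theorem~\ref{thm:master}; your identification $\tr(Q\Sigma)=\E[a_x^\top Qa_x]$ is the diagonal-term step of that theorem's proof read in the other direction.
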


\noindent\textit{Proof.}  See Appendix~\ref{app:irregular}.

\begin{remark}
Compared with Plain, the centering block is still the full IBLT cell array, but
the normalization constant needs both the first and second moments of the
degree.  A natural but incorrect shortcut is to treat an Irregular IBLT sketch
as a regular IBLT sketch with $k=\E[D]$, plugging in
$\gamma_{\mathrm{mean}}=\E[D]-\E[D]^2/M$.  This drops the $\Var(D)/M$ term and
induces a constant multiplicative bias
$\gamma_{\mathrm{irr}}/\gamma_{\mathrm{mean}}<1$---systematic
underestimation---which vanishes as $M\to\infty$ but is significant for
small $M$.  Proposition~\ref{prop:irregular} eliminates this pitfall by
introducing the second moment.
\end{remark}

\begin{corollary}[Exact variance of the Irregular adapter]
\label{cor:irregular-var}
Under the conditions of Proposition~\ref{prop:irregular}, write the per-element
self-energy $e_x=D_x-D_x^2/M$, whose variance $\sigma_E^2=\Var(e_x)$ depends
only on the first four moments of the degree distribution $\Lambda$.  Then,
for any fixed sign pattern,
\begin{equation}
\label{eq:irr-variance}
  \Var\!\left(\frac{\dhat_{\mathrm{irr}}}{d}\right)
  = \frac{2(d-1)}{d(M-1)}
  + \frac{\sigma_E^2}{\gamma_{\mathrm{irr}}^2\,d}.
\end{equation}
\end{corollary}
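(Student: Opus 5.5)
The plan is to expand the centered energy into self-energy terms plus pairwise cross terms, show that all three groups are mutually uncorrelated, and compute each variance separately; the argument mirrors the Plain variance proof of Theorem~\ref{thm:variance}. With $C=\sum_x s_x a_x$ and $Q=I_M-\one_M\one_M^\top/M$, write
\[
  T=C^\top QC=\sum_x a_x^\top Q a_x+\sum_{x\ne y}s_xs_y\,W_{xy},
  \qquad W_{xy}=a_x^\top Q a_y .
\]
Since $a_x\in\{0,1\}^M$ has exactly $D_x$ ones, we have $a_x^\top a_x=D_x$ and $(\one_M^\top a_x)^2/M=D_x^2/M$. Hence the self term is exactly $e_x=D_x-D_x^2/M$. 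These are i.i.d., so $\Var(\sum_x e_x)=d\sigma_E^2$, and $\E[e_x]=\gamma_{\mathrm{irr}}$ recovers Proposition~\ref{prop:irregular}.

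The key structural fact concerns the conditional mean of $a_y$. The degree is drawn first and positions are then uniform given the degree, so the marginal law of $a_y$ is invariant under permutations of the cells. Hence $\E[a_y]=(\E[D]/M)\one_M$. For any fixed $a_x$ this gives $\E[W_{xy}\given a_x]=a_x^\top Q\,\E[a_y]=0$, because $Q\one_M=0$.

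The orthogonality claims then follow by conditioning on everything except one element of the relevant product:
\begin{itemize}
  \item $\Cov(e_z,W_{xy})=0$ for every $z$: condition on the element among $x,y$ that differs from $z$.
  \item $\E[W_{xy}W_{xz}]=0$ for distinct $y\ne z$: condition on $a_x$ and $a_y$, then average over $a_z$.
  \item Products of disjoint pairs vanish by independence.
\end{itemize}
Since $s_x^2=1$ for every fixed sign pattern, the signs drop out. Counting ordered pairs, the cross sum is $2\sum_{x<y}s_xs_yW_{xy}$, so
\[
  \Var(T)=d\sigma_E^2+2d(d-1)\,\E[W_{xy}^2].
\]

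The step I expect to be the main obstacle is computing $\E[W_{xy}^2]$ in closed form. Conditioning on $a_x$ gives
\[
  \E[W_{xy}^2\given a_x]=a_x^\top Q\Sigma Q a_x ,
\]
where $\Sigma=\Cov(a_y)$. Exchangeability of the cells forces $\Sigma=\alpha I_M+\beta\one_M\one_M^\top$, so $Q\Sigma Q=\alpha Q$. Then $\E[W_{xy}^2\given a_x]=\alpha\,e_x$, and taking expectations gives $\E[W_{xy}^2]=\alpha\gamma_{\mathrm{irr}}$.

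To identify $\alpha$, I will use the trace identity underlying Theorem~\ref{thm:master}, namely $\tr(Q\Sigma)=\gamma_{\mathrm{irr}}$; the same computation as in Proposition~\ref{prop:irregular} gives this. Since $\tr(Q\Sigma)=\alpha\,\tr Q=\alpha(M-1)$, it follows that $\alpha=\gamma_{\mathrm{irr}}/(M-1)$, and hence
\[
  \E[W_{xy}^2]=\frac{\gamma_{\mathrm{irr}}^2}{M-1}.
\]

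Finally, divide by $(d\gamma_{\mathrm{irr}})^2$:
\[
  \Var\!\left(\frac{\dhat_{\mathrm{irr}}}{d}\right)=\frac{2d(d-1)\gamma_{\mathrm{irr}}^2/(M-1)+d\sigma_E^2}{d^2\gamma_{\mathrm{irr}}^2},
\]
which simplifies to Eq.~\eqref{eq:irr-variance}. As a check, when $D\equiv k$ we get $\sigma_E^2=0$ and the formula reduces to Theorem~\ref{thm:variance}. The remark that $\sigma_E^2$ depends only on the first four moments of $\Lambda$ is immediate, since $e_x^2$ is a polynomial of degree four in $D_x$.
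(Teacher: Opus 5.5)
Your proof is correct. Its skeleton matches the paper's: the expansion $T=\sum_x e_x+2\sum_{x<y}s_xs_yW_{xy}$, the fact $\E[W_{xy}\given a_x]=a_x^\top Q\,\E[a_y]=0$, and the resulting orthogonality of self- and cross-energies. One wording point in your first bullet: the cleanest argument conditions on $a_z$ itself and averages over the other element. Conditioning literally on the element that differs from $z$ also works, but it then needs $Q\,\E[e_xa_x]=0$, which again follows from cell exchangeability.

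You genuinely diverge in the key computation of $\E[W_{xy}^2]$. The paper conditions on both degrees $(D_x,D_y)=(r,t)$ and uses $H_{xy}\mid(r,t)\sim\operatorname{Hypergeometric}(M,r,t)$ with its variance $rt(M-r)(M-t)/[M^2(M-1)]$. It then averages using independence of the degrees and $\E[D(M-D)]=M\gamma_{\mathrm{irr}}$. You avoid any distributional formula. Permutation invariance of the law of $a_y$ puts $\Sigma$ in the span of $I_M$ and $\one_M\one_M^\top$, so $Q\Sigma Q$ is a scalar multiple of $Q$. The trace identity $\tr(Q\Sigma)=\gamma_{\mathrm{irr}}$ then fixes that multiple at $\gamma_{\mathrm{irr}}/(M-1)$.

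Your route buys generality. It shows $\E[W_{xy}^2\given a_x]=\tr(Q\Sigma)\,e_x/(M-1)$ for any independent, cell-exchangeable mapping law. So the overlap term $2(d-1)/[d(M-1)]$ is the same for every such variant (Plain, or Irregular with any degree law), and only the self-energy term depends on the mapping. The paper's route is more concrete and finer-grained: it gives the conditional variance given both degrees, $e(r)e(t)/(M-1)$ with $e(r)=r-r^2/M$. It also stays parallel to the hypergeometric computation in the Plain variance proof.
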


The first term equals the Plain exact variance \eqref{eq:variance} (random
overlap between different elements); the second term comes from the
Irregular-specific random degree self-energy.  When $D$ is constant,
$\sigma_E^2=0$ and the formula reduces to Theorem~\ref{thm:variance}.  The
absolute-variance form, the expansion of $\sigma_E^2$ in terms of
$\E[D^2],\dots,\E[D^4]$, and the full proof are in
Appendix~\ref{app:irregular}.

\begin{table}[!hbtp]
\centering
\caption{Irregular adapter validation ($d=16384$, 20{,}000 trials per
configuration; RMSE averaged over seven sign compositions).}
\label{tab:irregular}
\small
\begin{tabular}{rrrr}
\toprule
$M$ & mean$(\dhat_{\mathrm{irr}}/d)$ range & measured RMSE (avg) & Cor.~\ref{cor:irregular-var} theory RMSE \\
\midrule
18   & $0.9962$--$1.0002$ & $0.34283$ & $0.34300$ \\
50   & $0.9981$--$1.0012$ & $0.20239$ & $0.20212$ \\
1024 & $0.9998$--$1.0002$ & $0.04488$ & $0.04497$ \\
\bottomrule
\end{tabular}
\end{table}

\emph{Validation.}  We use the degree distribution of the original Irregular
IBLT paper~\cite{lazaro2021irregular},
$\Lambda(x)=0.15x^2+0.725x^3+0.125x^{18}$, and fix $d=16384$.  The
main text reports three of the eight swept values of $M$: $M=18$
(an intentionally extreme configuration in which $12.5\%$ of elements hit
every cell, and the mean-degree plug-in underestimates by $40.3\%$, while
the second-moment normalization stays centered), $M=50$ (the second-moment
correction is
about $12\%$), and $M=1024$ (a realistic sketch dimension, correction $0.5\%$).
Each tier covers seven sign compositions
$\theta_-=d_-/d\in\{0,0.1,0.25,0.5,0.75,0.9,1\}$ with 20{,}000 trials each;
Corollary~\ref{cor:irregular-var} predicts a single theoretical RMSE per
$(M,d,\Lambda)$ triple, independent of sign balance.

The relative differences between measured and closed-form RMSE are at most
$0.22\%$ (Table~\ref{tab:irregular}), and the means show no systematic drift
with $\theta_-$.  The
results validate both structures: the second-moment normalization of
Proposition~\ref{prop:irregular} keeps the estimate centered at $d$, and
Corollary~\ref{cor:irregular-var} explains all second-order fluctuation of a
single estimate without fitting any variance parameter.  The same closed form
holds on a decode-scale batch at $d=1024$ (measured relative RMSE
$4.6\%$--$6.0\%$, closed form $5.5\%$ at $M=1024$).

\subsection{Rateless prefix analytic adapter}
\label{sec:rateless}

Rateless IBLT does not fix a sketch length in advance; the sender appends
coded cells.  Let $m$ be the number of coded cells received so far (the
prefix length); cell 0 is hit by every mapping path and serves only as the
centering baseline, so the usable measurement components number $r=m-1$.
This gives its adapter a property Plain and Irregular do not have: every new
batch of received cells linearly increases the measurement degrees of
freedom, and the relative variance decays as $O(1/m)$---the more cells are
received, the more accurate the estimate of $d$.  The same estimation formula
therefore supports measuring as cells arrive, without committing to a
capacity in advance.  The $O(1/m)$ rate is confirmed by
Proposition~\ref{prop:rateless-ci} and the simulations at the end of this
subsection.

Each Rateless mapping path jumps forward from cell 0.  Following the
reference sampler of the original paper~\cite{yang2024rateless}, the step
size is
\[
  D_j=\left\lceil\left(j+\frac32\right)\left(U^{-1/2}-1\right)\right\rceil,
  \qquad U\sim\operatorname{Uniform}(0,1),
\]
and a path at cell $j$ jumps to $v=j+D_j$.  All lemmas and propositions below
are derived analytically from this formula, so the adapter requires no
offline parameter fitting.

The question is whether each fixed prefix (the first $m$ coded cells) yields
an unbiased $\dhat_m$, and how its precision is characterized as $m$ grows.
Cell 0 is hit by every path, so $C_0$ records the net difference count of the
two sides; the formulas use it to remove the different average hit baseline
of each cell.  The jump rule integrates to a closed-form transition kernel
$K_{jv}$ (Appendix~\ref{app:rateless}).

Constructing the adapter requires the mapping mean (Lemma~\ref{lem:rateless-hit})
and covariance (Lemma~\ref{lem:rateless-pairwise}).  Pairwise independence
ensures diagonal covariance, which is critical: without it the receiver would
need to update a pseudo-inverse on every appended cell.

\begin{lemma}[Closed-form hit probability]
\label{lem:rateless-hit}
Let $B_i$ indicate whether one path hits cell $i$, and let
$q_i=\Pr[B_i=1]$.  Then
\[
  q_0=1,\qquad q_i=\frac{8(i+1)}{(2i+3)^2},\quad i\ge1.
\]
\end{lemma}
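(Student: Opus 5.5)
The plan is to compute $q_i$ by a first-step (last-predecessor) recursion over the Markov chain of cell positions visited by a path, and then close it by induction. The recursion needs two properties of the jump rule. First, $U\in(0,1)$ almost surely gives $U^{-1/2}-1>0$, so $D_j\ge1$ and every path is strictly increasing. Second, a path hits each cell at most once, and the event $\{B_v=1\}$ for $v\ge1$ is the disjoint union over $j<v$ of the events ``the path visits $j$ and then jumps directly from $j$ to $v$''. By the Markov property of the jump rule, this gives
\[
  q_v=\sum_{j=0}^{v-1} q_j\,K_{jv},\qquad K_{jv}=\Pr[D_j=v-j],
\]
with base case $q_0=1$, since every path starts at cell $0$.

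The next step is to put $K_{jv}$ in closed form. For an integer $t\ge0$, $D_j\le t$ is equivalent to $(j+\tfrac32)(U^{-1/2}-1)\le t$, which is equivalent to $U\ge\bigl((2j+3)/(2j+2t+3)\bigr)^2$. Hence $\Pr[D_j\le t]=1-(2j+3)^2/(2j+2t+3)^2$. Differencing at $t=v-j$ and $t=v-j-1$ gives the factorized kernel
\[
  K_{jv}=(2j+3)^2\,g(v),\qquad g(v)=\frac{1}{(2v+1)^2}-\frac{1}{(2v+3)^2}.
\]
The source factor depends only on $j$ and the target factor only on $v$; this factorization is the key structural fact. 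Substituting it into the recursion gives $q_v=g(v)\,S_v$, where $S_v=\sum_{j<v}q_j(2j+3)^2$.

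I would then prove by strong induction on $v\ge1$ the equivalent statement $q_j(2j+3)^2=8(j+1)$ for all $j\ge1$, which is exactly the claimed formula. Assume it holds for $1\le j<v$, and note that the $j=0$ term is $q_0\cdot 9=9$. Then
\[
  S_v=9+8\sum_{j=1}^{v-1}(j+1)=9+4(v-1)(v+2)=(2v+1)^2 .
\]
Therefore
\[
  q_v=(2v+1)^2 g(v)=1-\frac{(2v+1)^2}{(2v+3)^2}=\frac{8(v+1)}{(2v+3)^2},
\]
which completes the induction. The case $v=1$ is covered by the same computation with an empty inner sum: $S_1=9=3^2$ and $q_1=16/25$.

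The only delicate step is the recursion itself. It requires the disjointness argument, that is, strict monotonicity of paths so that each cell is visited at most once. It also requires checking that the ceiling produces exactly the integer level sets used in the CDF; both hold almost surely. Once the factorized kernel is in hand, the telescoping identity $S_v=(2v+1)^2$ is routine.
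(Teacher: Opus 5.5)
Your proposal is correct and follows the paper's proof essentially step for step. Both arguments use the last-predecessor renewal recursion $q_v=\sum_{j<v}q_jK_{jv}$ and obtain the factorized kernel by differencing the CDF of $D_j$. Both then use the telescoping identity $\sum_{j<v}q_j(j+\tfrac32)^2=(v+\tfrac12)^2$ and close by induction. You work with $(2j+3)^2$ in place of the paper's $(j+\tfrac32)^2$, which changes only a factor of $4$. Like the paper, your Markov-property step relies on the fresh-uniform model, in which each jump draws an independent $U$.
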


\noindent\textit{Proof.}  See Appendix~\ref{app:rateless}.

The closed form lets the receiver compute centering coefficients directly
from the prefix index, with no offline sampling or parameter fitting; its
agreement with the official Rateless sampler is verified in the simulations
below.

\begin{lemma}[Pairwise independence]
\label{lem:rateless-pairwise}
In the fresh-uniform model (each jump draws an independent $U$), for any
$u<v$,
\[
  \Pr[B_v=1 \given B_u=1]=q_v,
\]
so for $0<i<j$ we have $\Pr[B_i=1,B_j=1]=q_iq_j$: the hit indicators inside
a prefix are pairwise independent.
\end{lemma}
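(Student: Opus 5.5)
The plan is to exploit the renewal (memoryless) structure of the jump rule. A path is a Markov chain on the cells $0=J_0<J_1<J_2<\cdots$ whose transition kernel $K_{jv}=\Pr[J_{t+1}=v\given J_t=j]$ depends only on the current position, since each jump draws a fresh $U$. The event $\{B_u=1\}$ says that some step of the path lands exactly on $u$. Conditioning on it and applying the strong Markov property at that hitting time, the future of the path has the law of a fresh path started at $u$. Hence
\[
  \Pr[B_v=1\given B_u=1]=h(u,v),
\]
where $h(u,v)$ is the probability that a path started at $u$ ever visits $v$. The identity $\Pr[B_i=1,B_j=1]=q_iq_j$ for $0<i<j$ then follows immediately once we show $h(u,v)=q_v$ for all $u<v$. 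The case $u=0$ holds by definition, since $q_0=1$.

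The key step is therefore to prove that the visiting probability $h(u,v)$ does not depend on the starting point $u<v$. First I will write out the kernel explicitly. From the jump formula, $\Pr[D_j\ge s]=\Pr[U^{-1/2}-1> (s-1)/(j+3/2)]$, which gives
\[
  \Pr[D_j\ge s]=\left(\frac{j+3/2}{j+s+1/2}\right)^{2}
  =\left(\frac{2j+3}{2(j+s)+1}\right)^{2},
\]
so that
\[
  \Pr[J_{t+1}\ge w\given J_t=j]=\frac{(2j+3)^2}{(2w+1)^2}.
\]
This is the closed-form $K_{jv}$ referenced in Appendix~\ref{app:rateless}.

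Next I will use a first-step decomposition of $h$: a path from $u$ either jumps directly to $v$, or lands at some $w$ with $u<w<v$ and continues from there:
\[
  h(u,v)=K_{uv}+\sum_{u<w<v}K_{uw}\,h(w,v).
\]
I will prove $h(w,v)=q_v$ by backward induction on $w$, from $w=v-1$ down to $u$. Substituting the induction hypothesis reduces the claim to the identity
\[
  K_{uv}+q_v\Pr[u<J_1<v\given J_0=u]=q_v.
\]
Rearranged, this says
\[
  K_{uv}=q_v\Pr[J_1\ge v\given J_0=u],
\]
that is, the overshoot of the jump from $u$ past level $v$ hits $v$ itself with probability exactly $q_v$, whatever $u$ is.

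I expect this last identity to be the main obstacle. Using the tail formula,
\[
  K_{uv}=(2u+3)^2\left[\frac{1}{(2v+1)^2}-\frac{1}{(2v+3)^2}\right]
  =(2u+3)^2\,\frac{8(v+1)}{(2v+1)^2(2v+3)^2},
\]
while
\[
  \Pr[J_1\ge v\given J_0=u]=\frac{(2u+3)^2}{(2v+1)^2}.
\]
Their ratio is $8(v+1)/(2v+3)^2=q_v$, which matches Lemma~\ref{lem:rateless-hit} and is independent of $u$. The base case $w=v-1$ is the same computation with an empty sum. So the plan should close, provided the ceiling in the jump formula produces exactly the tail $\Pr[D_j\ge s]$ written above; checking that boundary, where the ceiling turns $(j+3/2)(U^{-1/2}-1)>s-1$ into $D_j\ge s$, is the detail I will verify carefully. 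Finally, I will note that pairwise independence holds because $B_i$ and $B_j$ are indicators: $\Pr[B_j=1\given B_i=1]=q_j$ already gives the product formula, and therefore zero covariance.
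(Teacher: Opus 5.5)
Your proposal is correct and follows essentially the same route as the paper. The paper's proof is a backward induction on the starting cell via the first-jump decomposition, reduces to the identity $K_{uv}=q_v\Pr[D_u\ge v-u]$ with $\Pr[D_u\ge v-u]=\bigl((u+\tfrac32)/(v+\tfrac12)\bigr)^2$, and then uses the fresh-$U$ (Markov) property at the hitting time of $u$ to conclude $\Pr[B_v=1\given B_u=1]=h_u(v)=q_v$. Your ceiling check ($D_j\ge s$ iff the real quantity exceeds $s-1$, for integer $s\ge1$) matches the paper's treatment of the kernel.
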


\noindent\textit{Proof.}  See Appendix~\ref{app:rateless}.

\begin{proposition}[Analytic estimator]
\label{prop:rateless-estimator}
Let $m\ge 2$ and suppose prefix cells $0,1,\dots,m-1$ have been received.
Set $Y_i=C_i-q_iC_0$ for $i=1,\dots,m-1$.  Then
\begin{equation}
\label{eq:rateless-est}
  \dhat = \frac{1}{m-1}\sum_{i=1}^{m-1}
  \frac{Y_i^2}{q_i(1-q_i)}
\end{equation}
is exactly unbiased: $\E[\dhat]=d$.
\end{proposition}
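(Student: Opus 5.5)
We prove unbiasedness term by term. Since $\dhat$ in \eqref{eq:rateless-est} is an average of $m-1$ normalized squares, it suffices to show that $\E[Y_i^2]=d\,q_i(1-q_i)$ for every $i\ge1$.

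Write $C=\sum_x s_x a_x$, where $a_x$ is the hit vector of element $x$'s path. Because every path hits cell $0$ (Lemma~\ref{lem:rateless-hit}, $q_0=1$), we have $a_{x,0}=1$ deterministically and $C_0=\sum_x s_x=d_+-d_-$. Hence
\[
  Y_i=\sum_x s_x\bigl(B_{x,i}-q_i\bigr),
\]
where $B_{x,i}$ indicates that path $x$ hits cell $i$. This is the per-symbol instance of the design rule of Theorem~\ref{thm:master}: the linear map $C\mapsto Y_i$ annihilates the mean vector, since $\mu_i-q_i\mu_0=q_i-q_i\cdot1=0$.

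Expanding the square gives
\[
  Y_i^2=\sum_x s_x^2 (B_{x,i}-q_i)^2+\sum_{x\ne y}s_xs_y(B_{x,i}-q_i)(B_{y,i}-q_i).
\]
Elements map independently and $\E[B_{x,i}-q_i]=0$, so every cross term vanishes in expectation for any fixed sign pattern. Since $s_x^2=1$, each diagonal term contributes $\Var(B_{x,i})=q_i(1-q_i)$. Therefore $\E[Y_i^2]=d\,q_i(1-q_i)$, which is exactly the quantity in Theorem~\ref{thm:master}: $\tr(Q\Sigma)$ restricted to component $i$. We also need $q_i(1-q_i)\neq0$ for $i\ge1$. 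This follows from Lemma~\ref{lem:rateless-hit}: $q_i=8(i+1)/(2i+3)^2\in(0,1)$, because $(2i+3)^2-8(i+1)=4i^2+4i+1>0$. Summing over $i=1,\dots,m-1$ and dividing by $m-1$ gives $\E[\dhat]=d$.

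Unbiasedness requires only that distinct elements be independent and that the marginals $q_i$ be correct. Lemma~\ref{lem:rateless-pairwise} (within-path independence) is not needed here; it enters only the variance statement of Proposition~\ref{prop:rateless-ci}. The only substantive input is therefore the exact marginal $q_i$, taken from Lemma~\ref{lem:rateless-hit}. The point to check carefully is that cell $0$ is hit with probability exactly one, so that $C_0$ is deterministic given the signs and $Y_i$ is genuinely centered per element. If $q_i$ were only approximate, a residual bias term $d\,(\E[B_{x,i}]-q_i)^2$ and cross terms of order $d^2$ would appear.
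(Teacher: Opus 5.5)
Your proof is correct and is essentially the paper's argument: use $q_0=1$ to rewrite $Y_i=\sum_x s_x\bigl(B_i^{(x)}-q_i\bigr)$, remove the cross terms by independence across elements, read off $\E[Y_i^2]=d\,q_i(1-q_i)$ from the Bernoulli diagonal terms, and average over $i$. Your explicit check that $q_i\in(0,1)$, via $(2i+3)^2-8(i+1)=(2i+1)^2>0$, and your observation that within-path pairwise independence is not needed both match the paper, which notes that unbiasedness uses only the closed-form $q_i$ and independence across elements.
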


\noindent\textit{Proof.}  See Appendix~\ref{app:rateless}.

Thus a Rateless prefix offers a sequence of estimates that improve as cells
arrive: each new batch of cells adds a new normalized energy
component.  As the effective degrees of freedom $r=m-1$ grow, $\dhat_m$
concentrates on the true value $d$.

\begin{proposition}[Asymptotic confidence interval]
\label{prop:rateless-ci}
For a fixed prefix length $m\ge 2$, as $d\to\infty$,
\begin{equation}
\label{eq:rateless-ci}
  \dhat/d \Longrightarrow \chi^2_r/r,\qquad r=m-1,
\end{equation}
and the asymptotic $1-\rho$ confidence interval is
\[
  \left[\frac{r\,\dhat}{\chi^2_{r,1-\rho/2}},\
  \frac{r\,\dhat}{\chi^2_{r,\rho/2}}\right].
\]
\end{proposition}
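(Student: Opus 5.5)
We follow the route of Theorem~\ref{thm:chi2}: a multivariate CLT for the centered prefix vector, then the continuous mapping theorem applied to a quadratic form. Write $C=\sum_x s_x b_x$, where $b_x=(B^{(x)}_0,\dots,B^{(x)}_{m-1})$ is the hit vector of element $x$ restricted to the prefix, so $B^{(x)}_0=1$. For $i=1,\dots,m-1$ set $Z_i^{(x)}=B^{(x)}_i-q_i$. Then
\[
  Y_i=C_i-q_iC_0=\sum_x s_x Z_i^{(x)}.
\]
Hence $Y=(Y_1,\dots,Y_r)$ is a sum of $d$ independent, bounded, mean-zero vectors $s_xZ^{(x)}$. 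By Lemma~\ref{lem:rateless-pairwise}, each $Z^{(x)}$ has diagonal covariance $D=\operatorname{diag}(q_i(1-q_i))$. Because $s_x^2=1$, multiplying by the fixed sign $s_x$ leaves this covariance unchanged, so every summand has the same covariance $D$ whatever the sign pattern.

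\textbf{Step 1 (CLT).} The summands $s_xZ^{(x)}$ are not identically distributed when signs differ, but they take only two distributions, $\pm Z$. We therefore apply the Lindeberg--Feller multivariate CLT; the Lindeberg condition holds trivially because the entries are bounded by $1$ and $m$ is fixed. This gives $Y/\sqrt d\Rightarrow \mathcal N(0,D)$ for any sequence of sign vectors. Alternatively, split the sum into the $d_+$ and $d_-$ parts, apply the i.i.d.\ CLT to each part that grows, and combine them by independence.

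\textbf{Step 2 (nondegeneracy and standardization).} By Lemma~\ref{lem:rateless-hit}, $0<q_i<1$ for $i\ge1$; for example $q_1=16/25$, and $q_i<1$ for every $i\ge1$. So $D$ is invertible. Write $W=D^{-1/2}Y/\sqrt d$, which converges to $\mathcal N(0,I_r)$.

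\textbf{Step 3 (continuous mapping).} From \eqref{eq:rateless-est},
\[
  \frac{\dhat}{d}=\frac1r\sum_i \frac{Y_i^2}{d\,q_i(1-q_i)}=\frac{\lVert W\rVert^2}{r}.
\]
Since $w\mapsto\lVert w\rVert^2$ is continuous, the limit is $\chi^2_r/r$.

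\textbf{Confidence interval.} The limit law is continuous and parameter-free, so $\Pr[\chi^2_{r,\rho/2}\le r\dhat/d\le \chi^2_{r,1-\rho/2}]\to1-\rho$. Inverting the inequalities for $d$ gives the stated interval.

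\textbf{Main obstacle.} The delicate point is the covariance structure. Lemma~\ref{lem:rateless-pairwise} only gives $\Pr[B_i=1,B_j=1]=q_iq_j$ for $0<i<j$. Because $B_0\equiv1$, the subtraction of $q_iC_0$ is exactly the mean removal. Pairwise independence of the indicators is precisely zero covariance of $Z_i,Z_j$, so $D$ is genuinely diagonal; the argument needs nothing more, since a Gaussian limit depends only on second moments. The mapping $b_x$ must also be independent across elements. It is worth noting that if $D$ were not diagonal, the limit would be a weighted sum of $\chi^2_1$ variables rather than $\chi^2_r$. The claim therefore rests essentially on Lemma~\ref{lem:rateless-pairwise}, together with the normalization by the exact $q_i(1-q_i)$.
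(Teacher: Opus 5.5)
Your proposal is correct and follows essentially the same route as the paper's proof. Both arguments apply a Lindeberg--Feller multivariate CLT to $Y/\sqrt d$, take the diagonal covariance $\operatorname{diag}(q_i(1-q_i))$ from Lemma~\ref{lem:rateless-pairwise}, get nondegeneracy from $0<q_i<1$, standardize coordinatewise and use the continuous mapping theorem to reach $\chi^2_r/r$, and then invert the $\chi^2_r$ quantiles; your remark that the continuity of the limit law gives convergence of the coverage probability at the endpoints makes explicit a step the paper leaves implicit.
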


\noindent\textit{Proof.}  See Appendix~\ref{app:rateless}.

Only pairwise independence is needed: the central limit theorem is applied
across elements, each element contributing an independent path, and the
limiting Gaussian is determined uniquely by the covariance matrix; pairwise
moments suffice.  $\Sigma$ is diagonal, so the limiting coordinates are
automatically independent.  Mutual independence is neither assumed nor needed.

This is a fixed-prefix, large-$d$ asymptotic interval, the same semantics as
Theorem~\ref{thm:chi2} for Plain: the receiver chooses a prefix length and
constructs a single confidence interval at that point; finite $(d,m)$
quantiles are verified experimentally.  Repeated inspection would require an
optional-stopping correction, which we leave for future work.

\emph{Implementation boundary.}  Lemmas~\ref{lem:rateless-hit}--\ref{lem:rateless-pairwise}
and Propositions~\ref{prop:rateless-estimator}--\ref{prop:rateless-ci} assume
each jump draws a fresh independent $U$, while the reference implementation
derives the entire path's jump distances from one 64-bit PRNG, making
consecutive $U$'s deterministic functions of the same seed.  The analytic
results apply to the fresh-uniform model; we separately test whether the
pinned reference implementation reproduces the required hit probabilities and
pairwise moments.

\subsubsection*{Rateless simulation validation}

Validation has two layers: first, whether the reference sampler matches the
analytic mapping statistics of
Lemmas~\ref{lem:rateless-hit}--\ref{lem:rateless-pairwise}; second,
whether the estimator remains centered and the intervals achieve near-nominal
coverage at the tested finite $(d,m)$ values.

\emph{Mapping statistics.}  The empirical hit probabilities and pairwise
moments agree with the analytic independence prediction of
Lemmas~\ref{lem:rateless-hit}--\ref{lem:rateless-pairwise}, for both the
analytic kernel and the official pinned sampler ($10^9$ paths each);
details are in Appendix~\ref{app:rateless}.

\emph{Point estimates and intervals.}  The finite-sample experiment fixes
$d=16384$ and takes prefixes $m\in\{64,256,1024\}$, covering seven difference
fractions $\{0,0.1,0.25,0.5,0.75,0.9,1\}$; each configuration runs 100{,}000
trials, $2{,}100{,}000$ observations in total.  The table
(Table~\ref{tab:rateless}) reports min--max
over the seven sign compositions; $r=m-1$.

\begin{table}[!hbtp]
\centering
\caption{Rateless prefix estimation ($d=16384$, 100{,}000 trials per
configuration; intervals are min--max over seven sign compositions).}
\label{tab:rateless}
\footnotesize
\begin{tabular}{rrrrrrrr}
\toprule
prefix $m$ & $r$ & mean($\dhat/d$) & RMSE($\dhat/d$) & $\sqrt{2/r}$ &
90\% cov. & 95\% cov. & 99\% cov. \\
\midrule
64   & 63   & 0.9993--1.0004 & 0.1775--0.1785 & 0.178 & 0.8993--0.9016 & 0.9490--0.9508 & 0.9897--0.9909 \\
256  & 255  & 0.9996--1.0008 & 0.0885--0.0888 & 0.089 & 0.8987--0.9009 & 0.9491--0.9505 & 0.9895--0.9903 \\
1024 & 1023 & 0.9999--1.0003 & 0.0442--0.0444 & 0.044 & 0.8968--0.8992 & 0.9487--0.9506 & 0.9895--0.9901 \\
\bottomrule
\end{tabular}
\end{table}

Point estimates are centered at~$1$ regardless of sign composition;
quadrupling the prefix length halves the RMSE, tracking the $\sqrt{2/r}$
scale of Proposition~\ref{prop:rateless-ci} tier by tier (the point
estimates and RMSE against the theory line are in
Figure~\ref{fig:r1}, Appendix~\ref{app:rateless}).
\begin{figure}[!hbtp]
\centering
\includegraphics[width=0.8\textwidth]{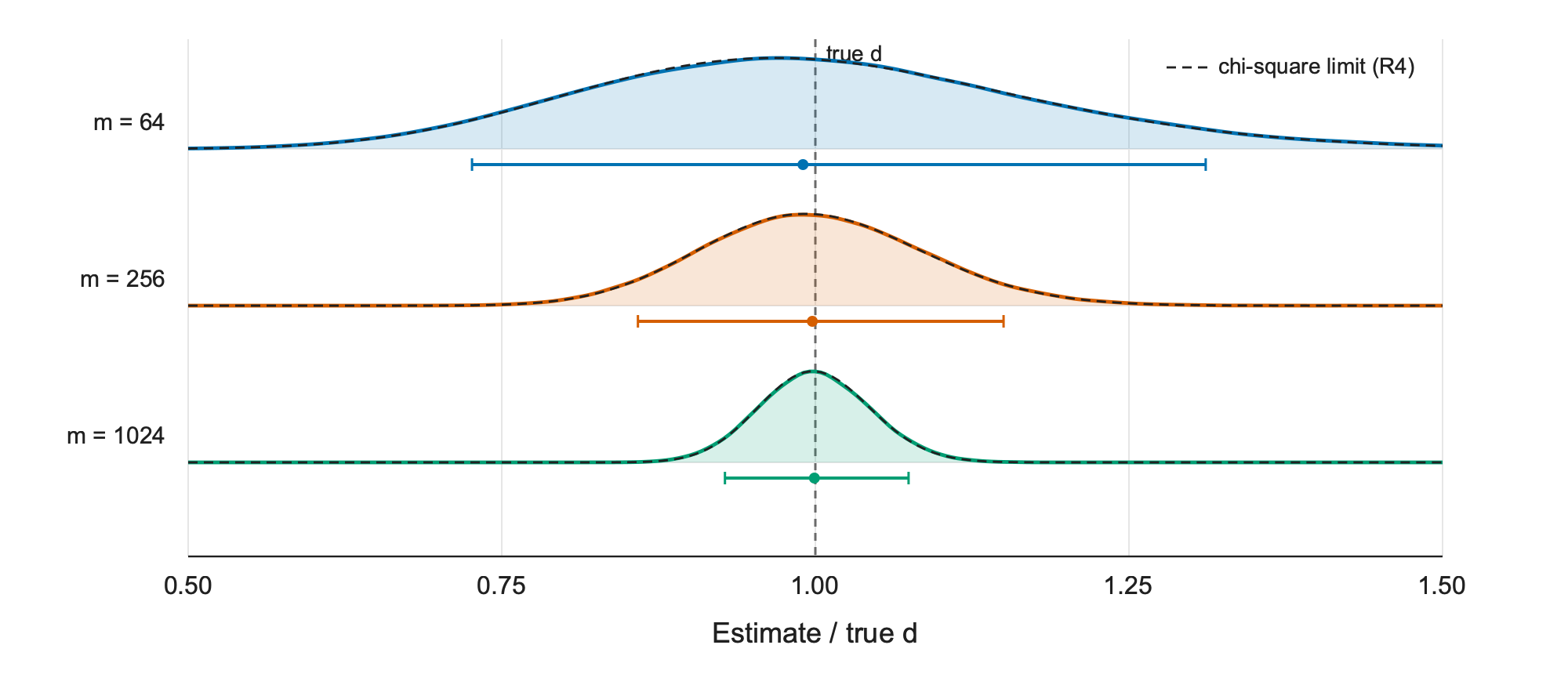}
\caption{Empirical distribution of $\dhat_m/d$ (solid) versus the chi-square
limit $\chi^2_r/r$ (dashed), with the central 90\% interval marked.  The
distribution visibly right-skews at $m=64$ and approaches
symmetry as $m$ grows to 1024.}
\label{fig:r2}
\end{figure}
Figure~\ref{fig:r2} juxtaposes the full
empirical distribution with the Proposition~\ref{prop:rateless-ci}
chi-square limit.

Across all 63 prefix--sign-composition--confidence-level cells, coverage is
within 0.32 percentage points of nominal.  The finite-sample remainder is
most visible at the smallest $d/r$ tier: for $m=1024$, all seven 90\%
coverage values lie 0.08--0.32 percentage points below nominal (0.17 points
on average), whereas the maximum absolute deviations over all confidence
levels are 0.16 and 0.13 points at $m=64$ and $256$, respectively.  A
fresh-uniform control reproduces the $m=1024$ gap, excluding the pinned PRNG
recursion as its source and identifying it with the finite-$(d,m)$ remainder
of the asymptotic approximation; Appendix~\ref{app:rateless} gives the
full comparison.

Both layers pass: the reference sampler reproduces the analytic mapping
statistics, and, with no fitted parameters, the purely analytic estimator
remains centered and the intervals achieve near-nominal coverage at the
tested finite $(d,m)$ values.

\begin{remark}[Implementation pitfall]
The average-of-ratios statistic \eqref{eq:rateless-est} and the
chi-square interval \eqref{eq:rateless-ci} must be used as a pair.
Replacing the former with a ratio-of-sums keeps the point estimate
unbiased but breaks the interval: the empirical coverage of the
nominal-90\% interval drops to about 50\%.  Details are in
Appendix~\ref{app:rateless}.
\end{remark}

\subsection{MET partial-type adapter}
\label{sec:met}

MET-IBLT~\cite{lazaro2023ratecompatible} partitions cells into types with heterogeneous hit rates and keys
into data types.  Following the design rule, its adapter centers
\emph{per cell type}: within type $t$, first
center over the $r_t$ of $m_t$ cells received so far, then normalize by that
type's mapping covariance.  Since transmission is a cell stream, the receiver
often holds only part of a type's cells, so two results handle the
normalization of partial types and the identifiability boundary.

\begin{lemma}[Partial-type correction]
\label{lem:met-partial}
For a partial type (only $r_t<m_t$ of $m_t$ cells received), the finite-prefix
normalization constant of data type $j$ in cell type $t$ is
\begin{equation}
\label{eq:met-gamma}
  \gamma_r(a)=\frac{a(m-a)(r-1)}{m(m-1)}.
\end{equation}
\end{lemma}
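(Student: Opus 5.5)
Within cell type $t$ there are $m=m_t$ cells. Each data-type-$j$ element selects a uniformly random $a=a_{tj}$-subset of these $m$ cells, and the receiver holds a fixed subset $R$ of $r$ of them. I would apply Theorem~\ref{thm:master} with the mapping vector restricted to the received block, and take $Q=I_r-\one_r\one_r^\top/r$ as the centering projector on those $r$ cells. The normalization constant is then $\gamma_r(a)=\tr(Q\Sigma_R)$, where $\Sigma_R$ is the covariance of the restricted hit vector $a_x|_R$. The plan is to compute $\Sigma_R$ exactly from the without-replacement structure. We also need to check that $Q\mu=0$, which is what makes Theorem~\ref{thm:master} applicable.

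\emph{Step 1: the mean.} By symmetry every cell of type $t$ is hit with probability $a/m$, so the restricted mean is $\mu_R=(a/m)\one_r$. Hence $Q\mu_R=0$. This holds simultaneously for every data type $j$, because the mean is constant on the block whatever $a_{tj}$ is. So per-type centering satisfies the hypothesis of Theorem~\ref{thm:master}.

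\emph{Step 2: the covariance.} For $i\ne i'$ in $R$, the pair-hit probability is $\Pr[\text{both hit}]=\frac{a(a-1)}{m(m-1)}$. This gives
\[
\Sigma_{ii}=\frac{a}{m}\Bigl(1-\frac{a}{m}\Bigr),\qquad \Sigma_{ii'}=\frac{a(a-1)}{m(m-1)}-\frac{a^2}{m^2}=-\frac{a(m-a)}{m^2(m-1)}.
\]
So $\Sigma_R=\alpha I_r+\beta\one_r\one_r^\top$ with $\alpha=\Sigma_{ii}-\Sigma_{ii'}=\frac{a(m-a)}{m(m-1)}$.

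\emph{Step 3: the trace.} Since $Q\one_r=0$, we get $\tr(Q\Sigma_R)=\alpha\,\tr(Q)=\alpha(r-1)$, which is exactly \eqref{eq:met-gamma}. As a sanity check, at $r=m$ this gives $a(m-a)/m=a-a^2/m$, matching the Plain constant $k(1-k/M)$ and the full-type entry of Table~\ref{tab:adapter-overview}.

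The main point needing care is the modelling. The subset $R$ of received cells must be fixed independently of the mapping (cells arrive as a stream in a fixed order), so conditioning on $R$ does not bias the hit distribution. Contributions from different cell types and data types then add. Each data type contributes $d_j\gamma_{tj}$ via \eqref{eq:master}, and cross-type block energies vanish in expectation because the centered blocks have zero mean. If the MET mapping couples cell types (a key hits a fixed number of cells in each type jointly), the within-type marginal is still a uniform $a$-subset, and only within-type quadratic terms enter the energy. So the computation above is unaffected.
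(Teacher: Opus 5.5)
Your proof is correct, but it reaches the constant by a different route than the paper. The paper never forms a covariance matrix. It writes the restricted self-energy as a scalar function of the hit count, $L-L^2/r$ with $L\sim\mathrm{Hypergeometric}(m,a,r)$. It then evaluates $\E[L]-\E[L^2]/r$ from the hypergeometric mean $ra/m$ and the finite-population-corrected variance $r\frac{a}{m}(1-\frac{a}{m})\frac{m-r}{m-1}$, and simplifies algebraically to $a(m-a)(r-1)/[m(m-1)]$.

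You instead build the $r\times r$ indicator covariance from pairwise without-replacement probabilities and note that it has the form $\alpha I_r+\beta\one_r\one_r^\top$. In fact it is the principal submatrix of the Plain covariance $\sigma^2 Q$ from Step~1 of the chi-square proof, with $(k,M)$ replaced by $(a,m)$. From there, $Q\one_r=0$ and $\tr(Q)=r-1$ finish the computation.

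Your route plugs directly into Theorem~\ref{thm:master} and explicitly verifies its hypothesis $Q\mu=0$. It also makes the factor $r-1$ transparent as the number of centered degrees of freedom. The paper's route is shorter, and it exhibits the self-energy as an explicit random variable $L-L^2/r$. That form is what the paper reuses later: in the Monte Carlo calibration, which samples $L$ directly, and in the variance $\sigma_{E,j}^2$ of Corollary~\ref{cor:variant-failure}(iv), which needs the distribution of the self-energy rather than only its mean.
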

Here $\gamma_1(a)=0$ and $\gamma_m(a)=a-a^2/m$, so MET can participate
in measurement on a partial prefix without waiting for a complete-type
boundary.

The current prefix admits a linear unbiased estimator of the total $F_2$
exactly when $\one_J\in\operatorname{rowspan}(\Gamma)$
(Proposition~\ref{prop:met-identifiability}, Appendix~\ref{app:met-partial}).
This criterion defines the identifiability gate.  If the gate fails, the
receiver continues collecting cells or falls back without producing an
estimate.

Full statements, boundary checks, identifiability discussion, and two-scale
simulation validation (partial-type local calibration and end-to-end
incremental estimation) are in Appendix~\ref{app:met-partial}; proofs are in
Appendix~\ref{app:met-proofs}.

\subsection{Failure conditioning for variants
(Corollary~\ref{cor:variant-failure})}
\label{sec:variant-failure-cond}

The protocol of \S\ref{sec:protocol} reads the count array only after a
decoding failure, whereas the adapters of
\S\ref{sec:irregular}--\S\ref{sec:met} are proven unbiased unconditionally.
Because failure is a function of the mapping, conditioning on this selection
event does not preserve the mean in general.  This subsection extends the
failure-conditioned analysis of \S\ref{sec:failure-cond} to all four
variants.

The key difference from Plain is that a variant's per-element self-energy may
be random.  The Plain per-element
self-energy $a_x^\top Qa_x=k-k^2/M$ is deterministic, so conditioned on the
mapping the estimator is unbiased for every realization and selection cannot
move its mean.  A variant's self-energy is random---Irregular's
fluctuates with the degree $D_x$, Rateless's with where the mapping path
lands in the prefix, MET's partial type with the hypergeometric hit count---and
it is a function of the mapping, correlated with decodability.  Selection
bias therefore splits into two channels: the sign channel inherits the Plain
argument and is zero, while the mapping channel is new to variants and needs
an explicit bound.

\begin{corollary}[Failure conditioning for variants]
\label{cor:variant-failure}
Assume the conditions of Theorem~\ref{thm:master}: symmetric $Q$ with
$Q\mu_j=0$ for all types, estimator $\dhat=C^\top QC/\gamma$ with per-element
self-energy $e_x=a_x^\top Qa_x$ and $\gamma=\E[e_x]$; element mappings are
independent.  Parts (i)--(iii) treat this single-type estimator; part (iv)
carries them to the multi-type MET estimator.  Let the idealized peeling
decoder's failure event $F$ be a measurable function of the mappings
$\{a_x\}$, let the signs $\{s_x\}$ be i.i.d.\ uniform on $\{\pm1\}$ and
independent of the mappings, and let $p_F=\Pr[F]>0$.  Then:

\emph{(i) The sign-selection bias is exactly zero:}
\[
  \E[\dhat \given F]
  = \frac{1}{\gamma}\,\E\!\Bigl[\sum_x e_x\;\Big|\;F\Bigr].
\]

\emph{(ii) Exact unbiasedness when self-energies are deterministic:} if every
$e_x$ is constant almost surely (Plain's $k$-regular; the MET counterpart is
in (iv)), then $\E[\dhat \given F]=d$ exactly.

\emph{(iii) Bias bound when self-energies are random:} in general, with
$\sigma_E^2=\Var(e_x)$,
\begin{equation}
\label{eq:cor41-bound}
  \bigl|\E[\dhat \given F]-d\bigr|
  \le \frac{\sqrt{d}\,\sigma_E}{\gamma}\sqrt{\frac{1-p_F}{p_F}},
  \quad\text{i.e.,}\quad
  \frac{\bigl|\E[\dhat \given F]-d\bigr|}{d}
  \le \frac{\sigma_E}{\gamma\sqrt{d}}\sqrt{\frac{1-p_F}{p_F}}.
\end{equation}

\emph{(iv) Several data types (MET).}  Take the MET estimator
$\dhat=w^\top E$ of \S\ref{sec:met}, with the received prefix fixed, so that
the weights $w$ and the normalization matrix $\Gamma$ are determined by the
prefix and satisfy $w^\top\Gamma=\one_J^\top$.  Its per-element self-energy is
the weighted combination
$e_x=\sum_t w_t\,a_{x,t}^\top Q_{r_t}a_{x,t}$ over cell types, and
$w^\top\Gamma=\one_J^\top$ says exactly that $\E[e_x]=1$ for an element of
\emph{every} data type.  Then (i) holds with $\gamma=1$; if all received
cell types are complete, every $e_x$ equals $1$ almost surely and
$\E[\dhat \given F]=d$ exactly; and with partial types present,
\begin{equation}
\label{eq:cor41-bound-met}
  \bigl|\E[\dhat \given F]-d\bigr|
  \le \sqrt{\sum_j d_j\,\sigma_{E,j}^2}\,\sqrt{\frac{1-p_F}{p_F}},
\end{equation}
where $\sigma_{E,j}^2=\Var(e_x)$ for an element of data type $j$ and $d_j$ is
that type's difference count.
\end{corollary}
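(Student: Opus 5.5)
The plan is to expand $C^\top QC=\sum_{x,y}s_xs_y\,a_x^\top Qa_y=\sum_x e_x+\sum_{x\ne y}s_xs_yW_{xy}$, using $s_x^2=1$, with $W_{xy}=a_x^\top Qa_y$. We then condition on $F$ and average over the signs.

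For part (i), $F$ is a function of the mappings alone, as in Fact~\ref{fact:failure-measurable}. The signs are i.i.d.\ uniform and independent of both the mappings and $F$. So for $x\ne y$ we get $\E[s_xs_yW_{xy}\mathbf 1_F]=\E[s_x]\E[s_y]\E[W_{xy}\mathbf 1_F]=0$. This requires only integrability, which is immediate because every $a_x$ is bounded. All cross terms vanish, and dividing by $\gamma$ gives (i).

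Part (ii) follows at once: if $e_x\equiv\gamma$, then $\E[\sum_x e_x\given F]=d\gamma$.

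For part (iii), set $S=\sum_x(e_x-\gamma)$. By (i), $\E[\dhat\given F]-d=\E[S\given F]/\gamma$. Since $\E[S]=0$, we can write $\E[S\mathbf 1_F]=\E[S(\mathbf 1_F-p_F)]$. Cauchy--Schwarz then gives $|\E[S\mathbf 1_F]|\le\sqrt{\Var S}\sqrt{p_F(1-p_F)}$. The $e_x$ are independent across $x$ because element mappings are independent, so $\Var S=d\sigma_E^2$. Dividing by $p_F\gamma$ yields \eqref{eq:cor41-bound}, and dividing further by $d$ gives the relative form.

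For part (iv), we use the same decomposition type by type. The MET estimator $w^\top E$ is a weighted sum over cell types of quadratic forms $C_t^\top Q_{r_t}C_t$, where $C_t=\sum_x s_xa_{x,t}$. Expanding each one gives diagonal terms summing to $\sum_x e_x$ with $e_x=\sum_tw_ta_{x,t}^\top Q_{r_t}a_{x,t}$. The cross terms again carry factors $s_xs_y$ with $x\ne y$, so they vanish after sign averaging exactly as before; this is (i) with $\gamma=1$.

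Next I will check that $\E[e_x]=(w^\top\Gamma)_j=1$ for data type $j$. This amounts to unpacking the definition of $\Gamma_{tj}=\E[a_{x,t}^\top Q_{r_t}a_{x,t}]$ for $x$ of type $j$; for partial types this expectation is the $\gamma_r(a)$ of Lemma~\ref{lem:met-partial}. When every type is complete, $a_{x,t}$ has a deterministic number $a_{tj}$ of hits placed in $m_t$ cells. Centering over the full type then gives $a_{x,t}^\top Q_{m_t}a_{x,t}=a_{tj}-a_{tj}^2/m_t$ deterministically, so $e_x$ is constant and hence equal to its mean $1$. With partial types present, the Cauchy--Schwarz step carries over with $\Var S=\sum_jd_j\sigma_{E,j}^2$, again using independence across elements, and this gives \eqref{eq:cor41-bound-met}.

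I expect the main obstacle to be the bookkeeping in (iv). There I need to confirm that the MET normalization is genuinely the per-element expectation of the weighted self-energy, that the weights are deterministic given the fixed prefix, and that complete-type self-energies are deterministic. This last point relies on each data-type element hitting a fixed number of cells in each cell type, which I would verify against the MET mapping model in Appendix~\ref{app:met-partial}. Everything else is the two-line Cauchy--Schwarz argument.
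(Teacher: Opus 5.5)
Your proposal is correct and follows the paper's proof essentially step for step. The paper also kills the cross terms by conditioning on the mappings and using $\E[s_xs_y]=0$. It writes the residual deviation as $\Cov(\sum_x e_x,\mathbf{1}_F)/(\gamma p_F)$ and applies Cauchy--Schwarz with $\Var(\sum_x e_x)=d\sigma_E^2$.

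For MET it uses the same expansion, with $\E[e_x]=(w^\top\Gamma)_j=1$ and the deterministic complete-type self-energy $a_{tj}-a_{tj}^2/m_t$. The point you flagged also checks out: the MET model in Appendix~\ref{app:met-partial} does fix exactly $a_{tj}$ hits per cell type for a data-type-$j$ key.
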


\noindent\textit{Proof.}  See Appendix~\ref{app:variant-failure}.

Taken together, Corollary~\ref{cor:variant-failure} separates the two
sources of conditional deviation.  Deterministic self-energy gives exact
conditional unbiasedness, while random self-energy produces a
relative-deviation bound of order $O(1/\sqrt d)$ that vanishes as
$p_F\to1$.  The same decomposition extends to heterogeneous MET data types
once the MET weights normalize each element's expected self-energy to one,
with $d\sigma_E^2$ replaced by the sum $\sum_j d_j\sigma_{E,j}^2$ over data
types.

\begin{figure}[!hbtp]
\centering
\includegraphics[width=0.48\textwidth]{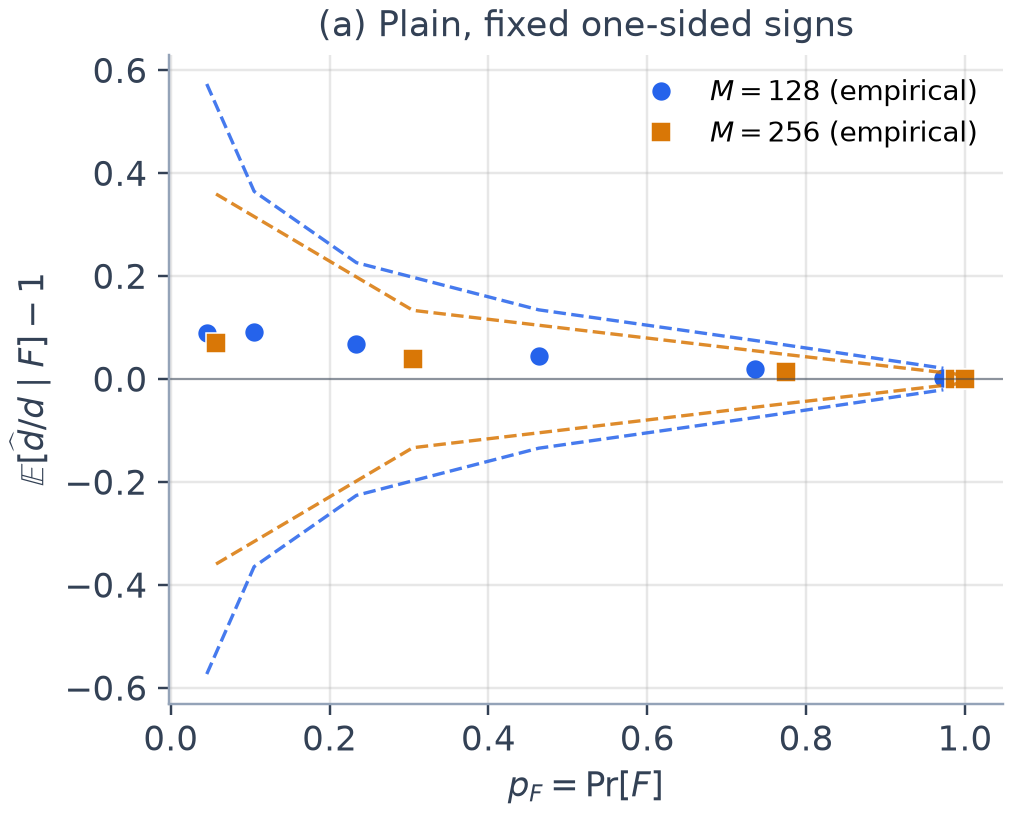}\hfill
\includegraphics[width=0.48\textwidth]{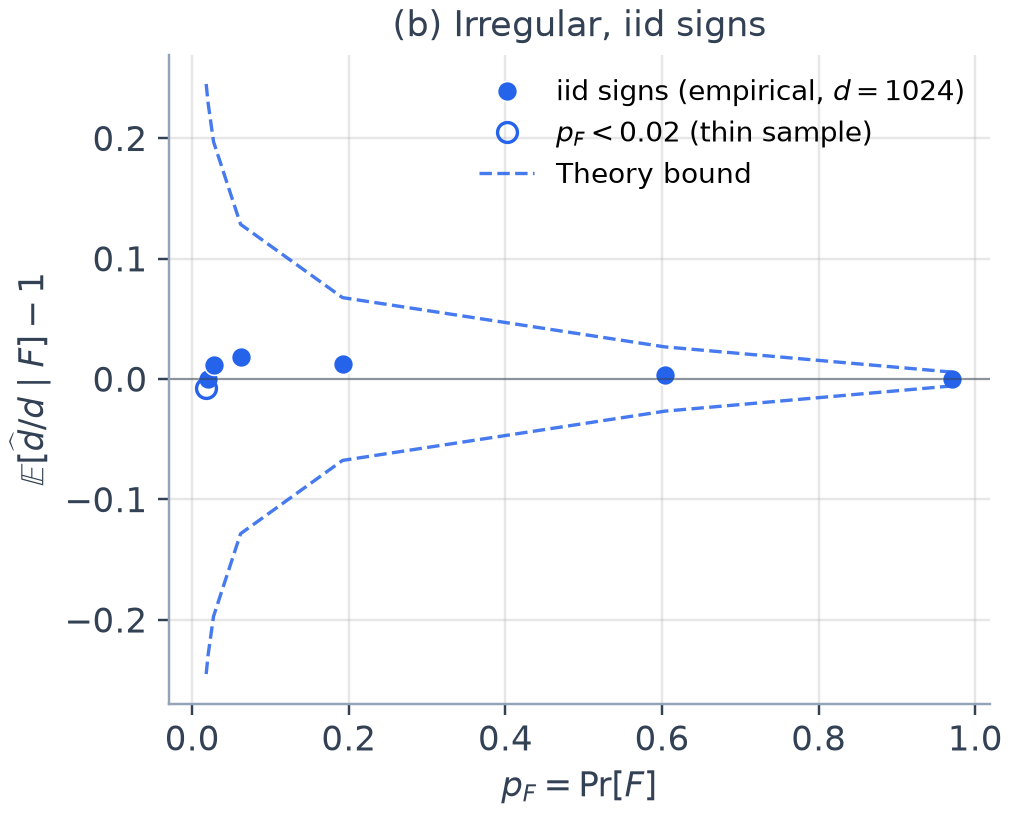}
\caption{Failure-conditioned mean deviation versus analytic bounds.  Left:
Plain with fixed one-sided signs, overlaid with the
Theorem~\ref{thm:cond-mean} fixed-sign envelope; right: Irregular with
i.i.d.\ signs, overlaid with Corollary~\ref{cor:variant-failure}(iii)'s bound
$\pm(\sigma_E/(\gamma\sqrt d))\sqrt{(1-p_F)/p_F}$;
hollow points are sparsely sampled tiers with $p_F<0.02$.  All adequately
sampled points remain within their respective bounds, and the deviations
shrink as $p_F\to 1$.}
\label{fig:e25}
\end{figure}

\emph{Simulation and validation.}  Each variant's $\sigma_E$ is given
explicitly by its mapping statistics; the instantiations are in
Appendix~\ref{app:variant-failure}.  The Irregular instance is checked
directly at decode scale ($d=1024$, the $\Lambda$ of
\S\ref{sec:irregular}, with $M$ covering the transition region from
$p_F=1$ to $p_F=0.018$): in the transition region one sees a positive
conditional-mean deviation (at $M=1150$, $p_F=0.192$,
mean$(\dhat/d \given F)=1.0127$, about 14 Monte Carlo standard errors), so
the mapping-selection channel of (iii) is real; all adequately sampled points
with $0<p_F<1$ stay inside the bound by a wide margin, and the deviation
shrinks to the noise floor as $p_F\to 1$
(Figure~\ref{fig:e25}).

The two panels separate the two channels just discussed.  On the left, the
Plain self-energy is deterministic so the mapping channel is always zero and
all visible deviation comes from the cross terms introduced by fixed
one-sided
signs; on the right, Irregular uses i.i.d.\ signs so the sign channel is
exactly zero by (i) and what remains is the random self-energy's mapping
channel.

The four adapters therefore share a common failure-path interface,
$\{\dhat,\mathrm{CI}\}$.  The corollary establishes conditional-mean
validity; Plain and Rateless provide analytic interval constructions, while
Irregular and MET use failed-only empirical quantiles.

\section{Self-Sizing Protocol}
\label{sec:protocol}

\S\ref{sec:count-measurement}--\S\ref{sec:mapping-aware} established the
estimator itself: the pre-peeling count vector of a failed IBLT sketch
provides an unbiased $\dhat$ together with a failure-conditioned lower-quantile
guarantee.  This section uses that measurement to construct a two-round
protocol.  The sender first transmits a fixed-capacity IBLT sketch.  If
decoding fails, the receiver estimates the difference cardinality from the
same sketch, computes the second-round capacity in one step, and requests one
fresh sketch.  First-round success gives a 1-RTT fast path, while the main
path completes within 2 RTTs.  The same sketch transmitted first
simultaneously carries decoding, measurement, and second-round capacity
sizing.

\begin{figure*}[!hbtp]
\centering
\includegraphics[width=\textwidth]{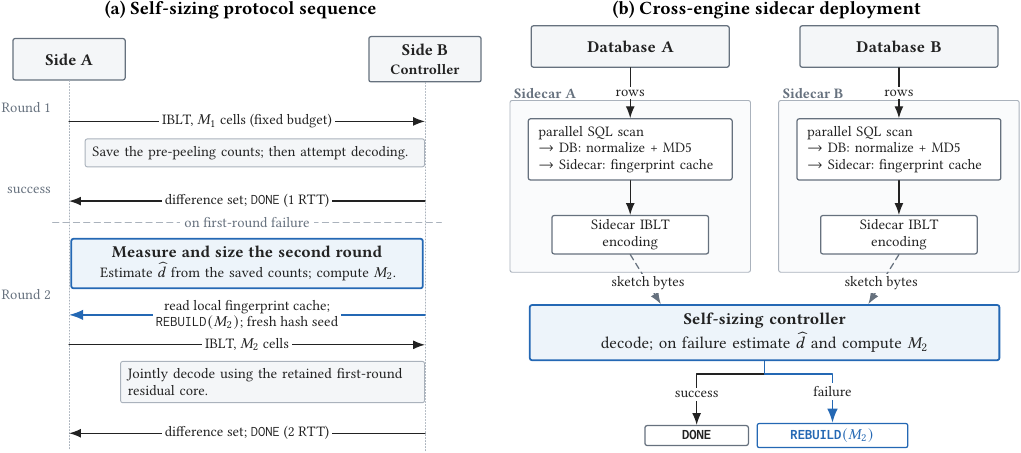}
\caption{Protocol and deployment overview.  (a) The 1-RTT fast path and the
fail$\rightarrow$measure$\rightarrow$size path.  (b) Each side scans in its
local order, while Sidecar B hosts the controller.}
\label{fig:overview}
\end{figure*}

Figure~\ref{fig:overview} combines the protocol flow with its deployment
boundary.  Panel~(a) shows the 1-RTT fast path and the failure-triggered
second round; Side B hosts the decoder and capacity controller.  Panel~(b)
shows the database-external sidecars: each launches parallel SQL scans, the
local database normalizes rows and computes fingerprints, and the sidecar
materializes a local fingerprint cache while encoding the IBLT.  After a
first-round failure, that cache supplies the fresh second-round sketch without
another database scan.

\emph{Why fix the main path at 2 RTTs.}  In cross-datacenter deployments
the round-trip cost is non-trivial (the field deployment of
\S\ref{sec:system} measures 25--30\,ms RTT on a bandwidth-limited cross-city
link), so additional RTTs have a clearly visible impact on performance.  Blind doubling, streaming protocols, and
Merkle-style localization all have round counts that grow with the
differences; self-sizing bounds this component at 2 RTTs.  A fixed 2-RTT
main path improves latency predictability---the
measurements of \S\ref{sec:prod-shape} show that the latency contributed by
the two protocol rounds remains small under batching---although scan and transfer costs
remain workload-dependent.  The principal tradeoff is the additional
second-round capacity introduced by the lower-quantile multiplier $1/q_\delta$
(\S\ref{sec:capacity-formula}).

\subsection{Two-round mechanism and protocol state}
\label{sec:two-round}

The protocol first sends a fixed-capacity IBLT $M_1$; the receiver subtracts
the two sides, saves the count vector \emph{before} peeling, computes $\dhat$
from it, and only then attempts decoding.

Saving the pre-peeling counts is a hard requirement: the order cannot be
reversed.  $\dhat$ must estimate the full difference cardinality $d$, which
is determined by the count vector before peeling and is independent of how
far peeling later proceeds.  Once the decoder runs, every recovered element
subtracts its contribution from the cells it touches, the count array is
rewritten, and the value read afterwards corresponds to the residual-core
size, not $d$.

The second round uses a fresh hash independent of the first; it must
re-encode the same set of elements.  These two conditions---fixed input
view and independent mapping---are listed together in
\S\ref{sec:boundaries} and realized by the system in
\S\ref{sec:system-prereq}.

\subsection{Capacity formula and success guarantee}
\label{sec:capacity-formula}

Given $\dhat$, a capacity multiplier is needed to convert the estimate into a
second-round capacity.
\S\ref{sec:quantile-interface} defines the failure-conditioned lower quantile
$q_\delta>0$ with
\[
  \Pr[\dhat\ge q_\delta d \given F]\ge 1-\delta.
\]
In practice $q_\delta<1$ (the calibrated $q_{0.01}$ tiers of
\S\ref{sec:exp-validation} range over $0.63$--$0.95$, with $1-\delta=99\%$
failure-conditioned coverage), so with conditional probability at least
$1-\delta$, $\dhat/q_\delta$ upper-bounds $d$.  The multiplier therefore
accounts for lower-tail underestimation when setting the second-round
capacity.

IBLT capacity and peeling success are related by a curve determined by the
number of hashes, the decoder, and finite-length effects; existing work gives
its threshold and listing-failure analysis~\cite{goodrich2011invertible,mizrahi2023listing}.
We denote a calibrated operating point on this curve by $\beta$ and compose
$q_\delta$ with $\beta$ once in Proposition~\ref{prop:capacity}.  Concrete
finite-length configuration is checked by the failed-only grid experiments of
\S\ref{sec:success-calibration} and Appendix~\ref{app:protocol-details}.

We write $ok_2$ for the event that the second-round decoder fully recovers the
currently unknown elements, and $d_{\mathrm{rec}}$ for the number of
difference elements exactly recovered by the first-round peeling---each
element's key and sign known, an exact integer; when first-round state is not
reused, $d_{\mathrm{rec}}=0$.  The second round must recover the residual
count $d-d_{\mathrm{rec}}$; we assume it stays within the declared operating
range (if it is zero, nothing is left to recover and $ok_2$ holds with
certainty).

\begin{proposition}[Protocol guarantee: second-round capacity formula and
success probability]
\label{prop:capacity}
Fix the number of hashes, the decoder, and the declared finite-length
operating range.  Let $q_\delta>0$ satisfy the failure-conditioned
lower-quantile guarantee \eqref{eq:qdelta} uniformly in the operating range,
where $F$ is the first-round failure event and the second round uses a fresh
hash independent of the first.  Let $\beta$ be a capacity operating point of
the fresh-hash decoder: for every $n\ge1$ in the operating range, whenever the
capacity is at least $\lceil\beta n\rceil$, the decoder's failure probability
on $n$ unknown elements is at most $\eta_{\mathrm{dec}}(\beta)$, whatever the
first round observed.  If the second-round sketch is sent with
\begin{equation}
\label{eq:capacity}
  M_2
  =
  \left\lceil
    \beta\max\!\left\{0,\;\frac{\dhat}{q_\delta}-d_{\mathrm{rec}}\right\}
  \right\rceil,
\end{equation}
and the receiver first removes the $d_{\mathrm{rec}}$ known elements locally
from the received difference table, then
\begin{equation}
\label{eq:success-bound}
  \Pr[ok_2 \given F]
  \ge (1-\delta)\bigl(1-\eta_{\mathrm{dec}}(\beta)\bigr)
  \ge 1-\delta-\eta_{\mathrm{dec}}(\beta).
\end{equation}
\end{proposition}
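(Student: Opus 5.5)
The argument intersects a statistical ``good'' event with a decoder ``good'' event and uses the fresh-hash independence to factor the probability.

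Define the good-estimate event $G=\{\dhat\ge q_\delta d\}$. By the failure-conditioned lower-quantile guarantee \eqref{eq:qdelta}, which holds uniformly in the operating range, we have $\Pr[G\given F]\ge 1-\delta$. The whole argument then works conditionally on the first-round sigma-field $\mathcal F_1$ (first-round mappings, hence $F$, $\dhat$, $d_{\mathrm{rec}}$). Since $\dhat$, $d_{\mathrm{rec}}$ and $q_\delta$ are $\mathcal F_1$-measurable, so is $M_2$.

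\emph{Step 1 (capacity is sufficient on $G$).} On $G$ we have $\dhat/q_\delta\ge d$. Hence
\[
\max\{0,\dhat/q_\delta-d_{\mathrm{rec}}\}\ge n,\qquad n:=d-d_{\mathrm{rec}},
\]
and monotonicity of the ceiling gives $M_2\ge\lceil\beta n\rceil$. If $n=0$, then $ok_2$ holds with certainty, as stipulated.

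\emph{Step 2 (reduction to a fresh $n$-element instance).} The receiver removes the $d_{\mathrm{rec}}$ exactly recovered elements, each with known key and sign. It does so by subtracting their second-round cell contributions, computed with the fresh hash, from the received difference table. By linearity of the IBLT this table is exactly the sketch of the $n$ unknown elements under the fresh hash, with capacity $M_2$.

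\emph{Step 3 (independence and decoder bound).} The second-round hash is independent of the first round, so conditional on $\mathcal F_1$ the unknown elements' mappings remain fresh uniform mappings. This is the point I expect to need the most care. The identity of the unknown set and the capacity $M_2$ are both $\mathcal F_1$-measurable, so they are fixed conditional inputs rather than data-dependent choices correlated with the new hash. This is precisely why the hypothesis on $\beta$ is phrased ``whatever the first round observed.'' Applying the operating-point hypothesis conditionally then gives
\[
\Pr[ok_2^c\given\mathcal F_1]\le\eta_{\mathrm{dec}}(\beta)\quad\text{on }G\cap F
\]
(with $n$ in the operating range by assumption, and the case $n=0$ trivial).

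\emph{Step 4 (combine).} Using the tower property,
\[
\Pr[ok_2\given F]\ge\E\bigl[\one_G\Pr[ok_2\given\mathcal F_1]\given F\bigr]\ge(1-\eta_{\mathrm{dec}}(\beta))\Pr[G\given F]\ge(1-\delta)(1-\eta_{\mathrm{dec}}(\beta)).
\]
Finally, expanding the product and dropping the nonnegative term $\delta\eta_{\mathrm{dec}}$ gives the weaker bound $1-\delta-\eta_{\mathrm{dec}}(\beta)$.
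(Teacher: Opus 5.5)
Your proposal is correct and takes essentially the same route as the paper. Both condition on the first-round history, show that on $F\cap\{\dhat\ge q_\delta d\}$ one has $M_2\ge\lceil\beta(d-d_{\mathrm{rec}})\rceil$, apply the fresh-hash decoder bound history by history, and then average using $\Pr[\dhat\ge q_\delta d\given F]\ge1-\delta$; your explicit $\sigma$-field bookkeeping and the IBLT-linearity remark for removing the recovered elements simply make precise what the paper states informally.
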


\begin{proof}
The argument is conditional on the first round, because $M_2$ is computed from
the first-round estimate and is therefore not independent of it.

Let $H_1$ denote the first-round history---the first-round mapping, counts,
$\dhat$, the peeled elements and $d_{\mathrm{rec}}$---and consider any $H_1$
on which the failure event $F$ occurs together with the safe-estimate event
$\{\dhat\ge q_\delta d\}$.  On such a history, $\dhat$, $d_{\mathrm{rec}}$,
and hence $M_2$ are all determined, and $\dhat\ge q_\delta d$ gives
$d\le\dhat/q_\delta$; since $d_{\mathrm{rec}}\le d$,
\[
  n \;=\; d-d_{\mathrm{rec}}
  \;\le\;
  \frac{\dhat}{q_\delta}-d_{\mathrm{rec}},
  \qquad\text{so}\qquad
  M_2\ge\lceil\beta n\rceil .
\]
The capacity carried by the second round is thus deterministically sufficient
for the number of elements it must recover, whatever that history was.

The sender still generates the second-round sketch over the full set; after the
receiver removes the $d_{\mathrm{rec}}$ known elements, the IBLT sketch encodes
the same $n$ unknown elements, and its hash is drawn fresh and independent of
$H_1$.  The decoder operating point therefore applies history by history: for
each such $H_1$, $\Pr[ok_2 \given H_1]\ge1-\eta_{\mathrm{dec}}(\beta)$ when
$n\ge1$, and $\Pr[ok_2 \given H_1]=1$ when $n=0$.  Taking the conditional
expectation over the first-round histories inside the safe-estimate event and
using $\Pr[\dhat\ge q_\delta d \given F]\ge1-\delta$,
\[
  \Pr[ok_2 \given F]
  \ge
  \bigl(1-\eta_{\mathrm{dec}}(\beta)\bigr)\,
  \Pr[\dhat\ge q_\delta d \given F]
  \ge (1-\delta)\bigl(1-\eta_{\mathrm{dec}}(\beta)\bigr)
  \ge1-\delta-\eta_{\mathrm{dec}}(\beta).
\]
\end{proof}

\noindent\textit{What this guarantee says.}  The success probability
decomposes into two separate risk budgets: $\delta$ (the estimate
falls below the true value) and $\eta_{\mathrm{dec}}(\beta)$ (the
capacity suffices but the decoder still fails due to finite-length
effects).  The fresh second-round hash makes the decoder guarantee
apply uniformly over all qualifying first-round histories: whatever the
first round observed, a sketch sized above the operating point still fails
with probability at most $\eta_{\mathrm{dec}}(\beta)$.  Combining the two
risks by a union bound gives
$\Pr[\neg ok_2 \given F]\le\delta+\eta_{\mathrm{dec}}(\beta)$.  The
statistical layer provides $q_\delta$, the coding layer provides
$\beta$, and the protocol composes them into $M_2$ in one step---no
iterative probing.

\emph{Deployed configuration.}  We set
$d_{\mathrm{rec}}=0$ in the capacity formula, sizing the second round as
$M_2=\lceil\alpha\dhat\rceil$ with $\alpha=\beta/q_\delta$, while still
enabling joint peeling of the first-round residual core
(Proposition~\ref{prop:joint}).  Setting $d_{\mathrm{rec}}=0$ overpays
bytes only in proportion to how much the first round peeled before
failing, and a first-round failure that still recovers a substantial
fraction of $d$ occurs only in a narrow band just above the peeling
threshold.  The cost-dominant two-round runs sit far into overload---$d/M_1$
in the tens to hundreds (\S\ref{sec:prod-shape})---where first-round peeling
recovers almost nothing, $d_{\mathrm{rec}}\approx0$, and the deduction term
vanishes.  Deducting $d_{\mathrm{rec}}$ is therefore a sound, optional
refinement whose guarantee is already contained in
Proposition~\ref{prop:capacity}; it pays off when second-round runs cluster
near the peeling threshold rather than in deep overload.

\emph{Second-round capacity cost.}  Since $\E[\dhat \given F]=d$ under the
random-sign model (Theorem~\ref{thm:cond-mean}), the expected second-round
capacity is about $1/q_\delta$ times the ideal $\lceil\beta d\rceil$---the
cell overhead of lower-tail safety amplification; the complete cost,
including the first-round $M_1$ and occasional recovery steps, is accounted
for in \S\ref{sec:regret}.  The main protocol's second decoding attempt ends
within 2 RTTs; runs in which the second-round decoder still fails enter
\texttt{FALLBACK} of \S\ref{sec:boundaries}.

\subsection{Joint peeling and first-round residual state}
\label{sec:joint-peeling}

The capacity formula of Proposition~\ref{prop:capacity} already deducts the
$d_{\mathrm{rec}}$ elements recovered in the first round.  A first-round
failure leaves one more usable state: the residual core left when peeling
stopped, encoding the remaining $d-d_{\mathrm{rec}}$ unknown elements along
with their keySum and checkSum residuals.  This residual core is data the
first round has already paid for and can participate in the second round's
joint decoding.

\begin{proposition}[Monotonicity of joint peeling]
\label{prop:joint}
Suppose the second-round sketch and the first-round residual core encode the
same set of unknown elements.  Under the idealized decoder, for each concrete
pair of sketch instances, success of peeling the second-round sketch alone
implies success of joint peeling.  Therefore the joint decoding success
probability is at least that of decoding the second-round sketch alone.
\end{proposition}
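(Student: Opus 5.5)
The proof will be a deterministic, instance-by-instance statement about the idealized peeling decoder. The probabilistic claim then follows by integrating over instances. Fix a concrete pair of sketches: the second-round difference table $S_2$ (after locally removing the $d_{\mathrm{rec}}$ known elements) and the first-round residual core $R_1$. Both encode the same set $U$ of $n=d-d_{\mathrm{rec}}$ unknown elements, each with its own mapping. I will model joint peeling on the hypergraph whose vertex set is the disjoint union of the cells of $S_2$ and $R_1$. Each unknown element $x\in U$ forms a hyperedge joining its cells in $S_2$ and its (remaining) cells in $R_1$. By Fact~\ref{fact:failure-measurable}, with no checksum collisions, a cell can be peeled iff it touches exactly one undecoded element. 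Decoding an element from either table removes it from \emph{both} tables: its key and sign are known, so its contribution can be subtracted from every cell it touches in both sketches.

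The core step is a monotonicity property of peeling. Peeling $S_2$ alone is the process of repeatedly deleting degree-1 vertices and their hyperedges in the hypergraph $H_2$ restricted to the $S_2$ cells. I will argue via the 2-core characterization. Peeling of $H_2$ alone succeeds iff the 2-core of $H_2$ is empty, i.e., iff there exists an order $x_1,\dots,x_n$ of $U$ such that each $x_i$ has a cell in $S_2$ not touched by any of $x_{i+1},\dots,x_n$. I will show that the joint process can follow the same order. Inductively, suppose $x_1,\dots,x_{i-1}$ have been removed jointly. The witnessing cell of $x_i$ in $S_2$ then touches only undecoded elements among $\{x_i,\dots,x_n\}$, and by choice of the order it touches exactly $x_i$. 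It is therefore a pure cell, and the joint decoder can peel $x_i$. Extra cells from $R_1$ only add peeling opportunities. Elements peeled from $R_1$ earlier than in this schedule are also harmless, because removing hyperedges never raises any vertex's degree. Hence the joint process's final undecoded set is contained in the 2-core of $H_2$, which is empty. I will state the general fact explicitly as a lemma: the set of elements left undecoded by greedy peeling equals the edge set of the 2-core. This makes the outcome independent of the peeling order and monotone under adding vertices (cells) to each hyperedge.

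The probabilistic conclusion is then immediate. The event ``$S_2$ alone decodes'' is contained in the event ``joint decoding succeeds'' pointwise on the joint probability space of first-round history and fresh second-round hash. Probabilities are monotone under inclusion, so the inequality holds conditionally on any history, including conditioning on $F$.

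I expect the main obstacle to be the order-independence/2-core lemma in the joint setting. It must hold when a hyperedge has cells in two different tables, and when some $R_1$ cells may carry residual counts that look like $\pm1$ but are not pure. The idealized no-collision assumption handles the latter: checksum verification admits only genuinely pure cells. I would prove order-independence by the standard argument that a peelable vertex stays peelable, or becomes empty, after any other removal. Thus any maximal peeling sequence removes exactly the non-2-core edges.
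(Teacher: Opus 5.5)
Your proof is correct and follows the paper's argument: the paper also replays a successful peeling sequence of the second-round sketch inside the joint decoder. It observes that the extra deletions from the first-round residual core cannot make a previously pure cell impure. Your additional 2-core lemma (the joint 2-core is contained in the 2-core of $H_2$, and greedy peeling is order-independent) makes explicit a point the paper's one-line proof leaves implicit, namely that a joint decoder peeling in a different order still ends with nothing undecoded.
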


\begin{proof}
Replay a successful peeling sequence of the second-round sketch in the joint
decoder.  Each recovered element is removed from both the first-round
residual core and the second-round sketch; these extra deletions cannot make a
cell that was pure in the original second-round sequence impure.  The sequence
therefore executes completely.
\end{proof}

Joint peeling can therefore be enabled by default: it adds no transmitted
bytes and preserves the Proposition~\ref{prop:capacity} success guarantee.
The $M_2$ in the main text is configured by the standalone-decoding capacity
coefficient $\beta$ needed for the second-round sketch to decode
independently; the extra capability of the first-round residual core remains
as a runtime margin.  Reducing the second-round capacity further using the
residual core would require separate asymptotic analysis and finite-length
calibration.

The three uses of a failed first-round sketch thus have a clear division
(Table~\ref{tab:failed-state}):

\begin{table}[!hbtp]
\centering
\caption{The three uses of a failed first-round sketch.}
\label{tab:failed-state}
\small
\begin{tabular}{lll}
\toprule
first-round state left behind & protocol use & guarantee \\
\midrule
pre-peeling count vector & compute $\dhat$ and planning bound $\dhat/q_\delta$ & Proposition~\ref{prop:capacity} \\
recovered $d_{\mathrm{rec}}$ elements & exact deduction from the capacity formula & Proposition~\ref{prop:capacity} \\
first-round residual core & joint peeling with the fresh second-round sketch & Proposition~\ref{prop:joint} \\
\bottomrule
\end{tabular}
\end{table}

\subsection{Core protocol flow}
\label{sec:protocol-flow}

Figure~\ref{fig:overview}(a) puts the Proposition~\ref{prop:capacity}
capacity formula and the Proposition~\ref{prop:joint} state reuse into one
exchange sequence; the steps are enumerated below.

\begin{enumerate}[itemsep=2pt,topsep=2pt]
  \item $A\to B$: send the first-round sketch of capacity $M_1$.  $B$
  subtracts, saves the count vector before peeling, then attempts decoding.
  On first-round success, output the full difference set and enter
  \texttt{DONE}.
  \item On first-round failure, $B$ computes $\dhat$ from the saved counts,
  records the exactly recovered element count $d_{\mathrm{rec}}$, and may keep
  the first-round residual core.  If the measurement state is valid, compute
  the second-round capacity by \eqref{eq:capacity}.
  \item If $M_2$ does not exceed the configured budget, $B\to A$ requests a
  second-round sketch of capacity $M_2$ with a fresh hash seed; $A\to B$
  returns it.  $B$ subtracts, then first removes the signed elements already
  recovered in round one from the second-round difference table.
  \item $B$ peels the second-round sketch; if the first-round residual core
  was kept, it runs the joint peeling of
  Proposition~\ref{prop:joint} at the same time.  On full recovery enter
  \texttt{DONE}; with unknown elements remaining, enter \texttt{FALLBACK}.
  If the measurement state is invalid or $M_2$ exceeds the budget, exit the
  main path before sending the second-round request.
\end{enumerate}

First-round reuse has two independent knobs.  Deducting $d_{\mathrm{rec}}$
controls the size of the transmitted second-round sketch; running joint
peeling of the residual core controls only local decoding and adds no bytes.
The default deployment takes $d_{\mathrm{rec}}=0$---so
$M_2=\lceil\alpha\dhat\rceil$ with $\alpha=\beta/q_\delta$---while keeping
joint peeling on; an implementation that retains no decoder state at all
drops both.  The Proposition~\ref{prop:capacity} success guarantee holds in
every case, since $d_{\mathrm{rec}}=0$ is the formula's conservative
setting.

\subsection{Configuration and operating boundaries}
\label{sec:boundaries}

The core flow depends on four pre-configured pieces of information: the risk
budget $\delta$, the first-round capacity $M_1$, the failure-conditioned
lower quantile $q_\delta$ \eqref{eq:qdelta}, and the decoder operating point
$\beta$.  $\delta$ sets the lower-tail target; $M_1$ sets the fixed bytes,
the fast-path coverage, and the estimation precision
$\operatorname{RSD}(\dhat)\approx\sqrt{2/M_1}$ (see \eqref{eq:rsd}); $q_\delta$
and $\beta$ enter the second-round capacity formula \eqref{eq:capacity}.
Without first-round state reuse the two merge into the single multiplier
$\alpha=\beta/q_\delta$.  We take $\delta=0.01$ throughout; the bytes,
precision, $q_\delta$, and recommended multipliers for the four standard $M_1$
tiers are in Appendix~\ref{app:protocol-details}.

\emph{Fixed input view and independent mappings.}  The two rounds must encode
the same elements, and the second-round hash seed must be independent of the
first-round mapping.  The former guarantees that the first-round recovery
state and the second-round difference table describe the same difference
set; the latter guarantees that the nominal $\beta$ curve can be invoked
conditioned on the first-round failure.  How a concrete system maintains this
input view is an implementation prerequisite (see \S\ref{sec:system-prereq}).

\emph{Over-budget exit.}  Self-sizing computes the second-round capacity
before requesting the second-round sketch, so it can compare against a natural
baseline: sending the entire fingerprint set for full verification.
When the computed second-round sketch exceeds this baseline, the protocol
emits \texttt{FALLBACK} directly.  A deployment may also fix a
per-verification byte budget as an earlier exit.

\emph{Representation validity.}  The risk budget $\delta$ covers statistical
underestimation; representation errors such as count wraparound, which
destroys the centered energy $T$ \eqref{eq:centered-energy}, are handled
separately.  Building and subtracting therefore require checked arithmetic or
a sufficiently wide internal type, with the overflow status propagated to the
measurement interface.  On wraparound or when no applicable $q_\delta$ exists,
the interface outputs \texttt{REJECT} and the caller chooses a recovery path.

The protocol exposes four outcomes:
\[
  \texttt{DONE} \;/\; \texttt{REBUILD}(M) \;/\; \texttt{FALLBACK}(path)
  \;/\; \texttt{REJECT}(reason).
\]
\texttt{DONE} means the difference set has been fully recovered;
\texttt{REBUILD($M$)} requests a second-round sketch of capacity $M$ with a
fresh hash; \texttt{FALLBACK} covers over-budget or second-round-decoding
failure; \texttt{REJECT} means the current measurement state cannot provide a
valid capacity estimate.  \texttt{DONE} and \texttt{REBUILD($M$)} drive the
1-RTT fast path and the 2-RTT self-sizing main path; the other two provide
explicit exits for statistical risk, representation errors, and resource
boundaries.

\subsection{Success-rate calibration}
\label{sec:success-calibration}

This subsection evaluates whether the finite-length implementation achieves
the success target of Proposition~\ref{prop:capacity}; bytes and rounds are
evaluated separately in \S\ref{sec:regret}.

The simulation sweeps the first-round capacity and the second-round
multiplier:
\[
  M_1\in\{256,512,1024,4096\},
  \qquad
  \alpha\in\{1.6,1.8,2.0\}.
\]
Two success rates must be distinguished.  Overall success counts both
first-round and second-round decodes; it is inflated by easy trials
that decode in round one.  Failed-only second-round success
$\Pr[ok_2 \given F]$ is the conditional probability of
Proposition~\ref{prop:capacity}---the guarantee that must be realized.

\begin{figure}[!hbtp]
\centering
\includegraphics[width=0.9\textwidth]{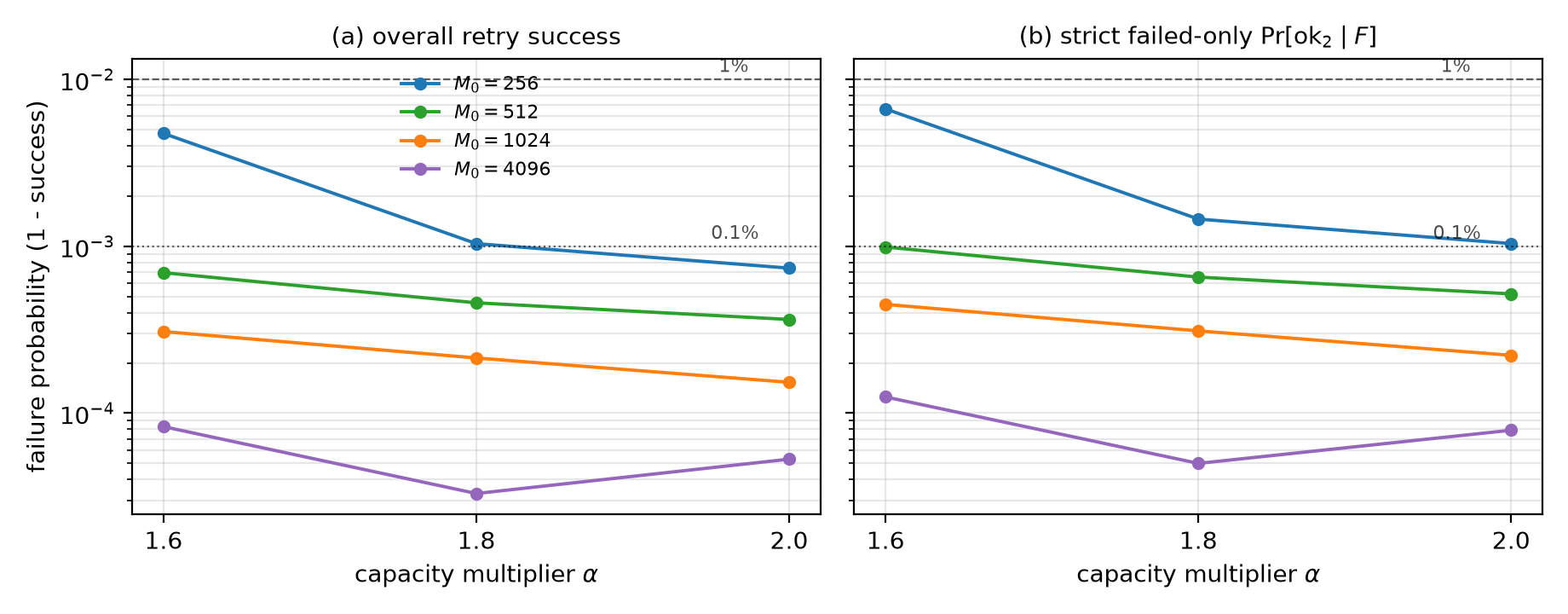}
\caption{Second-round success rate versus capacity multiplier and
first-round capacity.  Left: overall success; right: failed-only second-round
success $\Pr[ok_2 \given F]$; the y-axis is failure probability on a log
scale, with the $1\%$ design target and the $0.1\%$ reference line marked.
For $M_1\ge512$, $\alpha=1.6$ already reaches above $99.9\%$ under both
semantics.}
\label{fig:e13}
\end{figure}

At the smallest swept multiplier $\alpha=1.6$, the failed-only second-round
success $\Pr[ok_2 \given F]$ already reaches $99.901\%$ at $M_1=512$
(Figure~\ref{fig:e13}), comparable to the overall success rate, showing that
the failed-only result is not inflated by easy cases that decode in the first
round.  The tightest of the four
tiers is $M_1=256$: at the same multiplier $\Pr[ok_2 \given F]=99.336\%$, and
raising $\alpha$ to $1.8$ increases it to $99.855\%$.  Overall, capacities
configured with a $99\%$ lower-tail probability achieve actual second-round
success rates around $99.9\%$ at the measured operating points, so the design
target is realized with a small margin.  The complete $M_1\times\alpha$
numbers are in Table~\ref{tab:c2} (Appendix~\ref{app:protocol-details});
Table~\ref{tab:c1} derives the recommended multipliers independently from
each tier's $q_\delta$; together the two tables carry configuration and
validation.

\subsection{Protocol cost under unknown difference cardinality: regret versus
oracle}
\label{sec:regret}

\S\ref{sec:success-calibration} asked how reliable the protocol is for a
given $M_1$ and $\alpha$.  This section asks a different question: when the
true $d$ is unknown in advance, how much extra do the various methods pay to
find a capacity that decodes?

Consider an \emph{oracle} that already knows the true $d$ and therefore
configures exactly the smallest decodable capacity in one try: no probing, no
waste.  It is the byte and round lower bound.  Any realistic method must
estimate or probe; the extra bytes and rounds it spends over the oracle are
the price of not knowing $d$, which we call \emph{regret}.  We report byte
cost as a ratio to the oracle and interaction cost as the mean number of
rounds; a byte ratio of 1 and a round count of 1 match the oracle baseline.

We compare four practical methods.  \emph{Self-sizing} is this section's
protocol: a fixed first-round sketch, and on failure, one second-round
capacity computed from its counts; the cost is mainly the extra bytes of the
$q_\delta$ safety amplification.  We consider both second-round-only decoding
(\emph{Independent R2}) and joint decoding with the first-round residual core
(``+ joint'').  The two baselines are \emph{blind doubling}, which does no
estimation, tries a conservative capacity, and doubles on failure until it
decodes---rounds grow logarithmically with $d$---and \emph{Strata-first},
which first sends a stratified summary to estimate $d$ and then configures a
main IBLT, paying a fixed summary cost.  Self-sizing counts
both the first-round sketch and the second-round capacity (with
$d_{\mathrm{rec}}$ not deducted); Strata-first deducts the elements recovered
during estimation and never double-charges.

The main conclusion is that self-sizing stays close to the oracle across the
whole range of $d$.  The baselines incur different costs: blind doubling uses
additional rounds as $d$ grows, while Strata-first pays a fixed summary cost
that dominates at small~$d$.

\begin{figure}[!hbtp]
\centering
\includegraphics[width=0.95\textwidth]{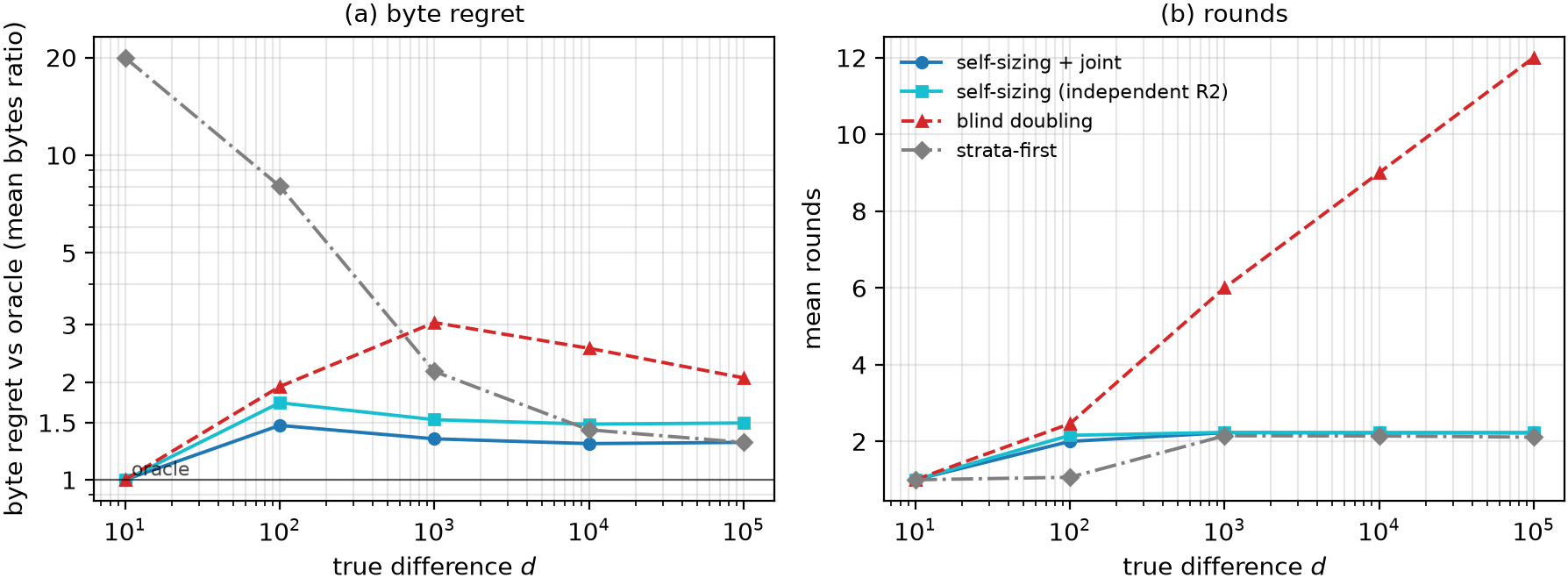}
\caption{Regret versus oracle for four methods ($M_1=64$, $\theta_-=0.5$).
Left: mean bytes relative to the oracle (log y-axis); right: mean rounds to
full recovery; the x-axis is the true $d$ on a log scale.  Self-sizing tracks
the oracle within $1.3$--$1.7\times$ bytes at a stable round count; blind
doubling pays logarithmically growing rounds; Strata-first pays a fixed
summary at small $d$.}
\label{fig:e21}
\end{figure}

Figure~\ref{fig:e21} evaluates the lowest-fixed-cost configuration, with
first-round capacity $M_1=64$ (2\,KB).  Self-sizing's byte regret is $1.3$--$1.7\times$ the
oracle throughout, with the round count flat at about 2 (1 for the small-$d$
end where the first round decodes directly); blind doubling's bytes are
$2.0$--$3.1\times$ with rounds growing logarithmically in $d$ (on average 12
rounds at $d=10^5$); Strata-first's fixed 40\,KB summary cost makes its
byte regret $8$--$20\times$ at small $d$.  Its one-round advantage is
concentrated in the narrow window $d\le100$; for $d\ge10^3$ its byte regret
falls to $1.35$--$2.15\times$ and the round advantage disappears.  The full
$M_1\times d$ crossover phase diagram, expected-byte ledger, and
cost-accounting assumptions are in Appendix~\ref{app:protocol-details}
(\S\ref{app:param-sensitivity}).

\noindent\textit{Strata-first in the small-$d$ regime.}  Strata's one-round
advantage is confined to the small-$d$ window $d\le100$; for $M_1\ge256$,
self-sizing's first round already covers this window and Strata's round
advantage disappears, while Strata still carries 40\,KB of layered summary
versus our 2\,KB first-round sketch (the layered recovery mechanism is
described in Appendix~\ref{app:protocol-details}).

\section{System Evaluation and Applicability}
\label{sec:system}

\S\ref{sec:count-measurement} gave the statistical characterization of
estimating the difference cardinality from the pre-peeling counts of a Plain
IBLT, and \S\ref{sec:protocol} connected that measurement to a two-round
self-sizing protocol.  This section examines the three questions the protocol
faces inside a database system: how large are the difference cardinalities in
production, and how are the differences distributed over the primary-key
space; how do these characteristics affect the end-to-end cost of
self-sizing IBLT versus Merkle-style localization; and which scenarios remain
covered when the data model and network conditions change.

The evidence comes from three kinds of material.  First, we perform a
systematic statistical study of table-level reconciliation records from
NineData from a 90-day production window, examining table sizes, difference
cardinalities, running times, and the distribution of differences in the
primary-key space.  Second, we rebuild controlled data from the structure and
statistical characteristics of production tables and compare the two methods'
end-to-end cost in a cross-engine environment.  Third, we conduct an actual
cross-city deployment on China Mobile's Redis/Pika, examining a
hash-bucketed key space and a real cross-city link.  The first analysis gives
the production workload profile; the latter two contain actual runs with
result verification.

\emph{Evidence boundary.}  Production logs come from the Merkle-style
localization currently in service at NineData; the self-sizing IBLT has
not run on those online tasks.  The rounds and capacities reported for
production tables are trace-driven analysis, in which the true~$d$ recorded
in the logs is substituted into the capacity model of \S\ref{sec:protocol}.  Rounds,
transfer volumes, and wall-clock times in the cross-engine replay and
the China Mobile deployment come from actual execution.

\subsection{System implementation, protocol state, and difference semantics}
\label{sec:implementation}

The comparisons below rest on one shared system definition.  The protocol
mechanism and its theoretical guarantees are in \S\ref{sec:protocol}; this
subsection only describes how the system carries the protocol, in three
steps: how database records become set elements, which difference semantics
are used, and how the first-round IBLT proceeds under different outcomes.
The concrete performance comparisons are in \S\ref{sec:prod-distribution}--\S\ref{sec:cross-city}.

\subsubsection{Deployment form and row fingerprints}

The system runs as a sidecar outside the databases
(Figure~\ref{fig:overview}(b)).  Each side reads
local records, normalizes type representations across engines, and
generates a row fingerprint from the primary key and business columns.
Only fixed-length fingerprints cross the network; confirmed differences
are looked up by key.

The local path is implemented as a streaming pipeline: workers scan rows,
normalize values, and add fingerprints to the first-round IBLT concurrently,
so scanning and sketch construction overlap.  This concurrency is local to
each sidecar and does not add a network message or change the sketch
semantics.

In cross-datacenter deployments this consideration is amplified: inter-DC
bandwidth is scarce, so minimizing sketch size and round count is the
primary design target, and the system experiments report transfer
volume, round counts, and end-to-end time over cross-DC links rather
than only local database compute time.

Cross-engine fingerprint alignment requires consistent handling of
nulls, numbers, times, strings, encodings, and case across databases;
the concrete type rules and hash choices are in
Appendix~\ref{app:e1} (the full specification is in the supplementary
material).  The implementation uses a 56-bit
truncated fingerprint, wide enough that the residual risk---a colliding pair
that straddles the difference and cancels---is about $d_Ad_B/2^{56}$ per run,
with $d_A,d_B$ the two sides' difference counts; at the largest replay
cardinality (P1, $d_{\mathrm{ms}}=93{,}130$) this is about $3\times10^{-8}$
per run---seven to eight orders of magnitude below one---and proportionally
smaller for the smaller-difference profiles; the width is a
free parameter of the deployment and is accounted for in
Appendix~\ref{app:e1}.

Fingerprints are written into a Plain IBLT as set elements.  After the two
sketches are subtracted, identical records cancel; inserts, deletes, and
updates leave signed difference elements---a modified record contributes its
old and new fingerprints.  The sketch difference cardinality reported by the
system therefore uses the unit-weight multiset semantics of
\S\ref{sec:model}; the recovered fingerprints are then used to look up and
confirm the original records.

\subsubsection{Difference semantics and protocol state}

The sidecar does more than emit sketches.  During the first-round scan it
saves the local record fingerprints and the lookup information needed; if
first-round decoding fails, the system reuses this already-read state to
build the second-round sketch without rescanning the table; the retained
fingerprints are re-bucketed
concurrently across workers, and the per-cell XOR updates are independent.
This avoids a repeated scan and
guarantees that both rounds compute over the same fixed fingerprint view:
even if the database changes between the two rounds, the protocol knows
exactly which batch of data it is comparing.  Here ``snapshot'' refers to
the fixed input view saved by the sidecar; whether the database itself
provides a transactional snapshot is a deployment decision.

In cross-datacenter deployments, the sidecars sit next to their local
databases: database scanning, fingerprint computation, and second-round
re-bucketing stay local, while the inter-DC link carries only sketches,
control messages, and the final difference results.

Two difference semantics are used: $d_{\mathrm{ms}}$ counts multiset
elements in the sketch (a modified record's old and new fingerprints are two
elements), and $d_{\mathrm{row}}$ counts business rows changed.  Capacity
and transfer analyses use $d_{\mathrm{ms}}$; $d_{\mathrm{row}}$ is for
business scale.

A verification proceeds through the four outcomes of \S\ref{sec:boundaries}:
\texttt{DONE} on first-round success; \texttt{REBUILD} on
fail$\rightarrow$measure$\rightarrow$resize with joint peeling; and
\texttt{FALLBACK} / \texttt{REJECT} as protection exits for resource
boundaries and representation errors.  All formal runs in this paper
complete within budget and cover \texttt{DONE} and \texttt{REBUILD};
\texttt{FALLBACK} and \texttt{REJECT} are not outcomes observed in the
experiments.

\subsubsection{Deployment requirements and cross-engine fingerprint alignment}

Deployment must determine the protocol's operating range, not each table's
true difference cardinality.  The first-round capacity $M_1$ fixes the bytes
and the estimation precision; the risk level $\delta$ and its calibrated lower
quantile $q_\delta$ set the statistical margin, while the decoder operating
point $\beta$ and the resulting multiplier $\alpha=\beta/q_\delta$ convert the
estimate into the second-round sketch capacity; the resource budget decides
when to enter \texttt{FALLBACK} (\S\ref{sec:boundaries}).  The true $d$ is
obtained at
runtime from the count measurement, so the system needs neither a per-table
$d$ nor prior knowledge of which key ranges the differences fall into.
Correctness requires both sides to process the same fixed input view
completely and without duplication; scan concurrency is a local tuning knob.

The two-round same-input-view and independent-seed requirements of
\S\ref{sec:boundaries} are satisfied on the system side by the sidecar's
fixed fingerprint view and seed derivation.

The end-to-end relational comparison runs between Oracle 11g and
PolarDB/MySQL 8.0, covering single-column and three-column composite primary
keys (\S\ref{sec:prod-shape}); the cross-city Redis/Pika deployment tests a
different data model and a real cross-city link (\S\ref{sec:cross-city}).

\subsubsection{Why encoding stays in the sidecar rather than the database}

A natural question is whether encoding can be pushed into the database and
expressed in SQL, avoiding out-of-database materialization.  We implemented
and measured this on MySQL.  Expanding each row into its $k$ buckets
replicates the fingerprint expression into every branch, and neither MySQL
nor PostgreSQL performs common-subexpression elimination while both inline
CTEs and derived tables by default, so the per-row hash is re-evaluated once
per branch.  This single mechanism explains the whole spread of the
measurement: under the direct form the path is roughly an order of magnitude
slower than the sidecar (about $11\times$ on MySQL, $4\times$ on
PostgreSQL), while forcing one-time CTE materialization
(\texttt{derived\_merge=off}), which evaluates each row's hash exactly once,
brings it to $1.42\times$ on our 609M-row production-shape table (886~s
versus 626~s for the sidecar; an explicit-temp-table variant measures
$1.79\times$).  All of these SQL variants produce sketches identical to the
sidecar cell by cell.

These numbers, however, understate why in-database encoding is not our
deployment form.  Every way of moving encoding into the database carries a
permission or portability cost that a reconciliation service cannot assume
across hundreds of customer tables.  A native aggregate or
extension---for example a PostgreSQL C aggregate, which can match or beat an
in-engine checksum baseline with parallel workers---must be installed on each
instance, which managed or hosted databases frequently forbid, and still
needs a thin adapter per engine.  Expressing the encoding in SQL avoids a
plugin but is not permission-free: the explicit-temp-table form needs
\texttt{CREATE TEMPORARY TABLES}, a privilege granted per database or per
table, so a service reconciling hundreds of tables requires a separate DBA
authorization per table---an approval and communication burden that scales
with the table count and made this route impractical.  The single-query form
avoids that grant (it needs only \texttt{SELECT} and session-level settings
and is compatible with a read-only account) and is the faster of the two, yet
it is still slower than the sidecar.  Because the SQL path must also be
re-ported per engine and is sensitive to optimizer, session, and deployment
settings, we keep one cross-engine encoding implementation in the sidecar;
its cost is that large-table fingerprints must be pulled out of the
database, which is part of the shared scan floor both methods pay.

\subsection{Production difference distribution}
\label{sec:prod-distribution}

\S\ref{sec:implementation} gave the implementation and evidence semantics.
This subsection examines the actual inputs of the system in production: we
statistically analyze 90 days (April 29--July 28, 2026) of table-level
reconciliation records, giving the distributions of difference cardinality,
table size, and running cost, and take one complete sharded reconciliation
run, covering seven table--task combinations, for the position of differences
in the primary-key rank space.  Each reconciliation of a table counts as one run: 41{,}603
table-level runs in total; runs where both sides are empty, where there is
no data to compare, or where the table has no primary key are excluded.
In production the two ends of a reconciliation usually sit in two machine
rooms in different cities (the source and target form a cross-cloud pair)---the
concrete setting behind the cross-datacenter design rationale of
\S\ref{sec:implementation}.

Small tables are usually equal and inexpensive, whereas cumulative
reconciliation time is concentrated in large tables with differences.  The
following subsections quantify difference frequency (\S\ref{sec:prod-eq}),
the span of difference cardinalities (\S\ref{sec:prod-span}), the
distribution of cost (\S\ref{sec:prod-cost}), and the position of differences
in the key space (\S\ref{sec:prod-rank}).

\subsubsection{The fraction of equal reconciliations drops sharply with table
size}
\label{sec:prod-eq}

Of the 41{,}603 runs, $d=0$ in 56.4\%, already below the ``mostly equal''
expectation; stratified by table size the drop is more pronounced:

\begin{table}[!hbtp]
\centering
\caption{Reconciliation runs by larger-side row count: counts, $d=0$ fraction,
and share of cumulative table-level time.}
\label{tab:prod-strat}
\small
\begin{tabular}{lrrr}
\toprule
larger-side rows & runs & $d=0$ fraction & cumulative table-level time \\
\midrule
$<10^3$     & 22{,}970 & 53.7\% & 0.7\% \\
$10^3$--$10^6$ & 13{,}812 & 66.5\% & 8.6\% \\
$10^6$--$10^7$ & 2{,}398  & 53.4\% & 13.4\% \\
$10^7$--$10^8$ & 1{,}969  & 33.2\% & 48.4\% \\
$>10^8$     & 454    & 1.3\%  & 29.0\% \\
\bottomrule
\end{tabular}
\end{table}

Only 1.3\% of runs above $10^8$ rows are equal, i.e., 98.7\% have
differences (Table~\ref{tab:prod-strat}); from the perspective of periodic
jobs the picture is more
extreme---all five periodic sequences above $10^8$ rows showed differences
within the 90 days.  The table segments that dominate cost almost always take
the difference-recovery path.  The equal fast path still matters, but it sits
on the many small tables that take seconds each.

Although 89\% of runs complete in a single first-round interaction, they
contain less than 0.2\% of all observed differences; the remaining runs
dominate difference recovery and motivate the failure-conditioned measurement
path.  The detailed breakdown is
in Appendix~\ref{app:e2}.

Sorting by cumulative time, the fraction with differences rises sharply
toward the expensive end: 43.6\% overall but 95.1\% in the most expensive
0.1\% of runs.

The logs do not identify the root cause of each difference, because NineData
operates the synchronization and verification service without access to the
customers' application semantics.  They nevertheless show that these
differences are both common and persistent: long-standing differences carry
96.4\% of the positive-difference time (Appendix~\ref{app:e2}), so the
cost-dominant tables take the recovery path almost every run.  The recurring
sources---active--active write conflicts, intermittent CDC/Kafka faults, and
residual cross-engine normalization mismatches---are discussed in
Appendix~\ref{app:e2}.

A large measured~$d$ can therefore also flag cross-engine normalization
bugs or pipeline faults that would otherwise go unnoticed, giving operations
and development a lightweight diagnostic signal independent of the
reconciliation task.

\subsubsection{Difference cardinality spans seven orders of magnitude with no
gap}
\label{sec:prod-span}

\begin{figure}[!hbtp]
\centering
\includegraphics[width=0.48\textwidth]{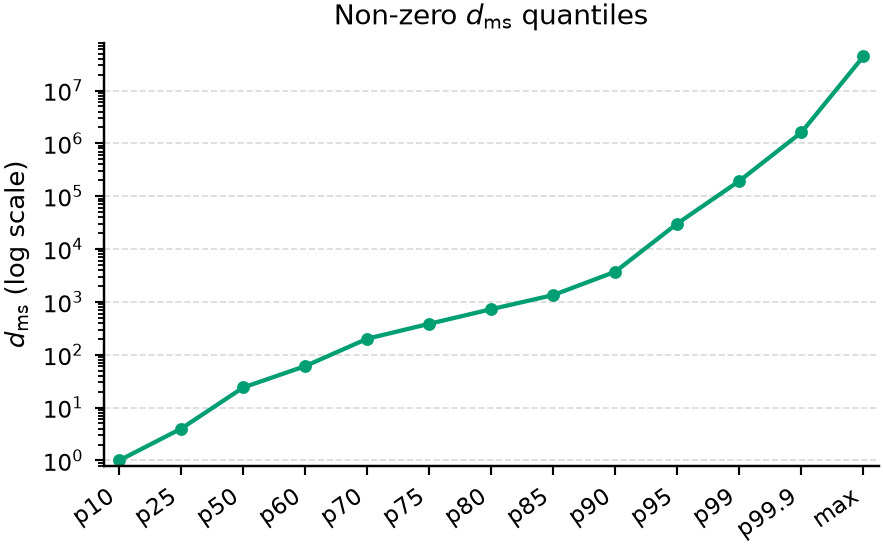}
\hfill
\includegraphics[width=0.48\textwidth]{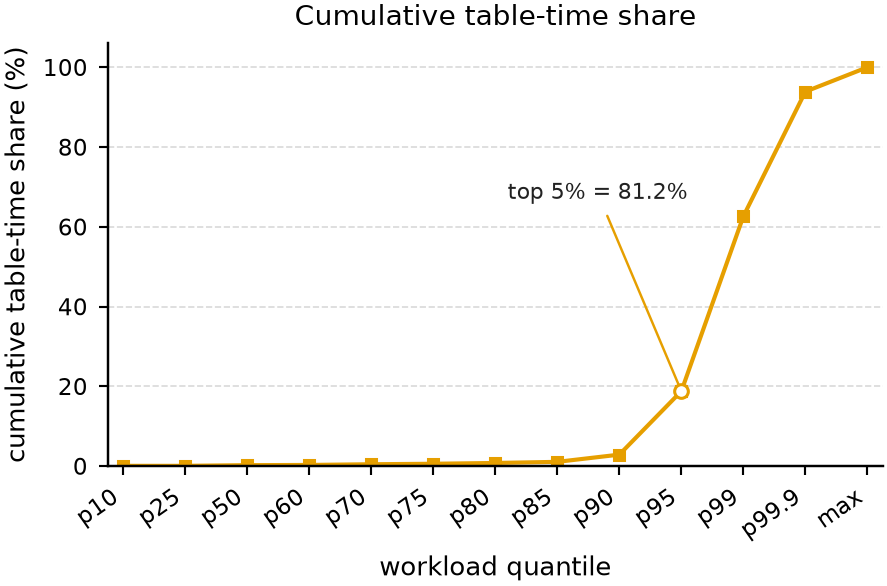}
\caption{Production difference profile.  Left: quantiles of non-zero
$d_{\mathrm{ms}}$ (log scale); right: cumulative share of table-level time.}
\label{fig:s62ab}
\end{figure}

Non-zero difference cardinality spans a wide range: the median $d$ is 24,
p99 is 194{,}698, and the maximum is $4.41\times10^7$---seven
orders of magnitude (Figure~\ref{fig:s62ab}, left; the full quantile ladder
is in Table~\ref{tab:e2-quantiles} of Appendix~\ref{app:e2}).  From here on,
``difference cardinality'' refers to the multiset semantics $d_{\mathrm{ms}}$
of \S\ref{sec:implementation}, written $d$.

The distribution of $d$ climbs smoothly from a median of 24 to tens of
millions, with no natural breakpoint for fixed capacity tiers; table size and
running time both show pronounced steps between p85 and p90, but $d$ shows
none.  The smooth distribution of $d$ therefore makes per-run capacity sizing
necessary: no preset capacity can cover both ends---configuring for the median
difference cardinality forces repeated doubling on large tables, while
configuring for large tables wastes bytes and build time on the majority small
tables.

\subsubsection{Reconciliation cost concentrates in a very narrow tail}
\label{sec:prod-cost}

The time distribution is extremely top-heavy: the median run takes 3
seconds, the 99th percentile nearly 3 hours, and the maximum 16.2 hours.
Most of the increase is concentrated within the five-percentile interval
around p90.

Cost concentration follows (Figure~\ref{fig:s62ab}, right): the cheapest
90\% of runs account for under 3\% of cumulative table-level time, while
the most expensive 5\% carry 81\%.  Within the extreme tail (the most
expensive 0.1\%), the median single-run time is 5.17 hours and the median
row count is $6.0\times10^8$.  The shape repeats within a single batch:
the slowest table alone accounts for 22.4\% of that batch's cumulative
table-level time, constraining the batch completion time.

The median table size is only 533 rows---half of the runs compare tables
of a few hundred rows---but scan volume, like time, concentrates on a few
very large tables: the cheapest 90\% of runs account for only 1.25\% of
cumulative scan volume.  The two ends of this size distribution correspond
to two different cost mechanisms: small tables are dominated by fixed
per-run costs---connection establishment, job dispatch, scheduling, and
result logging---while large tables are dominated by the scan itself.
Reconciliation cost must therefore be time-weighted, and after weighting it
concentrates on a few large tables.  The full quantile ladder is in
Table~\ref{tab:e2-size-quantiles} of Appendix~\ref{app:e2}.

For reference, in-service Merkle-style reconciliation is about 6--8\% slower
with differences than without, measured over the 37 periodic sequences with
sufficient runs; the marginal cost of recovering differences is small.

\subsubsection{Differences cluster in the key space with within-segment
scatter}
\label{sec:prod-rank}

\begin{figure}[!hbtp]
\centering
\includegraphics[width=0.7\textwidth]{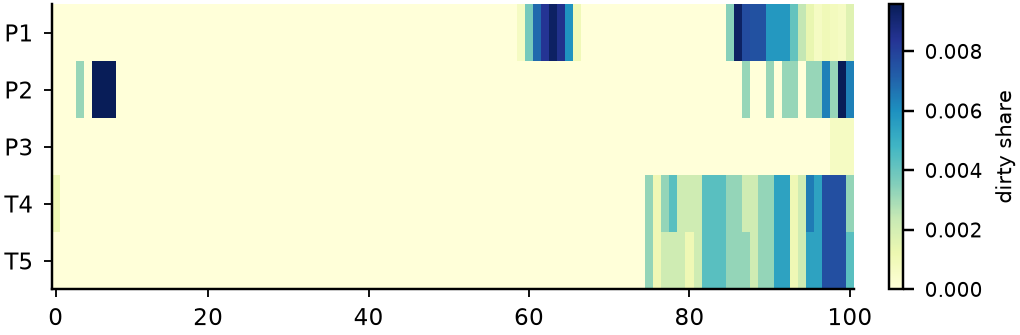}
\caption{Dirty-block distribution over the key-rank space for representative
table--task combinations: dirty blocks concentrate in a few key-rank segments.}
\label{fig:s62c}
\end{figure}

The sharded record comes from one complete sharded run on July 29--30, 2026,
covering the candidate tables of the three replay profiles
(\S\ref{sec:profiles}).  After dividing the primary-key rank space into 100
equal buckets, dirty blocks fall into only 3--24 of them---pronounced
key-rank-segment clustering, with the per-table rates and segment statistics in
Table~\ref{tab:e2-profiles} of Appendix~\ref{app:e2}.  At the same time,
the clustered segments contain many gaps, so production differences take the
form of clustering within a few key-rank segments with further scatter inside the
segments (Figure~\ref{fig:s62c}).

This segment-concentrated pattern is consistent with a partial worker outage
or an application-side failure that affects the key range the failed worker
serves, producing a contiguous dirty segment rather than randomly scattered
differences (the inferred fault sources are in Appendix~\ref{app:e2}).

The sharded log records only dirty-block membership, not per-block row
counts; row-level Jaccard and lookup amplification are outside the current
evidence scope.

\subsection{End-to-end comparison under production-shape replay}
\label{sec:prod-shape}

\S\ref{sec:prod-distribution} gave three key dimensions of the production
input: table size, difference cardinality, and difference location.  This
subsection realizes all three in one Oracle 11g--PolarDB/MySQL 8.0 comparison
environment and compares the end-to-end cost of self-sizing IBLT against the
partner's in-service Merkle-style localization.  With ample bandwidth both
methods are bound by the same full-table scan and tie; the question worth
asking is which method's cost remains bounded and predictable once the
network becomes the constraint.  Self-sizing reads the difference
cardinality before paying any recovery cost, so its transfer volume is pinned
by $\dhat$ wherever the differences fall; Merkle-style's cost is set by the
geometry of the differences and degrades to shipping the whole table on
adversarial inputs.  The replay below shows both faces at once: the tie under
ample bandwidth, and the widening gap under the bandwidth constraint an
actual cross-city link imposes.

The comparison experiments cannot be run directly on customer production
databases or replay customer data, so the experiments reconstruct controlled
datasets from the production schema, scale, primary-key shape, and difference
direction.
We refer to these controlled experiments as production-shape replay.  In this
deployment the Oracle side is the
online transactional system of record, while the MySQL side is a replica
serving analytical queries.  Row-fingerprint comparison does not depend on
any secondary index, and neither method's cost is affected by them, so the
replay keeps only the primary/unique indexes required by the comparison
(Appendix~\ref{app:e3}).

The two methods' configuration spaces are asymmetric.  The IBLT uses the
same $M_1=512$ first-round configuration as \S\ref{sec:implementation},
with the second-round capacity $M_2=\lceil\alpha\dhat\rceil$ of
\eqref{eq:capacity} (the $d_{\mathrm{rec}}=0$ setting adopted throughout,
\S\ref{sec:protocol}); no manual $d$ input and no per-table parameters are
needed.  The comparison method is tuned over its parameter surface---chunk
sizes, boundary positions, refineFanout, refinement granularity, and the
collation and index-plan sensitivity of composite-key sorting---fitting each
configuration before timing.  The two methods also scan differently:
localization reads each side's table in primary-key order to checksum the
shared logical ranges, while IBLT reads the table once in storage order;
\S\ref{sec:cross-city} develops the cross-engine cost argument, and here the
replay already favors localization by using the optimized boundary
generation of Appendix~\ref{app:e3}.  All formal runs are verified item by item against SQL truth, IBLT recovery
results, and Merkle lookup results.

\emph{Optimized Merkle-style baseline.}  In service, the boundary stage of
Merkle-style localization---a serial \texttt{ORDER BY} that generates the
shared chunk boundaries---dominates the run: two in-service composite-key
tables spend 30 and 36 minutes on boundary generation alone in tasks of about
two hours (composite-key boundaries are generated serially block by block:
block~$K$'s upper bound depends on block~$K{-}1$ and cannot be parallelized,
and multi-column \texttt{AND}/\texttt{OR} predicates resist efficient
optimizer plans, so even medium-sized tables take tens of minutes;
\S\ref{app:e3-impl} of Appendix~\ref{app:e3}).  The replay replaces this
serial boundary-generation step with \texttt{ROWID}/\texttt{NTILE} sampling,
reducing that phase from 30--36 minutes in the observed service runs to 5--17
seconds; the reported replay times therefore exclude much of the
boundary-generation overhead observed in service.  Self-sizing IBLT has no
boundary stage: its elements are order-independent,
and each side reads once in storage order with no boundaries and no sort.

\subsubsection{Choosing the three benchmark tables}
\label{sec:profiles}

The profile of \S\ref{sec:prod-distribution} fixes the load range the
comparison must cover: the cost body concentrates on a few large tables (the
most expensive 5\% carry 81.2\% of cumulative table-level time), while the
run-count body sits in the low-difference range.  From the logs we choose
three candidate tables---P1/P2/P3---covering three positions of the long tail
as the replay benchmarks for end-to-end comparison.  The three tables are
rebuilt locally as controlled data according to their schema and statistical
shape.

\begin{table}[!hbtp]
\centering
\caption{The three production profiles: scale and primary key,
$d_{\mathrm{row}}/d_{\mathrm{ms}}$, and their role in the production shape.
The per-direction (source-only / modified / target-only) breakdown is in
Table~\ref{tab:e2-profiles} of Appendix~\ref{app:e2}.}
\label{tab:profiles}
\footnotesize
\begin{tabular}{@{}p{0.9cm}p{5.8cm}p{2.2cm}p{3.6cm}@{}}
\toprule
profile & two-side scale and primary key & $d_{\mathrm{row}}$ / $d_{\mathrm{ms}}$ & production-shape role \\
\midrule
P1 & 609{,}671{,}577 / 609{,}668{,}227 rows; single-column primary key; 43 common business value columns
    & 48{,}240 / 93{,}130 & large table, wide rows, wide dirty region \\
P2 & 31{,}034{,}507 / 31{,}034{,}459 rows; three-column composite primary key; 31 common business value columns
    & 720 / 1{,}392 & medium composite key, multi-band differences \\
P3 & 154{,}356{,}834 / 154{,}356{,}830 rows; three-column composite primary key; 22 common business value columns
    & 18 / 18 & large composite key, tiny local difference \\
\bottomrule
\end{tabular}
\end{table}

The three profiles span three positions of the production cost
distribution.  P1 is the cost body: a 600M-row table with about 93{,}000
multiset differences, close to the most expensive 0.1\% slice.  P2 is a
mid-range case: 30M rows with a composite key and about 1{,}400 differences.
P3 represents the large-table, tiny-difference regime in which localization
is expected to perform particularly well: 150M rows but only 18
differences, where chunk summaries barely need to
drill.  Table~\ref{tab:profiles} gives the full specifications.

\subsubsection{The three benchmark tables and actual run paths}

Table~\ref{tab:profiles} summarizes the three formal profiles.  The two
difference semantics $d_{\mathrm{row}}$ and $d_{\mathrm{ms}}$ follow the
definitions of \S\ref{sec:implementation}; the IBLT \texttt{plus}/\texttt{minus}
reported in the formal runs is consistent with that definition.

The three profiles cover the protocol's two exits.  P3's 18 differences fall
within first-round decodability, and all 12 runs complete directly on
$M_1=512$, taking the 1-RTT fast path; the $\dhat$ read in those runs ranges
from 16.7 to 20.7, consistent with the true value 18, showing the estimate is
observable on the fast path as well.  P1 and P2 exceed the first-round
capacity, and their path is first-round failure, reading the pre-peeling
counts, setting the second-round capacity by $\dhat$, rehashing with a fresh
seed, and joint peeling; both
profiles perform exactly one capacity setting, with no second capacity
probe.

These runs provide a production-scale consistency check for the estimator.
P1's overload ratio $d/M_1\approx182$ leaves almost no first-round peeling;
across 12 runs, the sample mean of $\dhat$ is 93{,}146 against the true value
93{,}130.  For P2, the corresponding values are 1{,}388 and 1{,}392.  The
resulting second-round capacities range from 1.7 to 2.1 times the true
difference cardinality.

For correctness, all three profiles' recovery results match the injected
ground truth item by item (the per-profile \texttt{plus}/\texttt{minus}
records and the locally rebuilt physical table sizes are in
Appendix~\ref{app:e3}).

\subsubsection{End-to-end cost and its composition}

\begin{table}[!hbtp]
\centering
\caption{End-to-end times (seconds) for the three production profiles,
formatted self-sizing IBLT / Merkle-style localization, each at the better of
W16 and W32.}
\label{tab:e2e}
\small
\begin{tabular}{lccc}
\toprule
network & P1 & P2 & P3 \\
\midrule
22.5\,ms / 100\,Mbps & 636.6 / 652.4 & 37.3 / 38.9 & 155.8 / 155.6 \\
22.5\,ms / 10\,Mbps  & 657.4 / 1020.3 & 37.3 / 46.8 & 149.5 / 137.5 \\
100\,ms / 10\,Mbps   & 654.4 / 1011.2 & 37.0 / 46.6 & 151.7 / 137.6 \\
\bottomrule
\end{tabular}
\end{table}

Table~\ref{tab:e2e} gives the end-to-end results under a unified semantics:
\texttt{chunkSize=10000}, three network tiers, two repetitions per configuration, each
method at the better of the W16 and W32 concurrency settings (16 and 32
concurrent workers).

Before reading these numbers, look at their composition.  Both methods'
end-to-end time is dominated by the same floor cost---reading the whole
table and computing a fingerprint per row; on the cost-body P1 this step is
more than nine tenths of self-sizing IBLT's end-to-end time, while the
algorithmic phases---subtraction, peeling, and recovery---together account
for only single-digit percentages: P1's phase stack (Figure~\ref{fig:e3-phases})
shows the two methods first paying the same full-table scan/checksum floor,
after which the divergence falls entirely on Merkle-style's dirty-block
drill-down, with the per-phase times in Table~\ref{tab:e3-iblt-phases} and
Table~\ref{tab:e3-merkle-phases} (both in Appendix~\ref{app:e3}).  This scan
cost is unavoidable for any full-table reconciliation and both methods pay
it.  The gaps compared below are therefore the extra each method pays above
the same scan floor: self-sizing adds almost nothing above the floor.

\begin{figure}[!hbtp]
\centering
\includegraphics[width=\textwidth]{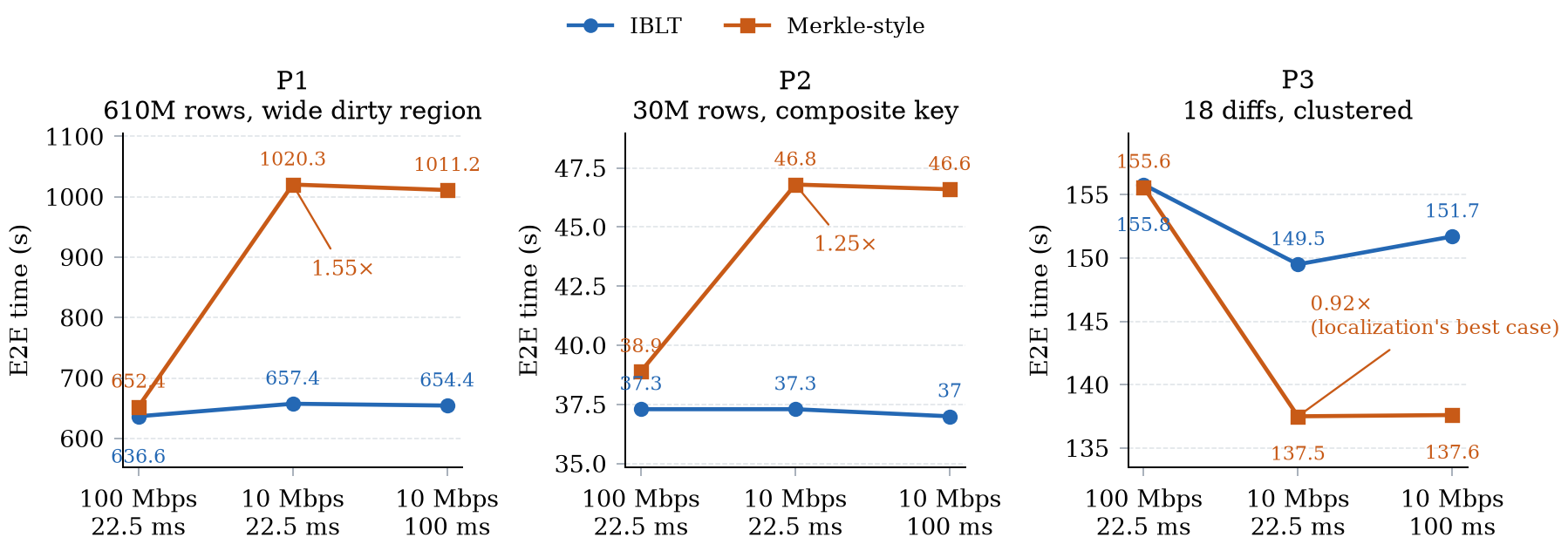}
\caption{P1/P2/P3 production-shape replay end-to-end times against network
tier (each subplot independently scaled; x-axis is the network tier).  With
ample bandwidth the two methods nearly coincide; under limited bandwidth
Merkle-style localization increases by 1.55$\times$ on P1 and 1.25$\times$
on P2, while the IBLT times remain within the observed run-to-run variation.
On P3, where 18 differences occupy a narrow key range, Merkle-style
localization is approximately 12\,s faster.}
\label{fig:s63}
\end{figure}

The comparison outcome is set by two factors: bandwidth decides whether a
gap appears, and the distribution of differences in the key space decides
its direction.  At 100\,Mbps the gaps on all three profiles fall within the
run-to-run noise (2.5\%, 4.3\%, 0.1\%; Figure~\ref{fig:s63})---when the
network is not the bottleneck, both methods converge to the common full-table
scan cost.
The gap appears at 10\,Mbps, which is also the actual operating range of the
cross-city link in \S\ref{sec:cross-city}: on the cost-body P1 self-sizing
leads by 1.55$\times$ (657\,s versus 1020\,s), P2 by 1.25$\times$ (37\,s
versus 47\,s), while P3's 18 differences concentrate in a tiny key range and
Merkle-style overtakes by about 12\,s (149.5\,s versus 137.5\,s).  The
direction and magnitude of the gap are primarily determined by how the
differences are distributed over the key
space, exactly the quantity self-sizing's cost does not depend on.

Transfer volume explains this direction.  For a 610M-row, 165\,GB P1 table,
the self-sizing first round sends only 16{,}400\,B per side and the
second-round sketch with about 164{,}000 cells (about 5.2\,MB), so the total
carried over the link is more than four orders of magnitude smaller than the
table itself, entirely determined by the measured $\dhat$, independent of
where the differences scatter in the key space.  Merkle-style partitions
P1's key space into 60{,}294 logical ranges, of which 677 are judged dirty;
refinement must re-read those ranges over the link in chunks of 10{,}000
rows, and at the 10\,Mbps tier this drill-down step alone adds about 389\,s,
the entire source of P1's gap.  One method's transfer volume is set by how
many differences exist, the other's by where they fall.

Placing the three profiles back into the production profile of
\S\ref{sec:prod-distribution}, the two methods' relative positions fall out
along two axes: run count and cumulative time.  By run count, the majority of
production tasks lie in P3's
low-difference range---the median difference cardinality is 24 and more than
half the comparisons have no difference; that is localization's advantage
region and also self-sizing's 1-RTT fast-path region, and the absolute gaps
there are only about ten seconds.  By cumulative time, the cost body lies on
P1-like large tables---the most expensive 5\% of tasks carry 81.2\% of
table-level time; that is the region where the gap widens to 1.55$\times$
under limited bandwidth.  This gap is measured after stripping out the
in-service Merkle-style boundary cost, whose serial \texttt{ORDER BY}
boundary generation alone takes 30--36 minutes on composite-key tables.

The conclusion of this section is therefore at the protocol level: a
capacity fixed by one measurement suffices, in the measured 22.5--100\,ms
RTT and 10--100\,Mbps bandwidth range, to keep the end-to-end cost within ten
percent of the in-service method in the low-difference region and 35\%
lower on cost-body tables under limited bandwidth---all without knowing $d$ in
advance and without per-table chunk and refinement tuning.  The production
workload contains both regions, and at comparison start one cannot know which
side the current table falls on; that is the value of run-specific capacity
measurement.

\subsubsection{The two methods' worst cases are asymmetric}

The worst case exhibits the maximum asymmetry between the two methods.  When P1's
differences are scattered---deliberately insert one differing row per about ten thousand
rows, so the chunk summaries lose filtering power---Merkle-style
localization's drill-down turns from re-reading a few dirty ranges into
re-shipping almost the whole table over the link: end-to-end time rises 278\%
at 100\,Mbps, to more than five hours at 10\,Mbps/22.5\,ms (a single stress
run of 18{,}502\,s, +1189\%), and the run was interrupted before completion
at 10\,Mbps/100\,ms due to
connection interruption (Figure~\ref{fig:p1extra}).  The same input raises
self-sizing IBLT's end-to-end time by only 2\%--9\%, because its transfer
volume is pinned by the measured $\dhat$, independent of how many ranges the
differences fall into (the complete matrix is in Appendix~\ref{app:e3}).

\begin{figure}[!hbtp]
\centering
\includegraphics[width=\textwidth]{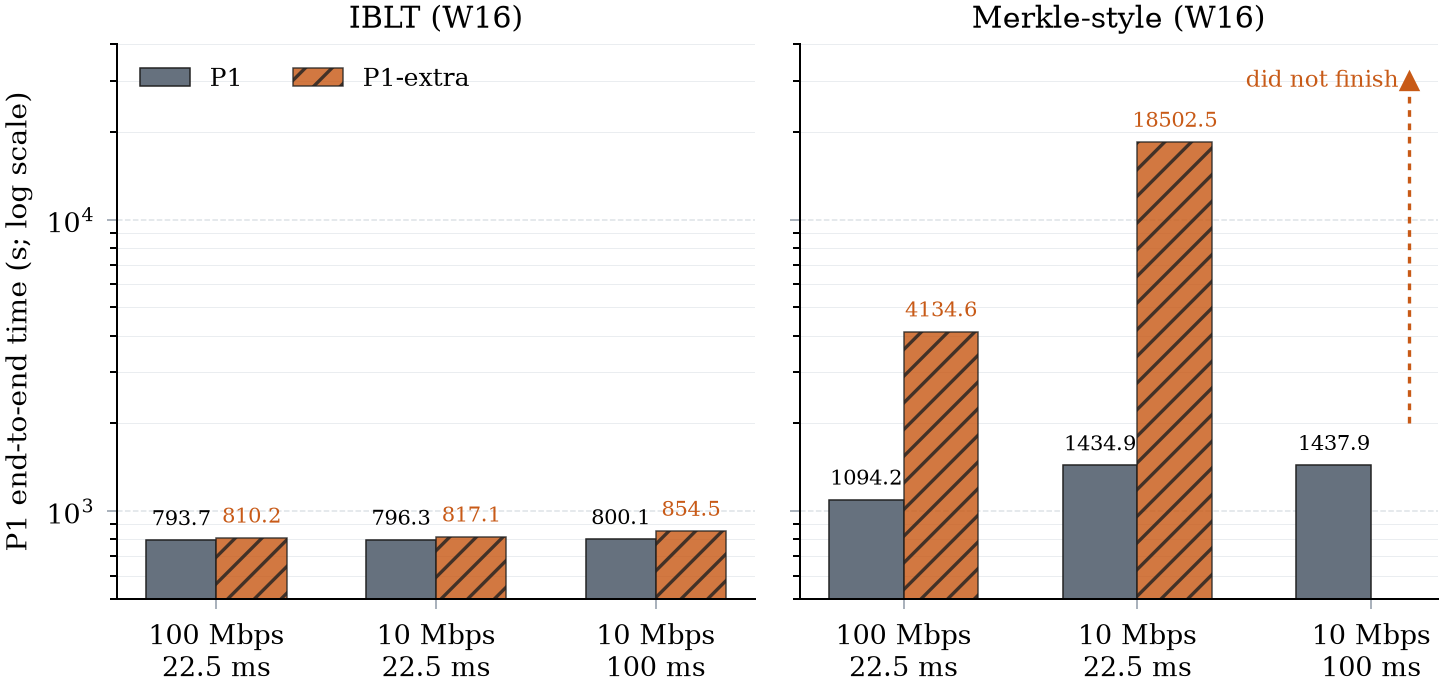}
\caption{Worst-case stress: P1 versus P1-extra with scattered differences
(log y-axis, all W16).  IBLT rises only 2\%--9\%, transfer volume pinned by
$\dhat$; Merkle-style localization's cost increases by 278\% at 100\,Mbps,
to more than five hours at 10\,Mbps/22.5\,ms (a single stress run of
18{,}502\,s, +1189\%), and the run was interrupted before completion at
10\,Mbps/100\,ms.  The complete matrix is in
Table~\ref{tab:e3-extra} (Appendix~\ref{app:e3}).}
\label{fig:p1extra}
\end{figure}

This asymmetry is structural, rooted in measurement-before-decoding.
Self-sizing barely moves under scattered differences because its transfer
volume is pinned by the measured $\dhat$, independent of how many ranges the
differences scatter into, so all IBLT runs in this stress set finish within
budget at only 2\%--9\% extra, without triggering \texttt{FALLBACK}.  The
same measurement provides an additional layer of protection:
when $\dhat$ shows the second-round capacity over budget, self-sizing rejects
the over-budget capacity before transmission and enters \texttt{FALLBACK};
Merkle-style has no such prior signal and must finish the summary and
refinement before it knows how many dirty ranges there are, by which time the
incurred cost has already exceeded five hours.  The evaluated Merkle-style implementation
requires a primary or unique key to define shared ordered ranges, whereas the
IBLT implementation is order-independent.  The estimate allows self-sizing
to reject an over-budget second-round capacity before transmitting the second
sketch and to enter \texttt{FALLBACK} explicitly.

\subsection{Cross-city KV field deployment}
\label{sec:cross-city}

The material in the previous two subsections is all relational: the
production profile of \S\ref{sec:prod-distribution} partitions by
primary-key ranges, and the end-to-end comparison of
\S\ref{sec:prod-shape} emulates cross-city latency and bandwidth with relays
on a lab cluster.  The cross-city Redis deployment on China Mobile
extends the evaluation in four respects: it runs on a real cross-city
production link; the
key space is determined by hash bucketing, so the geometry of differences is
completely different from a database's primary-key order; the operational
scenarios come from real operations, including migration interruption, a
heterogeneous target, cascading lag, and a dual-active partition; and in one
group of runs
the data under verification is continuously changing online, making the
end-to-end duration itself correctness-relevant.

The runtime environment, link, and Redis/Pika instances are production
infrastructure; the dataset is constructed to match production scale and
key-shape statistics, since production keys carry business content and
cannot be used directly.

\subsubsection{Deployment configuration, correctness, and the shared scan
cost}

The cross-city link is about 10\,Mbps with 28.8\,ms RTT.  The baseline data
has about 4.89 million keys, about 99.999\% of them STRING, averaging about
416\,B/key.  Merkle-style localization and self-sizing IBLT run on the
Redis$\rightarrow$Redis and Redis$\rightarrow$Pika data paths respectively,
both reading the same data and verifying item by item against the known
fault-injected ground truth.

There are six scenario groups, denoted G1--G6, covering a gradient from fixed
ground truth to online change.  G1 runs on Redis$\rightarrow$Redis with
$d=0/1$; G2 injects 100/500/937 differences, providing verifiable fixed
truth; G3 runs on Redis$\rightarrow$Pika with the three tiers plus a burst of
about $4\times10^4$ differences accompanied by link interruption; G4 keeps a
replication lag on the B$\rightarrow$C edge of Redis A$\rightarrow$B$\rightarrow$C;
G5 writes 1{,}000 unique keys on each side; G6 deliberately keeps the data
changing online (continuous LRU eviction at the target) while redis-shake
keeps synchronizing, to observe the snapshot window.  Injected differences
remain constant during comparison; synchronization catches up before
reconciliation starts, so the observed differences are the injected truth.

All six groups recover exactly the differences they should: the fixed
scenarios (G1, G2, G4, G5, and G3's injected tiers) match the injected ground
truth item by item with no false negatives or false positives, and G3's burst
and G6 recover the differences observed during the runs
(Table~\ref{tab:g1g6}).

A single-threaded full scan of about 4.89 million keys takes about 28.1\,s,
with a range of only 6.7\% across the six groups, independent of $d$.  This
is the scan cost both methods share and the body of self-sizing IBLT's
end-to-end time: scanning accounts for 93.7\%--99.1\% of IBLT's end-to-end
time across the five tiers of G1--G2.  As $d$ grows from 1 to 937---crossing
the one-round to two-round boundary---end-to-end time barely moves (28.7 to
29.6\,s), confirming that local scan dominates.  The two methods differ only
above this cost, in cross-city payload, dirty-range lookups, and whether a
rescan is needed.

Across the six runs, $\dhat$ deviates from truth by $-2.6\%$ to $+8.1\%$,
commensurate with the spread expected at $M_1=512$; every run that triggered a
second round set capacity once from the estimate and recovered completely, with
no second expansion.

\subsubsection{Three representative difference ranges}

These runs spread over three ranges of difference scale; the per-tier
end-to-end numbers are in Table~\ref{tab:g1g6}.  At the fixed-cost end (G1,
$d=0/1$) both methods remain close to the approximately 28\,s scan baseline
with nearly coincident
end-to-end times, and Merkle-style is slightly faster with no dirty buckets.
The medium range (G2) starts the split: Merkle-style's payload grows linearly
with the dirty-bucket count to about 105\,MB and its end-to-end rises to
about 121\,s, while IBLT stays flat at about 29--30\,s---after a first-round
failure it reuses the already-read fingerprints from the first-round scan,
sets capacity once, and recovers, re-bucketing without rescanning at an extra
cost of only about 1\,s.  In the high-difference G3 scenario (a burst of
about $4\times10^4$),
Merkle-style covers all
4{,}096 buckets with about 519\,MB of payload and about 512--523\,s
end-to-end, while IBLT finishes in about 80\,s.  This high-difference case
runs on the Redis$\rightarrow$Pika path: Pika's disk \texttt{SCAN} raises the shared scan
floor to about 73--80\,s, but the capacity signal and second-round behavior
remain consistent with Redis$\rightarrow$Redis, showing that the capacity
estimate does not depend on both ends using the same database engine.

\begin{table}[!hbtp]
\centering
\caption{Summary of the six scenario groups G1--G6.  G3's burst and G6's
observed differences are the values read from the two ends during the run, not fixed
ground truth.}
\label{tab:g1g6}
\footnotesize
\begin{tabular}{@{}p{1.1cm}p{2.6cm}p{2.8cm}p{2.8cm}p{3.6cm}@{}}
\toprule
group & differences & Merkle-style end-to-end & IBLT end-to-end & takeaway \\
\midrule
G1 & 0 / 1 & 28.3 / 28.3\,s & 29.5 / 28.7\,s & parity; Merkle slightly faster \\
G2 & 100 / 500 / 937 & 39.9 / 81.2 / 121.0\,s & 29.0 / 30.0 / 29.6\,s & IBLT more stable \\
G3 & $\approx 4\times10^4$ (burst) & $\approx 512$--$523$\,s & $\approx 80.1$\,s & gap from whole-bucket shipping \\
G4 & 100 / 500 / 937 (B$\rightarrow$C lag) & 40.5 / 81.9 / 122.5\,s & 30.2--31.0\,s & capacity set once under replication lag \\
G5 & 2{,}000 (bidirectional) & $\approx 202.6$\,s & $\approx 30.6$\,s, residual=0 & bidirectional recovered in one run \\
G6 & 1{,}234 / 1{,}163 & $\approx 159.5$\,s & $\approx 34.5$\,s & narrower observation window \\
\bottomrule
\end{tabular}
\end{table}

\begin{figure}[!hbtp]
\centering
\includegraphics[width=0.95\textwidth]{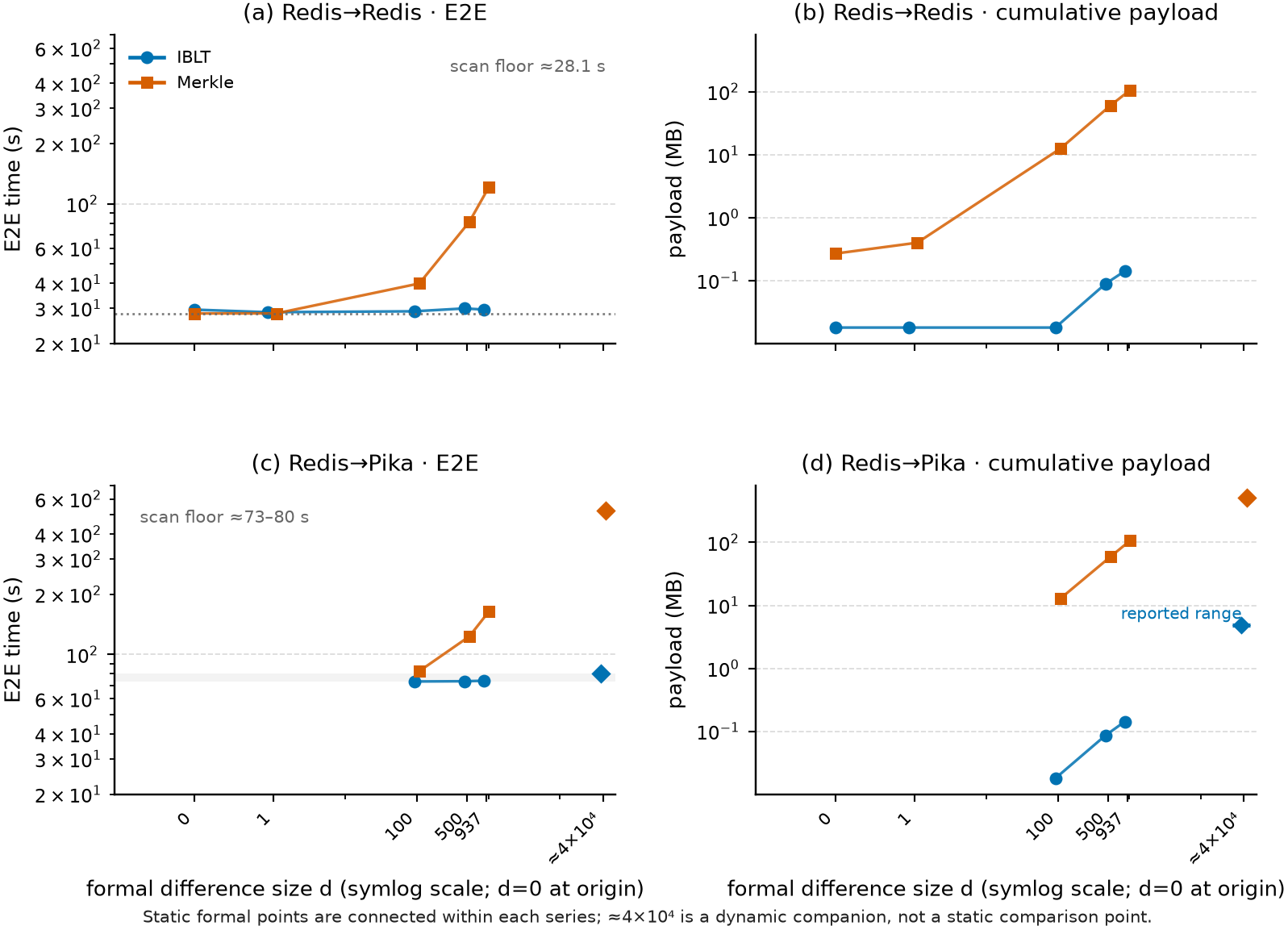}
\caption{End-to-end time and cumulative payload in the China Mobile
Redis$\rightarrow$Redis and Redis$\rightarrow$Pika deployment; the x-axis is
difference scale on a symlog scale with a zero point, the scan cost is marked
as a gray line, and the dynamic companion is marked separately.}
\label{fig:s64}
\end{figure}

Figure~\ref{fig:s64} juxtaposes the two data paths' scan cost, end-to-end
time, and cumulative control payload.  Fixed-truth points are connected by
difference scale; the dynamic companion around $4\times10^4$ uses a separate
marker, preserving the evidence boundary against the fixed-truth comparison.

G4 and G5's cascading-lag and bidirectional runs set capacity once and
recovered fully (Table~\ref{tab:g1g6}).  The dual-active configuration
uses different key prefixes; value conflicts for the same key and business
arbitration are not covered.

G6 deliberately keeps data changing online to observe behavior under change
(Table~\ref{tab:g1g6}): the observed time spans differ by about 4.6$\times$
between Merkle-style and IBLT.  When data changes online, the end-to-end
duration is the
span over which the data is observed; the shorter the path, the closer the
view to a single instant, so a short window is a correctness-relevant
benefit.  This single-shot run is used only to characterize the observation
window under continuous online change, with no fixed ground truth; the
snapshot semantics and observation-window
boundaries are discussed in \S\ref{sec:scope} and Appendix~\ref{app:e4}.

\subsubsection{Key-space structure determines localization cost}

The distribution of differences over buckets is determined by hash bucketing,
independent of how the differences were produced.  Of the five non-zero field
points, four were constructed as ``sequential tail missing'' (differences
appear as a missing run of contiguous keys); G6 was produced by real LRU
random eviction.  Both shapes scatter uniformly over the 4{,}096 buckets once
mapped, and the expected number of dirty buckets (buckets containing at least
one difference key) under uniform hashing is approximately
$4096(1-e^{-d/4096})$: at very small $d$, almost one difference per bucket,
and as $d$ grows the same bucket starts receiving multiple differences.  The
injection controls only the difference count and shape, not the dirty-bucket
count directly; the measured dirty-bucket counts are consistent with the
uniform-hashing model across the tested scenarios, including the LRU-eviction
scenario (G6).  This
directly determines Merkle-style localization cost: it checksums at bucket
granularity, and once a bucket is judged dirty, since it cannot know which
key inside differs, it ships the whole bucket's details over the link---a
bucket holds about $N/B$ keys on average, usually only one of which differs,
so each difference pays roughly the transfer cost of one whole bucket, and
payload and end-to-end time both grow approximately linearly with the
difference scale (the payload and end-to-end-time models and fitted
coefficients are in Appendix~\ref{app:e4}).

Merkle-style behaves completely differently in relational databases.
NineData's tables are chunked by primary-key order, and business differences
often concentrate in a few chunks: of 100 key-rank equal-width buckets only
3--24 contain dirty blocks, at a dirty-block rate of 0.20\%--10.99\%, so only
these blocks need to be shipped back and the cost is low.  KV's hash bucketing
breaks this premise: differences scatter uniformly and each difference pays a
whole bucket.  The two key spaces give opposite orderings: in this deployment
the ordering changes around $d\approx5$--$10$
(difference rate on the order of $10^{-6}$), with Merkle-style still faster
when differences are extremely
few and clustered (P3 of
\S\ref{sec:prod-shape}: 137.5\,s versus 149.5\,s).  The
number of dirty ranges---which depends on both difference cardinality and
placement---is the primary determinant of localization cost.

Localization carries a second premise: the two sides must read a shared
logical order cheaply.  Whether an ordered read is cheap depends on whether
the engine clusters by that order; in cross-engine pairs at least one
side---often both---reads in a logical order different from its storage
order, and when the two sides partition differently (a hash-sharded source
and a range-partitioned target, say), each side's ordered scan loses
sequential locality.  IBLT does not need this premise: its elements are
order-independent, so each side reads its local data once in storage order
with no aligned boundaries and no sort, turning a localization precondition
into a sequential full-table read.

\section{Related Work}
\label{sec:related}

Prior work raises two questions: whether difference estimation requires
bytes beyond the decoding sketch, and how a protocol sets capacity when the
difference cardinality is unknown.  We first review set-reconciliation
methods and their treatment of unknown cardinality
(\S\ref{sec:rw-method}--\S\ref{sec:rw-approaches}), then compare the closest
counter-based estimators and prior uses of failed IBLTs
(\S\ref{sec:rw-measurement}--\S\ref{sec:rw-failure}), and finally discuss
consistency verification in database and storage systems
(\S\ref{sec:rw-systems}).

\subsection{Set reconciliation and IBLTs}
\label{sec:rw-method}

The classic algebraic solution to set reconciliation is
CPISync~\cite{minsky2003cpisync}: it recovers the symmetric difference by
rational-function interpolation of a characteristic polynomial, with
communication near the information-theoretic lower bound, at the cost of
interpolation and root-finding whose cost grows with the difference cardinality.
IBLT~\cite{goodrich2011invertible} takes a different route: it uses a
hash-based sketch that supports linear subtraction, the two sketches cancel
common elements cell by cell, and the remaining difference sketch is listed
element by element via peeling.  Difference
Digest~\cite{eppstein2011whatsdifference} builds a complete reconciliation
protocol on top of IBLT using a two-stage workflow: estimate the difference
cardinality, then configure the IBLT capacity
from the estimate.

Later work improves this structure from different directions.  PBS uses a
parity bitmap sketch to lower space and computation~\cite{gong2021pbs}.
Graphene combines a Bloom filter with an IBLT for block
propagation~\cite{ozisik2019graphene}.  Erlay uses set reconciliation as a
bandwidth optimization for Bitcoin transaction
relay~\cite{naumenko2019erlay}.  Multi-party generalizations appear
in~\cite{mitzenmacher2018multiparty}.

In these IBLT-based protocols, the capacity $M$ is configured externally,
using either a separate estimator or prior knowledge.
Where the difference cardinality comes from is left to a separate structure
or operator judgment.  This paper internalizes that question: the decoding
structure itself provides the estimate.

\subsection{Approaches to unknown difference cardinality}
\label{sec:rw-approaches}

Four families sit at different positions on the two dimensions of
``estimation-only bytes'' and ``capacity-sizing round trips.''
Pre-estimation summaries incur the fixed cost of a separate estimator;
blind retry avoids estimation-only bytes but requires additional rounds;
Rateless methods avoid preset capacity through incremental transmission;
self-sizing uses the existing decoding sketch for estimation and completes
its main path within two rounds.  We elaborate on these two dimensions
below.

\emph{Pre-estimation summaries outside the decoding sketch.}  Difference
Digest uses the Strata Estimator and min-wise sketches~\cite{eppstein2011whatsdifference,broder1998minwise,broder2000minwise}.
Strata is a layered set of small IBLTs whose successful strata recover true
difference elements; when all layers succeed it can complete recovery
directly in the small-$d$ range.  In general, each layer's cells decode
independently, and the protocol configures the subsequent main IBLT from the
estimate of the stopping layer.  Min-wise and PBS's Tug-of-War counters are
estimation-only summaries~\cite{gong2021pbs}, the latter's statistical basis
being the AMS $F_2$ estimation framework~\cite{alon1996spacecomplexity,alon1999spacecomplexity}.
Graphene relies on a sender-known set size and an analytically controllable
Bloom-filter false-positive rate~\cite{ozisik2019graphene}.  These methods
let the estimation phase be tuned independently by budget; the corresponding
cost is a structure and transmission separate from the main decoding sketch.

\emph{Blind retry with variable-capacity sketches.}  This is the more common
engineering default: try a conservative capacity, and on failure upgrade and
retry along a predetermined sequence of candidate sizes.  Gentili's thesis
recommends repeated doubling for small-file
synchronization~\cite{gentili2015setrecon}.  The industrial resizable-IBLT
patent family upgrades along candidate sizes after a decoding failure and
regenerates snapshots~\cite{nochlin2023patent,nochlin2022patentapp280,nochlin2022patentapp279}.
These approaches need no extra structure; the cost concentrates in rounds.
Each failed IBLT sketch's counts are not reused; under scan-dominated cost, each
additional round entails another full-table read or a cached fingerprint
replay.

\emph{Streaming transmission that sidesteps presetting.}  These methods avoid
capacity presetting through incremental transmission.
Rate-compatible IBLT~\cite{lazaro2023ratecompatible} and Irregular
IBLT~\cite{lazaro2021irregular} improve tolerance to capacity mismatch
through incremental transmission and degree design.  Rateless
IBLT~\cite{yang2024rateless} has the sender continuously produce coded
symbols while the receiver appends until decoding succeeds, eliminating the
capacity-presetting step by construction.  CertainSync~\cite{keniagin2025certainsync}
and ConflictSync~\cite{gomes2026conflictsync} proceed along similar lines.  The
listing-guarantee line~\cite{mizrahi2023listing} characterizes how many
elements one IBLT sketch can list under deterministic listing conditions.

\emph{Measurement within the decoding sketch.}
Before decoding is even attempted on the first-round IBLT, the pre-peeling
count vector is already a mapping-aware second-moment observation; after
decoding fails, the protocol reads $\dhat$ off the same IBLT sketch and
computes the second-round capacity in one step.  No separate measurement structure is
needed: the first decoding attempt adds no measurement fields or transmitted
bytes.  For incremental constructions the estimate improves as cells arrive
(\S\ref{sec:rateless}--\S\ref{sec:met}); what this paper uses is a fixed
first-round capacity, which simultaneously fixes the byte cost and the
estimation precision.

The main difference from Rateless IBLT lies in the deployment model: Rateless assumes
incremental symbol production, while databases use request/response with
scan-dominated cost.  This protocol sends a fixed first-round sketch and
requests one sufficient batch from the estimate.
Appendix~\ref{app:rateless-comparison} reports a head-to-head replay on
a 609M-row production snapshot: the incremental Rateless encoder is
${\sim}48\times$ slower and peaks at ${\sim}93$\,GiB endpoint memory, against
a few tens of MiB for self-sizing's endpoint, showing that incremental
symbol generation is poorly matched to the tested scan-dominated database
setting.

\subsection{Estimating difference cardinality from counters}
\label{sec:rw-measurement}

The statistical foundation of second-moment estimation is AMS~\cite{alon1996spacecomplexity,alon1999spacecomplexity}.
Building on the AMS $F_2$ tradition, we derive the estimator from an IBLT
count vector already transmitted for decoding and integrate it with
failure-conditioned capacity sizing.  The closest quantitative comparison is
PBS~\cite{gong2021pbs}:
it uses an independent Tug-of-War sketch in set reconciliation, where a
single squared observation is unbiased for $d$ with variance $2d(d-1)$, and
averaging $\ell$ independent observations brings the variance down to
$2d(d-1)/\ell$.  With $\ell=M-1$ independent observations, ToW's variance
formula reduces to the same expression as this paper's
\eqref{eq:variance}---the per-observation variance is identical; the two
estimators differ only in the carrier.

The difference is in the carrier and the protocol cost.  PBS's counters are a
separate summary outside the decoding structure; this paper's $M-1$ effective
directions come from the count vector of the IBLT already transmitted for
reversible decoding.  As a pure estimator, ToW is cheaper: each counter is
about 4\,B, while a full IBLT cell is about 32\,B.  The protocol-level
advantage is that the same
first-round sketch adds zero estimation-only bytes, requires no extra round
trip for estimation, verifies equality when $d=0$, and directly decodes small
differences.  The head-to-head comparison under equal
budget is the four-method regret-versus-oracle result of
\S\ref{sec:regret}.

Existing finite Strata configurations use empirical calibration for capacity
sizing.  Our analysis instead provides the IBLT count estimator with exact
mean and variance, a fixed-dimension chi-square limit, and a
failure-conditioned lower-tail capacity interface.

The closest work in terms of input type is the Guo--Li CBF set
reconciliation~\cite{guo2013counting} (CBF continues the Bloom-filter and
Summary-Cache line~\cite{bloom1970spacetime,fan2000summarycache}).  The two
share one point of contact: after the two sides subtract, both read the
difference cardinality from a signed counter array; the estimation method,
operating range, and protocol goals are otherwise separate.  Guo--Li infer
$d$ from the fraction of zero cells and the ratio of positive to negative
cells in a $\{-,0,+\}$ ternary reduction, requiring the counter-array length
to be the same order as $d$ ($m=2d$--$8d$) and reporting relative-error
experiments; we keep the full integer counts and read the centered magnitude
\eqref{eq:centered-energy}, retaining exact second moments, a chi-square
limit, and finite-sample lower tails in deep overload $d/M_1\gg1$, so the
first-round capacity can be far smaller than $d$.  Structurally, CBF has no
reversible decoder and cannot recover difference elements, so there is no
peeling-failure event; the question of reading the same IBLT sketch after a
first-round failure and setting the second-round capacity from it lies
outside Guo--Li's structure.

The chi-square limit of Theorem~\ref{thm:chi2} has a classical ancestor:
when $k=1$ and all difference directions agree, our test statistic reduces
to Pearson's $X^2$ under an equiprobable multinomial.
Theorem~\ref{thm:chi2} generalizes this to signed, without-replacement
$k$-subset mappings; the derivation and algebraic relation are in
Appendix~\ref{app:thm3}~\cite{pearson1900criterion,cochran1934distribution}.

Statistical inference for sketch estimators---confidence intervals for
AMS/$F_2$~\cite{rusu2007statistical,rusu2008sketches}, random-projection
sketches~\cite{ahfock2021sketching}, and distribution-free conformal
intervals~\cite{sesia2022conformal}---is an independent literature; none
targets the count channel of a reversible IBLT or handles estimates
conditioned on decoding failure.

\subsection{Information in failed IBLT sketches and peeling theory}
\label{sec:rw-failure}

Prior work has also examined information retained after IBLT decoding failure.
Gentili asked whether the number of keys already listed by a failed listing
could decide the growth factor, and noted that in deep overload the signal
from singletons and listed keys almost disappears; his actual sync protocol
uses a separate Strata Estimator, returning to repeated doubling for
small files~\cite{gentili2015setrecon}.  Partial-extraction and finite-length
failure analysis characterize how many elements can be peeled and how likely
failure is~\cite{mizrahi2023listing,gadurek2026breaking}, using the output of
peeling or the residual-core event.  The 2-core phase-transition
theory~\cite{jiang2014parallel,jiang2016parallel,molloy2005cores} gives the
event characterization ``failure iff the 2-core is non-empty,'' studying
thresholds, failure probability, and residual-core size.  The industrial
patent family above also stops at the failure event itself: failure triggers
the next candidate size, and the public text does not use the count energy
of the failed IBLT sketch~\cite{nochlin2023patent}.

We read the full count vector before peeling rewrites it.  This is why deep
overload does not limit us: where singletons disappear, the count magnitude
still retains the scale of $d$.  The new problem this creates is the selection bias
of reading only after failure.  We handle it in two steps.  First, the
failure event of idealized peeling is measurable with respect to the mapping
hypergraph, so conditional exchangeability gives an exact fixed-sign
projection through the squared net directional fraction $\theta_\Delta^2$;
i.i.d. random signs cancel this projection on average and yield
$\E[\dhat \given F]=d$ exactly.  Second, the same projection gives an explicit
composition-aware conditional-bias bound, Proposition~\ref{prop:cond-chi2} transfers the full
chi-square law to failed instances when failure becomes typical, and
Proposition~\ref{prop:lower-tail} gives a distribution-free
failure-conditioned lower tail.  The Rademacher cross-term
cancellation and conditional Cauchy--Schwarz are standard tools; what this
paper does is connect failure measurability, count energy, and the timing of
the protocol's read into one complete guarantee.  Together, these results
provide a failure-conditioned guarantee for estimating $d$ from the
pre-peeling count vector.

\subsection{Consistency verification in database and storage systems}
\label{sec:rw-systems}

Database and storage systems commonly localize inconsistencies by
partitioning the key space into ranges and comparing checksums.  For example,
pt-table-checksum partitions tables by primary-key range, compares block
checksums, and examines inconsistent blocks at row
granularity~\cite{percona2026pttablechecksum}.  AWS DMS data validation
follows a similar chunked-comparison path~\cite{aws2026dmsvalidation}.  Merkle
trees provide a hierarchical form of this localization
strategy~\cite{merkle1988digital}: Dynamo and
Cassandra anti-entropy repair start from a root summary and drill down level
by level to dirty leaf blocks~\cite{decandia2007dynamo,lakshman2010cassandra}.
We call this family Merkle-style localization.

Its cost is driven by the number of dirty blocks, the chunk size, and the
tree depth, and is therefore sensitive to where differences fall.
Self-sizing IBLT's cost is one $O(N)$ scan plus $O(d)$ sketch and lookups,
independent of the spatial location of differences.  The cost crossover, the
respective advantage regions, and the two representative regimes---clustered
production differences and near-uniform difference locations in the Redis
deployment---are given with measurements and a cost model in
\S\ref{sec:cross-city} and Appendix~\ref{app:e4}.

\section{Scope and Limitations}
\label{sec:scope}

This section states the conditions under which the paper's conclusions hold,
how precise they can be, and the reach of the system evidence.
\S\ref{sec:model-assumptions} is the theorem-level model and channel
assumptions; \S\ref{sec:precision} is the precision and configuration
boundaries; \S\ref{sec:system-prereq} is the system prerequisites and
evidence scope.

\subsection{Model and channel assumptions}
\label{sec:model-assumptions}

\emph{Unit weight and general frequency semantics.}  The $d$ and $\dhat$ in
this paper count fingerprints in the symmetric difference under the
unit-weight model.  A plain insert or delete contributes one element; an
update produces two independently mapped fingerprints (old and new).  When
the same key has frequency difference $|\Delta\mathrm{freq}_x|\ge2$, the
multiple copies share the same mapping and the element-independence model no
longer applies; the target becomes $F_2=\sum_x w_x^2$.  The expectation
identities of Theorem~\ref{thm:unbiased} and~\ref{thm:master} still hold
under this generalization; the exact variance and confidence intervals need
separate derivation for the weighted model.

\emph{Mapping model.}  All conclusions rest on one mapping model: elements
are independent, each picks $k$ distinct cells uniformly, and hashing is
modeled as ideal random; the implementation matches this (the hash family
enforces $k$ distinct positions).  The fixed-sign projection of
Theorem~\ref{thm:cond-mean} requires the sign composition to be fixed
independently of the mapping randomness; its exact random-sign cancellation
additionally assumes i.i.d. uniform signs.  The model assumes that keys are not chosen adversarially
against a deterministic mapping.  Adversarial settings require keyed hashing
and a corresponding adversarial model.

\emph{Channel requirements.}  Measurement needs a signed, unclipped,
losslessly represented count channel.  The measurement method therefore
requires counters and cannot be applied to structures that retain only
occupancy bits or parity.  The counter width must cover
the absolute per-cell counts of each side's sketch; integer overflow or
saturation there directly destroys the second moment.  The measurement
signal lives in the centered fluctuations, of scale $\sqrt{dk/M}$, far
below the counts themselves---at the deep-overload tier of
\S\ref{sec:precision} ($d/M=1111$, $k=3$) each side's per-cell counts
reach about $1.7\times10^4$ while $\sqrt{dk/M}\approx58$.  Such
representation failures are not within Proposition~\ref{prop:capacity}'s $\delta$ budget: $\delta$
covers only underestimation from statistical fluctuation.  The implementation
must therefore propagate an overflow flag into the estimator status and output
\texttt{REJECT} (\S\ref{sec:boundaries}).
Within the non-overflow range,
deep overload does not saturate the second-moment measurement: on real-machine
data, at $d/M$ as high as $1111$, the error of $\dhat$ stays within
$\pm2.2\%$.  The practical representation boundary is set by the counter
width and $d/M$ together.

\subsection{Precision, configuration, and operating range}
\label{sec:precision}

A first-round IBLT at $M_1=64$ has relative standard deviation about 18\%
and a $1\%$ quantile around $0.63d$; this spread requires a relatively large
capacity multiplier $\alpha$ for high-confidence sizing.  The configuration
targets low fixed cost; changing $M_1$ adjusts the width without affecting
the center---the unconditional center is exact by
Theorem~\ref{thm:unbiased}.  Under failure conditioning,
Theorem~\ref{thm:cond-mean} gives the exact fixed-sign shift as a function of
$\theta_\Delta^2$, exact cancellation under i.i.d. random signs, and a
composition-aware deviation bound
(\S\ref{sec:failure-cond}--\S\ref{sec:exp-validation}).
In deep overload, Proposition~\ref{prop:cond-chi2} additionally transfers the
full fixed-$M$ chi-square law to the failed instances.  Near the peeling
transition, the paper uses failed-only empirical calibration for the
conditional shape.

\emph{Sources of $q_\delta$.}  We fix $\delta=0.01$ throughout: after a
first-round failure, lower-tail underestimation is controlled with $99\%$
conditional probability; the protocol receives the corresponding $q_{0.01}$
from the statistical layer and sets the second-round capacity by
$\alpha\ge\beta/q_{0.01}$.  Plain's $q_{0.01}$ can be obtained in two ways.
The default configuration uses the failed-only empirical quantiles validated
in \S\ref{sec:exp-validation} over the declared $(M,k)$ and load range, with
sampling guarantees.  When a
distribution-free guarantee is needed, the analytic value derived by
Proposition~\ref{prop:lower-tail} from the exact second moments and a failure
probability lower bound can be substituted directly into the same capacity
formula; that value is conservative at deep lower tails and may imply a
larger capacity multiplier.  The empirical value gives an efficient
configuration inside the operating range; the analytic value gives a
configuration that does not depend on empirical tail shape.  Both correspond
to the same end-to-end guarantee interface.

The portability boundary of this interface is set by the mapping statistics
and configuration together.  Under the mapping model of
\S\ref{sec:model} and within the parameter tiers validated in this paper,
the corresponding Plain configuration applies directly; changing the
hash mapping, sketch structure, or operating range requires re-obtaining the
$q_{0.01}$ for that configuration, either from an applicable analytic result
or from failed-only calibration with sampling guarantees.

Next, asymptotic conclusions and finite-sample certificates.  The chi-square
confidence interval of Theorem~\ref{thm:chi2} is an asymptotic statement for
fixed $M$ as $d\to\infty$; Proposition~\ref{prop:cond-chi2} carries the same
limit to failed instances as $p_F\to1$.  The operating range uses the empirically
calibrated quantiles of \S\ref{sec:exp-validation} instead, with a
finite-sample certificate from the supplementary material
(Proposition~\ref{prop:kolmogorov}); its explicit constant is a worst-case
asymptotic certificate and is too conservative for parameter calibration.

The four adapters share the mapping-aware construction of
Theorem~\ref{thm:master} and failure-conditioned point-estimate semantics.
Interval guarantees differ: Plain has exact second moments and a
finite-sample lower-tail bound; Rateless has an asymptotic interval for a
fixed prefix; Irregular and MET use empirical quantiles.  This boundary is
at the distributional-guarantee level and does not affect the family-level
conclusion of unconditional unbiasedness.

\subsection{System prerequisites and evidence scope}
\label{sec:system-prereq}

First, second-round data access.  The protocol requires a fresh hash seed for
the second round; what changes is the element-to-cell mapping.  The sender
can rehash the row-fingerprint stream formed during the first-round scan
without re-reading the underlying table.  Caching and replaying row
fingerprints is a common implementation pattern in database verification and
the preferred path for scan-expensive deployments.  If the second round is
instead built by re-reading the underlying table, it must reuse the
first-round logical snapshot; rescanning the current table under online
change would change the encoded element set and fall outside the
assumptions of the two-round mechanism and joint decoding of
\S\ref{sec:protocol}.  The end-to-end wall clocks of
\S\ref{sec:prod-shape}--\S\ref{sec:cross-city} include the data-access cost
actually used by the tested implementations.  Pushing encoding back into the
database---per-engine SQL tuning, native operators, or UDFs---could remove
the cost of pulling large-table fingerprints out of the database, but each
form is gated by extension-install or per-table grant permissions that
managed deployments may not grant; we leave this to future work.

Second, observation time and snapshot semantics.  The protocol's input is
the row-fingerprint stream read by both sidecars within their scan windows;
the exact-recovery guarantees hold relative to these two input sets.
Building a sketch takes non-zero time; batched queries in relational
databases, full-store traversal in KV stores, and paged enumeration in object
stores correspond to a single instant only when the underlying store provides
a fixed transactional snapshot or equivalent mechanism.  Plain Redis
\texttt{SCAN} is an example lacking snapshot semantics.  If business writes
or replication catch-up continue during the scan, the two sides' scans, and
even two verification paths executed sequentially, may cover slightly
different time windows, so a run may not admit a unique instantaneous ground
truth; a small amount of state change inside the window appears as
candidate-set differences.  Controlled experiments stop replication and
verify against known ground truth.  Online experiments report the running
view within the scan window and eliminate stale differences via lookup
re-verification (\S\ref{sec:cross-city}, G6).  Production deployments that
require same-instant semantics need external mechanisms---write quiescence,
MVCC, storage snapshots, or a common replication position---to establish a
consistent input; this protocol does not provide one.  A shorter
verification path reduces the observation window for drift but does not
replace a consistent snapshot.

Third, beyond budget.  The first-round estimate only provides a computable
\texttt{FALLBACK} signal (\S\ref{sec:boundaries}).  What happens beyond the
boundary---manual inspection, full snapshot, partition comparison, or direct
rebuild---is an operational decision outside this paper.

The system evidence addresses three questions: what the real workload looks
like (the 90-day production profile of \S\ref{sec:prod-distribution}), how
much the two methods differ in cost (\S\ref{sec:prod-shape}), and whether
the conclusions hold under a different data model (\S\ref{sec:cross-city}).
Together they cover workload distribution, cost mechanism, and
cross-environment reproduction.  The rounds and bytes from the production
logs are trace-driven analysis using the true $d$; online decisions use the
$\dhat$ read by the protocol.

\section{Conclusion}
\label{sec:conclusion}

The capacity puzzle of set reconciliation comes from a seemingly
contradictory fact: the sketch size must match the difference cardinality,
which is precisely the quantity neither party knows.  This paper's answer is
to let the decoding structure measure itself---an IBLT has already completed
an observation before it attempts to decode, and failed IBLT sketches and
incomplete prefixes are all paid-for mapping-aware measurements.

The statistical layer characterizes this count channel.
The estimator is exactly unbiased with closed-form variance, admits a
chi-square confidence interval, and has an exact failure-conditioned
sign-composition projection with random-sign cancellation.  In deep overload,
the same chi-square law governs the failed IBLT sketches the protocol reads.
The
mapping-aware theorem \eqref{eq:master} generalizes this construction
to Irregular, Rateless, and MET IBLTs, unifying the estimate interface
across the family.

The protocol layer turns the estimate into one capacity setting: after a
first-round failure the second-round capacity is computed in one step, the
interaction closes within 2 RTTs whatever the difference cardinality turns out
to be, the second-round success probability is \eqref{eq:success-bound}, and
joint peeling of the first-round residual core adds no transmitted bytes.
Bounding the protocol at two rounds makes its interaction cost predictable,
while scan, hashing, and transfer costs remain determined by the workload,
sketch capacity, and network link.  Against blind doubling and
Strata-first, the byte-regret comparison shows self-sizing holding the round
count at about 2 at $1.3$--$1.7\times$ the oracle's bytes, while blind
doubling's rounds grow logarithmically with the difference cardinality and
Strata-first's fixed summary is expensive at the small-difference end.  The
systems layer validates this chain: a 90-day NineData production profile
shows difference cardinality spanning seven orders of magnitude, with cost
concentrated in large tables that almost invariably carry differences;
in end-to-end comparison, self-sizing matches
Merkle-style localization within ten percent at low differences and leads
clearly under bandwidth constraints; and in the cross-city Redis/Pika
deployment, IBLT's wall-clock barely moves as $d$ grows from 1 to 937
while localization's payload grows linearly.

The current theory and evaluation focus on scan-expensive request/response
deployments with signed, unclipped count channels.  The variance and
confidence intervals of the weighted frequency model, further reduction of
the joint capacity, optimal stopping for Rateless sequential estimation, and
an impossibility characterization for unsigned channels remain future work.

\emph{IBLTs measure before they decode; failed IBLTs are not wasted.}

\section*{Data and Code Availability}
\addcontentsline{toc}{section}{Data and Code Availability}

The actively maintained source repository is available at
\url{https://github.com/whitewum/self-sizing}. This release is archived at
Zenodo with DOI
\href{https://doi.org/10.5281/zenodo.22186193}{10.5281/zenodo.22186193}.

\bibliography{references}

\appendix
\numberwithin{figure}{section}
\numberwithin{table}{section}
\section{Complete Notation}
\label{app:notation}

This appendix collects the symbols used across the paper.  The main text
keeps the minimal set needed for the contribution chain: \S\ref{sec:model}
states the core symbols of the measurement theory, and local symbols are
defined where they appear in individual theorems.  Unless stated otherwise,
vectors are column vectors, randomness comes from element mappings (and, when
explicitly declared, random signs), and $\E[\cdot]$, $\Var(\cdot)$,
$\Cov(\cdot)$, $\tr(\cdot)$ denote expectation, variance, covariance, and
matrix trace.

\subsection*{Plain IBLT and the core estimator}

\begin{center}
\small
\begin{tabular}{@{}>{\raggedright\arraybackslash}p{3.2cm}>{\raggedright\arraybackslash}p{12.4cm}@{}}
\toprule
symbol & meaning \\
\midrule
$A,B$ & the two sets in reconciliation; difference direction fixed as $A-B$ \\
$M$ & number of cells in a Plain IBLT; $M_1$ first-round capacity, $M_2$ guaranteed independent second-round capacity \\
$k$ & number of distinct cells each element maps to, $1\le k<M$ \\
$d$ & cardinality of the symmetric difference in the unit-weight set model, $d=|A\triangle B|$; a modified row contributes its old and new fingerprints once each \\
$F_2,\widehat F_2$ & second moment of the general frequency model and its estimate: with $w_x=\operatorname{freq}_A(x)-\operatorname{freq}_B(x)$, $F_2=\sum_x w_x^2$; in the plain set model every difference element has $w_x\in\{\pm1\}$, so $F_2=d$ and $\widehat F_2=\dhat$ (see \S\ref{sec:model-assumptions}) \\
$d_+,d_-$ & element counts unique to each side: $d_+=|A\setminus B|$, $d_-=|B\setminus A|$, so $d=d_++d_-$ \\
$\theta_-$ & fraction of elements from $B\setminus A$ when the composition is preset;  $\theta_-=d_-/d$ \\
$\theta_\Delta$ & net directional fraction of the fixed sign composition, $\theta_\Delta=1-2\theta_-=(d_+-d_-)/d$; $\theta_\Delta^2\in[0,1]$ measures one-sidedness \\
$x$ & difference element index, $x\in\{1,\dots,d\}$ \\
$s_x$ & side marker of element $x$: $+1$ for $x\in A\setminus B$, $-1$ for $x\in B\setminus A$ \\
$S_x$ & the set of cells hit by element $x$, $S_x\subseteq[M]$, $|S_x|=k$ \\
$a_x$ & the $M$-dimensional $0$-$1$ indicator vector of $S_x$ \\
$C$, $C_i$ & pre-peeling difference count vector $C=\sum_x s_x a_x$; $C_i$ the count of cell $i$ \\
$I_M,\one_M$ & $M\times M$ identity and the $M$-dimensional all-ones vector \\
$Q$ & centering matrix; Plain: $Q=I_M-\one_M\one_M^\top/M$ \\
$\bar C$ & count-array mean, $\bar C=M^{-1}\sum_i C_i$ \\
$T$ & centered energy, $T=C^\top Q C=\sum_i(C_i-\bar C)^2$ \\
$\gamma$ & per-element normalized self-energy; Plain: $\gamma=k(1-k/M)$ \\
$\dhat$ & difference cardinality estimate; Plain: $\dhat=T/\gamma$ \\
$H_{xy},W_{xy}$ & overlap count of two elements' hit sets $H_{xy}=|S_x\cap S_y|$ and its centered version $W_{xy}=H_{xy}-k^2/M$ (used in the variance proof, Appendix~\ref{app:thm2}) \\
$\sigma^2$ & per-element centered-mapping covariance scale on $\one_M^\perp$; $\sigma^2=k(M-k)/[M(M-1)]$ \\
\bottomrule
\end{tabular}
\end{center}

\subsection*{Family-level mappings and adapters}

\begin{center}
\small
\begin{tabular}{@{}>{\raggedright\arraybackslash}p{3.2cm}>{\raggedright\arraybackslash}p{12.4cm}@{}}
\toprule
symbol & meaning \\
\midrule
$j$ & element or data-type index; scope given by the local section \\
$d_j$ & number of difference elements of type $j$ \\
$\mu_j,\Sigma_j$ & mean and covariance of a type-$j$ element's hit vector \\
$D,D_x$ & Irregular IBLT random degree; $D_x$ the degree of element $x$ \\
$\Lambda,\lambda_i$ & Irregular degree distribution as generating polynomial $\Lambda(x)=\sum_i\lambda_i x^i$, $\lambda_i=\Pr[D_x=i]$ \\
$q_i$ & true hit probability of cell $i$ in a Rateless prefix \\
$B_i, B_i^{(x)}$ & indicator of whether one Rateless mapping path hits cell $i$; superscript $(x)$ for element $x$ \\
$C_0,Y_i$ & Rateless start-cell count and centered count $Y_i=C_i-q_iC_0$ \\
$D_j,K_{jv}$ & Rateless path jump distance from cell $j$, and transition probability from $j$ to $v$ \\
$m$ & Rateless prefix length; in the MET local model, the total number of cells of a cell type \\
$t$ & MET cell-type index \\
$m_t,r_t,a_{tj}$ & MET type-$t$ total cells, currently received cells, and the number of cells chosen by data type $j$ within it \\
$C_t,Q_{r_t},E_t$ & MET type-$t$ received count vector, its local centering matrix, and centered energy \\
$\gamma_{tj}$ & MET partial-type normalization coefficient of data type $j$ in cell type $t$ (Proposition~\ref{prop:met-identifiability}); complete type is $r_t=m_t$ \\
$J,f$ & number of MET data types, and the vector of per-type second moments $f=(F_{2,1},\dots,F_{2,J})^\top$ \\
$E,\Gamma$ & type energy vector $E=(E_t)_t$ and coefficient matrix satisfying $\E[E]=\Gamma f$ \\
$w$ & linear weight recovering total $F_2$ from type energies, $w^\top\Gamma=\one_J^\top$ \\
$r$ & local degrees of freedom or rank; in Proposition~\ref{prop:rateless-ci}, $r=m-1$; in Lemma~\ref{lem:met-partial}, the number of received cells \\
\bottomrule
\end{tabular}
\end{center}

\subsection*{Failure conditioning and the protocol}

\begin{center}
\small
\begin{tabular}{@{}>{\raggedright\arraybackslash}p{3.2cm}>{\raggedright\arraybackslash}p{12.4cm}@{}}
\toprule
symbol & meaning \\
\midrule
$F$ & first-round peeling-failure event \\
$ok_2$ & event that the second-round decoder fully recovers the currently unknown elements \\
$p_F,p_{F,\min}$ & failure probability $p_F=\Pr[F]$ and a lower bound on it in the declared operating range \\
$c_F$ & Plain estimator's normalized failure-conditioned pairwise cross-moment, $c_F=\gamma^{-1}\E[W_{xy}\given F]$ for any $x\ne y$; conditional exchangeability makes it independent of the fixed sign pattern \\
$\varepsilon_{M,d}$ & upper bound $dk(1-k/M)^{d-1}$ on peeling success in Proposition~\ref{prop:cond-chi2} \\
$\delta$ & risk budget for estimating below the safe bound; $\delta=0.01$ throughout, a $99\%$ conditional lower-tail target \\
$q_\delta$ & failure-conditioned relative lower-quantile guarantee: $\Pr[\dhat\ge q_\delta d \given F]\ge1-\delta$ \\
$U,F$ superscripts & quantile sampling semantics: all-trial (unconditional) and failed-only (failure-conditioned), e.g.\ $q_\delta^U$, $q_\delta^F$; no superscript means the protocol interface, the failure-conditioned one \\
$\beta$ & decoder capacity--success operating point; target capacity at least $\lceil\beta d\rceil$ \\
$\beta_{\mathrm{joint}},\beta_{\mathrm{ind}}$ & asymptotic capacity coefficients of joint peeling and of an independent second-round table \\
$\alpha$ & engineering capacity multiplier, $\alpha\ge\beta/q_\delta$ \\
$\eta_{\mathrm{dec}}(\beta)$ & upper bound on the decoder failure probability in the declared operating range once capacity reaches $\lceil\beta d\rceil$; derived from $\beta$, not configured separately \\
$\rho$ & confidence failure probability; confidence level $1-\rho$ \\
\bottomrule
\end{tabular}
\end{center}

The same letter may play different roles in local models, mainly $j$ (type
index), $m$ (Rateless prefix length or MET type size), and $r$ (degrees of
freedom or rank); each theorem restates its scope.  Experiment-specific
notation (e.g., $\widehat p_F$, $n_F$) is defined locally next to the
corresponding tables.

\section{Proof of Theorem~\ref{thm:unbiased} (Exact Unbiasedness)}
\label{app:thm1}

We use the notation of the main text: $C=\sum_x s_xa_x$,
$Q=I_M-\one_M\one_M^\top/M$, $T=C^\top Q C$, and
$\gamma=k(1-k/M)$.

\begin{proof}
Substituting $C=\sum_x s_x a_x$ and expanding bilinearly,
\[
  T=C^\top QC=\sum_{x=1}^d\sum_{y=1}^d s_xs_y\,a_x^\top Qa_y.
\]
From $Q=I_M-\one_M\one_M^\top/M$ and $a_x^\top\one_M=k$ (each element hits
exactly $k$ cells), any pair $x,y$ satisfies
\[
  a_x^\top Qa_y
  =a_x^\top a_y-\frac{(a_x^\top\one_M)(\one_M^\top a_y)}{M}
  =\lvert S_x\cap S_y\rvert-\frac{k^2}{M}.
\]

\emph{Diagonal terms ($x=y$).}  Since the $k$ positions are distinct,
$\lvert S_x\cap S_x\rvert=k$ always, so
\[
  a_x^\top Qa_x=k-\frac{k^2}{M}=\gamma
\]
is a \emph{deterministic} constant: it holds exactly without taking
expectations and is independent of which $k$-cell subset the element
selects.

\emph{Cross terms ($x\ne y$).}  Write $H_{xy}=\lvert S_x\cap S_y\rvert$.
Fix $S_y$; $S_x$ is an independent uniform random $k$-subset, and $H_{xy}$
counts how many of the $k$ cells of $S_x$ land in $S_y$ (the $k$ ``success''
positions among the $M$ cells), so
\[
  H_{xy}\mid S_y\sim\operatorname{Hypergeometric}(M,k,k),
  \qquad
  \E[H_{xy}\mid S_y]=k\cdot\frac{k}{M}=\frac{k^2}{M},
\]
and the conditional distribution does not depend on the particular value of
$S_y$.  By the law of total expectation, $\E[H_{xy}]=k^2/M$, hence
$\E[a_x^\top Qa_y]=\E[H_{xy}]-k^2/M=0$.

\emph{Combining.}
\[
  \E[T \given s]
  =\sum_x s_x^2\,\gamma+\sum_{x\ne y}s_xs_y\cdot0
  =d\gamma,
\]
where the last step uses $s_x^2\equiv1$.  Signs enter the diagonal terms
only squared, so the conclusion holds exactly for every sign pattern, in
particular for the fully one-sided case $d_-=0$.
\end{proof}

\section{Proof of Theorem~\ref{thm:variance} (Exact Variance)}
\label{app:thm2}

We use the notation of the main text and of
Theorem~\ref{thm:unbiased}.

\begin{proof}
By the expansion of Theorem~\ref{thm:unbiased}, the diagonal terms
contribute the deterministic constant $d\gamma$, and all randomness comes
from the cross terms:
\[
  T-d\gamma=\sum_{x\ne y}s_xs_yW_{xy}=2\sum_{x<y}s_xs_yW_{xy},
  \qquad
  W_{xy}=H_{xy}-\frac{k^2}{M}.
\]
Since $\E[W_{xy}]=0$ (the cross terms of
Theorem~\ref{thm:unbiased}), $T-d\gamma$ has zero mean and
$\Var(T)=\E[(T-d\gamma)^2]$.  The signs $s_x$ are fixed constants, so
squaring and taking expectations term by term by linearity gives a double
sum over unordered pairs:
\[
  \Var(T)
  =4\sum_{x<y}\sum_{u<v}s_xs_ys_us_v\,\E[W_{xy}W_{uv}].
\]

\emph{Step 1 (single-pair variance).}
$H_{xy}\sim\operatorname{Hypergeometric}(N{=}M,\,K{=}k,\,n{=}k)$, and the
hypergeometric variance formula
$\Var(H_{xy})=n\frac{K}{N}(1-\frac{K}{N})\frac{N-n}{N-1}$ gives
\[
  \E[W_{xy}^2]=\Var(H_{xy})
  =k\cdot\frac{k}{M}\cdot\frac{M-k}{M}\cdot\frac{M-k}{M-1}
  =\frac{k^2(M-k)^2}{M^2(M-1)}
  =\frac{\gamma^2}{M-1},
\]
where the last equality uses $\gamma=k(M-k)/M$.

\emph{Step 2 (distinct pairs are pairwise uncorrelated).}  We split the
pairs by the number of elements shared between $\{x,y\}$ and $\{u,v\}$.

\begin{itemize}[itemsep=0pt,topsep=2pt]
  \item \emph{No shared element:} $S_x,S_y,S_u,S_v$ are mutually
  independent, so $\E[W_{xy}W_{uv}]=\E[W_{xy}]\,\E[W_{uv}]=0$.
  \item \emph{One shared element} (say $y=v$, $x\ne u$): condition on
  $S_y$.  By the hypergeometric characterization in
  Theorem~\ref{thm:unbiased}, $\E[H_{xy}\mid S_y]=k^2/M$ for every value of
  $S_y$, so $\E[W_{xy}\mid S_y]=0$; and $S_x,S_u$ are conditionally
  independent given $S_y$, so
  \[
    \E[W_{xy}W_{uy}]
    =\E\bigl[\E[W_{xy}\mid S_y]\cdot\E[W_{uy}\mid S_y]\bigr]=0.
  \]
  \item \emph{Two shared elements:} the pairs coincide and fall under Step 1.
\end{itemize}

\emph{Step 3 (combining).}  Only the terms with $\{x,y\}=\{u,v\}$ survive
in the double sum, and for those $(s_xs_y)^2\equiv1$, so
\[
  \Var(T)
  =4\binom d2\Var(W_{xy})
  =\frac{2d(d-1)k^2(M-k)^2}{M^2(M-1)}.
\]

\emph{Step 4 (normalization).}  With $\dhat=T/\gamma$ and
$\gamma^2=k^2(M-k)^2/M^2$, division cancels $k$ and $M-k$ completely:
\[
  \Var(\dhat)=\frac{\Var(T)}{\gamma^2}=\frac{2d(d-1)}{M-1}.
\]
The cancellation of $k$ is not a coincidence: the unnormalized energy
satisfies
\[
  \Var(T)=\frac{2d(d-1)\gamma^2}{M-1},
\]
so the signal scale of $T$ and its noise standard deviation both scale
with $\gamma=k(1-k/M)$; dividing by $\gamma$ removes this common scale
exactly.  The sign pattern likewise enters the surviving variance terms only
as $(s_xs_y)^2=1$, so the normalized variance is independent of sign
balance.
\end{proof}

\begin{remark}[Derivation of Corollary~\ref{cor:rsd}]
From Theorem~\ref{thm:unbiased}'s $\E[\dhat]=d$ and
Theorem~\ref{thm:variance}'s $\Var(\dhat)=2d(d-1)/(M-1)$, for $d\ge1$,
\[
  \operatorname{RSD}(\dhat)
  =\frac{\operatorname{Std}(\dhat)}{\E[\dhat]}
  =\sqrt{\frac{2(d-1)}{d(M-1)}}
  \approx\sqrt{\frac{2}{M-1}};
\]
and since $\E[\dhat/d-1]=0$,
$\operatorname{Std}(\dhat/d-1)=O\!\left((M-1)^{-1/2}\right)$.
\end{remark}

\section{Proof of Theorem~\ref{thm:chi2} (Chi-Square Limit)}
\label{app:thm3}

\begin{proof}
The proof has four steps.

\emph{Step 1 (mean and covariance of a single element's vector).}
$\E[a_x]=(k/M)\one_M$ (each cell is hit with probability $k/M$), so
$\E[Qa_x]=Q\,\E[a_x]=0$.  The covariance is computed entrywise: on the
diagonal
\[
  \Var(a_{x,i})=\frac{k}{M}\left(1-\frac{k}{M}\right);
\]
off the diagonal ($i\ne j$), with
$\Pr[i,j\in S_x]=\frac{k(k-1)}{M(M-1)}$ (two specified positions drawn
without replacement),
\[
  \Cov(a_{x,i},a_{x,j})
  =\frac{k(k-1)}{M(M-1)}-\frac{k^2}{M^2}
  =\frac{k\bigl[(k-1)M-k(M-1)\bigr]}{M^2(M-1)}
  =-\frac{k(M-k)}{M^2(M-1)}.
\]
The two combine exactly into
\[
  \Cov(a_x)=\sigma^2Q,
  \qquad
  \sigma^2=\frac{k(M-k)}{M(M-1)}
\]
(check: $\sigma^2Q$ has diagonal
$\sigma^2(1-\tfrac1M)=\frac{k}{M}(1-\frac{k}{M})$ and off-diagonal
$-\sigma^2/M=-\frac{k(M-k)}{M^2(M-1)}$, as required).  Using $Q^2=Q$,
\[
  \Cov(Qa_x)=Q\,\Cov(a_x)\,Q=\sigma^2Q,
\]
so the covariance is isotropic on the orthogonal complement of $\one_M$ (the
rank-$(M-1)$ centered subspace).

\emph{Step 2 (central limit theorem).}  The vectors
$\{s_xQa_x\}_{x=1}^d$ are mutually independent (i.i.d.\ mappings), have zero
mean, share the covariance $\sigma^2Q$ (fixed signs enter only as
$s_x^2=1$), and each coordinate is bounded ($\lvert(Qa_x)_i\rvert\le1$), so
the Lindeberg condition holds automatically.  Under fixed signs the terms are not identically distributed (the $s_x=-1$
terms are mirror images of the $s_x=+1$ ones); we therefore use the
Lindeberg--Feller triangular-array CLT rather than the i.i.d.\ version.
The CLT gives
\[
  \frac{QC}{\sqrt d}=\frac1{\sqrt d}\sum_x s_xQa_x
  \Longrightarrow
  \mathcal N(0,\sigma^2Q),
\]
an isotropic Gaussian supported on the centered subspace.

\emph{Step 3 (continuous mapping).}  $Q$ is an orthogonal projection
($Q^\top=Q$, $Q^2=Q$), so
$T=C^\top QC=C^\top Q^\top QC=\lVert QC\rVert^2$.  The squared norm is
continuous, and the continuous mapping theorem gives
\[
  \frac{T}{d}=\left\lVert\frac{QC}{\sqrt d}\right\rVert^2
  \Longrightarrow
  \sigma^2\chi^2_{M-1}.
\]
For the degrees of freedom, take an orthonormal basis of the centered
subspace; the limiting Gaussian's coordinates are $M-1$ i.i.d.\
$\mathcal N(0,\sigma^2)$, whose squared norm is $\sigma^2\chi^2_{M-1}$.

\emph{Step 4 (normalization).}
\[
  \frac{\dhat}{d}=\frac{T}{d\gamma}
  \Longrightarrow
  \frac{\sigma^2}{\gamma}\,\chi^2_{M-1}
  =\frac{\chi^2_{M-1}}{M-1},
  \qquad
  \frac{\sigma^2}{\gamma}
  =\frac{k(M-k)/[M(M-1)]}{k(M-k)/M}
  =\frac{1}{M-1}.
\]
Consistency check: the limit $\chi^2_{M-1}/(M-1)$ has variance $2/(M-1)$,
matching the $d\to\infty$ limit of the relative variance in
Theorem~\ref{thm:variance}.
\end{proof}

\begin{remark}[Relation to the Pearson $\chi^2$ statistic]
Take $k=1$ and all difference elements with $s_x=+1$; the all-$-1$ case
flips the count vector and yields the same statistic.  Then
\[
  (C_1,\ldots,C_M)\sim
  \operatorname{Multinomial}\!\left(d;\frac1M,\ldots,\frac1M\right),
\]
and the Pearson goodness-of-fit statistic for the uniform fit is
\[
  X^2
  =\sum_{i=1}^M\frac{(C_i-d/M)^2}{d/M}
  =\frac{M}{d}\,T.
\]
Since in this special case $\gamma=1-1/M=(M-1)/M$, we immediately get
\[
  \frac{\dhat}{d}
  =\frac{T}{d\gamma}
  =\frac{X^2}{M-1}.
\]
Thus the classical Pearson limit $X^2\Rightarrow\chi^2_{M-1}$ under the
equiprobable multinomial is a special case of
Theorem~\ref{thm:chi2}, which extends it to IBLT's signed,
without-replacement $k$-subset mapping.
\end{remark}

\section{Proofs for Failure-Conditioned Validity}
\label{app:thm4}

\subsection*{Exact sign projection and conditional-mean envelope}

\begin{proof}[Proof of Theorem~\ref{thm:cond-mean}]
Under the idealized decoder, peeling clears the IBLT sketch exactly when
the mapping hypergraph's 2-core is empty.  Hence $F$ is a function of the
mappings $\{S_x\}$ and is independent of the signs.  The joint mapping law
is invariant under every permutation of the element labels, and the event
$F$ has the same invariance because relabeling hyperedges does not change
the 2-core.  It follows that, for any two distinct unordered pairs,
\[
  \E[W_{xy}\given F]=\E[W_{uv}\given F].
\]
Thus
\[
  c_F=\gamma^{-1}\E[W_{xy}\given F]
\]
is well defined for any $x\ne y$ and depends on the mapping model and the
failure event, but not on the fixed sign pattern.

The quadratic-form expansion from Appendix~\ref{app:thm1}, together with
the deterministic self-energy $a_x^\top Qa_x=\gamma$, gives
\[
  T=d\gamma+2\sum_{x<y}s_xs_yW_{xy}.
\]
For fixed signs, conditional expectation therefore yields
\begin{align*}
  \frac{\E[\dhat\given F]}{d}-1
  &=\frac{2c_F}{d}\sum_{x<y}s_xs_y \\
  &=\frac{c_F}{d}\left\{\left(\sum_xs_x\right)^2-d\right\}.
\end{align*}
Since
\[
  \sum_xs_x=d_+-d_-=d\theta_\Delta,
\]
this is exactly Eq.~\eqref{eq:sign-projection}.

Now let the signs be i.i.d.\ uniform on $\{\pm1\}$ and independent of the
mappings.  Then
\[
  \E_s\!\left[\left(\sum_xs_x\right)^2\right]
  =\sum_x\E_s[s_x^2]
   +2\sum_{x<y}\E_s[s_xs_y]
  =d.
\]
Averaging the fixed-sign identity over the signs consequently gives
$\E[\dhat\given F]=d$ exactly.

It remains to bound the fixed-sign projection.  Consider the hypothetical
one-sided pattern $s_x\equiv+1$ on the same mapping distribution.  Because
$F$ depends only on the mappings, it has the same $p_F$ and the same $c_F$.
The projection identity becomes
\[
  \left|\frac{\E[\dhat\given F]}{d}-1\right|
  =|c_F|(d-1).
\]
On the other hand, Theorem~\ref{thm:unbiased} and the definition of
conditional expectation give
\[
  \E[\dhat\given F]-d
  =\frac{\Cov(\dhat,\mathbf1_F)}{p_F}.
\]
Cauchy--Schwarz, $\Var(\mathbf1_F)=p_F(1-p_F)$, and
Theorem~\ref{thm:variance} imply
\begin{align*}
  |c_F|(d-1)
  &\le \frac{\operatorname{Std}(\dhat)}{d}
       \sqrt{\frac{1-p_F}{p_F}} \\
  &=\sqrt{\frac{2(d-1)}{d(M-1)}}
       \sqrt{\frac{1-p_F}{p_F}}.
\end{align*}
Multiplying this bound by
$|d\theta_\Delta^2-1|/(d-1)$ and using the exact projection proves
Eq.~\eqref{eq:composition-aware-bound}.  No sign assertion about $c_F$ is
required.
\end{proof}

\subsection*{Chi-square inheritance when failure is typical}

We now prove Proposition~\ref{prop:cond-chi2}.  For each cell $i$, let $D_i$
be its degree in the initial mapping hypergraph.  The elements map
independently, and each includes cell $i$ with probability $k/M$, so
\[
  \Pr[D_i=1]
  =d\frac{k}{M}\left(1-\frac{k}{M}\right)^{d-1}.
\]
For $d\ge1$, a successful peeling process must begin with a degree-1 cell.
The union bound over all $M$ cells therefore gives
\[
  1-\Pr[F_d]
  =\Pr[F_d^c]
  \le \Pr[\exists i:D_i=1]
  \le d\,k\left(1-\frac{k}{M}\right)^{d-1}
  =\varepsilon_{M,d}.
\]

It remains to quantify what conditioning can do to the estimator's law.
Write $P_F=P(\mathord\cdot\mid F_d)$ and
$P_{F^c}=P(\mathord\cdot\mid F_d^c)$.  For
$p_F=\Pr[F_d]\in(0,1)$, the mixture identity
\[
  P=p_FP_F+(1-p_F)P_{F^c}
\]
implies
\[
  d_{\mathrm{TV}}(P_F,P)
  =(1-p_F)d_{\mathrm{TV}}(P_F,P_{F^c})
  =1-p_F,
\]
where the last equality follows because $P_F$ and $P_{F^c}$ have disjoint
supports.  The case $p_F=1$ is immediate.  Passing from the underlying
outcome to the scalar $\dhat/d$ cannot increase total variation, hence
\[
  \sup_t\left|
    \Pr[\dhat/d\le t\given F_d]-\Pr[\dhat/d\le t]
  \right|
  \le 1-p_F
  \le \min\{1,\varepsilon_{M,d}\},
\]
which proves Eq.~\eqref{eq:cond-transfer}.

Finally, for fixed $M,k$,
$\varepsilon_{M,d}=dk(1-k/M)^{d-1}\to0$.  The conditional and
unconditional distribution functions are therefore uniformly
indistinguishable, while Theorem~\ref{thm:chi2} gives
$\dhat/d\Rightarrow\chi^2_{M-1}/(M-1)$ unconditionally.  The triangle
inequality completes the proof of Eq.~\eqref{eq:cond-chi2}.

\paragraph{No random-sign assumption for distributional inheritance.}
For every deterministic sign sequence fixed independently of the mapping
randomness, Theorem~\ref{thm:chi2} supplies the unconditional chi-square
limit: multiplying a centered mapping vector by a fixed sign preserves its
zero mean, covariance, boundedness, and independence across elements.  The
total-variation argument above is likewise agnostic to that sign sequence.
Thus Proposition~\ref{prop:cond-chi2} covers arbitrary fixed sign
compositions, including the fully one-sided case.  Random signs are used only
for the exact finite-sample conditional-mean identity in
Theorem~\ref{thm:cond-mean}; no probabilistic sign model is needed for the
failure-conditioned chi-square limit.

\section{Supplementary Results for the Plain Estimator Validation
(\S\ref{sec:exp-validation})}
\label{app:appendix-d}

This appendix reports the full-grid validation results for the Plain IBLT
estimator, complementing the representative slice shown in
\S\ref{sec:exp-validation}.  It follows the same three-part structure as the
main text.  First, we check on the full sample that the estimator's mean,
variance, and interval coverage agree with
Theorems~\ref{thm:unbiased}--\ref{thm:chi2}.  Second, we restrict the
analysis to first-round failures and examine the failure-conditioned
distribution shape, quantile stability, and fixed-sign safety bounds.  Third,
we connect these statistical conclusions to second-round decoding success.
\S\ref{sec:exp-validation} gives the main conclusions; this appendix covers
the full parameter range, quantifies finite-sample error, and reports the
numerical margins of the safety bounds.

Except for the distribution-shape figures---Figures~\ref{fig:f1} and
\ref{fig:d1} below, and main-text Figure~\ref{fig:fcond-qq}---whose captions
state their own semantics, all grid numbers below come from $10^6$
independent trials per cell.

\subsection{Full-grid validation of the unconditional measurement
(\texorpdfstring{Theorems~\ref{thm:unbiased}--\ref{thm:chi2}}{Theorems 3.1--3.3})}

Table~\ref{tab:t1a} in the main text reports only a representative slice.
To rule out that the agreement is coincidental, this subsection verifies the
same
three items on all 160 configurations (4 tiers of $M$ $\times$ 2 tiers of
$k$ $\times$ 5 sign compositions $\times$ 4 load tiers): the variance closed
form (Figure~\ref{fig:f3b}), the worst-case error summarized by sketch dimension
(Table~\ref{tab:d1}), and the convergence of the chi-square approximation
with $d$ and its interval coverage (Figure~\ref{fig:d1},
Table~\ref{tab:d2}).

\begin{figure}[!hbtp]
\centering
\includegraphics[width=0.85\textwidth]{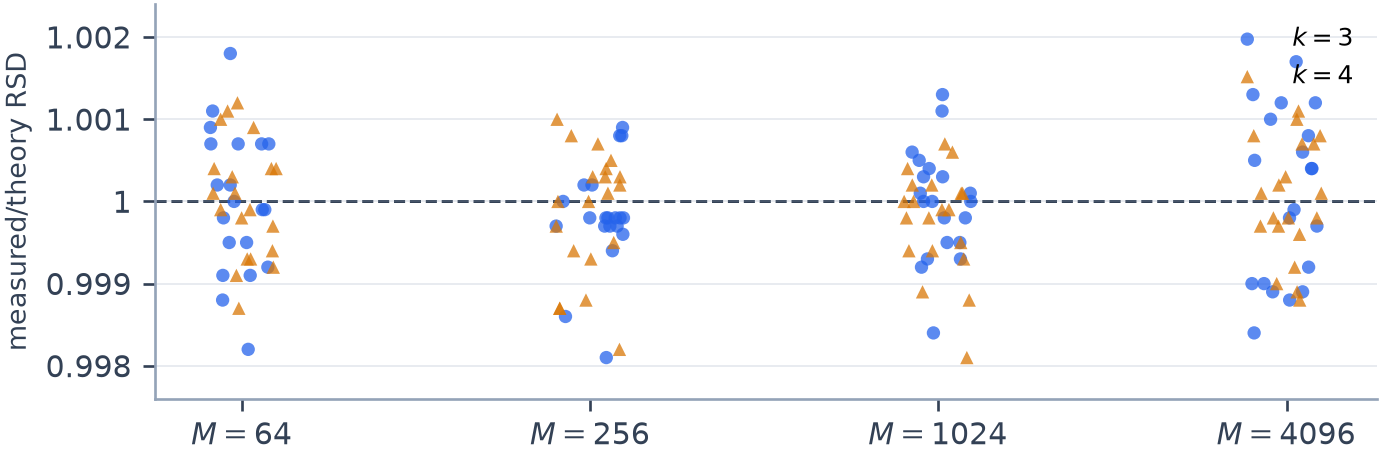}
\caption{Full-grid RSD ratio.  The y-axis is the measured RSD divided by the
Theorem~\ref{thm:variance} prediction; 1 means the variance closed form is
exact.  All 160 points lie in $[0.9981,1.0018]$, with no visible separation
between the $k=3$ and $k=4$ points.}
\label{fig:f3b}
\end{figure}

\begin{table}[!hbtp]
\centering
\caption{Unconditional error summarized by sketch dimension (40 configurations
per tier: 2 $k$ $\times$ 5 sign compositions $\times$ 4 load tiers).}
\label{tab:d1}
\small
\begin{tabular}{lrrrr}
\toprule
$M$ & configurations & $\max |\operatorname{mean}_{\rm MC}(\dhat/d)-1|$ & MC RSD / theory RSD range & \shortstack{$\max \lvert q_{0.01}^{U}(\mathrm{MC})$\\$-q_{0.01}^{U}(\chi^2)\rvert$} \\
\midrule
64   & 40 & 0.00047 & [0.9982, 1.0018] & 0.0257 \\
256  & 40 & 0.00025 & [0.9981, 1.0010] & 0.0066 \\
1024 & 40 & 0.00009 & [0.9981, 1.0013] & 0.0008 \\
4096 & 40 & 0.00005 & [0.9984, 1.0017] & 0.0004 \\
\bottomrule
\end{tabular}
\end{table}

The three quantities behave differently as $M$ increases, consistent with
their respective theoretical guarantees.  Theorems~\ref{thm:unbiased}--\ref{thm:variance}
hold exactly for finite $(M,d,k)$, so the first two items should show only
Monte Carlo fluctuation: the mean error shrinks from 0.00047 to 0.00005 as
$M$ grows, consistent with the standard-error shrinkage, and the RSD ratio
stays near 1 on all four scales.  Theorem~\ref{thm:chi2} needs $d\to\infty$
at fixed $M$, so the third item may show visible error at small $d$: the
full-grid maximum of 0.0257 occurs at $(M,d)=(64,26)$, the smallest-$d$
configuration, consistent with that theorem's asymptotic semantics.

\begin{figure}[!hbtp]
\centering
\includegraphics[width=0.9\textwidth]{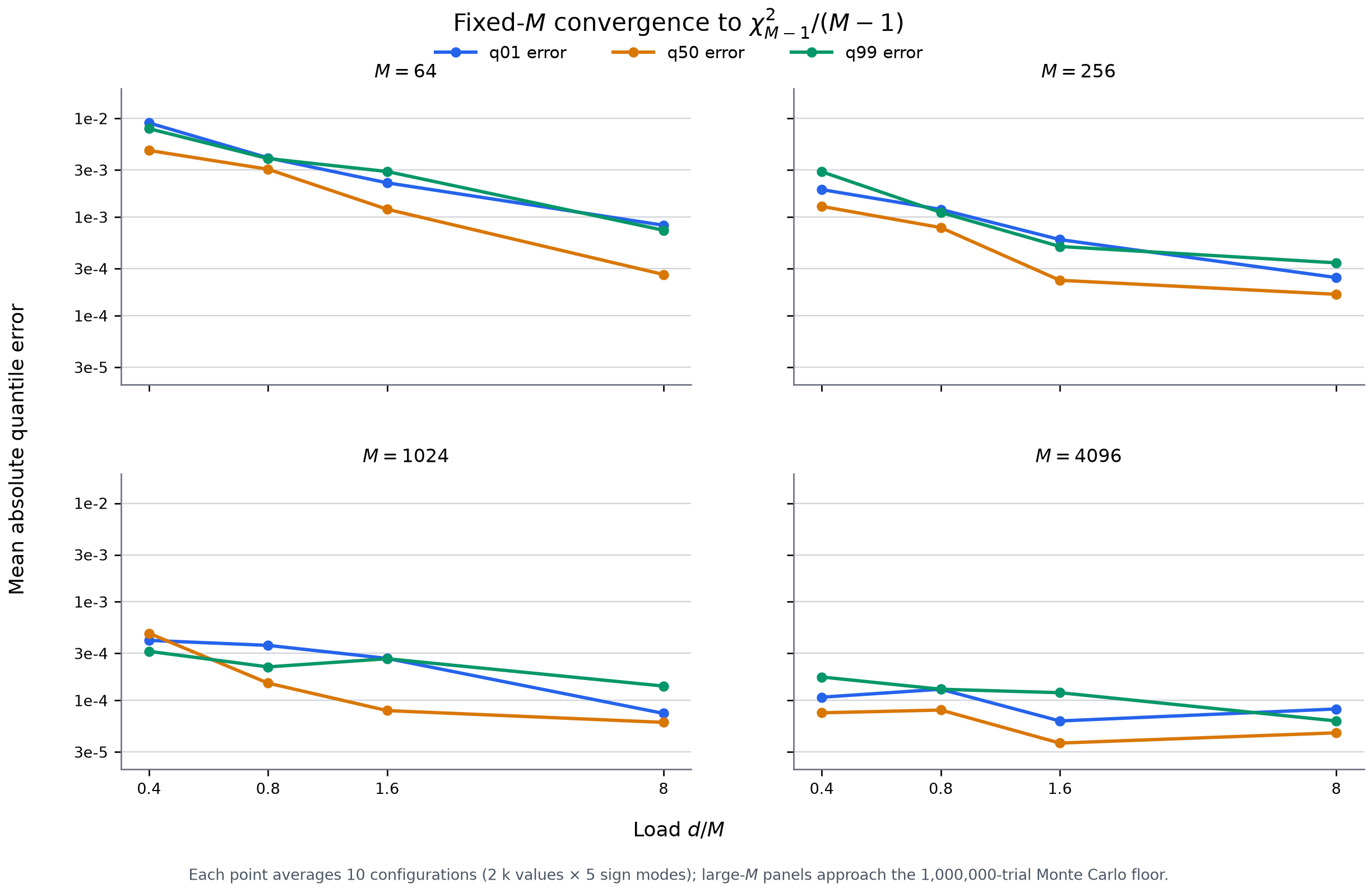}
\caption{Fixed-$M$ chi-square quantile convergence.  Each panel fixes $M$ and
increases $d/M$ from 0.4 to 8, the $d\to\infty$ direction of
Theorem~\ref{thm:chi2}; the y-axis is the mean absolute error between the
empirical quantile and the corresponding $\chi^2_{M-1}/(M-1)$ quantile,
averaged over the two $k$ values and five sign compositions.  The $1\%$
lower tail $q_{0.01}^U$, median $q_{0.50}^U$, and symmetric $99\%$ upper tail
$q_{0.99}^U$ all shrink together from about 0.0040/0.0023/0.0039 at $M=64$
to about 0.0001 at $M=4096$.}
\label{fig:d1}
\end{figure}

\begin{table}[!hbtp]
\centering
\caption{Empirical coverage error of the Theorem~\ref{thm:chi2} asymptotic
confidence intervals (deviation of the empirical coverage from the nominal
level, in percentage points; each $(M,d/M)$ cell aggregates $10^7$ trials).}
\label{tab:d2}
\small
\begin{tabular}{lrrr}
\toprule
$M$ & 90\% interval & 95\% interval & 99\% interval \\
\midrule
64   & 0.478\,pp & 0.495\,pp & 0.121\,pp \\
256  & 0.279\,pp & 0.140\,pp & 0.017\,pp \\
1024 & 0.063\,pp & 0.078\,pp & 0.012\,pp \\
4096 & 0.022\,pp & 0.017\,pp & 0.009\,pp \\
\bottomrule
\end{tabular}
\end{table}

For a nominal coverage $1-\alpha$, the interval covers the true $d$ exactly
when
\[
  \chi^2_{M-1,\alpha/2}
  \le (M-1)\frac{\dhat}{d}
  \le \chi^2_{M-1,1-\alpha/2}.
\]
The smallest configuration $(M,d)=(64,26)$ has a coverage error of about
0.5\,pp that does not shrink with more trials, so it is dominated by the
finite-sample bias of the chi-square approximation at small $d$ rather than
Monte Carlo noise; as the sketch dimension increases, all tiers at $M\ge256$ stay below
0.3\,pp and $M=4096$ drops to 0.02\,pp.  For engineering purposes, the
chi-square interval can be used at its nominal level on the sketch dimensions
the protocol actually uses.

\subsection{Failure-conditioned distribution, quantile stability, and
safety bounds}

The protocol reads the estimate only from IBLT instances that fail to decode
in the first round, so the unconditional conclusions of the previous
subsection must be re-checked on the failed subsample: the distribution shape
(main-text Figure~\ref{fig:fcond-qq}), the sampling
stability of the $1\%$ quantile the protocol uses directly
(Table~\ref{tab:d3}), and the coverage of the two safety bounds under fixed
signs (Tables~\ref{tab:d4}--\ref{tab:d5}).

\begin{table}[!hbtp]
\centering
\caption{Sampling stability of the failed-only $1\%$ lower tail (random
signs, $d/M=0.8$).  The empirical $q_{0.01}^F$ differs from its $95\%$
one-sided order-statistic lower confidence bound by at most 0.0017.}
\label{tab:d3}
\small
\begin{tabular}{rrrrrrr}
\toprule
$M$ & $d$ & $\Pr[F]$ & $n_F$ & $q_{0.01}^{F}$ (MC) & 95\% order-statistic lower bound & gap \\
\midrule
64   & 51   & 0.683 & 683{,}202 & 0.6308 & 0.6291 & 0.0017 \\
256  & 205  & 0.543 & 543{,}451 & 0.8050 & 0.8046 & 0.0004 \\
1024 & 819  & 0.257 & 257{,}079 & 0.8998 & 0.8992 & 0.0006 \\
4096 & 3277 & 0.028 & 28{,}336  & 0.9490 & 0.9482 & 0.0008 \\
4096$^\ast$ & 3277 & 0.028 & 280{,}482 & 0.9492 & 0.9490 & 0.0002 \\
\bottomrule
\end{tabular}
\\
\footnotesize
$^\ast$ An additional $10^7$ trials on the sparsest tier (balanced signs
$\theta_-=0.5$) increase the number of failed samples to approximately
280{,}000; the result agrees with the random-sign $M=4096$ row's 0.9490.
\end{table}

The main text therefore uses the empirical $q_{0.01}^F$ directly as the
practical input to capacity planning.  The analytic bounds of
Proposition~\ref{prop:lower-tail} and Cantelli do not depend on the
empirical tail shape but become very conservative at lower-tail levels such
as $\delta=0.01$, often yielding a nonpositive lower bound, so they are kept
only as
distribution-free safety floors (see Table~\ref{tab:d5}), not for practical
configuration.

\paragraph{Direct audit of the exact sign projection.}
The dedicated H100 sweep in Figure~\ref{fig:sign-projection} separates the
sign-composition axis more finely than the 1M grid summarized below.  At
$k=3$, it uses 11 fixed values of $\theta_-$ for each of 15 $(M,d)$ curves, with one million
trials per configuration at $M\in\{256,1024\}$ and 300{,}000 at $M=4096$.
This gives 165 configurations and 126.5 million trials.  For each curve, the
weighted one-parameter model
\[
  \E_{\rm MC}[\dhat/d\given F]-1
  =c_F(d\theta_\Delta^2-1)
\]
is constrained to the exact intercept--slope relation of
Theorem~\ref{thm:cond-mean}.  Its minimum and mean $R^2$ are $0.9948$ and
$0.9990$.  Refitting after withholding each point gives 165 out-of-sample
predictions with RMSE $1.08\times10^{-4}$ and maximum absolute residual
$3.36\times10^{-4}$.  This audit targets the conditional-mean identity; the
failure-conditioned quantiles and distributional shape are assessed
separately.

\begin{table}[!hbtp]
\centering
\caption{Sign-agnostic Cauchy--Schwarz envelope implied by
Theorem~\ref{thm:cond-mean}.
The bound utilization $b_{\rm emp}/b_{\rm th}$ is the empirical conditional
deviation divided by the theoretical bound
$\sqrt{2(d-1)/(d(M-1))}\sqrt{(1-\widehat p_F)/\widehat p_F}$.  Each row
reports the maximum deviation and the maximum bound utilization over the
configurations at that $M$, together with the number of violations.}
\label{tab:d4}
\small
\begin{tabular}{lrrrr}
\toprule
$M$ & non-degenerate configs & max $|\E_{\rm MC}[\dhat/d \given F]-1|$ & max utilization $b_{\rm emp}/b_{\rm th}$ & violations \\
\midrule
64   & 16 & 0.099530 & 0.288600 & 0 \\
256  & 16 & 0.031960 & 0.306900 & 0 \\
1024 & 12 & 0.021170 & 0.281600 & 0 \\
4096 & 8  & 0.019040 & 0.146500 & 0 \\
\midrule
total & 52 & --- & 0.306900 & 0 \\
\bottomrule
\end{tabular}
\end{table}

``Non-degenerate'' means $0<\widehat p_F<1$ and $n_F\ge20$: configurations
with $\widehat p_F=0$, $\widehat p_F=1$, or fewer than 20 failed samples are
excluded because they do not support a stable empirical utilization
estimate.  The 52 configurations cover all non-degenerate combinations of
four fixed sign fractions, two $k$ values, four load tiers, and four $M$
tiers; the maximum utilization is only 0.31, i.e., even the largest empirical
deviation is only about one-third of the corresponding theoretical bound.
For the \emph{random-sign} conditional-mean statement of
\S\ref{sec:exp-validation} ($0.19\%$ maximum deviation over the full random
sign grid), the same exclusion uses the stricter threshold $n_F\ge10^3$: the
configurations with fewer than $10^3$ failed samples---sparsest among them
$(M,k,d)=(64,4,26)$ with $n_F=640$ and a conditional-mean deviation of
$1.2\%$, and $(1024,4,410)$ with a single failed sample---are too noisy to
support a stable conditional mean and are left out of that summary.  The two
thresholds target different quantities: the $n_F\ge20$ bound coverage in
Table~\ref{tab:d4} is checked per configuration against a Cauchy--Schwarz
upper bound, whereas the $n_F\ge10^3$ summary reports a maximum observed
deviation.

\begin{table}[!hbtp]
\centering
\caption{Proposition~\ref{prop:lower-tail} fixed-sign lower-tail bound at
$\delta=0.01$.  The coverage margin is the empirical lower quantile minus
the Proposition~\ref{prop:lower-tail} safety bound; for each $M$, the table
reports the minimum margin over all eligible configurations.}
\label{tab:d5}
\small
\begin{tabular}{lrrr}
\toprule
$M$ & configs with $n_F\ge20$ & min coverage margin ($q_{0.01}^{F}$ MC $-q_{0.01}^{\mathrm{Prop5}}$) & violations \\
\midrule
64   & 32 & 1.4039 & 0 \\
256  & 32 & 0.6898 & 0 \\
1024 & 28 & 0.3419 & 0 \\
4096 & 28 & 0.1701 & 0 \\
\midrule
total & 120 & 0.1701 & 0 \\
\bottomrule
\end{tabular}
\end{table}

No empirical violation occurs in the 120 configurations.  The margin
narrows as $M$ grows, with the minimum 0.17 at $M=4096$ still positive---this
is consistent with Proposition~\ref{prop:lower-tail}'s conservatism: the
larger the sketch, the closer the empirical tail is to 1 and the smaller the
gap between the distribution-free bound and the actual tail.  This check
certifies coverage per configuration rather than jointly across all
configurations.  The
analytic characterization of the failure-conditioned shape (the total-variation
transfer and the conditional Kolmogorov bound) is in the supplementary
material.

\paragraph{Realization in protocol round two.}
The previous two subsections validate the estimator itself; this paragraph
connects the estimator validation to the protocol layer: the empirical $q_\delta$ of
\S\ref{sec:quantile-interface} is converted through
$\alpha\ge\beta/q_\delta$ into the second-round capacity multiplier, and the
second-round IBLT sketch is sent with capacity $M_2=\lceil\alpha\dhat\rceil$.  The
tier-by-tier derivation of the recommended multipliers and the full
$M_1\times\alpha$ grid of end-to-end success rates are in
Appendix~\ref{app:protocol-details}, Tables~\ref{tab:c1}--\ref{tab:c2}.

\section{Proof of Theorem~\ref{thm:master} (Mapping-Aware Construction)}
\label{app:thm6}

Expand
\[
  \E[C^\top QC]=\sum_{x,y}s_xs_y\,\E[a_x^\top Qa_y].
\]

\emph{Cross terms ($x\ne y$, of types $j$ and $j'$).}  By element
independence and bilinearity of expectation,
\[
  \E[a_x^\top Qa_y]
  =\E[a_x]^\top Q\,\E[a_y]
  =\mu_j^\top Q\mu_{j'}=0,
\]
where the last step uses the condition $Q\mu_{j'}=0$.  For a general
symmetric $Q$, the conditions $Q\mu_j=0$ for all types $j$ provide a concise
sufficient condition for eliminating all cross-mean terms, and every
centering projection used by the adapters in this paper satisfies it.
When $Q\succeq0$, it is also necessary, in the following sense.  Write the full expansion
\[
  \E[C^\top QC]
  =\sum_j d_j\bigl(\tr(Q\Sigma_j)+\mu_j^\top Q\mu_j\bigr)
  +\sum_{x\ne y}s_xs_y\,\mu_{j(x)}^\top Q\mu_{j(y)} .
\]
Suppose $C^\top QC/\gamma$ is unbiased for every fixed sign pattern and every
type mix in which each type contributes at least two elements.  Flipping a
single sign changes only the second sum, so every coefficient
$\mu_j^\top Q\mu_{j'}$ in it must vanish---including the same-type entries
$\mu_j^\top Q\mu_j$.  Those are the diagonal entries of the Gram matrix of
the vectors $Q^{1/2}\mu_j$, so $Q^{1/2}\mu_j=0$ and hence $Q\mu_j=0$ for
every type.  Within the class of positive-semidefinite quadratic forms,
$Q\mu_j=0$ is therefore necessary and sufficient for eliminating all
sign-composition-dependent cross terms.  It is necessary, but not sufficient,
for unbiasedness under arbitrary fixed sign compositions: unbiasedness
additionally requires the diagonal normalization
$\tr(Q\Sigma_j)+\mu_j^\top Q\mu_j=\gamma$ for every type $j$, which reduces to
$\tr(Q\Sigma_j)=\gamma$ once $Q\mu_j=0$.

Requiring unbiasedness only under i.i.d.\ random signs is a weaker demand:
the cross terms then vanish in expectation on their own, and all that remains
is a single normalization condition: $\tr(Q\Sigma_j)+\mu_j^\top Q\mu_j$ must
take the same value $\gamma$ for every type.  The adapters of
\S\ref{sec:mapping-aware} target the stronger guarantee of unbiasedness
under arbitrary fixed sign compositions, which the fixed one-sided
experiments evaluate.

\emph{Diagonal terms ($x=y$, type $j$).}  Using the identity
$a^\top Qa=\tr(Qaa^\top)$ and the cyclicity of the trace,
\[
  \E[a_x^\top Qa_x]
  =\tr\bigl(Q\,\E[a_xa_x^\top]\bigr)
  =\tr\bigl(Q(\Sigma_j+\mu_j\mu_j^\top)\bigr)
  =\tr(Q\Sigma_j)+\mu_j^\top Q\mu_j
  =\tr(Q\Sigma_j).
\]

\emph{Combining.}  Since $s_x^2\equiv1$ and type $j$ has $d_j$ elements,
summing gives $\E[C^\top QC]=\sum_j d_j\,\tr(Q\Sigma_j)$; with a single
type, dividing by $\gamma=\tr(Q\Sigma)$ gives the unbiased estimator.

\subsection*{Consistency check: reduction to the Plain IBLT}

For the Plain IBLT, $\mu=(k/M)\one_M$ and $Q=I_M-\one_M\one_M^\top/M$ satisfies
$Q\mu=0$; by Step 1 of Theorem~\ref{thm:chi2}'s proof,
$\Sigma=\sigma^2Q$ with $\sigma^2=\frac{k(M-k)}{M(M-1)}$, so
\[
  \tr(Q\Sigma)=\sigma^2\tr(Q)=\sigma^2(M-1)=k\left(1-\frac{k}{M}\right)=\gamma,
\]
recovering the normalization constant used in \S\ref{sec:count-measurement}.

\begin{remark}[The two calibration points]
The standard identity
\[
  \E[C^\top QC]
  =\tr\bigl(Q\,\Cov(C)\bigr)
  +\E[C]^\top Q\,\E[C]
\]
shows that an adapter must account for two components simultaneously: $Q$
must remove the true mapping mean, or the remaining baseline will contribute
to the quadratic energy; the resulting energy must then be normalized by
$\tr(Q\Sigma)$ under the true mapping covariance, or it will no longer yield
an unbiased element count.  This is also why the Plain adapter cannot be
applied directly to variants whose mapping mean or covariance has changed.
\end{remark}

\section{Irregular Adapter Proofs and the Degenerate Boundary}
\label{app:irregular}

\subsection*{Proposition~\ref{prop:irregular} (degree second-moment adapter)}

The unconditional mapping mean is $\mu=(\E[D]/M)\one_M$, so $Q\mu=0$ and the
cross-term condition of Theorem~\ref{thm:master} holds.  The diagonal term
needs no distributional assumption: $a_x$ is a $0/1$ vector with $D_x$
ones, and, for every realization,
\[
  a_x^\top Qa_x
  =a_x^\top a_x-\frac{(\one^\top a_x)^2}{M}
  =D_x-\frac{D_x^2}{M},
\]
so taking expectations over $D_x\sim\Lambda$ gives
$\E[a_x^\top Qa_x]=\gamma_{\mathrm{irr}}$.  Substituting into
Theorem~\ref{thm:master} yields $\E[C^\top QC]=d\gamma_{\mathrm{irr}}$.

\subsection*{Corollary~\ref{cor:irregular-var} (exact variance)}

Write
\[
  e_x=a_x^\top Qa_x=D_x-\frac{D_x^2}{M},
  \qquad
  W_{xy}=a_x^\top Qa_y.
\]
Then
\[
  C^\top QC
  =\sum_x e_x+2\sum_{x<y}s_xs_yW_{xy}.
\]
The first term contributes variance $d\sigma_E^2$ from the independent
random degrees.  We now compute the cross terms.  Given $(D_x,D_y)=(r,t)$, let
$H_{xy}=a_x^\top a_y$; then
\[
  H_{xy}\mid(r,t)\sim\operatorname{Hypergeometric}(M,r,t),
  \qquad
  W_{xy}=H_{xy}-\frac{rt}{M}.
\]
Therefore
\[
  \E[W_{xy}\mid r,t]=0
\]
and
\[
  \Var(W_{xy}\mid r,t)
  =\frac{rt(M-r)(M-t)}{M^2(M-1)}.
\]
Since $D_x,D_y$ are i.i.d.\ and
\[
  \E[D(M-D)]=M\gamma_{\mathrm{irr}},
\]
taking expectations yields
\[
  \Var(W_{xy})
  =\frac{\E[D(M-D)]^2}{M^2(M-1)}
  =\frac{\gamma_{\mathrm{irr}}^2}{M-1}.
\]

It remains to account for the covariance terms in the expansion.  If two $W$'s correspond to
disjoint element pairs, they are independent; if they share exactly one
element, say $W_{xy}$ and $W_{xz}$, then conditioned on $a_x$ they are
independent and
\[
  \E[W_{xy}\mid a_x]
  =a_x^\top Q\,\E[a_y]
  =0,
\]
because $\E[a_y]=(\E[D]/M)\one_M$ and $Q\one_M=0$.  The same conditional mean
shows $e_x$ is uncorrelated with any $W_{xy}$ containing $x$; for pairs
sharing no element, independence gives uncorrelatedness.  Hence the
self-energies and cross-energies are uncorrelated, and cross-energies of
distinct unordered pairs are pairwise uncorrelated.

For any fixed sign pattern, $(s_xs_y)^2=1$, so
\[
  \begin{aligned}
  \Var(C^\top QC)
  &=d\sigma_E^2
  +4\binom d2\frac{\gamma_{\mathrm{irr}}^2}{M-1}\\
  &=d\sigma_E^2
  +\frac{2d(d-1)\gamma_{\mathrm{irr}}^2}{M-1}.
  \end{aligned}
\]
Dividing by $\gamma_{\mathrm{irr}}^2$ gives
\[
  \Var(\dhat_{\mathrm{irr}})
  =\frac{2d(d-1)}{M-1}
  +\frac{d\sigma_E^2}{\gamma_{\mathrm{irr}}^2},
\]
and dividing by $d^2$ yields the relative variance in the main text
\[
  \Var\!\left(\frac{\dhat_{\mathrm{irr}}}{d}\right)
  =\frac{2(d-1)}{d(M-1)}
  +\frac{\sigma_E^2}{\gamma_{\mathrm{irr}}^2d}.
\]
Expanding $e_x=D_x-D_x^2/M$ directly, $\sigma_E^2$ depends only on the
first four moments of the degree distribution:
\[
  \sigma_E^2
  =\E[D^2]-\frac{2\E[D^3]}{M}
  +\frac{\E[D^4]}{M^2}
  -\gamma_{\mathrm{irr}}^2.
\]
When $D$ is constant, $\sigma_E^2=0$, and the result reduces to
Theorem~\ref{thm:variance} for the Plain IBLT.

\subsection*{Worked example at the $D=M=18$ boundary}

The $M=18$ row of the validation table in \S\ref{sec:irregular} is a
deliberately extreme configuration that maximizes the difference between
second-moment normalization and mean-degree plug-in.  At $M=18$, $12.5\%$
of elements draw exactly $D=18$, i.e., hit every cell.  Such an
element's mapping vector is the all-ones vector, containing only the
constant direction; the global centering matrix $Q$ exists precisely to
remove the constant direction, so these elements leave no energy after
centering---substituting $D=M=18$ into the self-energy
$D-D^2/M$ also gives zero directly.  In other words, these $12.5\%$ of
elements hit the most cells yet contribute nothing to the centered
energy.

The correct normalization constant
$\gamma_{\mathrm{irr}}=\E[D]-\E[D^2]/M$ automatically accounts for this
zero-energy case through the second moment: the substantial contribution of
degree-18 elements to $\E[D^2]$ reduces $\gamma_{\mathrm{irr}}$ to $2.0792$,
exactly matching the fact that they produce no energy.  The plug-in
with the mean degree, $\gamma_{\mathrm{mean}}=\E[D]-\E[D]^2/M=3.4847$,
overestimates each element's average energy production and introduces a
systematic downward bias by the constant factor
$\gamma_{\mathrm{irr}}/\gamma_{\mathrm{mean}}=0.5967$, i.e., about $40.3\%$
underestimation.  That the $M=18$ row's mean stays near 1 is consistent
with this corrected normalization.

The same row's larger relative RMSE ($\approx0.34$) has a different source.
It corresponds to the first term $2(d-1)/[d(M-1)]$ of
Corollary~\ref{cor:irregular-var}: smaller sketch dimensions produce greater
random overlap between the cell sets of distinct difference elements, so a
single estimate fluctuates more.  In that term $\gamma_{\mathrm{irr}}$ has
been cancelled exactly, and the term's size depends only on $d$ and $M$; all
influence of the
degree distribution is concentrated in the second term
$\sigma_E^2/(\gamma_{\mathrm{irr}}^2d)$.  The two effects are therefore
distinct: the normalization constant determines the center of the estimate,
while its dispersion depends on both the sketch dimension $M$ and the
degree-dependent self-energy term.

\section{Rateless Prefix Adapter Proofs}
\label{app:rateless}

\subsection*{Lemma~\ref{lem:rateless-hit} (closed-form hit probability)}

\emph{Preliminary: the transition kernel.}  For integer $\ell\ge1$, taking
the ceiling does not change the event ($D_j\le\ell$ iff the real quantity in
parentheses is $\le\ell$), so
\[
  \Pr[D_j\le\ell]
  =\Pr\!\left[U^{-1/2}\le1+\frac{\ell}{j+\frac32}\right]
  =\Pr\!\left[U\ge\left(\frac{j+\frac32}{j+\ell+\frac32}\right)^2\right]
  =1-\left(\frac{j+\frac32}{j+\ell+\frac32}\right)^2.
\]
Differencing this cumulative distribution with respect to the integer $\ell$
gives the transition kernel in the main text
\[
  K_{jv}=\Pr[j+D_j=v]
  =\left(j+\frac32\right)^2\left[\frac{1}{(v+\frac12)^2}-\frac{1}{(v+\frac32)^2}\right],
  \qquad v>j.
\]
This subsection and Lemma~\ref{lem:rateless-pairwise} both start from it.

A path hits $v$ iff it hits some $j<v$ and then jumps exactly to $v$;
decomposing by the last hit position before $v$ (mutually exclusive events)
gives the renewal recurrence
\[
  q_v=\sum_{j=0}^{v-1}q_jK_{jv}.
\]
To verify the candidate closed form, observe that $q_j(j+\frac32)^2=2(j+1)$
for $j\ge1$,
and $q_0(\frac32)^2=\frac94$, so
\[
  \sum_{j=0}^{v-1}q_j\left(j+\frac32\right)^2
  =\frac94+\sum_{j=1}^{v-1}2(j+1)
  =\frac94+(v^2+v-2)
  =\left(v+\frac12\right)^2.
\]
Substituting back into the recurrence (the bracket factor of $K_{jv}$ does
not depend on $j$ and can be pulled out):
\[
  \sum_{j=0}^{v-1}q_jK_{jv}
  =\left(v+\frac12\right)^2\left[\frac{1}{(v+\frac12)^2}-\frac{1}{(v+\frac32)^2}\right]
  =\frac{(v+\frac32)^2-(v+\frac12)^2}{(v+\frac32)^2}
  =\frac{2v+2}{(v+\frac32)^2}
  =\frac{8(v+1)}{(2v+3)^2},
\]
which is exactly the candidate closed form at $v$; induction on $v$ completes
the verification.  The base case is
$q_1=K_{01}=1-\bigl(\frac{3/2}{5/2}\bigr)^2=\frac{16}{25}=0.64$.

\subsection*{Lemma~\ref{lem:rateless-pairwise} (pairwise independence)}

Let $h_u(v)$ be the probability of eventually hitting $v$ given that the
path is currently at cell $u$.  Run backward induction on
$u=v-1,v-2,\ldots,0$ with the induction hypothesis $h_w(v)=q_v$ for all
$u<w<v$.  Decompose by the first jump:
\[
  h_u(v)=K_{uv}+\sum_{w=u+1}^{v-1}K_{uw}h_w(v)
  =K_{uv}+q_v\sum_{w=u+1}^{v-1}K_{uw}
  =K_{uv}+q_v\bigl(1-\Pr[D_u\ge v-u]\bigr),
\]
where by the tail-probability formula
$\Pr[D_u\ge v-u]=\bigl(\frac{u+3/2}{v+1/2}\bigr)^2$.  Hence $h_u(v)=q_v$
is equivalent to the identity
\[
  K_{uv}=q_v\left(\frac{u+\frac32}{v+\frac12}\right)^2.
\]
Substituting
$q_v=1-\bigl(\frac{v+1/2}{v+3/2}\bigr)^2=\frac{(v+\frac32)^2-(v+\frac12)^2}{(v+\frac32)^2}$
and expanding directly,
\[
  q_v\left(\frac{u+\frac32}{v+\frac12}\right)^2
  =\left(u+\frac32\right)^2\left[\frac{1}{(v+\frac12)^2}-\frac{1}{(v+\frac32)^2}\right]
  =K_{uv}.
\]
The identity holds; the base case $u=v-1$ is the identity itself (there
$h_{v-1}(v)=K_{v-1,v}$).  Finally, a path's history before hitting $u$ is
independent of its subsequent jumps (every step draws a fresh $U$), so
\[
  \Pr[B_v=1 \given B_u=1]=h_u(v)=q_v=\Pr[B_v=1],
\]
i.e., $B_u$ and $B_v$ are independent and
$\Pr[B_i=1,B_j=1]=q_iq_j$.

\subsection*{Proposition~\ref{prop:rateless-estimator} (analytic estimator)}

Write element $x$'s hit indicators as $B_i^{(x)}$.  Since $q_0=1$,
$C_0=\sum_x s_x$, so
\[
  Y_i=\sum_xs_x\bigl(B_i^{(x)}-q_i\bigr)
\]
is a sum of $d$ mutually independent, zero-mean terms.  Squaring and taking
expectations, the cross terms vanish and the diagonal terms give
\[
  \E[Y_i^2]=\sum_xs_x^2\,\E\bigl[(B_i^{(x)}-q_i)^2\bigr]=d\,q_i(1-q_i)
\]
(with $B_i^{(x)}\sim\operatorname{Bernoulli}(q_i)$).  Therefore each term
$(C_i-q_iC_0)^2/[q_i(1-q_i)]$ is an unbiased estimator of $d$, and so is
their average.

\emph{Generalization (non-diagonal $\Sigma$).}  Unbiasedness uses only the
closed-form $q_i$ and element independence; it does not rely on
diagonality of $\Sigma$.  If a mapping yields an analytically known
non-diagonal covariance $\Sigma$, take $r=\operatorname{rank}(\Sigma)$; then
$\dhat=Y^\top\Sigma^+Y/r$ is still exactly unbiased ($\Sigma^+$ is the
Moore--Penrose pseudo-inverse).  The Rateless prefix's pairwise independence
(Lemma~\ref{lem:rateless-pairwise}) makes
$\Sigma=\operatorname{diag}(q_i(1-q_i))$ exactly diagonal, so the
generalized estimator reduces to the simple average above.

\subsection*{Proposition~\ref{prop:rateless-ci} (asymptotic confidence
interval)}

The elements' hit vectors $(B_i^{(x)})_{i=1}^{m-1}$ are generated i.i.d.\
across $x$ (each path is sampled independently by the same mechanism), so
$Y=\sum_x s_x(B^{(x)}-q)$ is a sum of $d$ mutually independent, zero-mean
vectors whose coordinates are bounded in $[-1,1]$; the Lindeberg condition
holds automatically, and fixed signs enter the covariance only as
$s_x^2=1$.  The single-term covariance is the diagonal matrix
$\Sigma=\operatorname{diag}(q_i(1-q_i))$ by
Lemma~\ref{lem:rateless-pairwise} and the variance calculation in the proof
of Proposition~\ref{prop:rateless-estimator}, and $0<q_i<1$ keeps the diagonal non-degenerate with
$\operatorname{rank}(\Sigma)=r=m-1$.  The multivariate CLT gives
$Y/\sqrt d\Longrightarrow\mathcal N(0,\Sigma)$.  Define the standardized
coordinates $Z_i=Y_i/\sqrt{d\,q_i(1-q_i)}$; then $(Z_i)$ is asymptotically
i.i.d.\ $\mathcal N(0,1)$, and
\[
  \frac{\dhat}{d}=\frac1r\sum_{i=1}^{r}Z_i^2.
\]
The continuous mapping theorem gives $\chi^2_r/r$; the confidence interval
follows by inverting the $\chi^2_r$ quantiles of $\dhat/d$.

\subsection*{Statistic/interval pairing (engineering implementation note)}

Proposition~\ref{prop:rateless-estimator}'s estimator has the shape
``standardize each cell by its own variance, then average''
(average-of-ratios):
\[
  \dhat=\frac1{m-1}\sum_{i=1}^{m-1}\frac{(C_i-q_iC_0)^2}{q_i(1-q_i)}.
\]
There is an alternative form with the same mean---``merge all energies
first, then divide by the merged variance'' (ratio-of-sums):
\[
  \widetilde d=\frac{\sum_i(C_i-q_iC_0)^2}{\sum_iq_i(1-q_i)}.
\]
Both have expectation $d$ and coincide pointwise in the equal-weight case
(all $q_i$ equal), so it is tempting to use the latter as a substitute in
practice.  But on a Rateless prefix with decreasing $q_i$, the two have
different distributions: $\widetilde d$ lets the few early, high-variance
cells dominate the statistic, converging to a weighted chi-square with
unequal weights whose effective degrees of freedom
$(\sum_i w_i)^2/\sum_i w_i^2$ (with $w_i=q_i(1-q_i)$) is far smaller than
$r$ and saturates quickly as the prefix grows, independent of $d$.  Applying
the $\chi^2_r$ interval of
Proposition~\ref{prop:rateless-ci} to $\widetilde d$ reduces the empirical
coverage of the nominal $90\%$ interval to approximately $50\%$.  Correctness
of the mean does not imply correctness of the distribution: the
Proposition~\ref{prop:rateless-ci} confidence interval holds only for the
average-of-ratios form, and implementations must use the two together.

\subsection*{Finite-$(d,m)$ coverage calibration and PRNG implementation
check}

Proposition~\ref{prop:rateless-ci} is a fixed-$m$, $d\to\infty$ limit, so at
finite $d$ the chi-square approximation leaves a calibration residual.  At
$d=16384$, the $m=1024$, $256$, $64$ tiers correspond to $d/r=16$, $64$,
$260$.  The paper-facing experiment contains seven sign compositions, three
prefixes, and three nominal levels---63 cells in total.  Every cell stays
within $0.324$ percentage points of nominal.  The residual is most visible
at the smallest-$d/r$ tier: all seven nominal-90\% cells at $m=1024$ lie
$0.084$--$0.324$ percentage points below nominal, averaging $0.172$ points
below.  Over all three nominal levels, the maximum absolute deviations at
$m=64$ and $256$ are only $0.157$ and $0.130$ points, respectively.  Thus
the complete sign-composition grid retains the finite-$(d,m)$ trend, at a
few tenths of a percentage point rather than the multi-tens-of-points
artifact produced by the historical ratio-of-sums bug.

A separate balanced-sign diagnostic compares the two sampling mechanisms
rather than expanding the sign-composition grid.  Across the two samplers,
three prefixes, and three nominal levels---18 diagnostic cells---coverage
stays within $0.22$ percentage points of nominal: 13 cells sit slightly below
and 5 slightly above.  Its single largest deviation, $0.21$ points at
$(m=64,90\%)$ on the official sampler, is a Monte Carlo fluctuation and is
not used to summarize the 63-cell paper-facing grid.

We next test whether the 64-bit PRNG used by the pinned implementation
contributes to the observed coverage error.  The pinned implementation derives
jump distances from a 64-bit PRNG, while
Proposition~\ref{prop:rateless-ci} assumes a fresh independent $U$ per
step.  If the recursion's correlation entered the second moments, the
deviation would not come from finite $d$ and would not vanish as $d$ grows.
The criterion is direct: run the ideal fresh-uniform kernel on the same
$(d,m)$; the ideal mechanism contains no PRNG, so if it deviates equally,
the deviation can only come from finite $d$.  At $m=1024$ with 100{,}000
trials per sampler:

\begin{table}[!hbtp]
\centering
\caption{Coverage of the ideal fresh-uniform kernel versus the official
pinned sampler at $m=1024$, $d=16384$, 100{,}000 trials per sampler.}
\label{tab:prng-check}
\small
\begin{tabular}{lrr}
\toprule
nominal & ideal fresh-uniform kernel & official pinned sampler \\
\midrule
90\% & 0.8982 & 0.8988 \\
95\% & 0.9486 & 0.9488 \\
\bottomrule
\end{tabular}
\end{table}

The two deviate at the same scale, so the PRNG recursion is not the source
(Table~\ref{tab:prng-check}).  In particular, the fresh-uniform mechanism
contains no pinned-PRNG recursion yet reproduces the systematic
smallest-$d/r$ deficit.  The residual reported in the main text is therefore
the finite-$(d,m)$ calibration error relative to the fixed-$m$, $d\to\infty$
chi-square limit, with ordinary Monte Carlo fluctuation superimposed on
individual cells.

\subsection*{Mapping-statistics verification}

To verify Lemmas~\ref{lem:rateless-hit}--\ref{lem:rateless-pairwise}
empirically, we sample $10^9$ independent mapping paths from both the
official pinned sampler and the fresh-uniform kernel, observing cells
$1,\dots,1023$.  The marginal frequencies match the closed-form $q_i$ within
sampling error in both samplers, and a standardized test over all
$\binom{1023}{2}=522{,}753$ cell pairs shows the joint-frequency deviation
$\widehat{\Pr}(B_i{=}1,B_j{=}1)-q_iq_j$ is pure sampling noise with no
systematic drift, confirming pairwise independence in both.

\begin{figure}[!hbtp]
\centering
\includegraphics[width=0.8\textwidth]{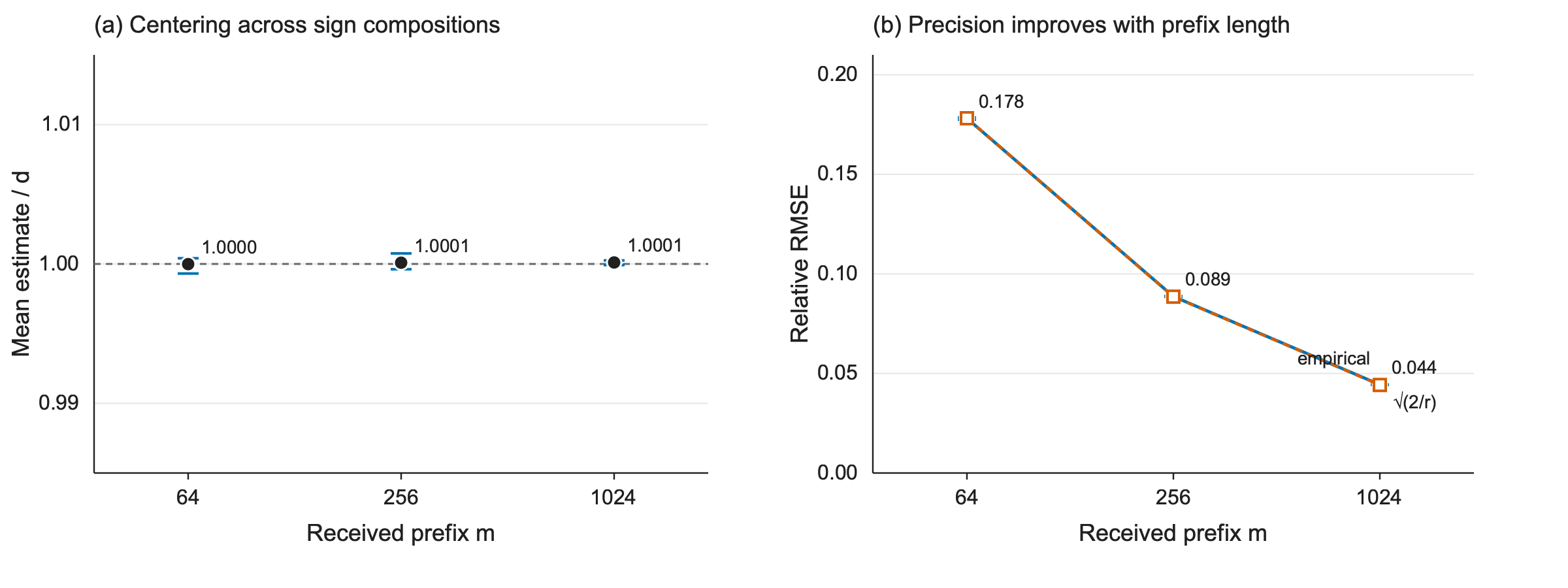}
\caption{Rateless prefix accuracy.  Top: mean($\dhat/d$) (points, vertical
ranges show the minimum and maximum over seven sign compositions); bottom:
RMSE($\dhat/d$) (points) compared with the theoretical curve $\sqrt{2/r}$
with $r=m-1$.}
\label{fig:r1}
\end{figure}

\section{Rateless IBLT: Production-Scale Comparison}
\label{app:rateless-comparison}

To test whether Rateless IBLT is viable under database reconciliation's
scan-dominated cost model, we compare three methods on two production-shaped
snapshots.  All methods share the same row-fingerprint snapshot, so the
comparison isolates the algorithm from the scan.

\begin{enumerate}[itemsep=0pt,topsep=2pt]
  \item \textbf{Self-sizing} (SS): Plain IBLT first round
  $M_1=512$, estimate $\dhat$, second round at $\alpha=1.824$.
  \item \textbf{Rateless fixed-prefix} (RF): a fixed Rateless prefix of
  $M=249{,}555$ symbols, constructed and transmitted in one batch.  The
  prefix is a fixed hard cap chosen generously above the decodable minimum
  (on P2, the minimal decodable prefix is only about $5{,}932$ symbols).
  \item \textbf{Rateless incremental} (RI): the official incremental
  encoder, starting with 1{,}024 symbols and extending until the
  receiver's horizon predictor signals convergence.
\end{enumerate}

Table~\ref{tab:rateless-vs-self-sizing} summarizes both snapshots.  The primary
comparison runs on the P2 snapshot (source $31{,}034{,}507$ rows, target
$31{,}034{,}459$, ground truth $d^+=720$, $d^-=672$), which has stable
repeated runs for all three methods plus a 22.5\,ms RTT point and complete
wire/memory telemetry.  The P1-extra snapshot (${\sim}609$M rows,
$d^+=109{,}208$, $d^-=44{,}890$) serves as the scale stress point.

\begin{table}[!hbtp]
\centering
\caption{Self-sizing versus Rateless on the two snapshots.  Wire bytes are
per endpoint (both directions combined double them).  SS timing on P1-extra
is the range of two runs.}
\label{tab:rateless-vs-self-sizing}
\small
\setlength{\tabcolsep}{4pt}
\begin{tabular}{lllcc}
\toprule
snapshot & rows (src / tgt) & method & wall time (s) & wire bytes / endpoint \\
\midrule
\multirow{3}{*}{P2}
  & \multirow{3}{*}{31{,}034{,}507 / 31{,}034{,}459}
  & SS     & 5.584--5.611 & 91{,}424 \\
  & & RF ($M=249{,}555$) & 25.115--25.158 & 7{,}985{,}760 \\
  & & RI     & 141.393--141.824 & 189{,}824 \\
\midrule
\multirow{3}{*}{P1-extra}
  & \multirow{3}{*}{$\sim$609M / $\sim$609M}
  & SS     & 115.6--119.8 & 8{,}002{,}144 \\
  & & RF ($M=249{,}555$) & 490.4--491.9 & 7{,}985{,}760 \\
  & & RI     & 5{,}715.8--5{,}803.6 & 6{,}960{,}320 \\
\bottomrule
\end{tabular}
\end{table}

\paragraph{Timing.}
On P1-extra, self-sizing completes in ${\sim}120$\,s.  The fixed-prefix method
takes ${\sim}490$\,s despite transmitting a comparable byte volume, because
the endpoint must execute the Rateless mapping over all rows for every
coded symbol batch.  The incremental method takes ${\sim}5{,}800$\,s, about
$48\times$ slower than self-sizing, with endpoint memory peaking at
${\sim}93$\,GiB---three orders of magnitude above self-sizing's endpoint,
which peaks at a few tens of MiB.  Its wire bytes are the lowest on this
snapshot, but the saving is outweighed by the computation and memory cost.

\paragraph{Why.}
The official incremental encoder retains per-source mapping and scheduling
state that is $O(N)$ in size: producing a prefix advances these shared
mapping paths,
with mapping and heap work that grows with the requested prefix.  In
contrast, a Plain IBLT encoder is stateless---each row independently
increments $k$ cells---so memory scales with the table size $M$, not the
row count $N$.  When $N$ is large and the reconciliation runs in a
request/response pattern (one scan per round, not a persistent stream),
the $O(N)$ encoder cost converts Rateless's bandwidth advantage into a
net loss on wall time and memory.

\section{MET Adapter Proofs and the Identifiability Boundary}
\label{app:met}
\label{app:met-proofs}

\subsection*{Lemma~\ref{lem:met-partial} (partial-type correction)}

A key's $a$ positions are chosen uniformly without replacement among the
$m$ cells; the $r$ received cells fix $r$ of these positions, so the hit
count follows a hypergeometric distribution with mean and variance
\[
  \E[L]=\frac{ra}{m},
  \qquad
  \Var(L)=r\cdot\frac{a}{m}\left(1-\frac{a}{m}\right)\cdot\frac{m-r}{m-1}.
\]
Substituting $\E[L^2]=\Var(L)+\E[L]^2$:
\[
  \gamma_r(a)
  =\frac{ra}{m}-\frac{a}{m}\cdot\frac{m-a}{m}\cdot\frac{m-r}{m-1}-\frac{ra^2}{m^2}.
\]
The first and last terms combine to $\frac{a(m-a)}{m^2}\,r$, so
\[
  \begin{aligned}
  \gamma_r(a)
  =\frac{a(m-a)}{m^2}\left[r-\frac{m-r}{m-1}\right]
  &=\frac{a(m-a)}{m^2}\cdot\frac{r(m-1)-(m-r)}{m-1}\\
  &=\frac{a(m-a)}{m^2}\cdot\frac{m(r-1)}{m-1}
  =\frac{a(m-a)(r-1)}{m(m-1)}.
  \end{aligned}
\]

\subsection*{Proposition~\ref{prop:met-identifiability} (identifiability
boundary)}

The expression for $\E[E_t]$ applies Lemma~\ref{lem:met-partial} to each
data type and sums over that type's keys.  Keys are independent, so
cross-term expectations vanish, and the expansion parallels
Theorem~\ref{thm:master}.
\emph{Sufficiency:} $\E[w^\top E]=w^\top\Gamma f=\one_J^\top f=F_2$.
\emph{Necessity:} a linear estimator $w^\top E$ must be unbiased for
\emph{every} non-negative $f$, i.e., $w^\top\Gamma f=\one_J^\top f$ must
hold identically; the non-negative orthant spans $\mathbb R^J$, so
$w^\top\Gamma=\one_J^\top$.

\subsection*{The ill-conditioned general-matrix adapter (unknown $p_j$ path)}

When the assignment fractions $p_j$ are unknown, only
Proposition~\ref{prop:met-identifiability}'s general matrix adapter is
available (take $w^\top\Gamma=\one_J^\top$ and estimate $F_2$ by $w^\top E$);
whether measurement is possible is governed by the identifiability gate.
Passing the gate guarantees only that an unbiased estimate \emph{exists};
it does not guarantee that this linear combination has usable variance.  In
the end-to-end experiments of the partial-type appendix, the first point
just past the gate (entering type 3) is highly ill-conditioned: at $r=2$
the matrix adapter's relative RMSE is $102.9$--$107.9$, and even with type 3
fully received it is still $12.82$--$12.97$.  The relative RMSE becomes
usable only after entering type 5, whose mapping direction adds a new
dimension: at $r/m=1/4$ the RMSE
drops to $0.865$--$0.880$, and with type 5 complete to
$0.459$--$0.467$, with mean intervals $0.993$--$1.006$ and $0.997$--$1.005$
respectively.  The general matrix adapter therefore must check weights or
variance beyond the identifiability gate; the known-$p_j$ scalar path
normalizes more stably on the same prefix, which is why the protocol uses it
by default.  For unbiasedness the two paths agree: the empirical mean
deviations of all identifiable points fall within $3.16$ Monte Carlo
standard errors, with the maximum exactly at the first identifiable point:
its RMSE exceeds $100$, so its Monte Carlo mean is the noisiest of the set,
and that deviation reflects sampling noise under the high variance rather
than bias; all other
points with usable precision agree with the unbiasedness of
Proposition~\ref{prop:met-identifiability}.

\section{MET Partial-Type Adapter: Full Statements and Validation}
\label{app:met-partial}

This appendix gives the complete statements, boundary checks,
identifiability discussion, and two-scale simulation validation of the MET
variant adapter in \S\ref{sec:met}.  The proofs are in
Appendix~\ref{app:met-proofs}.

MET-IBLT~\cite{lazaro2023ratecompatible} partitions cells into cell types and keys into data types: a cell
type is a group of cells sharing the same mapping parameters, and a data
type is a group of keys sharing the same mapping rule; together they
specify how the IBLT cells are partitioned and which cell types each data
type maps to.  Formally, type $t$ has $m_t$ cells, and each key of data
type $j$ chooses $a_{tj}$ cells uniformly without replacement inside type
$t$.

Transmission is a cell stream, so at any moment the receiver typically holds
only \emph{part} of a cell type: the group has $m_t$ cells, but the current
prefix contains only $r_t$ of them, a partial type.  Applying the
complete-type normalization to a partial prefix would overstate the
information contained in the received cells.  Lemma~\ref{lem:met-partial}
provides the corresponding finite-prefix correction.  The two results below
therefore show that a partial type already carries analytically computable
measurement
information, and whether total $F_2$ can be recovered has an exact linear
algebra criterion.

\subsection*{Lemma~\ref{lem:met-partial} (partial-type correction)}

\emph{Boundary checks.}  $\gamma_1(a)=0$: with one received cell there is no
residual degree of freedom after centering within the type;
$\gamma_m(a)=a-a^2/m$: the complete-type correction when all cells of the
type are received.  $a\in\{0,m\}$ also gives $\gamma_r(a)=0$, because that
type provides no centered energy for that data type.  Excluding these
degenerate rows, estimability does not have to wait for a complete-type
boundary: a partial MET prefix can already contribute to the estimate before
the complete type has been received.  The proof is in
Appendix~\ref{app:met-proofs}.

\begin{proposition}[Identifiability boundary]
\label{prop:met-identifiability}
Let there be $J$ data types and let
$f=(F_{2,1},\ldots,F_{2,J})^\top$ be their second moments, with target total
$F_2=\one_J^\top f$.  For each (partially) received cell type $t$ (having
received $r_t$ of $m_t$ cells), let $C_t\in\mathbb R^{r_t}$ be its received
counts, $Q_{r_t}=I_{r_t}-\one\one^\top/r_t$, and define the within-type
centered energy $E_t=C_t^\top Q_{r_t}C_t$.  Then
\[
  \E[E_t]=\sum_j\gamma_{tj}F_{2,j},
  \qquad
  \gamma_{tj}=\frac{a_{tj}(m_t-a_{tj})(r_t-1)}{m_t(m_t-1)},
\]
which stacks to $\E[E]=\Gamma f$.  The total $F_2$ can be linearly and
unbiasedly identified from the current prefix iff
\[
  \one_J\in\operatorname{rowspan}(\Gamma),
\]
i.e., there exists $w$ with $w^\top\Gamma=\one_J^\top$; in that case
$w^\top E$ is an unbiased estimator of $F_2$.
\end{proposition}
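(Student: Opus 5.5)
The proof has two parts: first compute $\E[E_t]$ type by type, then treat identifiability as a linear-algebra question about the stacked mean $\E[E]=\Gamma f$.

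\emph{Mean of $E_t$.} Fix a received cell type $t$ and write $C_t=\sum_x w_x\,a_{x,t}$. Here $a_{x,t}\in\{0,1\}^{r_t}$ is the restriction to the $r_t$ received cells of key $x$'s hit pattern inside type $t$, and $w_x$ is the signed frequency difference, so $F_{2,j}=\sum_{x\in j}w_x^2$. Expanding the quadratic form bilinearly, as in Appendix~\ref{app:thm1}, gives $E_t=\sum_{x,y}w_xw_y\,a_{x,t}^\top Q_{r_t}a_{y,t}$.

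For $x\ne y$ the mappings are independent, so $\E[a_{x,t}^\top Q_{r_t}a_{y,t}]=\mu_{x}^\top Q_{r_t}\mu_{y}$. By symmetry of uniform selection without replacement, each received cell of type $t$ is hit by a data-type-$j$ key with the same probability $a_{tj}/m_t$. The restricted mean is therefore a multiple of $\one_{r_t}$, and $Q_{r_t}$ annihilates it. This is the $Q\mu_j=0$ condition of Theorem~\ref{thm:master} applied per cell type, and it kills every cross term, including cross terms between different data types.

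For the diagonal terms, $a_{x,t}^\top Q_{r_t}a_{x,t}=L-L^2/r_t$, where $L$ is the number of received cells the key hits. $L$ is hypergeometric, and Lemma~\ref{lem:met-partial} gives exactly $\E[L-L^2/r_t]=\gamma_r(a_{tj})$ with $(m,r,a)=(m_t,r_t,a_{tj})$. One point needs checking: the Lemma's computation divides by $r$ in the $\E[L]^2$ term, that is, it evaluates $\E[L]-\E[L^2]/r$. This matches the centering by $Q_{r_t}$, so the closed form carries over directly. Summing $w_x^2$ times these diagonal expectations over the keys of each type gives $\E[E_t]=\sum_j\gamma_{tj}F_{2,j}$, and stacking over $t$ gives $\E[E]=\Gamma f$.

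\emph{Identifiability.} Sufficiency is immediate: if $w^\top\Gamma=\one_J^\top$, then $\E[w^\top E]=w^\top\Gamma f=\one_J^\top f=F_2$.

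For necessity, I would restrict to estimators linear in the observed energy vector $E$; that is the sense of ``linearly identified''. Unbiasedness must hold for every admissible frequency configuration, and $f$ can be placed anywhere in the nonnegative orthant, for example by choosing how many unit-weight keys each data type contributes. So $(w^\top\Gamma-\one_J^\top)f=0$ holds on a set spanning $\mathbb R^J$, which forces $w^\top\Gamma=\one_J^\top$, i.e.\ $\one_J\in\operatorname{rowspan}(\Gamma)$.

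The main obstacle is making ``linearly identified'' precise enough for necessity to be meaningful. Allowing nonlinear functions of $C$ would change the question. I would state explicitly that the class is linear functionals of $E$, and that the achievable $f$ span $\mathbb R^J$, with integer counts sufficient by scaling. The mean computation itself is routine once per-type centering is matched with the hypergeometric lemma.
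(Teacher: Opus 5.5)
Your proposal is correct and follows the same route as the paper's proof. You expand $E_t$ bilinearly, kill the cross terms because per-type centering annihilates the constant mean, take the diagonal terms from the hypergeometric computation of Lemma~\ref{lem:met-partial}, get sufficiency by substituting $w^\top\Gamma=\one_J^\top$, and get necessity because unbiasedness must hold on a set of $f$ spanning $\mathbb R^J$. Your check that $a_{x,t}^\top Q_{r_t}a_{x,t}=L-L^2/r_t$ matches the Lemma's normalization, and your explicit restriction to fixed-weight linear functionals of $E$, make precise what the paper's terse proof leaves implicit. You may want to rename the frequency weights $w_x$ so they do not clash with the weight vector $w$.
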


\emph{Proof.}  See Proposition~\ref{prop:met-identifiability} in
Appendix~\ref{app:met-proofs}.

The criterion can identify the total $F_2$ without identifying every data
type separately, and known assignment fractions can make identification
possible even earlier.

If $\Gamma$ had full column rank, each data type's second moment $F_{2,j}$
could be identified separately, and the total follows.  But we only need the
total, so the criterion requires only $\one_J$ in the row space of $\Gamma$:
many prefixes whose $\Gamma$ is rank-deficient and cannot separate data
types still suffice to identify the total $F_2$ unbiasedly.

If each key's data type is assigned by an independent hash with known
fractions $p_j$, and this random assignment is included in the expectation,
then $\E[F_{2,j}]=p_jF_2$, so each received cell type's energy coefficient
merges into $\sum_j p_j\gamma_{tj}$, and a single cell type already gives an
unbiased estimate of total $F_2$---the total becomes identifiable from a
shorter prefix than the general matrix criterion requires.

When neither path holds ($\one_J\notin\operatorname{rowspan}(\Gamma)$ and
the fractions $p_j$ unknown), total $F_2$ is generally unidentifiable from
the current prefix; the adapter should return \texttt{UNIDENTIFIABLE} and
either continue receiving cells or enter fallback without producing an
estimate.  ``Identifiable'' here is a control criterion: it asks
only whether the current observations uniquely determine the total
difference cardinality $F_2$ that the controller cares about, and does not
require decomposing $F_2$ back into per-data-type values.  After
Lemma~\ref{lem:met-partial} and
Proposition~\ref{prop:met-identifiability}, the unsolved case on a MET
partial prefix reduces exactly to this one; all other scenarios
have an explicit estimation formula or state criterion.

\subsection*{Simulation validation: partial-type calibration and
incremental estimation}

Validation has two layers.  The first layer validates the normalization
constant $\gamma_r(a)$ of Lemma~\ref{lem:met-partial} independently of the
full MET mapper: it samples the hypergeometric hit count $L$ directly and
compares $\operatorname{mean}(L-L^2/r)$ against $\gamma_r(a)$, locating
potential algebraic or implementation errors in the finite-prefix correction
without involving the mapper and block centering.  The second layer puts the
verified constant back into the full estimation chain and examines three
behaviors of the complete adapter on a real MET mapper: whether the estimate
is unbiased, whether precision improves steadily with the prefix, and
whether the M2 identifiability criterion flips from \texttt{UNIDENTIFIABLE}
to \texttt{IDENTIFIABLE} as expected on real prefixes.  Both experiments use
the three data types of the literature design, with hash assignment
probabilities $(0.1959,0.1904,0.6137)$ and write degrees $1$--$5$ inside
the different cell types.

\subsubsection*{Partial-type local calibration}

The first layer runs over the first six cell types, whose sizes double from
$m=50$ to $1600$; for each cell type and data type, at received fractions
$r/m\approx1/4,1/2,3/4$, it compares the empirical mean
$\operatorname{mean}(L-L^2/r)$ against the closed form $\gamma_r(a)$ with
$L\sim\operatorname{Hypergeometric}(m,a,r)$, giving
$6\times3\times3=54$ non-degenerate check points ($10^6$ independent $L$
per point).

\begin{table}[!hbtp]
\centering
\caption{Partial-type local calibration of $\gamma_r(a)$ (54 points,
$10^6$ samples each).}
\label{tab:met-calib}
\small
\begin{tabular}{lrrrr}
\toprule
received fraction $r/m$ & points & max abs.\ relative error & mean signed relative error & max $|z|$ \\
\midrule
$\approx1/4$ & 18 & $0.218\%$ & $+0.0068\%$ & $1.48$ \\
$1/2$       & 18 & $0.309\%$ & $-0.0168\%$ & $3.09$ \\
$\approx3/4$ & 18 & $0.092\%$ & $+0.0149\%$ & $1.59$ \\
\bottomrule
\end{tabular}
\end{table}

The maximum absolute relative error over the 54 points is $0.309\%$ (Table~\ref{tab:met-calib}).  The
worst point is $(m,a,r)=(1600,1,800)$, with empirical mean $0.497832$ versus
closed form $0.499375$ and standardized error $|z|=3.09$.  The average
errors at the three received fractions are all near zero with inconsistent
signs and no systematic drift in $m$, $a$, or $r/m$, directly supporting the
finite-prefix factor $(r-1)/(m-1)$ of $\gamma_r(a)$.  These Monte Carlo
points cover the three received fractions within a type; $r=1$ and $r=m$
remain covered by the analytic boundary checks above.

\subsubsection*{End-to-end incremental estimation (known-fraction path)}

The second layer generates complete count arrays on a real MET mapper and
follows the known-fraction special case of
Proposition~\ref{prop:met-identifiability}: for each received partial type,
compute the within-type centered energy, normalize by
$\sum_j p_j\gamma_{tj}$ using the known data-type assignment probabilities,
and accumulate across types to obtain a scalar estimate $\dhat$.  Fixing
$d=16384$ and sweeping the difference fractions
$\{0,0.1,0.25,0.5,0.75,0.9,1\}$, estimates are taken at $r=2$ and at
$r/m\approx1/4,1/2,3/4,1$ along the first five cell types.  Table~\ref{tab:met-e2e} reports
representative prefix positions within the first three cell types; intervals
are min--max
over the seven sign compositions (20{,}000 trials per configuration;
$7\times20{,}000\times25=3{,}500{,}000$ prefix observations).

\begin{table}[!hbtp]
\centering
\caption{MET end-to-end incremental estimation (known-fraction path,
$d=16384$; min--max over seven sign compositions).}
\label{tab:met-e2e}
\small
\begin{tabular}{lrrrr}
\toprule
current type & $r/m$ of current type & cumulative prefix & mean($\dhat/d$) & RMSE($\dhat/d$) \\
\midrule
1 & $\approx1/4$ & 13  & $0.996$--$1.003$ & $0.405$--$0.411$ \\
1 & $1/2$       & 25  & $0.998$--$1.002$ & $0.288$--$0.293$ \\
1 & $\approx3/4$ & 38 & $0.998$--$1.002$ & $0.231$--$0.236$ \\
1 & $1$         & 50  & $0.999$--$1.001$ & $0.200$--$0.204$ \\
2 & $1/2$       & 100 & $0.999$--$1.001$ & $0.160$--$0.163$ \\
2 & $1$         & 150 & $1.000$--$1.001$ & $0.134$--$0.137$ \\
3 & $1/4$       & 200 & $1.000$--$1.001$ & $0.124$--$0.126$ \\
3 & $1$         & 350 & $1.000$--$1.001$ & $0.101$--$0.102$ \\
\bottomrule
\end{tabular}
\end{table}

Across all 20 prefixes with $r>2$, the maximum deviation of the mean from 1
over the seven sign compositions is $0.36\%$, with a maximum standardized
deviation of $1.49$ Monte Carlo standard errors and no systematic drift in
sign fraction or prefix.  As type 1 grows from about one quarter received to
complete, the RMSE falls from about $0.41$ to $0.20$; continuing into later
types, it drops to about $0.10$ at prefix 350 (Table~\ref{tab:met-e2e}).  These
results show that the estimate becomes progressively more precise as
additional cells within a type arrive; it need not wait for a complete-type
boundary.

\subsubsection*{The identifiability state boundary}

The scalar path above assumes known fractions $p_j$; when $p_j$ is unknown,
only the general matrix adapter of
Proposition~\ref{prop:met-identifiability} is available (take
$w^\top\Gamma=\one_J^\top$ and estimate $F_2$ by $w^\top E$), and whether
measurement is possible is governed by the identifiability criterion.  On the
same end-to-end prefix, we compute the current $\Gamma$'s rank and whether
$\one^\top$ lies in its row space step by step: with only type 1 received,
$\operatorname{rank}(\Gamma)=1$; with the first two types, rank 2; both
return \texttt{UNIDENTIFIABLE}.  After entering type 3 and receiving at
least two cells, the rank rises to 3 and the adapter turns
\texttt{IDENTIFIABLE}.  This state depends only on the prefix's mapping
parameters, not on the observed count values, so it stably reproduces the
\texttt{UNIDENTIFIABLE}$\rightarrow$\texttt{IDENTIFIABLE} transition defined by
Proposition~\ref{prop:met-identifiability}.

Passing the identifiability gate guarantees only that an unbiased estimate
exists, not that its variance is usable: the general matrix adapter can be
highly ill-conditioned just past the gate (its relative RMSE exceeds
100 upon entering type 3 and decreases only after later types provide
additional mapping directions; tier-by-tier numbers are in
Appendix~\ref{app:met-proofs}),
so weights or variance must be checked beyond the identifiability gate (Figure~\ref{fig:m1}).
This is also why the protocol defaults to the known-$p_j$ scalar path.  For
unbiasedness the two paths agree: the empirical mean deviations of all
identifiable points fall within $3.16$ Monte Carlo standard errors.

\begin{figure}[!hbtp]
\centering
\includegraphics[width=0.95\textwidth]{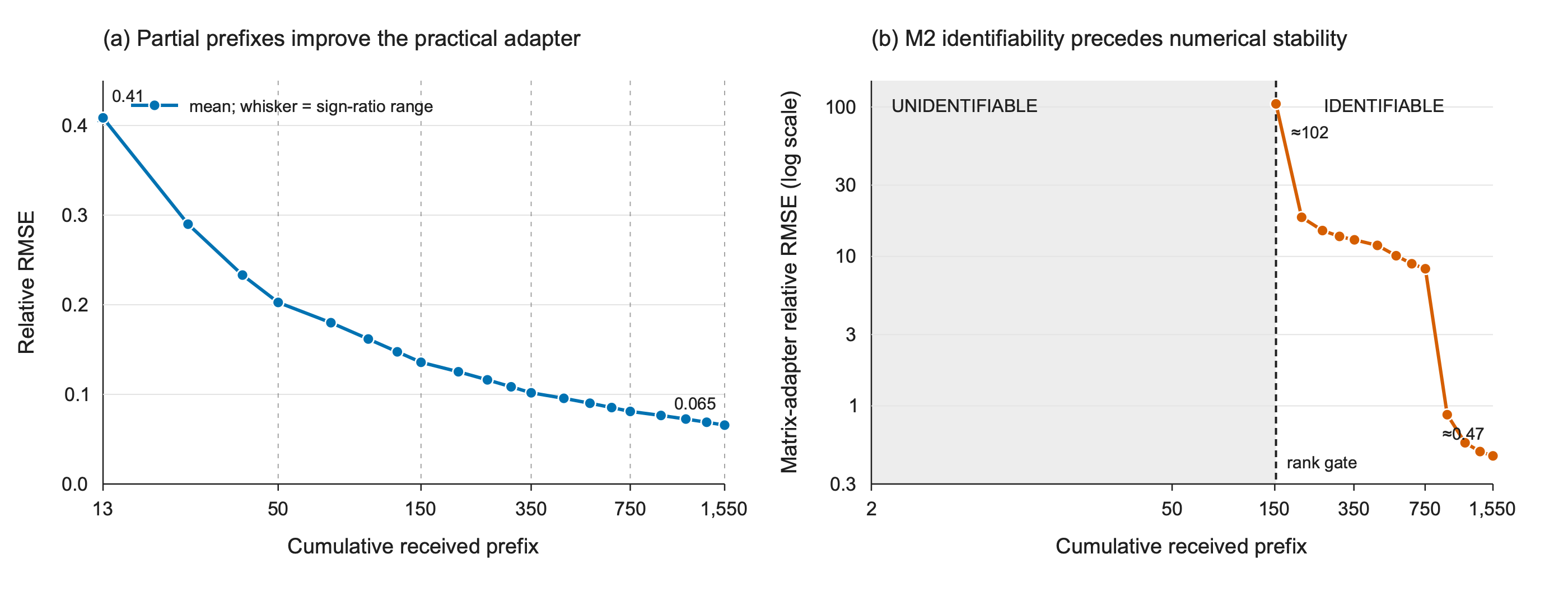}
\caption{MET partial-prefix precision and the M2 identifiability boundary.
Left: relative RMSE of the known-$p_j$ adapter over 20 prefixes with
$r>2$; points are means over the seven sign compositions with min--max
whiskers, and precision improves progressively as additional cells arrive,
both within a cell type and across type boundaries.  Right: the general
matrix adapter on a log y-axis; left of the dashed line the M2 row-space
condition fails, and after entering type 3 the estimate is identifiable but
still has very large variance, decreasing only after later types introduce
additional mapping directions.
Both x-axes are logarithmic to show early cell-level prefixes and later
type boundaries together.}
\label{fig:m1}
\end{figure}

\section{Proof of Corollary~\ref{cor:variant-failure} (Failure Conditioning
for Variants)}
\label{app:variant-failure}

\subsection*{Verification of the premises for the three variants}

Corollary~\ref{cor:variant-failure} requires the failure event $F$ to be a
measurable function of the mappings $\{a_x\}$ and the signs to be
independent of the mappings.  Both premises hold for the three variants.
For Irregular IBLTs, failure is determined by the 2-core of a hypergraph
with random edge degrees.  For Rateless and MET IBLTs, it is determined by
the 2-core of the hypergraph induced by the received prefix.  In every
case, the failure event depends only on the mappings and not on the signs.
The assumption of i.i.d.\ uniform $\pm1$ signs independent of the mappings
is the same as in \S\ref{sec:failure-cond}.  Together with the Plain
adapter, these observations establish the premises for all four adapters,
so the corollary applies directly to them.

\subsection*{Proof}

(i) Conditioned on all the mappings,
\[
  C^\top QC=\sum_xe_x+2\sum_{x<y}s_xs_y\,a_x^\top Qa_y.
\]
The cross-term coefficients are constants and $\E[s_xs_y]=0$, so
$\E[C^\top QC \given \{a_x\}]=\sum_xe_x$ holds for every mapping
realization.  Since $\mathbf1_F$ is a measurable function of the mappings,
$\E[\dhat\,\mathbf1_F]=\gamma^{-1}\E[\sum_xe_x\,\mathbf1_F]$, and dividing
by $p_F$ gives (i).

(ii) In this case $\sum_xe_x\equiv d\gamma$ almost surely; substitute into
(i).

(iii) By (i) and $\E[\sum_xe_x]=d\gamma$,
\[
  \E[\dhat \given F]-d
  =\frac{\Cov\bigl(\sum_xe_x,\mathbf1_F\bigr)}{\gamma\,p_F}.
\]
Combining Cauchy--Schwarz with $\Var(\mathbf1_F)=p_F(1-p_F)$ and, by
element independence, $\Var(\sum_xe_x)=d\sigma_E^2$ yields (iii).

(iv) Fix the received prefix, so that $w$, $\Gamma$ and the within-type
centering matrices $Q_{r_t}$ are deterministic.  Expanding each
$E_t=C_t^\top Q_{r_t}C_t$ as in (i) and forming their weighted combination
using $w$,
\[
  \dhat=w^\top E=\sum_xe_x
  +2\sum_{x<y}s_xs_y\sum_tw_t\,a_{x,t}^\top Q_{r_t}a_{y,t},
  \qquad
  e_x=\sum_tw_t\,a_{x,t}^\top Q_{r_t}a_{x,t}.
\]
The cross-term coefficients are again functions of the mappings alone and
$\E[s_xs_y]=0$, so $\E[\dhat \given \{a_x\}]=\sum_xe_x$ for every mapping
realization, which is (i) with $\gamma=1$.  For an element of data type $j$,
Proposition~\ref{prop:met-identifiability}'s per-type expectation gives
$\E[e_x]=\sum_tw_t\gamma_{tj}=(w^\top\Gamma)_j=1$, so
$\E[\sum_xe_x]=\sum_jd_j=d$: the weight condition $w^\top\Gamma=\one_J^\top$
is what makes the heterogeneous data types share one normalization.  If every
received cell type is complete ($r_t=m_t$), then
$a_{x,t}^\top Q_{r_t}a_{x,t}=a_{tj}-a_{tj}^2/m_t=\gamma_{m_t}(a_{tj})$ is
determined by the data type, hence $e_x=(w^\top\Gamma)_j=1$ almost surely and
the conditional mean is exactly $d$.  Otherwise the elements remain
independent but are no longer identically distributed across data types, so
$\Var(\sum_xe_x)=\sum_jd_j\sigma_{E,j}^2$, and the Cauchy--Schwarz step of
(iii) gives \eqref{eq:cor41-bound-met}.

\subsection*{Instantiation for the three variants}

\begin{itemize}[itemsep=0pt,topsep=2pt]
  \item \emph{Irregular:} $e_x=D_x-D_x^2/M$ is completely determined by
  the random degree $D_x$, so
  \[
    \sigma_E^2
    =\E[D^2]-\frac{2\,\E[D^3]}{M}+\frac{\E[D^4]}{M^2}-\gamma_{\mathrm{irr}}^2,
  \]
  depending only on the first four moments of the degree distribution.  For
  the $\Lambda$ design used in the validation of \S\ref{sec:irregular}
  ($\E[D^3]=749.775$, $\E[D^4]=13183.125$), at $M=1024$
  $\sigma_E/\gamma_{\mathrm{irr}}\approx1.05$, so the relative conditional
  deviation is at most $1.05\,d^{-1/2}\sqrt{(1-p_F)/p_F}$.
  \item \emph{Rateless:} the prefix self-energy varies with the cells
  visited by the random mapping path, and $\sigma_E^2$ can be computed explicitly symbol by symbol
  from Lemma~\ref{lem:rateless-hit}'s closed-form $q_i$ and
  Lemma~\ref{lem:rateless-pairwise}'s pairwise independence.
  \item \emph{MET:} this is the multi-type case (iv).  A prefix whose
  received cell types are all complete gives $e_x\equiv1$ and exact
  conditional unbiasedness; when partial types are present, each data type's
  $\sigma_{E,j}^2$ follows from the hypergeometric moments of
  Lemma~\ref{lem:met-partial} weighted by $w$, and
  \eqref{eq:cor41-bound-met} applies.  Both statements use the identifiability
  weights $w^\top\Gamma=\one_J^\top$ of
  Proposition~\ref{prop:met-identifiability}; on a prefix where those weights
  do not exist the adapter returns \texttt{UNIDENTIFIABLE} without producing
  an estimate, so failure conditioning does not arise.
\end{itemize}

\subsection*{Failure-conditioning check for the Irregular instance}

Take $d=1024$, the $\Lambda$ design of \S\ref{sec:irregular}, and ten
selected values of $M$ between 1024 and 1300 covering the transition region
from $p_F=1$ to $p_F=0.018$, with i.i.d.\ random signs and two fixed sign
compositions at 20{,}000 trials each (600,000 decoding trials in total).
Under i.i.d.\ signs, the conditional-mean deviation is visible in the
transition region and positive:
at $M=1150$ ($p_F=0.192$), mean$(\dhat/d \given F)=1.0127$; at $M=1175$
($p_F=0.062$), $1.0181$; both are between 11 and 14 Monte Carlo standard
errors (13.7 and 11.1 respectively),
providing clear empirical evidence of the mapping-selection effect described
in part (iii).  All
adequately sampled points with $0<p_F<1$ stay inside (iii)'s bound by a
wide margin (e.g., the bound is $0.0676$ at $M=1150$ and $0.1285$ at
$M=1175$).  At the deep-overload end ($M\le1075$), all 20{,}000 trials at
each configuration fail to decode, giving an empirical failure rate of
$\widehat p_F=1$.  The measured conditional-mean deviation is at most
$4.6\times10^{-4}$, or $1.2$ Monte Carlo standard errors, and is therefore at
the noise floor.  Because no successful decode was observed at these tiers,
the experiment cannot resolve how close the true $p_F$ is to $1$; we therefore
do not evaluate the plug-in bound using the empirical failure rate.  The
fixed one-sided sign comparison shows no directional
deviation of the scale seen for Plain under one-sided signs.  Under fixed
one-sided signs, the additional directional deviation is at most $0.5\%$
and only marginally significant.  The observed failure-conditioned
deviation for the Irregular adapter therefore appears to be dominated by
selection through the random self-energy rather than by the sign cross
terms.  This differs from the Plain $k$-regular case, where the self-energy
is deterministic and any conditional deviation under fixed signs comes from
the cross terms.  We treat this mechanism-level comparison as exploratory.

\section{Self-Sizing Protocol Supplement---Capacity Configuration
and Protocol Cost}
\label{app:protocol-details}

This appendix supplements the three protocol-level results of
\S\ref{sec:protocol}: the first subsection gives the tier-by-tier
configuration of first-round capacity, failure-conditioned lower quantile,
second-round multiplier, and success rate; the second examines parameter
sensitivity and clarifies the cost-accounting rules used in the
regret-versus-oracle comparison; and
the third explains Strata's recovery semantics and its structural difference
from a Plain IBLT under the same budget.

\subsection{First-round capacity tiers and capacity multipliers}

We first clarify the configuration semantics.  The protocol targets $99\%$
failure-conditioned lower-tail probability: throughout, $\delta=0.01$, and
the failed-only empirical $1\%$ quantile $q_{0.01}$ controls the risk of
severe underestimation.  Two layers of evidence must be distinguished:
Table~\ref{tab:c1} derives the recommended multipliers from $q_{0.01}$,
answering how the second-round capacity should be set; Table~\ref{tab:c2}
runs end-to-end simulations at several common multipliers, answering what
success rate this rule actually achieves in decoding.

The protocol must choose the first-round capacity $M_1$ in advance.  It
determines the first-round fixed bytes and estimation precision
$\operatorname{RSD}(\dhat)\approx\sqrt{2/(M_1-1)}$ (see \eqref{eq:rsd}); the
precision determines $q_\delta$, which in turn yields the second-round
capacity multiplier through $\alpha\ge\beta/q_\delta$.
Table~\ref{tab:c1} instantiates this parameter chain on four standard tiers.
Byte counts assume $32$\,B per cell, and the recommended multiplier is
computed with $\beta=1.3$.  For each tier, $q_\delta$ takes the minimum
empirical value over the tested configurations, so one tier's recommendation
stays safe across all the sign compositions and loads in the table.

\begin{table}[!hbtp]
\centering
\caption{Configuration of the four first-round capacity tiers.  Each row
reports the first-round byte cost and estimation precision, the worst
empirical $q_\delta$, and the resulting recommended second-round multiplier.}
\label{tab:c1}
\small
\begin{tabular}{lrrrr}
\toprule
$M_1$ & first-round bytes & $\operatorname{RSD}(\dhat)$ & worst empirical $q_\delta$ & recommended $\alpha=\beta/q_\delta$ \\
\midrule
64   & 2\,KB  & 17.7\% & 0.6231 & 2.09 \\
256  & 8\,KB  & 8.9\%  & 0.7979 & 1.63 \\
512  & 16\,KB & 6.3\%  & 0.8571 & 1.52 \\
1024 & 32\,KB & 4.4\%  & 0.8978 & 1.45 \\
\bottomrule
\end{tabular}
\end{table}

The four tiers trade first-round fixed cost against second-round margin.
Growing $M_1$ from 64 to 256
costs 6\,KB more in the first round, raises $q_\delta$ from 0.6231 to
0.7979, and lowers the recommended multiplier from 2.09 to 1.63---the most
pronounced improvement in the table.  Continuing to 512 and 1024 cells
successively lowers the multiplier to 1.52 and 1.45 while raising the
first-round fixed bytes to 16\,KB and 32\,KB.  Thus 64 cells suit minimal
fixed cost, 256 balances first-round bytes against second-round margin,
512 provides higher precision, and the marginal gain from 512 to 1024 is
already small.

Table~\ref{tab:c2} turns to protocol-level validation.  The experiment
sweeps $\alpha\in\{1.6,1.8,2.0\}$ on the same sign-balance grid and reports
the two success semantics distinguished in \S\ref{sec:success-calibration}:
overall two-attempt success and failed-only second-round success
$\Pr[ok_2 \given F]$, the latter corresponding directly to
Proposition~\ref{prop:capacity}'s conditional guarantee.

\begin{table}[!hbtp]
\centering
\caption{Second-round success rates over the $M_1\times\alpha$ evaluation
grid.  The columns report the evaluated multiplier values rather than the
tier-specific recommendations of Table~\ref{tab:c1}.}
\label{tab:c2}
\small
\begin{tabular}{lrrr}
\toprule
semantics & $\alpha=1.6$ & $\alpha=1.8$ & $\alpha=2.0$ \\
\midrule
overall two-attempt success (all trials; all $M_1$ aggregated) & 99.854\% & 99.957\% & 99.967\% \\
$\Pr[ok_2 \given F]$, $M_1=256$  & 99.336\% & 99.855\% & 99.896\% \\
$\Pr[ok_2 \given F]$, $M_1=512$  & 99.901\% & 99.935\% & 99.948\% \\
$\Pr[ok_2 \given F]$, $M_1=1024$ & 99.955\% & 99.969\% & 99.978\% \\
$\Pr[ok_2 \given F]$, $M_1=4096$ & 99.988\% & 99.995\% & 99.992\% \\
\bottomrule
\end{tabular}
\end{table}

The grid shows the effect of $M_1$ at each fixed value of $\alpha$.  The lowest
scanned point, $\alpha=1.6$, sits near the recommended range of the three
common tiers: slightly below $M_1=256$'s recommended 1.63 and slightly
above $M_1=512$'s and $1024$'s 1.52 and 1.45.  Accordingly, $M_1=256$
reaches $99.336\%$ at that point, while the latter two tiers already have
extra capacity margin and reach $99.901\%$ and $99.955\%$.  Raising
$\alpha$ to 1.8 brings $M_1=256$ to $99.855\%$.  Configurations derived
from a $99\%$ lower-tail target therefore achieve actual second-round
success rates close to $99.9\%$ at the measured operating points; the
design target stays at $99\%$, and the higher measured rates serve as
empirical validation.  The $M_1=4096$ row shows a small non-monotonic step of
$0.003$ percentage points between $\alpha=1.8$ and $2.0$; that tier retains
only
12--19 failed trials, so the difference is Monte Carlo fluctuation in the
saturation region.

\noindent\textit{Placement of the $d_{\mathrm{rec}}$ deduction.}  In
\eqref{eq:capacity} the recovered count $d_{\mathrm{rec}}$ is subtracted from
the amplified planning bound $\dhat/q_\delta$, not from $\dhat$ before
dividing by $q_\delta$.  Subtracting first would multiply the exactly known
$d_{\mathrm{rec}}$ by $1/q_\delta$ and could push $M_2$ below
$\lceil\beta(d-d_{\mathrm{rec}})\rceil$, breaking the sufficiency argument of
Proposition~\ref{prop:capacity}.  The $\max\{0,\cdot\}$ truncation activates
only when $\dhat/q_\delta<d_{\mathrm{rec}}$, i.e., only within the
underestimation event whose conditional probability is bounded by $\delta$;
outside that event the deduction is exact.  The quantile $q_\delta$
may come either from the distribution-free bound of
Proposition~\ref{prop:lower-tail} or from failed-only empirical quantiles,
and neither the formula nor the success bound changes.

\subsection{Parameter sensitivity of capacity-setting methods}
\label{app:param-sensitivity}

The main figure of \S\ref{sec:regret} fixes $M_1=64$.  To examine
sensitivity to this choice, we evaluate the full $M_1\times d$ grid,
identify where the ordering between self-sizing and Strata-first changes and
why, and give the expected byte ledger under a hypothetical workload
dominated by zero-difference requests.

The oracle, self-sizing, blind doubling, and Strata-first use the same byte
and round accounting: each cell occupies 32\,B, and capacities are rounded
up to multiples of 64 cells.

\subsubsection*{The $M_1\times d$ crossover phase diagram}

The crossover scan covers
\[
  M_1\in\{64,128,256,512,1024,1280\},
\]
with $d$ from 0 up to $10^5$: 72 configurations total.
Figure~\ref{fig:c1} reports the mean byte ratio of self-sizing+joint over
Strata-first.

\begin{figure}[!hbtp]
\centering
\includegraphics[width=0.8\textwidth]{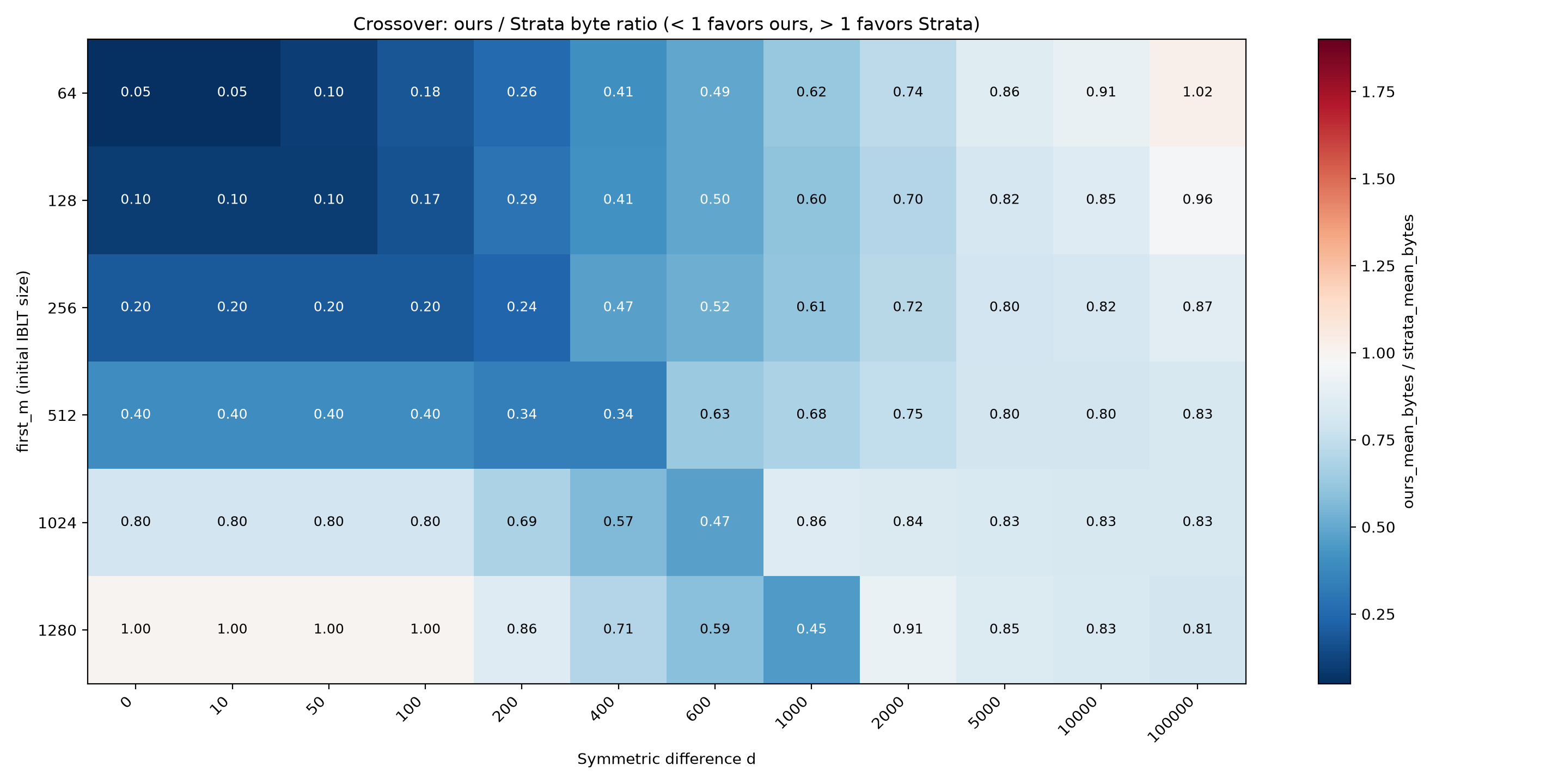}
\caption{Byte ratio self-sizing+joint / Strata-first over the $M_1\times d$
phase diagram.  71 of the 72 cells have ratio at most 1.}
\label{fig:c1}
\end{figure}

Of the 72 cells, 71 have ratio at most 1.  The sole exception is
$(M_1=64,d=10^5)$ with ratio 1.02: there the 2\,KB first-round IBLT sketch
has an estimation RSD of $17.7\%$ while Strata's standard configuration has
about
$11.0\%$; once $d$ is large enough that Strata's 40\,KB fixed summary has
been amortized, the precision difference causes a slight reversal.  At
$M_1=128$ the same endpoint returns to 0.96, and for $M_1\ge256$ the whole
line stays at most 0.87.

Strata's round advantage is concentrated in a narrow small-$d$ band: when
the bottom Strata layer can enumerate the differences directly, Strata-first
finishes in its estimation phase; for $M_1\ge256$, self-sizing's first-round
decodable range covers that window and the round advantage disappears.  The
corresponding round phase diagram is Figure~\ref{fig:c2}.

\begin{figure}[!hbtp]
\centering
\includegraphics[width=0.8\textwidth]{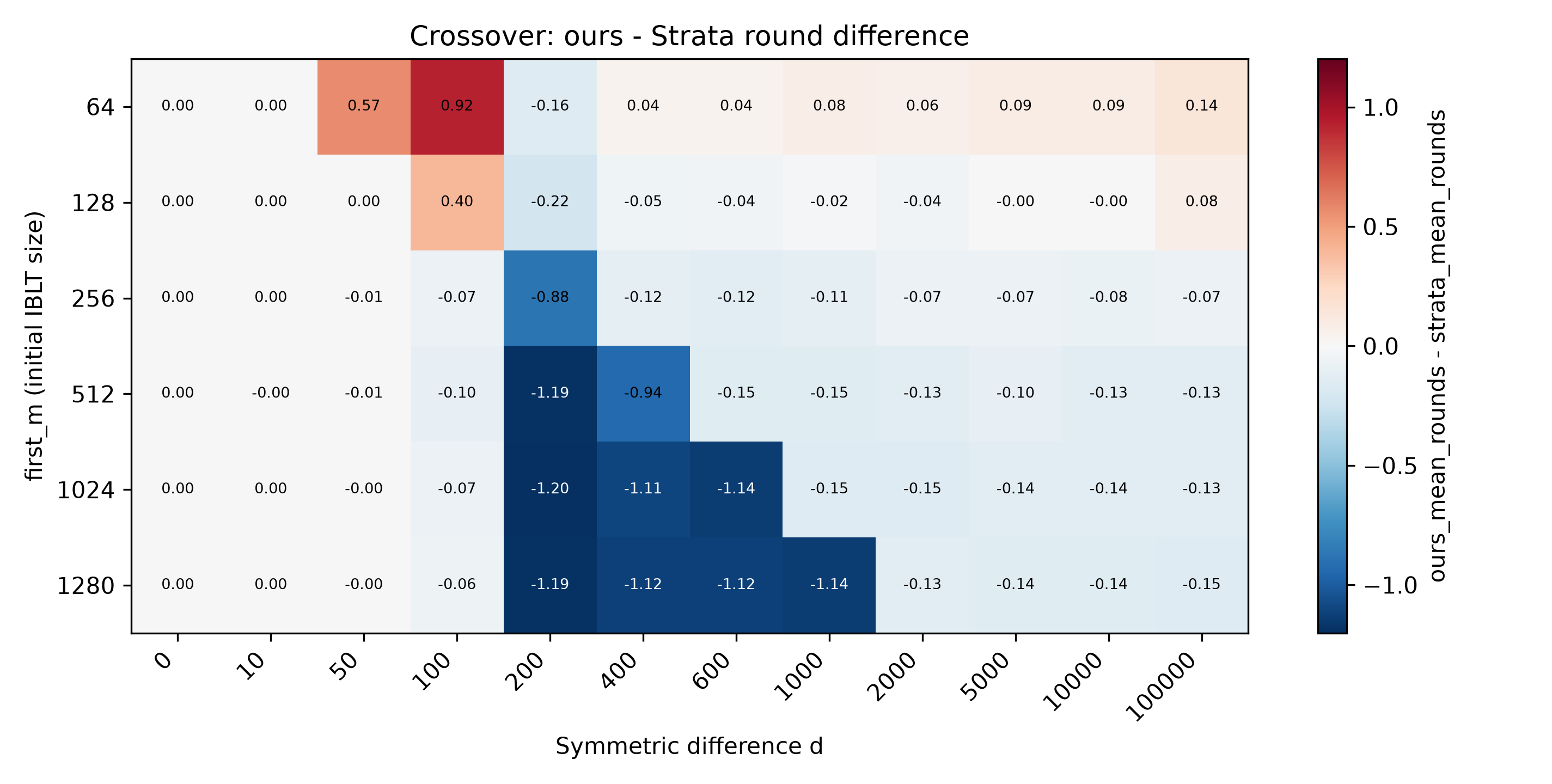}
\caption{Mean round difference between self-sizing+joint and Strata-first
over the $M_1\times d$ phase diagram (negative means self-sizing uses fewer
rounds).}
\label{fig:c2}
\end{figure}

\subsubsection*{Expected bytes when no-difference requests dominate}

Assume the two sides have no difference most of the time.  Take $d=0$
(no difference) for 95\% of all verification requests, fix the remaining
5\% at $d=10^4$, and use $M_1=64$.  Self-sizing then incurs a fixed
first-round cost of 2\,KB, whereas the standard Strata configuration uses a
40\,KB summary.

\begin{table}[!hbtp]
\centering
\caption{Expected bytes when 95\% of requests have $d=0$.}
\label{tab:c3}
\small
\begin{tabular}{lrr}
\toprule
method & expected bytes & relative to oracle \\
\midrule
oracle & 22{,}528 & 1.00 \\
self-sizing+joint & 29{,}268 & 1.30 \\
self-sizing (independent round two) & 33{,}875 & 1.50 \\
blind doubling & 54{,}272 & 2.41 \\
Strata-first & 67{,}807 & 3.01 \\
\bottomrule
\end{tabular}
\end{table}

Under this configuration, Strata-first's expected bytes are $2.32\times$
those of self-sizing+joint (Table~\ref{tab:c3}).

\subsection{Strata's recovery capability and layered capacity}

The Strata Estimator~\cite{eppstein2011whatsdifference} consists of $L$
layers of small IBFs of $C$ cells each.  Each difference element falls into
a unique layer according to the trailing zeros of an independent hash; layer
$i$ receives an element with probability $2^{-(i+1)}$.  The receiver
subtracts and peels layer by layer from the top down; a successful layer
outputs true difference elements with their signs.  At the first layer that
cannot be fully peeled, the algorithm returns
\[
  2^{i+1}\times\text{(cumulative recoveries from previously successful layers)}.
\]
If all layers succeed, the algorithm returns the cumulative recoveries,
having already enumerated all of $A\triangle B$.  Strata-first can thus
finish in one round at small $d$, and elements recovered from the
successfully decoded preceding layers can be deducted exactly from the
remaining load of the subsequent main IBLT.

Full enumeration requires the most crowded layer 0 to succeed as well; the
number of elements it receives is $\operatorname{Binomial}(d,1/2)$.
Therefore Strata's full-solution scale is set by a single layer's capacity,
whereas allocating the same total cell budget to the Plain IBLT of
\S\ref{sec:count-measurement} places all cells in a single decoding domain
for joint peeling:
\[
  \text{Strata full-solution range}=O(C),
  \qquad
  \text{Plain IBLT full-solution range}=O(LC).
\]
The standard configuration takes $L=16$, $C=80$, 1280 cells total.
Adding layers mainly extends the dynamic range of geometric sampling
coverage; it does not merge the layers into one 1280-cell decoding domain.
Plain instead commits the whole budget to one joint decoding; if decoding
fails, the same pre-peeling counts remain available for estimation.

The two structures' estimation precision reflects the same budget
organization.  Strata stops at the first failed layer, and its point
estimate is mainly determined by the $O(C)$ samples recovered by previously
successful layers; the estimator of \S\ref{sec:count-measurement} (``Plain''
in the table below) uses the $M-1$ effective directions of the count vector
on the orthogonal complement of the constant direction, with large-$d$
relative standard deviation
\[
  \operatorname{RSD}(\dhat)\approx\sqrt{\frac{2}{M-1}}.
\]
At $d=1000$, the 1024-cell Plain estimator uses 20\% fewer cells than the
standard 1280-cell Strata configuration while achieving a substantially
lower RSD (Table~\ref{tab:c4}):

\begin{table}[!hbtp]
\centering
\caption{Point-estimate precision at $d=1000$, $\theta_-=0$.}
\label{tab:c4}
\small
\begin{tabular}{lrrrr}
\toprule
estimator & cells & mean($\dhat/d$) & RSD & 5\%--95\% quantiles \\
\midrule
Plain, $M=1024$ & 1024 & 0.9985 & 4.39\% & $0.926d$--$1.073d$ \\
Strata, $16\times80$ & 1280 & 0.9785 & 11.07\% & $0.808d$--$1.168d$ \\
\bottomrule
\end{tabular}
\end{table}

Taking exactly 1280 cells, the Plain estimator's exact variance formula
gives large-$d$ $\operatorname{RSD}\approx\sqrt{2/1279}=3.95\%$.  The
end-to-end results under the same 40\,KB first-round budget:

\begin{table}[!hbtp]
\centering
\caption{End-to-end cost under the same 1280-cell first-round budget at
$d=1000$, i.e., the same-budget comparison of \S\ref{sec:regret}.}
\label{tab:c5}
\small
\begin{tabular}{lrrr}
\toprule
first-round method & mean total bytes & mean rounds & $P(\mathrm{rounds}\ge3)$ \\
\midrule
Plain self-sizing, $M_1=1280$ & 41{,}435\,B & 1.010 & 0 \\
Strata-first, $16\times80$ & 91{,}386\,B & 2.155 & 15.5\% \\
\bottomrule
\end{tabular}
\end{table}

Strata's standard finite configuration provides no exact mean, variance, or
distribution formula for its point estimate.
\S\ref{sec:count-measurement} gives the exact mean and variance, a
fixed-dimension chi-square limit, and the failure-conditioned lower-tail
interface.

\section{Cross-Engine Implementation,\texorpdfstring{\\}{ }
Deployment Shape, and Evidence Scope}
\label{app:e1}

\subsection{Verification scope}

The cross-engine verification defines the scope of the numbers below.
Correctness checks occur in two places---the Oracle$\leftrightarrow$MySQL
production-shape replay and the Redis$\leftrightarrow$Pika cross-city
deployment---both compared item by item against ground truth.  Real-machine
performance numbers are reported only from runs that share a unified input,
unified network conditions, complete phase timing, and a consistent re-run
procedure, covering the Oracle$\leftrightarrow$MySQL production-shape
replay and the Redis/Pika cross-city deployment.

\subsection{Deployment and timing topology}

Two topologies carry the two experimental facets of this paper.  The
production-shape replay runs on one server with four containers: the Oracle
database, the Oracle-side sidecar and controller, the MySQL database, and
the MySQL-side sidecar, plus a network relay container that connects the two
sidecars through a controlled link; host hardware, container resource
limits, and the relay's RTT/bandwidth tiers are in \S\ref{app:e3-env}.  The
China Mobile cross-city deployment uses two cloud hosts in the same
cloud private network, with the sidecar co-located with its local
Redis/Pika, on a link of about 10\,Mbps and 28.8\,ms RTT.  The scan floor,
network emulation, and timing scope of the two topologies are independent,
and the main text's numbers are grouped by experiment.

\subsection{Cross-engine canonicalization}

Each record is canonicalized to a deterministic text form
(type-specific formatting, column concatenation with a fixed separator,
then \texttt{LOWER}) and hashed to a 56-bit MD5 fingerprint.  Bucket
placement, cell checksums, and fresh-round rehashing use independent hash
layers that do not affect the cross-engine alignment.  The full
canonicalization specification is in the supplementary material.

\section{Production Workload and Key-Rank Space Supplement}
\label{app:e2}

\subsection{Table-level sample filtering, time range, and statistical
semantics}

The table-level log covers the 90 days from 2026-04-29 to 2026-07-28.  After
removing entries that cannot be matched to an execution table, that have
both sides empty, that have no data to compare, or that have no primary key
for row-level location, 41{,}603 table-level runs that performed
row-level comparison remain.  Table-level time is measured in whole seconds,
so p50 and below sit at the timing resolution floor and read only as ``at
most a few seconds.''

\begin{table}[!hbtp]
\centering
\caption{Complete quantile ladder of non-zero $d_{\mathrm{ms}}$ (multiset
elements; a modified row contributes two elements).}
\label{tab:e2-quantiles}
\small
\begin{tabular}{lrr}
\toprule
non-zero $d_{\mathrm{ms}}$ quantile & value & share of cumulative difference \\
\midrule
p10   & 1          & 0.001\% \\
p25   & 4          & 0.003\% \\
p50   & 24         & 0.018\% \\
p75   & 385        & 0.175\% \\
p90   & 3{,}730    & 1.16\% \\
p95   & 30{,}424   & 3.80\% \\
p99   & 194{,}698  & 24.9\% \\
p99.9 & $1.61\times10^6$ & 47.4\% \\
max   & $4.41\times10^7$ & 100\% \\
\bottomrule
\end{tabular}
\end{table}

\begin{table}[!hbtp]
\centering
\caption{Complete quantile ladder of table size (max-side rows) and the
cumulative share of max-side rows accounted for by runs up to each quantile,
the same max-side semantics as Table~\ref{tab:prod-strat}.}
\label{tab:e2-size-quantiles}
\small
\begin{tabular}{lrr}
\toprule
table-size quantile & max-side rows & cumulative max-side rows \\
\midrule
p10    & 4                & 0.0\% \\
p25    & 30               & 0.0\% \\
p50    & 533              & 0.0\% \\
p60    & 1{,}718          & 0.0\% \\
p70    & 15{,}790         & 0.01\% \\
p75    & 49{,}641         & 0.03\% \\
p80    & 192{,}740        & 0.11\% \\
p85    & 504{,}680        & 0.35\% \\
p90    & 3{,}785{,}522    & 1.25\% \\
p95    & 10{,}714{,}983   & 5.56\% \\
p99    & 153{,}392{,}779  & 37.05\% \\
p99.9  & 607{,}488{,}678  & 91.03\% \\
max    & 1{,}155{,}898{,}765 & 100\% \\
\bottomrule
\end{tabular}
\end{table}

Of the non-zero difference cardinalities, 75.1\% of records satisfy
$d_{\mathrm{ms}}\le393$, decodable directly by a first-round IBLT of 512
cells (16\,KB); including $d=0$, about 89.1\% of runs finish within
one first-round interaction.  First-round decodability applies only to small
differences, however: these records account for just 0.175\% of the total
difference mass, which concentrates in the few records above p99
(Table~\ref{tab:e2-quantiles}).

In periodic reconciliation, a difference either disappears that run or
persists.  Over the 90-day log, consecutive-difference episodes appear 146
times: half appear once and are consistent by the next run; one third persist
from start to end of the window without repair; and the intermediate
category lasting dozens of runs never occurs.  Persisting differences carry
96.4\% of the positive-difference time, so tables dominating the cost take
the difference-recovery path almost every run; short-lived differences cost
almost no time.

Why the differences arise is not directly observable to the verification
layer: NineData operates the synchronization and verification service but
does not own the user data, so the sources below are inferred from
engineering experience.  The evidence is consistent with three recurring
sources.  First, active--active replication
under dual-datacenter writes can leave conflicts that the application itself
never detects; the persistence pattern above---many tables carrying nonzero
differences across consecutive runs without a repair action---is consistent
with users tolerating a residual inconsistency below a threshold they
consider acceptable.  Second, intermittent faults in the CDC or Kafka
pipeline---a partial worker outage under high table counts delaying or
losing events for a subset of partitions---produce exactly the banded
dirty-block pattern of \S\ref{sec:prod-rank}.  Third, residual cross-engine
normalization mismatches: the combinatorial surface of heterogeneous engine
pairs and version variants cannot be exhaustively covered, so small semantic
gaps inevitably remain, and the verification layer itself is what exposes
them.

The two ends of the table-size distribution correspond to two different cost
mechanisms.  Small tables (the cheapest 80\% of runs, 0.7\% of table-level
time) are dominated by fixed costs---connection establishment, job dispatch,
scheduling, and result logging---essentially independent of table size;
these are mostly master and reference tables carrying base information
(products, suppliers, organizations).  Large tables (the last two deciles,
99.2\% of table-level time) are dominated by the scan itself; these are the
real business data tables such as transaction and inventory tables.

Of the 41{,}603 records, 56.4\% have $d=0$; among records with $N\ge10^6$
that fraction is 40.2\%.  The longest 5\% of records account for 81.2\% of
cumulative table-level time; a single account alone accounts for 90.6\% of
cumulative table-level time, 93.0\% of table scale by the larger side, and
96.0\% of records with $N\ge10^8$.  Removing that account leaves the $d=0$
fraction at 54.5\%, so account concentration is an important boundary of
the cost-weighted interpretation, not sampling noise.  Difference
composition by mass is 42.3\% modified, 18.5\% source-only, and 39.2\%
target-only; by occurrence rate (the fraction of non-zero-$d$ records
containing at least one difference of each type), 36.4\%, 31.5\%, and
67.6\%.

\subsection{Ground truth and block-level logs of the three replay profiles}

The three profiles are the candidate tables P1/P2/P3 chosen from the
production profile (see \S\ref{sec:profiles});
Table~\ref{tab:e2-profiles} summarizes their ground
truth and block-level statistics, with per-column types and lengths in the
experiment design document's DDL listing.  The block-level statistics come
from one complete sharded run on 2026-07-29/30: the production task fixes a
chunk target of about 100{,}000 rows, concurrency 5, and a single-batch
lookup cap of 1{,}000 differences; the composite-key subqueries are mapped
back to logical chunks of about 100{,}000 rows by first-level boundaries.
The three logs' chunk chains are complete, all dirty records are pairable,
and each chunk has 99{,}073--100{,}130 rows, so the logs cover complete
tables rather than paged samples.

\begin{table}[!hbtp]
\centering
\caption{Ground truth and block-level statistics of the three replay
profiles, same semantics as Table~\ref{tab:profiles}.  ``Dirty'' means the
block has at least one subquery that triggered a detail lookup by
primary-key order.  After dividing the primary-key rank space into
100 equal buckets, ``non-empty rank buckets'' is the number of buckets the
dirty blocks fall into, ``top-20\% rank share'' is the fraction of dirty
blocks in the top 20\% of key-rank space, and the longest contiguous dirty
run is in blocks.}
\label{tab:e2-profiles}
\footnotesize
\setlength{\tabcolsep}{3pt}
\begin{tabular}{@{}p{1.1cm}p{1.3cm}p{1.8cm}p{1.8cm}p{1.5cm}p{1.3cm}p{1.4cm}p{1.8cm}p{1.2cm}@{}}
\toprule
profile & table ID & rows (src/tgt) & primary key & difference (s/m/t) & $d_{\mathrm{row}}$/$d_{\mathrm{ms}}$ & dirty blocks (rate) & rank bands (buckets / top-20\%) & longest run \\
\midrule
P1 & T1 & 609{,}671{,}577 / 609{,}668{,}227 & single column & 3{,}350 / 44{,}890 / 0 & 48{,}240 / 93{,}130 & 670 (10.99\%) & 24 / 59.6\% & 30 \\
P2 & T2 & 31{,}034{,}507 / 31{,}034{,}459 & 3-column composite & 48 / 672 / 0 & 720 / 1{,}392 & 24 (7.67\%) & 14 / 58.3\% & 9 \\
P3 & T3 & 154{,}356{,}834 / 154{,}356{,}830 & 3-column composite & 11 / 0 / 7 & 18 / 18 & 3 (0.20\%) & 3 / 100\% & 1 \\
\bottomrule
\end{tabular}
\end{table}

\section{Production-Shape Replay}
\label{app:e3}

\subsection{Experiment environment and timing semantics}
\label{app:e3-env}

The replay runs on one server: 2$\times$ Intel Xeon Platinum 8468, 96
logical CPUs, 2 NUMA nodes, about 2.0\,TiB memory.  The two databases are
Oracle 11g Enterprise Edition and PolarDB/MySQL 8.0 (measured MySQL
8.0.46), each in its own container with resources matching production: Oracle
with 128\,GiB memory and 40 CPUs, MySQL with 64\,GiB and 36 logical CPUs.
Each sidecar gets 8 CPUs and 8\,GiB; the scan concurrency is controlled by
the parameter W (W16 means 16 workers per side) and does not depend on
container CPU quotas.

Network conditions are imposed by a traffic-control container that shares
the sidecars' network namespace and applies the specified round-trip latency
and bandwidth limits using tc; application traffic does not pass through an
additional user-space relay.  Four conditions are applied:
direct connection, 22.5\,ms RTT/100\,Mbps, 22.5\,ms RTT/10\,Mbps, and
100\,ms RTT/10\,Mbps; each case is timed per phase.

End-to-end time is split into phases with the following meanings:

\begin{table}[!hbtp]
\centering
\caption{Phase definitions for end-to-end timing.}
\label{tab:e3-phases}
\small
\begin{tabular}{lp{2.0cm}p{7.6cm}}
\toprule
belongs to & phase & content \\
\midrule
IBLT & build & each side scans its local data, computes row fingerprints, writes the sketch \\
IBLT & rebucket & round two re-buckets the already-read fingerprints with a new seed; no rescan \\
IBLT & resolve & subtract the two sketches, peel, recover difference fingerprints \\
Merkle-style & boundary & generate the chunk boundaries shared by both sides \\
Merkle-style & checksum & compute and exchange per-chunk checksums, locate dirty blocks \\
Merkle-style & drill & fetch dirty-block details over the link, locate differences row by row \\
both & recheck & read back original records by the located candidate keys for final confirmation \\
\bottomrule
\end{tabular}
\end{table}

Every run is preceded by cache warm-up and followed by a correctness check.

\subsection{End-to-end matrix and phase composition: P1/P2/P3}
\label{app:e3-matrix-sec}

Figure~\ref{fig:s63} in the main text uses a unified \texttt{chunkSize=10000},
the worker tiers W16/W32, and three network tiers
(22.5\,ms/100\,Mbps, 22.5\,ms/10\,Mbps, 100\,ms/10\,Mbps),
with two repetitions per cell.

\begin{table}[!hbtp]
\centering
\caption{End-to-end times (seconds) of the three production profiles; each
cell lists the two runs (run 1 - run 2).  The E2E means reported in the main
text's Table~\ref{tab:e2e} and the phase tables average these two values.}
\label{tab:e3-matrix}
\footnotesize
\begin{tabular}{llrrrr}
\toprule
profile & network (RTT\,ms / Mbps) & IBLT W16 & IBLT W32 & Merkle-style W16 & Merkle-style W32 \\
\midrule
P1 & 22.5/100 & 790.3 - 797.0 & 600.5 - 672.7 & 1094.2 - 1094.3 & 614.6 - 690.2 \\
P1 & 22.5/10  & 800.1 - 792.4 & 614.4 - 700.4 & 1433.9 - 1436.0 & 962.5 - 1078.1 \\
P1 & 100/10   & 805.6 - 794.5 & 613.2 - 695.6 & 1436.2 - 1439.6 & 962.0 - 1060.3 \\
P2 & 22.5/100 & 39.1 - 38.7   & 37.4 - 37.1   & 40.8 - 37.0    & 40.0 - 38.9 \\
P2 & 22.5/10  & 39.7 - 40.2   & 36.8 - 37.9   & 46.0 - 47.5    & 49.7 - 47.8 \\
P2 & 100/10   & 38.9 - 38.7   & 37.1 - 36.9   & 46.9 - 46.3    & 48.7 - 48.6 \\
P3 & 22.5/100 & 158.1 - 153.6 & 160.1 - 166.5 & 158.9 - 173.8  & 155.2 - 156.0 \\
P3 & 22.5/10  & 150.0 - 149.0 & 162.7 - 158.7 & 137.3 - 137.7  & 160.5 - 161.1 \\
P3 & 100/10   & 152.3 - 151.2 & 162.9 - 158.1 & 136.4 - 138.9  & 160.8 - 159.6 \\
\bottomrule
\end{tabular}
\end{table}

IBLT varies little across the three network tiers.  The increase in
Merkle-style runtime for P1 at 10\,Mbps comes primarily from the dirty-range
drill-down phase.  The drill phase batches detail requests over the
link, so the request count does not translate into linear RTT amplification.

Tables~\ref{tab:e3-iblt-phases} and~\ref{tab:e3-merkle-phases} give
phase-level times for representative configurations.  The complete
18-cell stage tables across all network $\times$ worker configurations,
with data provenance, are in the supplementary material.  The largest
single cost of these runs is the per-row MD5 fingerprint computation.
Whether Merkle-style or IBLT, every local row must first be normalized and
hashed---IBLT construction and Merkle-style checksum generation share the same
dominant cost: scanning, normalizing, and fingerprinting every row.  They
differ in how those fingerprints are organized after this common step.  When
the table is large (P1, about 600M rows), this item dominates the end-to-end
time and both methods sit at hundreds of seconds (P1's IBLT build and Merkle-style's checksum are both about 594--760\,s).  This is the common floor for the
full-table verification paths evaluated here: every row must be read and
fingerprinted before the two sides can be compared.

The methods therefore differ only in the remaining phases, which are small in
share but are exactly where the two methods diverge.  On the IBLT side, the
build phase varies with the concurrency tier: at W16 the Oracle side is
slowest (P1 about 760\,s); at W32 the Oracle side drops to about 417\,s
while the MySQL side rises to about 595\,s, moving the bottleneck from
Oracle to MySQL.  The cause is the two ends' different use of CPU
concurrency: Oracle's per-row MD5 computation speeds up markedly with more
workers, while the same phase on MySQL barely improves (it rises slightly on
P1), so increasing concurrency only swaps the slow end from Oracle to MySQL;
P2 and P3 follow the same pattern.  The remaining phases---bucketing,
re-bucketing, peeling, and lookup recheck---sum to only a small share; P3's
18 differences decode directly in round one, with no rebucket.  For
Merkle-style localization, the phase most sensitive to the network and
difference distribution is drill-down (moving dirty-block details over the
link).  Drill varies with both bandwidth and difference scale: P1 has the
most data and the most dirty ranges, so its drill is by far the longest,
rising from about 40\,s at 100\,Mbps to about 389\,s at 10\,Mbps; P2's is
about 14\,s at 10\,Mbps; P3 has only 18 differences, so its drill never exceeds
2.4\,s.

This also explains the most counter-intuitive row of
Table~\ref{tab:e3-matrix}: at 10\,Mbps on P3, Merkle-style is faster than IBLT
(137.5\,s versus 149.5\,s).  Both methods' time is dominated by whole-table
fingerprint computation; P3 has only 18 differences; drill is near zero; and
Merkle-style runtime remains close to the checksum-dominated lower bound.
P1 is the opposite: many
differences and a wide dirty region, so at 10\,Mbps Merkle-style trails IBLT
by about 360\,s (1020.3\,s versus 657.4\,s), with the entire gap coming from
drill's dirty-range lookup volume.

\begin{table}[!hbtp]
\centering
\caption{IBLT phase-level times (seconds; mean of two repetitions; O/M are
the Oracle/MySQL ends; resolve and recheck are parallel wall-clock times).  Only
representative configurations are listed; P3's 18 differences decode
directly in round one, so rebucket does not occur and is 0.}
\label{tab:e3-iblt-phases}
\footnotesize
\setlength{\tabcolsep}{3pt}
\begin{tabular}{llrcccccc}
\toprule
profile & network & W & Build O/M & Partition O/M & Rebucket O/M & Resolve wall & Recheck wall & E2E \\
\midrule
P1 & 22.5/100 & 16 & 760.0/510.4 & 0.46/1.58 & 15.2/16.7 & 14.2 & 10.0 & 793.7 \\
P1 & 22.5/100 & 32 & 417.1/595.5 & 0.45/1.00 & 21.9/23.0 & 15.6 & 10.1 & 636.6 \\
P2 & 22.5/100 & 16 & 36.4/28.3  & 0.20/0.64 & 0.51/0.59  & 1.16 & 0.70 & 38.9 \\
P3 & 22.5/100 & 16 & 149.5/113.7 & 0.19/0.94 & 0/0        & 6.27 & 0.08 & 155.8 \\
P3 & 22.5/100 & 32 & 82.4/155.3 & 0.19/0.71 & 0/0        & 6.52 & 0.14 & 163.3 \\
\bottomrule
\end{tabular}
\end{table}

\begin{table}[!hbtp]
\centering
\caption{Merkle-style phase-level times (seconds; mean of two repetitions;
boundary is the sum of chunk/probe/sort; refine is the refinement
drill-down phase).  Only representative configurations are listed; P3's
drill stays below 2.4\,s, hugging the fixed-cost floor set by the checksum.}
\label{tab:e3-merkle-phases}
\footnotesize
\setlength{\tabcolsep}{3pt}
\begin{tabular}{llrcccccc}
\toprule
profile & network & W & Boundary & Checksum & Refine & Drill & Recheck & E2E \\
\midrule
P1 & 22.5/100 & 32 & 17.2  & 593.5 & 4.3  & 39.6  & 2.0 & 652.4 \\
P1 & 22.5/10  & 32 & 14.1  & 614.2 & 4.3  & 389.2 & 2.7 & 1020.3 \\
P2 & 22.5/10  & 16 & 1.09  & 32.00 & 0.01 & 13.56 & 0.10 & 46.8 \\
P3 & 22.5/10  & 16 & 4.01  & 131.68 & 0.02 & 1.78 & 0.00 & 137.5 \\
\bottomrule
\end{tabular}
\end{table}

\begin{figure}[!hbtp]
\centering
\includegraphics[width=\textwidth]{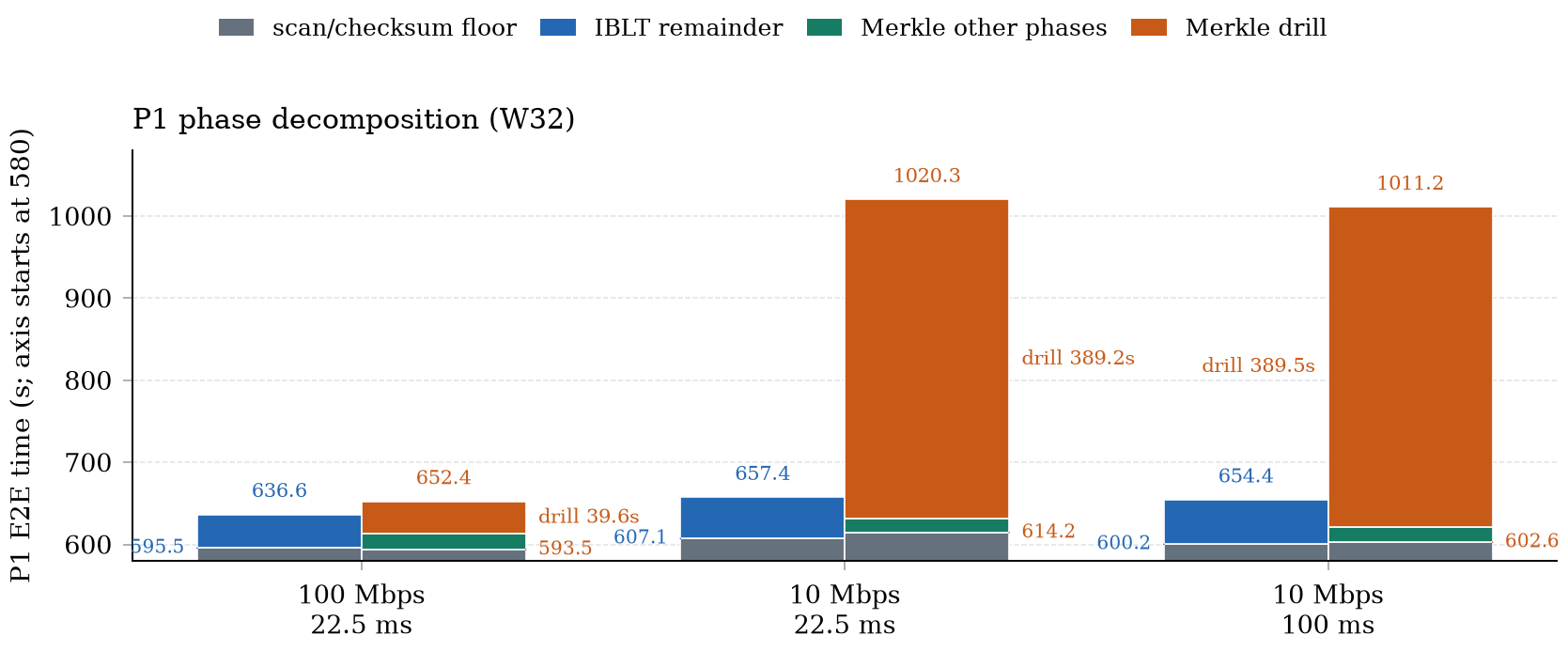}
\caption{P1's phase stack (W32): both methods first incur a nearly identical
full-table scan/checksum floor (with small numerical differences), and the
main divergence comes from the Merkle-style drill---rising from about 40\,s to
about 389\,s under limited bandwidth, with the remaining phases nearly
unchanged.}
\label{fig:e3-phases}
\end{figure}

\subsection{Worst case of the comparison path}

The controlled experiments stop the location axis at ``scattered''.  We next
evaluate a stress case in which differences are spread across nearly all
logical ranges, causing chunk-based localization to lose most of its
filtering benefit.  The precondition for a chunked summary to work is that a
few logical blocks carry all the differences; our counterexample is the
opposite: starting from P1, we add approximately one source-only row for
every 10{,}000 existing rows (60{,}968 additional source-only rows).  The chunked
summary almost loses its filtering power, and drill is equivalent to
re-shipping the whole table over the link.  The injected ground truth is
\texttt{minus=64,318}, \texttt{changed=44,890}.  The source-side difference
count increases from 48{,}240 to 109{,}208, while the total multiset
difference cardinality $d_{\mathrm{ms}}$ increases from 93{,}130 to
154{,}098.

\begin{table}[!hbtp]
\centering
\caption{P1-extra versus P1 end-to-end times (seconds; P1 is the mean of
two repetitions, P1-extra a single run, both passing the correctness gate).
The three 10\,Mbps Merkle-style tiers did not complete the drill phase and
are marked ``---'': the 22.5\,ms/10\,Mbps W32 and 100\,ms/10\,Mbps W32 runs
terminated with connection interruption, and the 100\,ms/10\,Mbps W16 run did
not complete within the observation window and was terminated.}
\label{tab:e3-extra}
\footnotesize
\begin{tabular}{llrrr}
\toprule
algorithm & network / W & P1 & P1-extra & increment \\
\midrule
IBLT  & 22.5/100 W16 & 793.7  & 810.2   & +2.1\% \\
IBLT  & 22.5/100 W32 & 636.6  & 692.6   & +8.8\% \\
IBLT  & 22.5/10 W16  & 796.3  & 817.1   & +2.6\% \\
IBLT  & 22.5/10 W32  & 657.4  & 691.5   & +5.2\% \\
IBLT  & 100/10 W16   & 800.1  & 854.5   & +6.8\% \\
IBLT  & 100/10 W32   & 654.4  & 700.7   & +7.1\% \\
Merkle-style & 22.5/100 W16 & 1094.2 & 4134.6  & +277.9\% \\
Merkle-style & 22.5/100 W32 & 652.4  & 3221.4  & +393.7\% \\
Merkle-style & 22.5/10 W16  & 1434.9 & 18502.5 & +1189.3\% \\
Merkle-style & 22.5/10 W32  & 1020.3 & ---     & --- \\
Merkle-style & 100/10 W16   & 1437.9 & ---     & --- \\
Merkle-style & 100/10 W32   & 1011.2 & ---     & --- \\
\bottomrule
\end{tabular}
\end{table}

The phase breakdown explains the observed difference between the two
methods.  Merkle-style's checksum grows only about 5\%--13\% and boundary
about 10\%--43\%, whereas drill-down time increases sharply---it ships every
judged-dirty range's details over the link (Table~\ref{tab:e3-extra}).  This
confirms the location conclusion of \S\ref{sec:prod-shape}: localization's
cost is set by ``how many blocks the differences fall into,'' not by the
difference cardinality.  On this input, three of the tested 10\,Mbps
configurations did not complete the drill-down phase at all.  At a fixed difference
cardinality, IBLT transfer cost is insensitive to how the differences are
distributed across key ranges, because its transfer volume is set by the
measured difference cardinality.  The stress case reinforces this point:
even though $d_{\mathrm{ms}}$ grows from 93{,}130 to 154{,}098, the
scan-dominated end-to-end time rises by only 2\%--9\%.

This comparison also has a protocol-level meaning.  Measurement happens
before decoding, so when the difference cardinality is extremely large, the
system can first read $\dhat$, decide by the criterion of
\S\ref{sec:boundaries} that the second-round capacity exceeds budget, and
return \texttt{FALLBACK} and move directly to full transfer or segmented
rebuild,
without transmitting a second-round sketch that already exceeds the
configured resource budget.  The evaluated localization path learns the
number of dirty ranges only after completing the summary and refinement
stages, by which time the cost is already paid---the comparison method's runtime of more than five
hours on P1-extra is exactly that.

\subsection{Ablation matrix: worker concurrency}

This matrix sweeps the worker concurrency tiers (W8/W16/W32/W64) to select
W for the formal horizontal matrix of \S\ref{app:e3-matrix-sec} and to check
the stability of
the numbers with three repetitions per cell.  It fixes P2 (31M rows, all 31
business value columns) and uses \texttt{chunkSize=100000} at the direct,
22.5\,ms, and 100\,ms RTT tiers (all at 100\,Mbps); each algorithm
$\times$ network $\times$ worker cell repeats three times, 72 valid runs
total.  The results support
three observations (Table~\ref{tab:e3-ablation}).  For IBLT, W16 gives the
lowest end-to-end time on P2 because the MySQL side becomes the bottleneck at
higher concurrency: the Oracle side's build drops markedly with more workers
while the MySQL side does not speed up and even rises slightly (the phase
mechanism is in \S\ref{app:e3-matrix-sec}).  W32 and W64 provide no further
improvement.  For Merkle-style localization, W32 and W64 perform similarly,
with W64 giving the lowest measured median.  The relative min--max spread of
the three repetitions stays below 15\% in every cell
(Figure~\ref{fig:e31}), and raising RTT from direct to 100\,ms does not
change the overall trend.

\begin{table}[!hbtp]
\centering
\caption{All end-to-end medians of the P2 72-run matrix (seconds;
parentheses give min--max over three repetitions).  Fixed
\texttt{chunkSize=100000}, three repetitions per cell.  W16 is the
best-performing IBLT setting for P2, while W64 gives the lowest Merkle-style
median and W32 is close.  The IBLT resolve phase takes approximately
3.3--3.5\,s, with the main time in both ends' build; Merkle-style's
concurrency gains come mainly from boundary/checksum.  All min--max ranges
fall within 15\%.}
\label{tab:e3-ablation}
\scriptsize
\setlength{\tabcolsep}{1pt}
\begin{tabular}{llrrrr}
\toprule
algorithm & network & W8 & W16 & W32 & W64 \\
\midrule
IBLT  & direct & 78.985 (78.484--80.041) & \textbf{47.738 (47.699--47.946)} & 49.122 (49.043--51.839) & 66.067 (65.885--75.070) \\
IBLT  & 22.5/100 & 78.756 (78.371--79.019) & \textbf{47.523 (47.137--48.649)} & 53.130 (52.007--56.663) & 65.423 (64.252--69.052) \\
IBLT  & 100/100 & 78.881 (78.587--79.753) & \textbf{48.036 (47.884--50.223)} & 52.798 (50.966--54.067) & 68.742 (62.902--68.776) \\
Merkle-style & direct & 100.541 (100.521--100.747) & 62.004 (57.913--63.073) & 51.317 (51.195--52.139) & \textbf{49.489 (49.145--49.957)} \\
Merkle-style & 22.5/100 & 102.854 (102.740--104.645) & 61.470 (61.423--61.563) & 55.506 (55.308--55.629) & \textbf{53.827 (53.729--53.989)} \\
Merkle-style & 100/100 & 102.383 (102.081--106.017) & 61.686 (61.376--65.645) & 53.767 (53.721--54.966) & \textbf{52.450 (52.230--52.497)} \\
\bottomrule
\end{tabular}
\end{table}

\begin{figure}[!hbtp]
\centering
\includegraphics[width=0.9\textwidth]{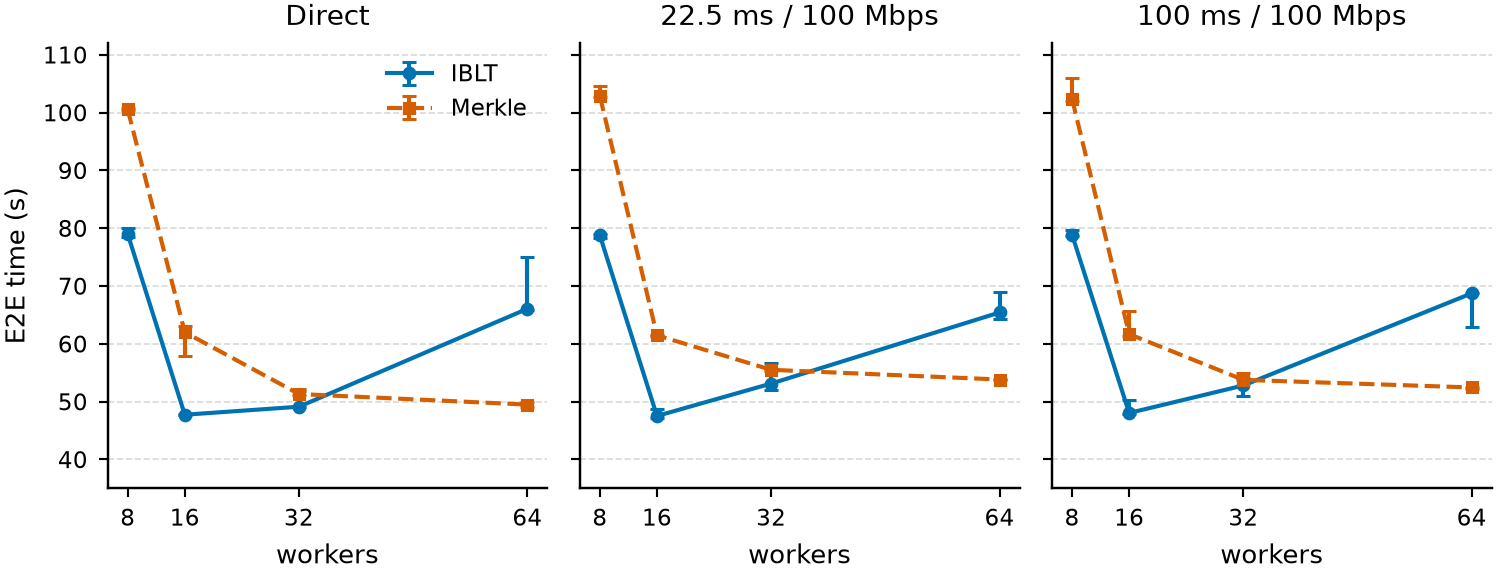}
\caption{End-to-end times of the P2 formal 72-run ablation matrix; three
repetitions per cell with min--max error bars.}
\label{fig:e31}
\end{figure}

\subsection{Capacity-setting actions and per-run estimates}

This subsection compares the second-round capacity computed from each
estimate with the true difference cardinality (Table~\ref{tab:e3-readings})
and verifies the resulting decoding outcome.  All three
profiles start from a fixed $M_1=512$ first round: P3's 18 differences
decode directly in round one; P1 and P2 fail round one and set the
second-round capacity once from the estimate ($M_2=\lceil1.52\times1.20\,\dhat\rceil$,
where $1.52$ is $\alpha=\beta/q_\delta$ and $1.20$ is the engineering margin).
Both then decode successfully in one shot, with \texttt{residual=0} and no
second expansion.

For correctness, the IBLT recovery records match the injected ground truth
item by item: P1 returns \texttt{plus\allowbreak=48240},
\texttt{minus\allowbreak=44890}, \texttt{residual\allowbreak=0}; P2
returns \texttt{plus\allowbreak=720}, \texttt{minus\allowbreak=672},
\texttt{residual\allowbreak=0}; P3's 11 source-only and 7 target-only
differences all pass truth verification.  The production logs do not
provide these tables' physical sizes; the locally rebuilt P1 was measured in
the deployment environment: its Oracle table data and primary key
index total 165.28\,GB, and its MySQL \texttt{.ibd} is about 192.58\,GB.

\begin{table}[!hbtp]
\centering
\caption{Per-run estimates and second-round capacities.  Measured RSD is the
sample standard deviation divided by the sample mean; theoretical RSD follows
Corollary~\ref{cor:rsd}.  Each row summarizes 12 runs, except P1-extra,
which summarizes 6.}
\label{tab:e3-readings}
\footnotesize
\setlength{\tabcolsep}{2pt}
\begin{tabular}{llrrrrr}
\toprule
profile & truth $d_{\mathrm{ms}}$ & $\dhat$ range & $\dhat$ sample mean & RSD (measured/theory) & $M_2$ range & $M_2/d_{\mathrm{ms}}$ \\
\midrule
P1 & 93{,}130 & 86{,}489.5--109{,}315.5 & 93{,}146.0 & 8.6\%/6.3\% & 157{,}757--199{,}392 & 1.69--2.14 \\
P2 & 1{,}392 & 1{,}301.9--1{,}468.1 & 1{,}388.3 & 4.2\%/6.3\% & 2{,}375--2{,}678 & 1.71--1.92 \\
P3 & 18 & 16.7--20.7 & 18.15 & ---/6.1\% & round one, $M_1=512$ & --- \\
P1-extra & 154{,}098 & 143{,}499.5--170{,}327.3 & 152{,}492.3 & 6.8\%/6.3\% & 261{,}744--310{,}678 & 1.70--2.02 \\
\bottomrule
\end{tabular}
\end{table}

With only 12 runs per profile, the sample standard deviation has a relative
standard error of about 20\%.  The measured RSDs of 8.6\% for P1 and 4.2\%
for P2 are therefore consistent with the theoretical value of 6.3\% at the
available sample size.

\subsection{Implementation notes and optimization of the comparison
baseline}
\label{app:e3-impl}

NineData's production implementation of Merkle-style localization incurs
substantial boundary-generation overhead, especially for composite primary
keys.  Composite-key boundaries are generated by \texttt{ORDER BY}: with a
fixed chunk size, the block count grows linearly with the row count (a
600M-row table needs 6{,}096 blocks), and generation is serial because block
$K$'s upper bound depends on the result for block $K-1$.  Together these
make the boundary stage dominate the whole run.  Composite keys, mixed
\texttt{and}/\texttt{or} predicates, optimizer-plan variability, and
different sorting semantics across Oracle, PolarDB, MySQL, ClickHouse, and
others make this ordered boundary generation difficult to optimize
consistently.  Tables without a primary or unique key have no sort boundary
at all, making localization directly unusable.  We therefore use an optimized
boundary-generation path in the replay so that the comparison reflects the
cost of localization after removing the production implementation's avoidable
serial overhead.

The production sharded logs directly corroborate this segment's cost (a
NineData production record from 2026-07-30).  The production table
corresponding to P2 (31M rows, composite key) spends nearly the first 30
minutes of its task on boundary (\texttt{ORDER BY}) before entering the
chunked checksum comparison, with a whole-table time of about 1 hour 56
minutes; another production table of about 95M rows with a composite key
spends about 36 minutes on boundary and about 2 hours 43 minutes in total.
By contrast, our optimized Merkle-style implementation on P1 (600M rows,
single-column key) has a boundary phase of only about 17\,s.

The local replay does not reproduce these native constraints, and the data
construction offers no ordering convenience: tables are inserted
concurrently out of order with primary and unique indexes rebuilt after
insertion, and primary keys are random string IDs rather than contiguous
numeric IDs.  The comparison baseline's boundary generation uses
\texttt{ROWID}/\texttt{NTILE} coarse sampling on the Oracle side, then
hands the same fences to MySQL for key-range queries; after optimization,
boundary takes about 5\,s---two orders of magnitude cheaper than the
production path on the same schema.

\section{Cross-City KV Field Deployment and Applicability}
\label{app:e4}

The cross-city deployment in \S\ref{sec:cross-city} examines three
questions: whether the measured estimate remains usable on a real link and
a hash-bucketed key space; whether self-sizing's end-to-end cost is still
dominated by the scan; and which properties of the KV key space determine
the crossover between the two methods.  This section gives the environment
semantics, per-scenario data, cost model, and applicability boundaries of
that verification.  It runs on Redis and Pika and complements the
relational evaluation of \S\ref{sec:prod-distribution}--\S\ref{sec:prod-shape}
with a contrasting key-space organization: relational databases chunk by
primary-key ranges with differences clustering along primary-key order,
while the KV environment hash-buckets and spreads differences uniformly
over the bucket space.  The deployment also evaluates a real cross-city
production link (whereas \S\ref{sec:prod-shape} emulates the link with
relays) and characterizes the observation window under online updates.

\subsection{Deployment environment and evidence semantics}

The verification runs on two cloud hosts in the same private network of
China Mobile, with each sidecar (the local agent of its side)
co-located with its local Redis/Pika instance.  The link is about 10\,Mbps
with 28.8\,ms RTT.  The baseline dataset contains approximately 4{,}890{,}077
keys on Redis 7.2.10.  About 99.999\% of the values use the Redis STRING
type, the mean record size is approximately 416\,B per key, and no keys have
expiration times.  In the fixed scenarios, both methods verify their
recovered differences item by item against the injected ground truth.  For
the G3 burst and G6 online-change runs, correctness is evaluated against the
differences observed within each method's scan window.

Scan concurrency is a local throughput tuning knob, not part of the method
comparison: Appendix~\ref{app:e3} already systematically ablates both methods'
concurrency sensitivity in a relational environment, so this verification
does not repeat concurrency tiers and both ends read data single-threaded.
Under a single thread both methods bear exactly the same scan cost, so the
end-to-end difference can be attributed to the methods themselves; IBLT's
benefit comes from cross-city payload and rounds, independent of scan
concurrency, and raising concurrency does not change the comparison's
direction.

Algorithm configuration is fixed: Merkle-style localization uses 4{,}096
hash buckets, and self-sizing IBLT uses first-round capacity $M_1=512$.

\begin{table}[!hbtp]
\centering
\caption{The six scenario groups G1--G6: construction and design purpose.
Injected differences remain constant during comparison.  G4 keeps the
B$\rightarrow$C lag by port-level link blocking and catches up with a PSYNC
increment afterwards; G5 uses different key prefixes to avoid circular
synchronization, and same-key value conflicts are not covered; G6 has no
fixed ground truth and only observes behavior under online change.
G4--G6 are single-shot scenarios while G2--G3 carry the quantitative
comparison; the correctness checks of all scenarios are in
Table~\ref{tab:e42}.}
\label{tab:e41}
\footnotesize
\setlength{\tabcolsep}{3pt}
\begin{tabular}{@{}p{0.7cm}p{5.0cm}p{3.8cm}p{5.2cm}@{}}
\toprule
ID & construction & differences & design purpose \\
\midrule
G1 & Redis$\rightarrow$Redis, $d=0$ and a single value difference
    & 0 / 1
    & fixed-cost baseline at $d=0$ and $d=1$ \\
G2 & Redis$\rightarrow$Redis, injected $d=100/500/937$
    & 100 / 500 / 937
    & cost comparison and estimate verification \\
G3 & Redis$\rightarrow$Pika, injected $d=100/500/937$ plus a burst of about $4\times10^4$
    & 100 / 500 / 937 / $\approx4\times10^4$
    & cross-engine behavior and the high-difference regime in which nearly all hash buckets become dirty \\
G4 & Redis A$\rightarrow$B$\rightarrow$C three-hop sync with B$\rightarrow$C lag held by port-level blocking; C missing $d=100/500/937$
    & 100 / 500 / 937
    & multi-hop lag coverage \\
G5 & 1{,}000 unique keys written on each side (different key prefixes to avoid circular sync)
    & 2{,}000 (bidirectional)
    & bidirectional difference shape \\
G6 & continuous LRU eviction at the target while redis-shake keeps syncing (data changing online)
    & observed
    & behavior under online change and the observation window; not an end-to-end performance point \\
\bottomrule
\end{tabular}
\end{table}

\subsection{Per-tier end-to-end data}

The two paths' scan costs provide the background for interpreting
Table~\ref{tab:e42}.  The Redis$\rightarrow$Redis scan floor (the shortest
time to fully read all keys) is about 28.1\,s (about 5.75\,$\mu$s/key);
the Redis$\rightarrow$Pika target is a disk \texttt{SCAN} whose
single-threaded scan floor is about 73--80\,s.  All six groups recover every
difference they should: the fixed scenarios match the injected
ground truth, and the G3 burst and G6 recover the differences observed
during the runs (Table~\ref{tab:e41}).  Table~\ref{tab:e42} summarizes the per-tier end-to-end
data of G1--G6: across the five tiers of G1--G2, the scan is 93.7\%--99.1\%
of IBLT's end-to-end time, and IBLT end-to-end times in the remaining
scenarios also remain close to the corresponding scan floor.

\begin{table}[!hbtp]
\centering
\caption{End-to-end cost of all G1--G6 scenarios.  ``---'' means the field
was not recorded for that scenario: G4--G6 do not record exact IBLT
cumulative transfer.  IBLT transfer bytes are the first- and second-round
cumulative (first round 16.4\,KB per side).  For runs without a fixed ground
truth (the G3 burst and G6), two values separated by ``/'' report the
differences observed by Merkle-style and IBLT localization, respectively;
they differ because the methods cover different scan windows.}
\label{tab:e42}
\footnotesize
\setlength{\tabcolsep}{2pt}
\begin{tabular}{@{}p{1.0cm}p{2.3cm}p{1.6cm}p{1.8cm}p{1.7cm}p{1.9cm}p{1.6cm}p{1.6cm}@{}}
\toprule
group & construction & differences & Merkle-style payload & Merkle-style end-to-end & IBLT capacity & IBLT end-to-end & IBLT cumulative transfer \\
\midrule
G1 & Redis$\rightarrow$Redis & 0 & 268.0\,KB & 28.3\,s & 512 (round one) & 29.5\,s & 32.8\,KB \\
G1 & Redis$\rightarrow$Redis & 1 & 398.9\,KB & 28.3\,s & 512 (round one) & 28.7\,s & 32.8\,KB \\
G2 & Redis$\rightarrow$Redis & 100  & 12.5\,MB  & 39.9\,s & 512 (round one) & 29.0\,s & 32.8\,KB \\
G2 & Redis$\rightarrow$Redis & 500  & 60.0\,MB  & 81.2\,s & 512$\rightarrow$975  & 30.0\,s & 95.2\,KB \\
G2 & Redis$\rightarrow$Redis & 937  & 105.3\,MB & 121.0\,s & 512$\rightarrow$1{,}701 & 29.6\,s & 141.7\,KB \\
G3 & Redis$\rightarrow$Pika & 100 & 12.8\,MB  & $\approx$82.6\,s & 512 & $\approx$73.3\,s & $\approx$32.8\,KB \\
G3 & Redis$\rightarrow$Pika & 500 & 59.0\,MB  & $\approx$123.2\,s & 512$\rightarrow$$\approx$955 & $\approx$73.5\,s & $\approx$94.0\,KB \\
G3 & Redis$\rightarrow$Pika & 937 & 106.0\,MB & $\approx$164.0\,s & 512$\rightarrow$$\approx$1{,}723 & $\approx$73.9\,s & $\approx$143.0\,KB \\
G3 & Redis$\rightarrow$Pika & 39{,}594\slash 39{,}502 & $\approx$518.7\,MB & $\approx$523.2\,s & 512$\rightarrow$72{,}357 & $\approx$80.1\,s & $\approx$4.66--5.02\,MB \\
G4 & cascade A$\rightarrow$B$\rightarrow$C & 100 & 12.8\,MB & 40.5\,s & 512 & 30.2\,s & --- \\
G4 & cascade A$\rightarrow$B$\rightarrow$C & 500 & 60.6\,MB & 81.9\,s & 512$\rightarrow$953 & 30.4\,s & --- \\
G4 & cascade A$\rightarrow$B$\rightarrow$C & 937 & 105.7\,MB & 122.5\,s & 512$\rightarrow$1{,}763 & 31.0\,s & --- \\
G5 & bidirectional 1{,}000 each & 2{,}000 & 197.6\,MB & 202.6\,s & 512$\rightarrow$3{,}677 & 30.6\,s & --- \\
G6 & redis-shake sync & 1{,}234\slash 1{,}163 & 134.9\,MB & $\approx$159.5\,s & 512$\rightarrow$2{,}066 & $\approx$34.5\,s & --- \\
\bottomrule
\end{tabular}
\end{table}

\subsection{Estimate deviation and second-round capacity setting}

This subsection verifies the \S\ref{sec:protocol} protocol's realization in
a real deployment: whether the estimate is accurate and whether the
resulting second-round capacity is sufficient for decoding.  Across the six runs,
the largest estimate deviation is $+8.1\%$, consistent with the theoretical
relative standard deviation of 6.3\% at $M_1=512$; the estimate and theory
agree, and the deviation is normal fluctuation from a limited sample.

\begin{table}[!hbtp]
\centering
\caption{Estimates and deviations of the six runs.  Truth is in the
multiset semantics $d_{\mathrm{ms}}$ of \S\ref{sec:implementation}: a
value change contributes its old and new fingerprints, so the $d=1$
scenario has $d_{\mathrm{ms}}=2$.}
\label{tab:e43}
\small
\begin{tabular}{llrr}
\toprule
run & truth $d_{\mathrm{ms}}$ & $\dhat$ & deviation \\
\midrule
$d=0$   & 0   & 0     & exact \\
$d=1$ (value change) & 2 & 2.0 & exact \\
$d=100$ & 100 & 108.1 & $+8.1\%$ \\
$d=500$ & 500 & 534.1 & $+6.8\%$ \\
$d=937$ & 937 & 932.1 & $-0.5\%$ \\
G6 & observed 1{,}163 & 1{,}132.4 & $-2.6\%$ \\
\bottomrule
\end{tabular}
\end{table}

The second-round capacity is set by $M_2=\lceil 1.824\,\dhat\rceil$ (1.824
is the product of the protocol's recommended multiplier 1.52 and the
engineering margin 1.2).  The three runs with recorded per-run estimates
match the formula exactly (534.1$\rightarrow$975, 932.1$\rightarrow$1{,}701,
1{,}132.4$\rightarrow$2{,}066) (Table~\ref{tab:e43}), and the other
second-round-triggering runs
also succeed in one shot; no run requires an additional capacity request
after the second round, and \texttt{FALLBACK} and \texttt{REJECT} never
trigger.

\subsection{Payload composition}

Each dirty bucket returns its full bucket details; the per-row payload is a
constant 106.0\,B (about 74\,B key plus 32\,B MD5 hex), a bucket holds on
average $N/B\approx1{,}194$ rows, so each dirty bucket is about 126.5\,KB;
adding the checksum vector of 268\,KB for $B=4096$, the total payload is
explained by
\[
  \mathrm{payload}\approx 268\ \mathrm{KB}+126.5\ \mathrm{KB}\times D,
\]
where $D$ denotes the number of dirty buckets,
with the four non-zero points deviating by at most 0.8\%, and the measured
effective throughput is about 1.211\,MB/s.  The model is calibrated to the
present deployment ($B=4096$, $\approx$10\,Mbps link); the per-bucket
payload depends on the mean bucket size $N/B$.  End-to-end time grows
approximately linearly with $d$ as well: from $d=1$ to $d=937$, Merkle-style
end-to-end time increases by approximately 93\,s, from 28.3\,s to 121.0\,s.

\FloatBarrier
\section*{Supplementary Material}
\addcontentsline{toc}{section}{Supplementary Material}
\label{app:supplementary}
The complete cross-engine canonicalization specification, the per-engine
hash availability matrix, the full phase-level stage tables of the
production-shape replay, and several analytic guarantees (the proof and
explicit quantiles of the finite-sample lower-tail bound, the finite-sample
refinement and proportional-growth direction for the failure-conditioned
shape, and the finite-sample
Kolmogorov bound) are collected below.

\setcounter{figure}{0}
\setcounter{table}{0}
\renewcommand{\thefigure}{S\arabic{figure}}
\renewcommand{\thetable}{S\arabic{table}}
\setcounter{theorem}{0}
\renewcommand{\thetheorem}{S\arabic{theorem}}
\renewcommand{\theHfigure}{supp.S\arabic{figure}}
\renewcommand{\theHtable}{supp.S\arabic{table}}
\renewcommand{\theHtheorem}{supp.S\arabic{theorem}}

The production-shape replay represents each record by the first 56 bits
(7 bytes, or 14 hexadecimal digits) of an MD5 digest.  The digest input is a
canonical string formed by concatenating the primary-key columns and the
business-value columns in their declared order, separated by
\texttt{CHAR(31)} (Oracle: \texttt{CHR(31)}).  The complete canonical string
is then converted to lowercase, so the replay treats case-only differences
as equal.  Per-type rules:

\begin{table}[!hbtp]
\centering
\caption{Cross-engine canonicalization and hashing rules.}
\label{tab:e1-rules}
\scriptsize
\begin{tabular}{@{}p{2.2cm}p{6.0cm}p{6.2cm}@{}}
\toprule
type/layer & current canonical or hash rule & role and boundary \\
\midrule
\texttt{NULL} & MySQL \texttt{COALESCE(expr,'<NULL>')}; Oracle \texttt{NVL(expr,'<NULL>')}
  & NULL uses a literal sentinel; the test data has no NULL, so the
  NULL/empty-string distinction is not yet covered by boundary samples. \\
\texttt{INT64} & both sides to decimal text; Oracle fixed \texttt{TO\_CHAR}
  number format, MySQL \texttt{CAST(... AS CHAR)}
  & normalizes integer text; negative boundaries not specifically tested. \\
\texttt{DECIMAL4} & \texttt{ROUND(q,4)*10000} first, then to integer text
  without decimals
  & P2's scale-4 money/decimal columns; negatives and rounding boundaries
  not specifically tested. \\
\texttt{DATETIME\_SEC} & \texttt{YYYY-MM-DD} \texttt{HH24:MI:SS} (MySQL
  \texttt{DATE\_FORMAT}, Oracle \texttt{TO\_CHAR})
  & drops milliseconds; timezone/boundary instants not independently
  covered. \\
\texttt{STRING} & MySQL \texttt{CAST(... AS CHAR)}; Oracle string expression
  & byte encoding and trailing spaces must match; Unicode/multibyte and
  CHAR padding not covered in this boundary suite. \\
column concat & \texttt{canonical(pk\_1)||CHR(31)||} \texttt{canonical(pk\_2)||...},
  then \texttt{LOWER} overall
  & field separator; escaping of the separator appearing in business
  strings not separately verified. \\
row fingerprint & MD5 of the full canonical, first 7 raw bytes (first 14
  hex), big-endian as a 56-bit \texttt{fp}
  & per-row alignment across databases, enters \texttt{fpXor}; bucket and
  cell checksums use separate independent hashes. \\
bucket split & \texttt{splitmix64(seed + fp)}, currently $k=3$ with
  duplicate bucket positions removed
  & IBLT bucket placement; does not change the row fingerprint. \\
cell checksum & \texttt{splitmix64(CHECKSUM\_SEED + fp)}
  & decides whether a $\mathrm{count}=\pm1$ cell is a pure cell; not used
  for cross-database canonical alignment. \\
fresh seed & round two derives a new bucket seed from \texttt{hashSeed}
  & fresh rehash changes only the bucket mapping, not \texttt{fp}. \\
\texttt{idXor} & primary-key canonical encoding compressed to 56 bits via
  Java \texttt{cheap8}/splitmix64
  & binds the payload to the same primary key on both sides; does not
  replace \texttt{fp}. \\
\bottomrule
\end{tabular}
\end{table}

The MD5 row fingerprint, bucket splitmix64, cell checksum, fresh seed, and
\texttt{idXor} thus serve five distinct roles in the implementation.  Primary and composite keys
are restricted to uppercase letters and digits at data-generation time so
that Oracle's binary order and MySQL's \texttt{general\_ci} agree in the
sort domain; the query path avoids \texttt{CONVERT}/\texttt{binary}
conversions that would break index use.  The Oracle-side dialect
expressions are version-dependent: \texttt{STANDARD\_HASH} is available on
Oracle 23ai, while 11g must use \texttt{DBMS\_CRYPTO.HASH(..., 2)} for the
same MD5 digest; this is dialect adaptation and does not change the
fingerprint semantics.  Each completed run compares the SQL ground truth with both the recovered
IBLT differences and the Merkle-style \texttt{changed}/\texttt{confirmedChanged}
results.  An IBLT run is considered fully decoded when \texttt{residual=0},
and the recovered positive and negative differences are verified item by
item.

The choice of hash function for row fingerprints is driven by two
cross-engine considerations.  CRC32 is faster when both ends' compatibility is
verified; MD5 is easier to unify across engines in input and output
semantics.  In a single-machine MySQL benchmark, $10^7$ hash evaluations take about
0.65\,s with CRC32 and 1.84\,s with MD5.  Computing and aggregating CRC32
values over a one-million-row relation takes about 1.2\,s.  These are hash-selection tests; the
Oracle--MySQL end-to-end timing is reported separately by the replay matrix.
The replay ultimately uses the first 56 bits of MD5, an engineering choice
driven by byte-level alignment and verifiability: identically named
functions do not by themselves guarantee identical input encoding, output
representation, byte order, or SQL conversion semantics across engines, so
candidates such as CRC32 and \texttt{ora\_hash} need pairwise equivalence
verification.  Candidate
selection is also constrained by sort semantics: Oracle's binary order and
MySQL's collation, \texttt{CHAR} trailing padding, charset encoding, and
case rules affect both boundary fences and row fingerprints, and the two
sides' hash functions align only when the normalized texts agree.  The
implementation therefore uses a fixed-width key domain of uppercase letters
and digits, fixes the canonical text before generating the Oracle-side
logical fences, and verifies sort and hash consistency separately.  The
performance scope of this choice is limited to the current schema and
verified inputs.

\subsection*{Hash function availability across engines}
\addcontentsline{toc}{subsection}{Hash function availability across engines}

Which hashing primitives are built in, and whether they are exposed under
the same name, both vary widely.  Table~\ref{tab:hash-avail} summarizes the
built-in availability of \texttt{MD5}, \texttt{CRC32}, \texttt{SHA1},
\texttt{SHA2(224)}, and \texttt{SHA2(256)}, together with each engine's
native (non-standard) hash function.  A checkmark means the function is
available in the current mainstream version.

\begin{table}[!hbtp]
\centering
\caption{Built-in hash function availability across database engines (the
four engines listed here; other engines are verified pairwise as needed).}
\label{tab:hash-avail}
\scriptsize
\setlength{\tabcolsep}{3pt}
\begin{tabular}{@{}lp{1.1cm}p{1.1cm}p{1.1cm}p{1.4cm}p{1.4cm}p{2.6cm}@{}}
\toprule
engine & \texttt{md5} & \texttt{crc32} & \texttt{sha1} & \texttt{sha2(224)} & \texttt{sha2(256)} & native hash \\
\midrule
MySQL & \checkmark & \checkmark & \checkmark & \checkmark & \checkmark & --- \\
PostgreSQL & \checkmark & --- & \checkmark & \checkmark\,($\ge$11) & \checkmark\,($\ge$11) & \texttt{hashtext} \\
Oracle & \checkmark & --- & \checkmark & \checkmark & \checkmark & \texttt{ora\_hash} \\
SQL Server & \checkmark & --- & \checkmark & --- & \checkmark & \texttt{checksum} \\
\bottomrule
\end{tabular}
\end{table}

\noindent This table gives two takeaways.  First, \texttt{MD5} is built into
all four engines while \texttt{CRC32} is exposed only by MySQL, so the
56-bit MD5 extraction is the common fallback for the engine configurations
verified in this work.  Second, the native hash functions differ across
engines (PostgreSQL's \texttt{hashtext}, Oracle's \texttt{ora\_hash}, SQL
Server's \texttt{checksum}), and ``same-name'' functions do not guarantee the
same algorithm, so any engine pair requires pairwise byte-level equivalence
verification.

\subsubsection*{Per-engine extraction of the 56-bit fingerprint}
\addcontentsline{toc}{subsubsection}{Per-engine extraction of the 56-bit fingerprint}

The row fingerprint takes the first 56 bits of the MD5 digest (the first 7
raw bytes), fitting a signed 64-bit integer for batched summation.  The
pairwise per-engine implementation of this extraction is carried out by
NineData in its reconciliation service; identical normalized inputs must
yield identical integers on both sides of a reconciliation.  Table~\ref{tab:hash-sql}
lists representative extraction expressions for four engines, showing how
the extraction is implemented in each engine.

\begin{table}[!hbtp]
\centering
\caption{Example 56-bit MD5 extraction expressions: NineData's per-engine
implementation in the reconciliation service; four engines are listed here.}
\label{tab:hash-sql}
\scriptsize
\setlength{\tabcolsep}{3pt}
\begin{tabular}{@{}>{\RaggedRight}p{2.3cm}>{\RaggedRight}p{9.6cm}@{}}
\toprule
engine & extraction expression \\
\midrule
MySQL & \texttt{\seqsplit{CONV(SUBSTR(MD5(CONCAT('abc')),1,14),16,10)}} \\
PostgreSQL / openGauss & \texttt{\seqsplit{CONCAT('x',SUBSTR(MD5('abc'),1,14))::BIT(56)::BIGINT}} \\
Oracle & \texttt{\seqsplit{TO\_NUMBER(SUBSTR(DBMS\_CRYPTO.HASH(RAWTOHEX('abc'),2),1,14),'XXXXXXXXXXXXXXXXX')}} \\
SQL Server & \texttt{\seqsplit{CONVERT(BIGINT,SUBSTRING(HASHBYTES('md5','abc'),1,7))}} \\
\bottomrule
\end{tabular}
\end{table}

The remaining engines are implemented pairwise by NineData following the
same idioms and are not expanded here.  Two constraints hold across engines:
some engines' raw hash output is endian-dependent and must be verified for
byte-level equivalence per pair; and any difference in the character
encoding of the canonical text changes the MD5 input, so charset must be
fixed before extraction.

\subsubsection*{Fingerprint width and collision risk}
\addcontentsline{toc}{subsubsection}{Fingerprint width and collision risk}

Truncation to 56 bits is a deliberate engineering choice with a quantifiable
correctness cost.  Modeling the truncated digest
as uniform on $2^{56}$ values, a table of $n$ rows carries about
$n(n-1)/2^{57}$ colliding pairs in expectation; on the largest replay table
(P1, about $6.1\times10^8$ rows) that is $2.6$ pairs, so a collision
somewhere in the table is likely rather than rare.  The cell checksum is
derived from the same \texttt{fp} and therefore does not separate two rows
that already collide.

What matters for reconciliation correctness is narrower.  Two distinct rows
sharing a fingerprint are harmless whenever both sides hold both of them:
each cancels in the sketch difference exactly as it should.  The estimate and
the decoding are affected only when a colliding pair straddles the
difference---one row present only on side $A$, the other only on side
$B$---in which case the two cancel and the pair is silently missed.  The
expected number of such pairs is about $d_Ad_B/2^{56}$, with $d_A,d_B$ the
per-side difference counts; across the three profiles this ranges from about
$7\times10^{-12}$ per run (P2) to about $3\times10^{-8}$ per run (P1), and the
downstream SQL recheck, which only re-examines reported candidates, does not
change it.  So the exposure is a missed difference, not a wrong one, at a
rate we can name.  A deployment that wants a stronger bound widens
\texttt{fp} to 64 or 128 bits and pays proportionally in transferred cell
width; nothing in the protocol, the estimator, or the canonicalization
depends on the width.

\subsubsection*{Semantic normalization selected per deployment}
\addcontentsline{toc}{subsubsection}{Semantic normalization selected per deployment}

The rules above define an \emph{exact} canonicalization: two rows agree if
and only if their canonical texts agree byte for byte, which is what the
replay and the cross-engine alignment check use, including the
\texttt{'<NULL>'} sentinel that keeps NULL distinct from the empty string.

Deployments frequently want something weaker, comparing rows up to a
business notion of equality.  The options we have seen requested are: time
columns compared without milliseconds; floating-point columns to two decimal
places; strings compared after trimming whitespace; NULL treated as the empty
string; and large fields compared on a prefix, or excluded.  Each of these is
a deployment-selected policy that deliberately declares certain unequal rows
equal, and each therefore introduces missed differences by construction; the
NULL-as-empty-string option in particular conflicts directly with the
sentinel rule of Table~\ref{tab:e1-rules}---that sentinel exists precisely to
keep NULL distinct from the empty string, while this option merges them---so
it is not the default behavior.  Which policy is
appropriate is an engineering decision about the data being reconciled, and
the measurement and protocol results of this paper are stated with respect to
whichever exact canonicalization the deployment fixes; they do not depend on
this choice.

\subsection*{Pairwise hash recommendation}
\addcontentsline{toc}{subsection}{Pairwise hash recommendation}

A faster native hash may be used only when both its cross-engine semantics
and its collision risk at the deployment's row and difference scale are
acceptable.  The evaluated replay uses the 56-bit MD5 fingerprint
(Table~\ref{tab:hash-sql}).  These recommendations serve only as a candidate
decision aid: production deployment still requires byte-level verification
on fixed samples per engine pair.

Tables~\ref{tab:e3-iblt-phases-full} and~\ref{tab:e3-merkle-phases-full}
give the complete phase-level breakdowns for all 18 experimental
configurations of the production-shape replay of Appendix~\ref{app:e3}.  The
paper lists only representative configurations; all 18 configurations are
reported here.

\subsection*{Data provenance and conventions}
\addcontentsline{toc}{subsection}{Data provenance and conventions}

\begin{itemize}
  \item P1 and P2 phase times are averaged over the two repetitions reported
    in the end-to-end matrix.
  \item P3 uses the same two-repetition averages, including the repeated
    22.5\,ms/10\,Mbps Merkle-style W16 configuration.
\end{itemize}

Times are in seconds.  Network columns report RTT in ms and bandwidth in
Mbps (22.5/100 means 22.5\,ms RTT, 100\,Mbps).  IBLT's O/M are the
Oracle/MySQL ends; for resolve and recheck, the reported value is the
elapsed wall-clock time while the two endpoints execute concurrently,
rather than the sum of their endpoint times.  Partition is the local step
that divides the retained fingerprint stream among workers before sketch
construction or rebucketing.  P3's 18 differences decode directly in round
one, so the rebucket phase does not occur and is 0.  Merkle's boundary is
the sum of chunk/probe/sort; Refine identifies the finer-grained dirty
ranges from the checksum hierarchy, and Drill fetches and compares the
row-level details of those ranges over the network.  E2E is measured
independently: the phase columns are instrumentation measurements and may
overlap because endpoint work and network operations execute concurrently,
so they should not be summed to reconstruct E2E.

\subsection*{IBLT phase-level times (all 18 configurations)}
\addcontentsline{toc}{subsection}{IBLT phase-level times (all 18 configurations)}

\begin{table}[!hbtp]
\centering
\caption{IBLT phase-level times for all 18 profile--network--worker
configurations (seconds; means over two repetitions).  O/M denote the Oracle
and MySQL endpoints.}
\label{tab:e3-iblt-phases-full}
\scriptsize
\setlength{\tabcolsep}{3pt}
\begin{tabular}{llrcccccc}
\toprule
profile & network & W & Build O/M & Partition O/M & Rebucket O/M & Resolve wall & Recheck wall & E2E \\
\midrule
P1 & 22.5/100 & 16 & 760.0/510.4 & 0.46/1.58 & 15.2/16.7 & 14.2 & 10.0 & 793.7 \\
P1 & 22.5/100 & 32 & 417.1/595.5 & 0.45/1.00 & 21.9/23.0 & 15.6 & 10.1 & 636.6 \\
P1 & 22.5/10  & 16 & 752.0/499.8 & 0.46/0.93 & 15.8/24.3 & 17.9 & 13.9 & 796.3 \\
P1 & 22.5/10  & 32 & 411.7/607.1 & 0.45/0.97 & 19.1/31.5 & 17.3 & 11.9 & 657.4 \\
P1 & 100/10   & 16 & 752.4/508.0 & 0.43/0.91 & 16.1/30.1 & 15.8 & 11.9 & 800.1 \\
P1 & 100/10   & 32 & 410.8/600.2 & 0.43/0.97 & 21.7/33.8 & 18.7 & 13.3 & 654.4 \\
P2 & 22.5/100 & 16 & 36.4/28.3 & 0.20/0.64 & 0.51/0.59 & 1.16 & 0.70 & 38.9 \\
P2 & 22.5/100 & 32 & 23.0/34.2 & 0.15/0.64 & 0.51/0.64 & 2.23 & 0.18 & 37.3 \\
P2 & 22.5/10  & 16 & 37.5/27.7 & 0.15/0.62 & 0.62/0.52 & 1.20 & 0.56 & 40.0 \\
P2 & 22.5/10  & 32 & 22.2/34.2 & 0.15/0.68 & 0.51/0.61 & 2.22 & 0.26 & 37.3 \\
P2 & 100/10   & 16 & 36.4/28.3 & 0.20/0.59 & 0.51/0.59 & 1.16 & 0.70 & 38.8 \\
P2 & 100/10   & 32 & 23.0/34.2 & 0.15/0.66 & 0.51/0.64 & 2.23 & 0.18 & 37.0 \\
P3 & 22.5/100 & 16 & 149.5/113.7 & 0.19/0.94 & 0/0 & 6.27 & 0.08 & 155.8 \\
P3 & 22.5/100 & 32 & 82.4/155.3 & 0.19/0.71 & 0/0 & 6.52 & 0.14 & 163.3 \\
P3 & 22.5/10  & 16 & 146.0/119.9 & 0.38/0.68 & 0/0 & 3.39 & 0.09 & 149.5 \\
P3 & 22.5/10  & 32 & 83.6/154.4 & 0.19/0.69 & 0/0 & 4.93 & 0.13 & 160.7 \\
P3 & 100/10   & 16 & 148.1/111.3 & 0.18/0.63 & 0/0 & 3.39 & 0.18 & 151.7 \\
P3 & 100/10   & 32 & 82.9/155.0 & 0.18/0.63 & 0/0 & 4.58 & 0.33 & 160.5 \\
\bottomrule
\end{tabular}
\end{table}

\subsection*{Merkle phase-level times (all 18 configurations)}
\addcontentsline{toc}{subsection}{Merkle phase-level times (all 18 configurations)}

\begin{table}[!hbtp]
\centering
\caption{Merkle-style phase-level times for all 18 profile--network--worker
configurations (seconds; means over two repetitions).}
\label{tab:e3-merkle-phases-full}
\scriptsize
\setlength{\tabcolsep}{3pt}
\begin{tabular}{llrcccccc}
\toprule
profile & network & W & Boundary & Checksum & Refine & Drill & Recheck & E2E \\
\midrule
P1 & 22.5/100 & 16 & 18.1  & 1031.2 & 4.3  & 43.0  & 1.9 & 1094.2 \\
P1 & 22.5/100 & 32 & 17.2  & 593.5  & 4.3  & 39.6  & 2.0 & 652.4 \\
P1 & 22.5/10  & 16 & 17.2  & 1026.1 & 4.9  & 389.0 & 2.6 & 1434.9 \\
P1 & 22.5/10  & 32 & 14.1  & 614.2  & 4.3  & 389.2 & 2.7 & 1020.3 \\
P1 & 100/10   & 16 & 16.4  & 1029.6 & 6.2  & 389.2 & 2.7 & 1437.9 \\
P1 & 100/10   & 32 & 16.4  & 602.6  & 4.4  & 389.5 & 2.7 & 1011.2 \\
P2 & 22.5/100 & 16 & 2.00  & 34.61  & 0.01 & 2.20  & 0.08 & 38.9 \\
P2 & 22.5/100 & 32 & 1.08  & 32.87  & 0.01 & 5.38  & 0.11 & 39.4 \\
P2 & 22.5/10  & 16 & 1.09  & 32.00  & 0.01 & 13.56 & 0.10 & 46.8 \\
P2 & 22.5/10  & 32 & 1.73  & 32.97  & 0.01 & 14.00 & 0.05 & 48.8 \\
P2 & 100/10   & 16 & 1.63  & 31.41  & 0.01 & 13.46 & 0.13 & 46.6 \\
P2 & 100/10   & 32 & 1.33  & 33.24  & 0.02 & 13.95 & 0.14 & 48.7 \\
P3 & 22.5/100 & 16 & 4.74  & 161.11 & 0.02 & 0.51  & 0.00 & 166.4 \\
P3 & 22.5/100 & 32 & 3.68  & 151.38 & 0.02 & 0.53  & 0.00 & 155.6 \\
P3 & 22.5/10  & 16 & 4.01  & 131.68 & 0.02 & 1.78  & 0.00 & 137.5 \\
P3 & 22.5/10  & 32 & 4.73  & 154.30 & 0.02 & 1.80  & 0.00 & 160.8 \\
P3 & 100/10   & 16 & 3.73  & 131.51 & 0.02 & 2.38  & 0.00 & 137.6 \\
P3 & 100/10   & 32 & 4.56  & 153.82 & 0.02 & 1.85  & 0.00 & 160.2 \\
\bottomrule
\end{tabular}
\end{table}

\subsection*{Proof of Proposition~\ref{prop:lower-tail} and explicit
failure-conditioned quantiles}

For any non-negative random variable $X$,
\[
  \E[X \given F]
  =\frac{\E[X\mathbf1_F]}{p_F}
  \le\frac{\E[X]}{p_F}.
\]
Take $X=(\dhat/d-1)^2$.  Theorems~\ref{thm:unbiased}--\ref{thm:variance}
give
\[
  \E\!\left[\left(\frac{\dhat}{d}-1\right)^2\right]
  =\Var\!\left(\frac{\dhat}{d}\right)
  =\frac{2(d-1)}{d(M-1)}.
\]
Hence
\[
  \E\!\left[\left.\left(\frac{\dhat}{d}-1\right)^2\right|F\right]
  \le
  \frac{2(d-1)}{d(M-1)p_F}.
\]
Applying Markov's inequality to the conditional measure: for any
$q\in(0,1)$, the event $\{\dhat/d\le q\}$ implies
$(\dhat/d-1)^2\ge(1-q)^2$, giving the fixed-sign lower-tail bound
\[
  \Pr\!\left[\dhat\le q\,d \given F\right]
  \le
  \frac{2(d-1)}{d(M-1)p_F(1-q)^2}.
\]

Under random signs, Theorem~\ref{thm:cond-mean} further gives
$\E[\dhat/d \given F]=1$, so
\[
  \Var\!\left(\left.\frac{\dhat}{d}\right|F\right)
  \le\frac{2(d-1)}{d(M-1)p_F}.
\]
Applying Cantelli's inequality to the conditional measure,
\[
  \Pr\!\left[\dhat\le q\,d \given F\right]
  \le\frac{v_F}{v_F+(1-q)^2},
  \qquad
  v_F=\Var\!\left(\left.\frac{\dhat}{d}\right|F\right),
\]
yielding a slightly tighter Cantelli lower quantile.

\subsubsection*{Explicit failure-conditioned quantiles (safety floor for
\S\ref{sec:quantile-interface})}

Inverting the tail bounds above into $q_\delta$ gives the analytic lower
quantiles that \S\ref{sec:quantile-interface} uses as a safety floor.  To
cap the underestimation probability at $\delta$, set the right-hand side of
the fixed-sign Markov bound to $\delta$, solve for $1-q$, and simplify using
$(d-1)/d\le1$:
\[
  \Pr\!\left[\dhat\le q\,d\;\Big|\;F\right]\le\delta
  \;\Longleftarrow\;
  1-q=\sqrt{\frac{2(d-1)}{d(M-1)p_F\delta}}\le\sqrt{\frac{2}{(M-1)p_F\delta}},
\]
so under any fixed signs,
\[
  q_\delta^{\mathrm{Markov}}
  =1-\sqrt{\frac{2}{(M-1)p_F\delta}}.
\]
Under random signs, use the Cantelli bound
$\Pr[\dhat\le q\,d \given F]\le v_F/(v_F+(1-q)^2)$, set it equal to
$\delta$, giving $(1-q)^2=v_F(1-\delta)/\delta$, and substitute
$v_F\le 2(d-1)/[d(M-1)p_F]\le 2/[(M-1)p_F]$:
\[
  q_\delta^{\mathrm{Cantelli}}
  =1-\sqrt{\frac{2(1-\delta)}{(M-1)p_F\delta}}.
\]
When only a lower bound on the failure probability is known, replace $p_F$
by $p_{F,\min}$.

Both inversions are admissible only when the resulting value lies in
$(0,1)$, which is also the range over which
Proposition~\ref{prop:lower-tail} is stated: the Markov form is positive
exactly when $(M-1)p_F\delta>2$, the Cantelli form when
$(M-1)p_F\delta>2(1-\delta)$.  When the expression is positive it certifies a
$q_\delta$ that Definition~\ref{def:qdelta} accepts, since
$\Pr[\dhat\ge q_\delta d \given F]\ge\Pr[\dhat>q_\delta d \given F]\ge1-\delta$,
so the boundary point costs nothing.  When it is non-positive, the moment
bound certifies no positive failure-conditioned lower quantile at all, and the
non-positive value must not be handed to the capacity formula
\eqref{eq:capacity}, which divides by $q_\delta$; the configuration then falls
back to a confidence-controlled failed-only empirical quantile, or the tier is
rejected.  This distinction matters at the tiers the protocol uses: at
$\delta=0.01$ and $p_F=1$ the Markov form needs $M>201$ and gives
$q_\delta\approx0.56$ at $M=1024$, while small sketches and small $p_F$ fall on
the non-positive side.

Both formulas are conservative because they cover every distribution with the
same moments, so they serve only as a theoretical floor that does not depend
on the empirical tail shape, and practical configuration uses the failed-only
empirical quantiles of \S\ref{sec:exp-validation}.

\subsection*{Finite-sample Kolmogorov bound}

Write
\[
  G_M(t)=\Pr[\chi^2_{M-1}\le(M-1)t].
\]

\begin{proposition}[Finite-sample Kolmogorov bound]
\label{prop:kolmogorov}
For any fixed sign pattern, any $M\ge2$, $1\le k<M$, and $d\ge1$,
\[
  \sup_t\left|
  \Pr[\dhat/d\le t]-G_M(t)
  \right|
  \le \frac{58(M-1)^{7/4}}{\sqrt d}.
\]
\end{proposition}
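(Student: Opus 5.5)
The plan is to prove the bound with a multivariate Berry--Esseen theorem over convex sets, applied to the centered count vector on the $(M-1)$-dimensional centered subspace. Fix an orthonormal basis $U\in\mathbb R^{M\times(M-1)}$ of $\one_M^\perp$ and set
\[
  \xi_x=\sigma^{-1}U^\top s_xa_x .
\]
By Step~1 of the proof of Theorem~\ref{thm:chi2}, the vectors $\xi_x$ are independent, have zero mean, and have identity covariance $I_{M-1}$, since $\Cov(Qa_x)=\sigma^2Q$ and fixed signs enter only through $s_x^2=1$. Write $S_d=d^{-1/2}\sum_x\xi_x$. Then
\[
  \|S_d\|^2=\frac{T}{d\sigma^2}=(M-1)\frac{\dhat}{d},
\]
using $\sigma^2/\gamma=1/(M-1)$. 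Hence for every $t$ the event $\{\dhat/d\le t\}$ equals $\{S_d\in B_t\}$, where $B_t$ is the centered Euclidean ball of radius $\sqrt{(M-1)t}$ (empty for $t<0$). Likewise $G_M(t)=\Pr[Z\in B_t]$ for a standard Gaussian $Z$ in $\mathbb R^{M-1}$. Balls are convex, so the Kolmogorov distance is bounded by $\sup_{A\text{ convex}}|\Pr[S_d\in A]-\Pr[Z\in A]|$.

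Next I would invoke Bentkus's non-i.i.d.\ convex-set bound (2005; or Raič 2019 with an explicit constant): for independent, mean-zero, standardized summands,
\[
  \sup_{A}|\cdots|\le c\,D^{1/4}\sum_x\E\|d^{-1/2}\xi_x\|^3 .
\]
Raič gives $c\le 42D^{1/4}+16$ with $D=M-1$. This form handles fixed signs directly, because non-identical distributions are allowed, which matches the Lindeberg--Feller remark in Theorem~\ref{thm:chi2}. The remaining task is to bound the third absolute moment. Deterministically,
\[
  \|U^\top a_x\|^2=\|Qa_x\|^2=a_x^\top Qa_x=\gamma ,
\]
so $\|\xi_x\|^3=(\gamma/\sigma^2)^{3/2}=(M-1)^{3/2}$ exactly. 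This needs no moment computation, since the self-energy of Plain is constant. The sum of third moments is therefore $d\cdot(M-1)^{3/2}d^{-3/2}=(M-1)^{3/2}/\sqrt d$.

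Combining the pieces gives a bound of $(42(M-1)^{1/4}+16)(M-1)^{3/2}/\sqrt d$. Since $M-1\ge1$, we have $16\le16(M-1)^{1/4}$, which yields $58(M-1)^{7/4}/\sqrt d$, exactly the stated constant. That match suggests this is the intended route.

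The main obstacle is choosing and citing the right explicit-constant theorem, with the correct normalization (identity total covariance, summands that need not be identically distributed) and the right power of $D$. I would first check that Raič's statement is phrased for $\sum\E\|\Sigma^{-1/2}X_i\|^3$ with $\Sigma$ the covariance of the sum, and then verify the degenerate cases $M=2$ and $t<0$ separately. For $M=2$ the bound exceeds $1$ unless $d>58^2$, so that case is trivially true for small $d$.
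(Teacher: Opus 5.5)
Your proposal is correct and follows the paper's proof essentially step for step: you represent the event as a Euclidean ball in the $(M-1)$-dimensional centered subspace, use the deterministic self-energy $a_x^\top Qa_x=\gamma$ to get $\sum_x\E\|\zeta_x\|^3=(M-1)^{3/2}/\sqrt d$ exactly, and apply Rai\v{c}'s convex-set Berry--Esseen bound with factor $42(M-1)^{1/4}+16\le 58(M-1)^{1/4}$. One cosmetic slip: in your generic display you write $c\,D^{1/4}\sum_x\E\|\cdot\|^3$ and then set $c\le 42D^{1/4}+16$, which double-counts $D^{1/4}$; the multiplier is just $42D^{1/4}+16$, which is what you correctly use in the final combination.
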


\begin{proof}
From $T=\lVert QC\rVert^2$ and $\dhat=T/\gamma$, for $t\ge0$,
\[
  \{\dhat/d\le t\}
  =
  \{\lVert QC/\sqrt d\rVert\le\sqrt{t\gamma}\},
\]
so the event is a Euclidean ball in the centered subspace
$H=\operatorname{range}(Q)\cong\mathbb R^{M-1}$; for $t<0$ both probabilities
are zero.

Write
\[
  \frac{QC}{\sqrt d}=\sum_{x=1}^d\xi_x,
  \qquad
  \xi_x=\frac{s_xQa_x}{\sqrt d}.
\]
The $\xi_x$ are mutually independent with zero mean, and the covariance sum
is $\sigma^2I_{M-1}$ on $H$.  Moreover,
\[
  \lVert s_xQa_x\rVert^2=a_x^\top Qa_x=\gamma
\]
is a deterministic constant.  Let $\zeta_x=\xi_x/\sigma$; then the covariance
sum is $I_{M-1}$, and using $\gamma/\sigma^2=M-1$,
\[
  \sum_x\E\lVert\zeta_x\rVert^3
  =\frac{(M-1)^{3/2}}{\sqrt d}.
\]
Rai\v{c}'s explicit Berry--Esseen bound for independent random vectors
and convex sets~\cite[Theorem~1.1]{raic2019multivariate}
gives
\[
  \sup_{A\ \mathrm{convex}}
  \left|
  \Pr\!\left[\sum_x\zeta_x\in A\right]-\Phi_{M-1}(A)
  \right|
  \le
  \bigl(42(M-1)^{1/4}+16\bigr)
  \frac{(M-1)^{3/2}}{\sqrt d}
  \le
  \frac{58(M-1)^{7/4}}{\sqrt d}.
\]
After standardization, the ball above has radius $\sqrt{t(M-1)}$, and its
Gaussian probability is exactly
\[
  \Pr[\chi^2_{M-1}\le t(M-1)]=G_M(t).
\]
This yields the stated Kolmogorov bound.
\end{proof}

In practice, this explicit constant is extremely conservative and serves
only as a finite-sample certificate, not as a basis for engineering
calibration: because Rai\v{c}'s theorem covers arbitrary distributions and
arbitrary convex sets, the bound is necessarily loose.  In actual runs and
Monte Carlo simulations (e.g.,
Table~\ref{tab:t1a} of \S\ref{sec:exp-validation}), the empirical error is
orders of magnitude smaller than this theoretical bound.

\subsubsection*{Finite-sample refinement of the conditional transfer}

Proposition~\ref{prop:cond-chi2} gives both the total-variation transfer and
the explicit success-probability bound
$1-p_F\le\min\{1,\varepsilon_{M,d}\}$.  Combining that transfer with
Proposition~\ref{prop:kolmogorov} through the triangle inequality bounds the
failed-only distribution directly against the chi-square reference:
\[
  \sup_t
  \left|
  \Pr[\dhat/d\le t \given F]-G_M(t)
  \right|
  \le \min\{1,\varepsilon_{M,d}\}
      +\frac{58(M-1)^{7/4}}{\sqrt d},
  \qquad
  G_M(t)=\Pr[\chi^2_{M-1}\le(M-1)t],
\]
the finite-sample counterpart of Eq.~\eqref{eq:cond-chi2}.  The first term
isolates the cost of failure selection.  For $k=3$ and $d/M=8$, it ranges
from $3.4\times10^{-8}$ at $M=64$ to $3.7\times10^{-6}$ at $M=4096$; at the
production-shaped load $d/M_1\approx182$, it is below $10^{-200}$ on the
standard tiers.  The Berry--Esseen term has a worst-case constant and is
numerically vacuous at these engineering parameters, so the displayed bound
is an asymptotic certificate rather than a calibration rule.  The much
closer transition-regime agreement in Figure~\ref{fig:fcond-qq} and
Table~\ref{tab:d3} remains empirical evidence, and production configuration
continues to use the failed-only empirical quantile.

\subsubsection*{Proportional-growth direction}

Proposition~\ref{prop:cond-chi2} fixes $M$ and lets failure become typical.
A complementary question lets $M,d\to\infty$ with $d/M\to\alpha\in(0,\infty)$,
so the failure probability may remain nondegenerate.  Under i.i.d.\ random
signs, conditioning on a mapping turns the centered estimator into a
second-order Rademacher chaos.  A quantitative de Jong argument can then be
organized around two mapping statistics: concentration of the total
quadratic energy and the maximum coordinate influence.  The intended
sufficient scale is that failure remain large relative to the exceptional
mapping set; the current second-moment route suggests $Mp_F\to\infty$.

This proportional-growth route would establish stability of the standardized
Gaussian law at the CLT scale.  It would not distinguish that law from the
nearby large-degree chi-square reference without sharper error control.  We
record the route here because it addresses transition loads where
Proposition~\ref{prop:cond-chi2} is silent; no protocol guarantee or empirical
calibration in this paper depends on it.

\end{document}